\documentclass[11pt]{article}

\usepackage{amsmath}
\usepackage{epsfig, amssymb, amsfonts}
\usepackage{amsthm}
\usepackage{amstext}
\usepackage{amsopn}
\usepackage{enumerate}
\usepackage{mathrsfs}
\usepackage{subfigure}
\usepackage{graphicx}
\usepackage[left=2.3cm,top=2cm,right=2.3cm,bottom=2cm, nohead,foot=1cm]{geometry}
\numberwithin{equation}{section}

\newtheorem{theorem}{Theorem}[section]
\newtheorem{lemma}[theorem]{Lemma}
\newtheorem{assumptions}[theorem]{Rate assumptions}

\newtheorem{proposition}[theorem]{Proposition}

\newcommand{\R}{{\mathbb R}}

\newcommand{\Z}{{\mathbb Z}}

\newcommand{\Tr}{{\textup{Tr}}}

\newcommand{\C}{{\mathbb C}}

\title{A ballistic motion disrupted by  quantum reflections  }

\author{\textbf{Jeremy Thane Clark}\footnote{jtclark@math.msu.edu} \\  Department of Mathematics, Michigan State University  \\ East Lansing, MI 48824, USA }

\begin{document}
\maketitle

\begin{abstract}
 I study a Lindblad dynamics modeling a quantum test particle in a Dirac comb that collides with particles from a background gas.  The main result is a homogenization theorem  in an adiabatic  limiting regime involving  large initial momentum for the test particle.    
Over the time interval considered, the particle would exhibit essentially ballistic motion if either the singular periodic potential or the kicks from the gas were removed.  However, the particle behaves diffusively when both sources of forcing are present.  The conversion of the motion from ballistic to diffusive is generated by occasional quantum reflections that result when the test particle's momentum is driven through a collision near to an element of the half-spaced reciprocal lattice of the Dirac comb.

\end{abstract}

\section{Introduction}\label{SectIntro}

  In this article I demonstrate that a quantum  particle in a  one-dimensional, periodic potential field with singular peaks can exhibit a strong diffraction-generated effect in the presence of a noise.  In fact, the noise has a cooperative role in driving the test particle into temporary states in which coherence is developed through the Hamiltonian dynamics.   Coherent superpositions enter the dynamics primarily between certain pairs of transmitted and reflected waves that  quickly collapse  into classical superpositions through the noise.  After the quantum superposition collapses into a classical superposition, the particle has a non-negligible chance of moving with roughly the negative of its original momentum, i.e., having been reflected.   These coherent superpositions of transmitted and reflected waves occur on a slower time scale than the noise, at instances in which the test particle's momentum is randomly kicked close to a value such that the de Broglie wavelength is an integral fraction of $\pi^{-1}$ multiplied by the potential's period, i.e., momenta satisfying the Bragg condition for reflection.     Nevertheless, the test particle undergoes  reflections frequently enough to restrain its motion, yielding diffusive transport  in contrast to the ballistic transport dominating when either the singular periodic potential or the noise is removed.  If the  periodic potential is smooth rather than singular, then diffractive effects will be negligible.

First I will present a classical analog of the quantum model that I investigate.   Consider the following  forward Kolmogorov equation  for  probability densities $\mathcal{P}_{t}(x,p)\in L^{1}(\R^{2})$ depending on $t\in \R^{+}$:
\begin{align}\label{GinForTin}
\frac{d}{dt}\mathcal{P}_{t}(x,p)= -2p\frac{\partial}{\partial x}\mathcal{P}_{t}(x,p)+\frac{dV}{dx}(x)\frac{\partial}{\partial p}\mathcal{P}_{t}(x,p)+\int_{\R}dv\,j(v)\Big(\mathcal{P}_{t}(x,p-v) -\mathcal{P}_{t}(x,p) \Big),
\end{align}
where $V:\R\rightarrow \R^{+}$ is smooth and periodic with period $2\pi$, and  $j\in L^{1}(\R,\R^{+})$ with $j(v)=j(-v)$ and $\int_{\R}dv\, j(v)v^{2}<\infty $.   The master equation~(\ref{GinForTin}) describes the time evolution of the phase space densities for a particle in dimension one that is governed by the Hamiltonian $H(x,p)=p^{2}+V(x)$ and receives random momentum kicks of size $v\in \R$ with rate density $j(v)>0$.   Let $X_{t}$ and $P_{t}$ be the position and momentum processes, respectively, whose densities are determined by~(\ref{GinForTin}) and have initial values $X_{0}=0$ and $P_{0}=\frac{\mathbf{p}_{0}}{\lambda}$ for $\mathbf{p}_{0}>0$ and scale parameter $ 0<  \lambda  \ll 1  $.     In other words, the particle starts from the origin and moves to the right  with a high speed.   The displacement in momentum, $ P_{t}-P_{0}$, is the sum of the forcing contribution  $D_{t}=-\int_{0}^{t}dr\,\frac{dV}{dx}(X_{r})$ and a noise contribution given by the sum of the momentum jumps to have occurred by time $t$.    As long as the momentum remains high, the forcing term $D_{t}$ will yield only a small contribution due to  self-cancellation in the integral of  $\frac{dV}{dx}(X_{r})$   as the particle passes quickly through the period cells of $V(x)$.   Thus, since the jump rates have the symmetry $j(v)=j(-v)$, the momentum process $P_{t}$ is behaving roughly as a symmetric random walk starting from $ P_{0}=\frac{\mathbf{p}_{0}}{\lambda}$.   Based on these observations, it follows that for fixed $T>0$ and $\eta \in (0,2)$ there is a weak law of large numbers as $\lambda\searrow 0$ given by
\begin{align}\label{LawOfLarge}
 \lambda^{\eta+1}   X_{\frac{T}{\lambda^{\eta } }}   \Longrightarrow   2\mathbf{p}_{0}T.
\end{align}
The above limit describes ballistic motion with velocity $2\mathbf{p}_{0}$ for the particle on length scales  $\propto \lambda^{-\eta-1}$ and  time scales $\propto \lambda^{-\eta}$ since the displacement in the particle's position is deterministic and proportional to the time parameter $T$.  The convergence~(\ref{LawOfLarge}) follows since $X_{\frac{T}{\lambda^{\eta}  }}=2\int_{0}^{\frac{T}{\lambda^{\eta} }}dr\,P_{r} $, and the momentum $P_{r}$ will typically not deviate much from the initial value $P_{0}$ over the interval $r\in [0,\frac{T}{\lambda^{\eta} }]$  in the sense that $\sup_{ r\in [0,\frac{T}{\lambda}]}\frac{|P_{r}-P_{0}|   }{  |P_{0}|}  $  will typically be small for $\lambda \ll 1$.   In summary, the particle behaves ballistically over time scales on the order of $ \frac{T}{\lambda^{\eta } }$ for $\lambda\ll 1$ and $\eta\in (0,2)$ when starting from a momentum $\propto \lambda^{-1}$.   This  holds simply because the periodic force and random kicking do not have time to generate a  shift in momentum on the scale of its initial value.

Next I present a quantum analog of~(\ref{GinForTin}).    In the quantum version, the state $\rho_{t}$ of the particle at time $t\in \R^{+}$ is a density matrix in the space of trace class operators 
 $\mathcal{B}_{1}(\mathcal{H})$ over the Hilbert space $\mathcal{H}=L^{2}(\R)$ whose time evolution is determined by the Lindblad equation~(\ref{TheModel}) below.    I take the initial state  $\check{\rho}_{\lambda}$ to be a right-traveling wave with momentum concentrated around the value $\mathbf{p}=\frac{\mathbf{p}_{0} }{\lambda} $ for $\mathbf{p}_{0}$ and $\lambda$ as before:   More specifically,  $\check{\rho}_{\lambda}:=|\frak{h}\rangle\langle \frak{h}|$ for a  wave function $\frak{h}\in \mathcal{H}$ having the  form $ \frak{h}(x)=  e^{\textup{i}x\mathbf{p}}\frak{h}_{0}(x)$ in the position representation for  $\frak{h}_{0}\in \mathcal{H}$ satisfying
\begin{align}\label{Gauss}  
\big\|  X^{2} \frak{h}_{0}\big\|_{2}<\infty \quad \text{and}\quad   \big\|  P^{2} \frak{h}_{0}\big\|_{2}<\infty,
\end{align}
where $X$ is the position operator and $P:=-\textup{i}\frac{d}{dx}$ is the momentum operator.    For the Hamiltonian $H$ and the completely positive map $\Psi:\mathcal{B}_{1}(\mathcal{H})$ defined below, the dynamics in the Schr\"odinger  representation is determined by the quantum Kolmogorov equation
\begin{align}\label{TheModel}
\frac{d}{dt}\rho_{t}= -\textup{i} \big[H,\rho_{t}\big]+\Psi(\rho_{t})-\frac{1}{2}\big\{ \Psi^{*}(I),\rho_{t} \big\}.
 \end{align}
 The Hamiltonian is a Schr\"odinger operator  $H= P^{2}+V(X)$, where $V:\R\rightarrow \R^{+}$ is smooth and has period $2\pi$.    The map  $\Psi$ describes the noise acting on the particle and has the continuous Kraus form
\begin{align}\label{TheNoise}
 \Psi (\rho)=   \int_{\R} dv\,j(v)e^{\textup{i} vX}\, \rho\, e^{-\textup{i} v X},
 \end{align}
where  $ \rho\in \mathcal{B}_{1}(\mathcal{H})$ and $j(v)$ is defined as before.   In~(\ref{TheModel}) $\Psi^{*}(I)$ is the adjoint map $\Psi^{*}$ evaluated for the identity operator $I$ on $\mathcal{H}$, and it is easy to compute that $\Psi^{*}(I)= \mathcal{R}\, I  $, where $\mathcal{R}:=\int_{\R}dv\, j(v)  $.

Periodic potential fields for atoms and molecules have been produced in laboratory settings using lasers for the examination of several fundamental quantum phenomena, which include  Bose-Einstein condensation, Bloch oscillations, Zener tunneling, and Bragg scattering~\cite{Adams,Feldmann,AndersonJr}.   The physics literature on decoherence and matter-wave optics includes many experimental~\cite{Lucia,Klaus} and theoretical~\cite{Ghirardi, ESL, Sipe,Exper,Vacchini} studies in which a quantum noise of the form~(\ref{TheNoise}) appears.  The noise map $\Psi$ satisfies the Weyl covariance relation 
\begin{align}\label{Friction}
 \hspace{2.5cm}   \Psi\big(e^{\textup{i}aP+\textup{i}bX  }   \rho  e^{-\textup{i}aP-\textup{i}bX }\big)= e^{\textup{i}aP+\textup{i}bX  } \Psi(\rho)e^{-\textup{i}aP-\textup{i}bX  }, \hspace{1.5cm} a,b\in \R . 
 \end{align}
This means that the rate of collisions is invariant of both the position and the momentum for the test particle.  In particular, the noise does not generate dissipation in energy since it does not include any frictional contribution  that would systematically drag a ``high momentum" down to lower values.  The noise predicts a gradual stochastic acceleration to higher momenta over time, which is apparent in the linear mean energy growth found in the model:  
$$ \Tr[H\rho_{\lambda,t}]= \Tr[H\rho_{\lambda,0} ]+ \sigma t.  $$
Nevertheless, the model can be a useful description for a massive particle interacting with a gas of light particles over a limited time period.  A three-dimensional version is derived from a quantum linear Boltzmann equation in~\cite[Sect.7.1]{VaccHorn}.  Also, there is  a mathematical derivation from a singular coupling limit in~\cite{Hellmich}.   
Attention to the structure of completely positive maps and quantum dynamical semigroups satisfying the symmetry~(\ref{Friction}) and other classes of symmetries can be found in the work of Holevo~\cite{Holevo}.

The position density for the quantum particle at time $t$, denoted $D_{\lambda,t}$, is given by the diagonal of the integral kernel of the density matrix $\rho_{t}$ in the position representation:  $D_{\lambda,t}(x):=\rho_{t}(x,x)$.  The subscript $\lambda>0$ is a reminder that the initial density matrix $\check{\rho}_{\lambda}$ depends on $\lambda$ through the momentum $\mathbf{p}=\frac{\mathbf{p}_{0}}{\lambda} $.   Let $\mu_{\lambda, T}$ be the probability  measure with density   $\lambda^{-\eta-1} D_{\lambda,\frac{T}{\lambda^{\eta}} }(\lambda^{-\eta-1} x)$.   The quantum analog of the weak law of  large numbers in~(\ref{LawOfLarge})  is  the weak convergence as $\lambda\searrow 0$ given by
\begin{align}\label{LawOfLargeTwo}
  \mu_{\lambda,T}      \Longrightarrow   \delta_{T\mathbf{p}_{0}}   ,     
\end{align}
where $\delta_{x}$ is the delta distribution at $x\in \R$.   The convergence~(\ref{LawOfLargeTwo}) holds when $V(x)$ is smooth, however, my focus here is on the situation in which the periodic potential $V(x)$ has singular peaks.  

  To study the situation of a potential with singular peaks, I will take the Hamiltonian $H$ to be, very specifically, a Dirac comb of strength $\alpha>0$, which is  formally expressed by
 $$H= P^{2}+\alpha\sum_{N\in \Z}\delta\big(X-2\pi N\big).  $$ 
The advantage of the Dirac comb Hamiltonian is that there are closed forms available for the eigenvalues and eigenkets; see~\cite[Sect.III.2.3]{Solve}.    I will  make an additional technical assumption that the noise term operates on a longer time scale than the Hamiltonian dynamics by replacing $\Psi$ in~(\ref{TheModel}) by $\Psi_{\lambda}:=\lambda^{\varrho}\Psi$ for  $\varrho > 0   $ and $0<\lambda \ll  1$.    Notice that a slow-acting noise only strengthens the heuristic reasons given above~(\ref{LawOfLarge}) for why ballistic motion should be expected.   Let $\rho_{\lambda,t}$ for $t\geq 0$ be the solution to~(\ref{TheModel}) with rescaled noise term $\Psi_{\lambda}$ and   Dirac comb Hamiltonian.    Also define $\hat{\mu}_{\lambda, T}$ to be the probability measure with density  $ \lambda^{-\frac{\eta+\varrho+3}{2}    } \hat{D}_{\lambda,\frac{T}{\lambda^{\eta}  } }\big( \lambda^{-\frac{\eta+\varrho+3}{2}    } x \big)$, where $\hat{D}_{\lambda,t}(x)=\rho_{\lambda,t}(x,x) $ is the position density at time $t$.  I will prove that for $   \varrho >\frac{\eta}{2}     $ and $ 1<\eta-\varrho<2$  that  there is weak convergence as $\lambda\searrow 0$ given by
\begin{align}\label{CentralLimit} 
\hat{\mu}_{\lambda, T}\Longrightarrow \mathcal{N}\big(0,T\vartheta\big),
\end{align}
where  $\vartheta := \frac{16\mathbf{p}_{0}^{3}}{\alpha \mathcal{R}}$ for $\mathcal{R}:=\int_{\R}dv\,j(v)$ and $\mathcal{N}(a,b)$ is the normal distribution with mean $a$ and variance $b$.   Notice that the spatial scale $ \lambda^{-\frac{\eta+\varrho+3}{2}    } $ for the convergence in~(\ref{CentralLimit}) is smaller than the spatial scale $\lambda^{-\eta-1}$ for the convergence in~(\ref{LawOfLargeTwo}) by our restriction $\eta-\varrho >1$.   The central limit convergence~(\ref{CentralLimit}) describes diffusive behavior for the particle rather than ballistic motion.   The change from ballistic to diffusive behavior is induced by quantum reflections that occur when the momentum of the particle is kicked near the reciprocal  lattice $\Z$ of the Dirac comb; see the semi-classical heuristics below.    I conjecture that the  adiabatic assumption built into the model by replacing $\Psi$ with the slow-acting noise $\Psi_{\lambda}$ is not necessary:  if $\varrho=0$ the central limit convergence~(\ref{CentralLimit}) will hold for some  diffusion constant $\vartheta$.   Although the adiabatic regime dilutes this diffusion result, it  also reinforces the reasons for why ballistic motion should be expected in the contrasting classical model described above.

Next I will describe the mechanism underlying the central limt theorem~(\ref{CentralLimit}).
 A one-dimensional plane wave $|\mathbf{p}\rangle$ with momentum $\mathbf{p}\in \R $ evolving through a  Schr\"odinger Hamiltonian $H=P^{2}+V(X)$ for a period-$2\pi$ potential $V:\R\rightarrow \R^{+}$ will develop into discrete superpositions of plane waves of the form
\begin{align}\label{Discrete}
   e^{-\textup{i}t H }  |\mathbf{p}\rangle= \sum_{n\in \Z}C_{t}(n,  \mathbf{p}) |\mathbf{p}+n\rangle    
\end{align}
for some coefficients  $C_{t}(n,  \mathbf{p})\in \C$ satisfying $\sum_{n}|C_{t}(n,  \mathbf{p})|^{2}=1$.  The discrete form of the superpositions is a consequence of Bloch theory.    If the initial momentum $\mathbf{p}$  is large in the sense that $\mathbf{p}^{2}\gg \sup_{x}V(x)$, then the wave will tend to transmit freely through the potential, i.e., $|C_{t}(0,  \mathbf{p})|^{2}\approx 1$,  unless $\mathbf{p}$ is ``very close" to an element of the lattice $\frac{1}{2}\Z$.  Momenta in $\frac{1}{2}\Z$ satisfy the Bragg condition for reflection  by the periodic potential, and a plane wave $ |\mathbf{p}\rangle  $  with momentum  close enough to satisfying the Bragg condition $\mathbf{p}\approx \frac{\mathbf{n}}{2}\in\frac{1}{2}\Z$ will  evolve nearly as a superposition of a transmitted and an essentially reflected wave: 
\begin{align}\label{Pendo}
e^{-\textup{i}t H }  |\mathbf{p}\rangle\approx  C_{t}(0,  \mathbf{p}) |\mathbf{p}\rangle + C_{t}(-\mathbf{n},  \mathbf{p}) |\mathbf{p}-\mathbf{n}\rangle   . 
\end{align}
The adjective ``reflected" is justified since $\mathbf{p}-\mathbf{n}\approx -\mathbf{p}$ when  $\mathbf{p}\approx \frac{\mathbf{n}}{2}$.   The superposition~(\ref{Pendo}) will evolve roughly periodically through different weights upon the transmitted and reflected waves, and this behavior is termed \textit{Pendell\"osung oscillations}.    I will refer to the neighborhood of momenta around  the lattice point $\frac{\mathbf{n}}{2}\in \frac{1}{2}\Z$  in which ``non-negligible" reflected waves appear as the \textit{reflection band}; see~\cite{Friedman}.     If the particle's momentum is knocked into a reflection band, then the particle will develop into a quantum  superposition of a transmitted and a reflected wave~(\ref{Pendo}).   Intuitively, this quantum superposition will collapse into a classical superposition after the particle receives another momentum kick, and there is a non-negligible chance that the particle will be heading in the opposite direction that it started with  after the full process of being kicked in and out of a reflection band.  Note that the momentum kicks are small compared to the initial momentum, and thus the  reflections are the most dramatic change occurring in this process.  

The difference in transport  behavior between a smooth potential~(\ref{LawOfLargeTwo}) and the Dirac comb potential~(\ref{CentralLimit}) involves the rate at which reflections occur.      The width of the reflection bands for an infinitely differentiable potential decay superpolynomially fast  as $|\frac{\mathbf{n}}{2}|\nearrow \infty$, whereas the width of the reflection bands for the Dirac comb decay as 
$\frac{\alpha }{8|\frac{\mathbf{n}}{2}|}+\mathit{O}\big( \frac{1}{|\frac{\mathbf{n}}{2}|^{2}}  \big) $.   Other types of periodic singular potentials will have different decay rates.   The comparatively  large width of the reflection bands for the Dirac comb make it much more likely that a particle with high momentum $\mathbf{p}=\frac{\mathbf{p}_{0}}{\lambda}\gg \mathbf{p}_{0}$ will land in a reflection band after receiving a random momentum kick.  Since the width of the reflection bands near $\mathbf{p}$ are approximately $\frac{\alpha }{8\mathbf{p}}$ and are located around the lattice points $\frac{1}{2}\Z$, the probability of being kicked randomly from momentum $\mathbf{p}$ into a nearby reflection band  is approximately $2\times \frac{\alpha }{8\mathbf{p}}=\frac{\alpha }{4\mathbf{p}}$.    With the rescaled noise term  $\Psi_{\lambda}:=\lambda^{\varrho}\Psi$, the collisions occur with Poisson rate $\lambda^{\varrho}\mathcal{R} $ for $\mathcal{R}:=\int_{\R}dv\, j(v)$, and   the length of the random time intervals of uninterrupted ballistic motion are essentially exponentially distributed with mean $\frac{4\mathbf{p}}{\alpha\lambda^{\varrho} \mathcal{R}  }   $.     Thus, over a time interval of length $ \frac{T}{\lambda^{\eta}}  $,  there will be on the order of   
$$ \frac{   \frac{T}{\lambda^{\eta}}  }{ \frac{4\mathbf{p}}{\alpha\lambda^{\varrho} \mathcal{R}  }   } =\lambda^{\varrho-\eta+1}\frac{ \alpha\mathcal{R}T  }{4 \mathbf{p}_{0}   }   $$
 reflections, which is $\gg 1$ for $\lambda\ll 1$ by my constraint  $ 1<\eta-\varrho  $.  Since many reflections occur over the time interval considered, the particle is not allowed to move in a single direction for a long time period, and there will be much cancellation between the particle's rightward and leftward movements.   For independent mean-one exponentials $(\mathbf{e}_{n})_{n\geq 1}$, I think of  the position of the particle as behaving in a similar  way to the following sum:
\begin{align}\label{Projectile}
2\mathbf{p} \sum_{n=1}^{\lambda^{\varrho-\eta+1}\frac{ \alpha\mathcal{R}T  }{ 4 \mathbf{p}_{0}   }   }  \frac{4\mathbf{p}}{\alpha\lambda^{\varrho} \mathcal{R}  }   (-1)^{n-1}\mathbf{e}_{n}\approx  \frac{8\mathbf{p}_{0}^{2} }{\alpha \mathcal{R}  \lambda^{\varrho+2} }\sum_{n=1}^{\lambda^{\varrho-\eta+1} \frac{   \alpha\mathcal{R}T  }{  \mathbf{p}_{0}   }  }     (-1)^{n-1}(\mathbf{e}_{n}-1) ,  
\end{align}
where the approximation occurs by throwing away $\frac{8\mathbf{p}_{0}^{2} }{\alpha \mathcal{R}  \lambda^{\varrho+2} }$ when the sum has an odd number of terms.   The expression~(\ref{Projectile}) treats the particle as alternating between velocities $\pm 2\mathbf{p}$ for  time periods of length   
$\frac{4\mathbf{p}}{\alpha\lambda^{\varrho} \mathcal{R}  } \mathbf{e}_{n}$.  When multiplied by $ \lambda^{\frac{\eta+\varrho+3}{2}    } $, the sum above  is close in distribution to  $\mathcal{N}\big(0, T\vartheta\big)$ for $\lambda\ll 1$ and $\vartheta:=\frac{16\mathbf{p}_{0}^{3} }{\alpha \mathcal{R}  }$  by the central limit theorem.

The mathematical results of this article extend those from~\cite{Dispersion}, which considered the same quantum model under different parameter scalings.      The previous work focused, firstly, on the derivation of a classical Markovian dynamics in an adiabatic limit and, secondly,  on a central limit theorem for the time integral of the limiting classical process, which intuitively corresponds to the  position of the particle.   Because the central limit theorem in~\cite{Dispersion} concerns a classical process arising in a limit of the quantum model, it is not  clear from that work whether the  the anomalous transport behavior found in the classical model actually holds in some form for the original quantum model.   The current article addresses this issue by integrating the adiabatic limit into a central limit theorem for the quantum model.   I believe that the adiabatic limit is not required for this result.

This article is organized as follows:  Section~\ref{SecConv} adds  more commentary on the dynamics and introduces some  notational conventions.  In Sect.~\ref{SecThmMain}  I  present the main results of this article and sketch their proofs.  Sections~\ref{SecBlochFiber} and~\ref{SecPseudoPoisson} discuss important symmetries and decompositions for the dynamics.    Sections~\ref{SecQuasiMomentum}-\ref{SecTransition} bound the differences between a series of intermediary dynamics between the original quantum dynamics and a limiting classical dynamics studied in Sect.~\ref{SecClassical}.

\section{Preliminary discussion} \label{SecConv}

\subsection{The Dirac comb Hamiltonian}\label{SecConvComb}

 The Schr\"odinger Hamiltonian $H=P^{2}+\alpha\sum_{n\in\Z}\delta(X-2\pi n)$, $\alpha>0$ is defined mathematically as a particular self-adjoint extension of the symmetric operator $-\frac{d^{2}}{dx^{2}}$ with domain consisting of all $L^{2}$-functions on $\R$ with two weak derivatives in $L^{2}(\R)$ and that take the value $0$ on the lattice $2\pi\Z$, i.e., $f\in\mathbf{H}^{2,2}(\R)\cap \{g\,|\,g(2\pi n)=0,\,n\in\mathbb{Z} \}$.  The operator domain of the self-adjoint extension is the  space of functions  that have one weak $L^{2}$-derivative in the domain $\R$ and two weak $L^{2}$-derivatives in $\R-2\pi\Z$, and that satisfy the boundary condition
$$  \alpha f(2\pi n)= \frac{df}{dx}( 2\pi n+ )-\frac{df}{dx}(2\pi n-)   $$
for each $n\in \Z$.  

The spectrum of the Hamiltonian $H$ is continuous, and there are a continuum of kets $|p\rangle_{\scriptscriptstyle{Q} }$, $p\in \R$ and a dispersion relation $E:\R\rightarrow \R^{+}$ such that the Hamiltonian can be formally written as
$$H=\int_{\R}dp\,E(p)|p\rangle_{\scriptscriptstyle{Q} }\,{ }_{\scriptscriptstyle{Q} } \langle  p|.$$
This representation is related to standard Bloch theory through the extended-zone scheme, and  I will discuss the connection in Sect.~\ref{SecBlochFiber}.  For the Dirac comb, the dispersion relation and eigenkets have closed forms.   The dispersion relation is given by $E(p)=\mathbf{q}^{2}(p)$ for the anti-symmetric, increasing function $\mathbf{q}:\R\rightarrow \R$ determined by the convention $\mathbf{q}(\frac{n }{ 2})=\frac{ n}{ 2}$ for $n\in \Z-\{0\}$, the Kr\"onig-Penney relation for  $p\in \R-\frac{  1 }{ 2}\Z$:
\begin{align}\label{KronigPenney}
\hspace{2cm}\cos\big(2\pi  p\big)=\cos\big(2\pi  \mathbf{q}(p) \big)+\frac{\alpha }{2\mathbf{q}(p) }\sin\big(2\pi \mathbf{q}(p)\big), 
\end{align}
and continuity at zero.  The dispersion relation $E(p)$ has a roughly parabolic shape $E(p)\approx p^2+\frac{\alpha}{2\pi }$ except for momenta near  the lattice values $p=\frac{ n  }{ 2}$,  $n\in \Z-\{0\}$ where there are gaps $g_{n}:=\big|E(\frac{ n  }{ 2}+)-E(\frac{ n}{ 2}-)\big|$.  The kets $|p\rangle_{\scriptscriptstyle{Q} }$ can be constructed as discrete combinations of the standard momentum kets:  
\begin{align}\label{HairCut}
 |p\rangle_{\scriptscriptstyle{Q} }=\sum_{m\in \Z} \eta(p,m)\big|p+m\big\rangle,  
 \end{align}
 where the coefficients $\eta(p,m)$ satisfy $\sum_{m\in \Z}|\eta(p,m)|^{2}=1$ and have the form
$$ \eta(p,m):=-\textup{i} N_{p}^{-\frac{1}{2}}\big(e^{ \textup{i}2\pi (\mathbf{q}(p)-p)}-1 \big) \Big(\frac{1}{\mathbf{q}(p)+p+m}+\frac{1}{\mathbf{q}(p)-p-m} \Big)$$  
for a normalization constant $N_{p}>0$.

\subsection{The dynamics and rescaling time  }

In the introduction I defined the state $\rho_{\lambda, t}$ to be the solution of the master equation~(\ref{TheModel}) with $H$ being  the Dirac comb Hamiltonian and the noise term $\Psi$ replaced by  $\lambda^{\varrho}\Psi$ for $0<\lambda\ll 1$.   However, it will be convenient to reset this notation and rescale time so that the $\lambda$-dependent scale factor  appears in front of the Hamiltonian term:   let $\rho_{\lambda, t}$ be the solution to the quantum Kolmogorov equation
  \begin{align}\label{RETheModel}
\frac{d}{dt}\rho_{\lambda, t}= -\frac{\textup{i} }{\lambda^{\varrho}}\big[H,\rho_{\lambda, t}\big]+\Psi(\rho_{\lambda, t})-\frac{1}{2}\big\{ \Psi^{*}(I),\rho_{\lambda, t} \big\},
 \end{align}   
where $\rho_{\lambda, 0}:=\rho$ for some density matrix $\rho\in \mathcal{B}_{1}(\mathcal{H})$.   There exists   a unique strongly continuous semigroup of completely positive maps $\Phi_{\lambda, t}:\mathcal{B}_{1}(\mathcal{H})$ that preserve trace, i.e., $\Tr[\Phi_{\lambda, t}(\rho)]=\Tr[\rho]$, such that $ \rho_{\lambda, t}:=\Phi_{\lambda, t}(\rho)$ solves~(\ref{RETheModel}).    Technical questions regarding the construction and uniqueness of the semigroup  $\Phi_{\lambda, t}$ do not follow from Lindblad's basic result~\cite{Lindblad} since the Hamiltonian part of the generator is unbounded, however,  these technical issues are trivial due to the existence of the unitary group $U_{t}=e^{-\textup{i}tH}$ and the exceptionally simple form of the noise map $\Psi$: see~\cite[Appx.A]{Dispersion}.

In future $\rho_{\lambda, t}$ will denote the solution to~(\ref{RETheModel}) with initial matrix $\rho_{\lambda, 0}:=\check{\rho}_{\lambda}$ defined above~(\ref{Gauss}).     The rescaling of time in~(\ref{RETheModel}) makes a slight change in the statement of the main result~(\ref{CentralLimit}) since the final time becomes $\frac{T}{\lambda^{\gamma}}$ for  $\gamma:=\eta-\varrho  $.   Note that the exponent $\gamma$ satisfies  $\gamma<\varrho$ and  $1<  \gamma  <2$ by the constraints above~(\ref{CentralLimit}).  Define  $D_{\lambda,t}(x):=\rho_{\lambda,t}(x,x) $ to be the position density at time $t$.  
  Under the new conventions, the weak convergence~(\ref{CentralLimit}) becomes $\mu_{\lambda, T}\Longrightarrow \mathcal{N}\big(0,T\vartheta\big)$ where  $\mu_{\lambda, T}$ is defined as the probability measure with density  $ \lambda^{-\frac{\eta+2\varrho+3}{2}    } D_{\lambda,\frac{T}{\lambda^{\gamma}  } }\big( \lambda^{-\frac{\eta+2\varrho+3}{2}    } x \big)$.

\subsection{A classical dynamics}\label{SecPreClassical}

The analysis in future sections reduces the proof of the weak convergence  $\mu_{\lambda, T}\Longrightarrow \mathcal{N}\big(0,T\vartheta\big)$  to the study of a classical Markovian process $(Y_{t},K_{t})\in \R^{2}$, where $Y_{t}$ and $K_{t}$ are defined below and have units of position and momentum, respectively.   This classical process was my primary focus in~\cite{Dispersion}, and Sect.~\ref{SecClassical} revisits some of the proofs that simplify due to the simpler setup of this article.  The relative simplicity here is a consequence of the short time scales considered over which the position process does not have variable-rate diffusive behavior.  For instance, if the time scale $\frac{T}{\lambda^{\gamma}}$  for $\gamma \in (1,2) $  were replaced by $\frac{T}{\lambda^{2} }$, then there would be fluctuations in the momentum on the order of the initial momentum $\mathbf{p}=\frac{\mathbf{p}_{0}}{\lambda}$.  However, this clearly complicates a characterization of the position distribution since the diffusion rate $\vartheta=\frac{16 \mathbf{p}_{0}^{3}  }{\alpha \mathcal{R}}$ appearing in the limit distribution $\mathcal{N}\big(0,T\vartheta\big)$   depends on the proportionality constant $\mathbf{p}_{0}$.

 I   will take the initial distribution of the Markov process $(Y_{t},K_{t})$ to be $\delta_{0}(y) |\frak{h}(p)|^{2}$.    In words, the particle begins at the origin and has a probability density in  momentum   given by $|\frak{h}(p)|^{2}$, which is closely concentrated around the value $\mathbf{p}>0$ by our assumptions~(\ref{Gauss}) on $\frak{h}\in L^{2}(\R)$.   The  position component is a time integral of the momentum: $Y_{t}=\frac{2}{\lambda^{\varrho}}\int_{0}^{t}dr\, K_{r}$, and  the momentum component is a Markovian jump process for which the rate of jumps $J(p,p')$  from $p'\in \R$ to $p\in \R$ is given by  
$$  J(p,p'): =  \sum_{n\in \Z}j\big(p-p'-n\big)\left|\kappa_{p-p'-n }\big(p',\,n\big) \right|^{2},  $$
where the values $ \kappa_{v }(p,n)\in \C $ are defined through the coefficients appearing in~(\ref{HairCut}) as
\begin{align}\label{TheKappas}
\kappa_{v }(p,n):= \sum_{m\in \Z }\overline{\eta}\big(p+v+n,m-n \big) \eta(p,m).   
 \end{align}
The values $\left|\kappa_{v}(p,\,n) \right|^{2}$ satisfy the normalization
$\sum_{n\in \Z} \left|\kappa_{v}(p,\,n) \right|^{2} =1 $.  It follows that the escape rate $\mathcal{R}=\int_{\R}dp\,  J(p,p')$ is invariant of the current momentum $p'$.  A jump for the process $K_{t}$ from a momentum $p'\in \R$  can be understood as composed of a sum $v+n$ in which $v\in \R$ has density $\frac{j(v)}{\mathcal{R}}$ and $n\in \Z$ has conditional probabilities $\left|\kappa_{v}(p',\,n) \right|^{2}$ given $p'$ and $v$.  The component $v$ is the momentum transfered to the test particle through a collision with a gas particle, and  $n$ is the momentum transfered through diffractive scattering.

\subsection{Some notational conventions}

The following is a partial summary of notation.  Let $\mathbf{h}$ be a Hilbert space, $\mathbf{b}$ be a Banach space, $\frak{m}$ be a measure space. 
\begin{eqnarray*}
\textbf{Sets:} \quad & \mathbb{T}  & \text{Torus identified with the Brillouin zone  $\big[-\frac{1}{2},\frac{1}{2}\big)$ }\\
&  \mathcal{I}   & \text{Spatial interval $[-\pi,\pi )$ }\\
&       &   \\
 \textbf{Spaces:} \quad & \mathcal{B}(\mathbf{h}) & \text{Bounded operators over the Hilbert space $\mathbf{h}$ }\\
&  \mathcal{B}_{1}(\mathbf{h})  & \text{Trace class operators over the Hilbert space $\mathbf{h}$ }\\  
&  \mathcal{B}_{2}(\mathbf{h})  & \text{Hilbert-Schmidt operators over the Hilbert space $\mathbf{h}$}\\  &   &  \\
 \textbf{Norms:} \quad & \|f\|_{p} & \text{$L^{p}$-norm for $p\in[1,\infty]$ and a measurable function $ f:\frak{m}\rightarrow \mathbf{b} $ }\\  
&  \|G\|     & \text{Operator norm for a linear map $G:\mathbf{b}\rightarrow \mathbf{b}$  }\\
&   \|\rho\|_{\mathbf{1}}      & \text{Trace norm for an element $\rho\in \mathcal{B}_{1}(\mathbf{h})$ }
\end{eqnarray*}
Note that the trace norm has a boldface subscript.

\section{Statements of the main results with proof sketches}\label{SecThmMain}

This section outlines the proofs for the main results of this article.   Section~\ref{SecDiff} is devoted to the situation in which both the Dirac comb and the random kicking contribute to the dynamics and the resulting behavior of the particle is diffusive.  Section~\ref{SecBall} discusses the simpler cases in which  ballistic motion prevails because either the Dirac comb or the random kicking is not present.

\subsection{The diffusive case}\label{SecDiff}

  I will have the following technical assumptions on the jump rates $j(v)$:
\begin{assumptions}\label{Assumptions} \text{ }
There is a $\varpi>0$ such that
\begin{enumerate} 

\item $  \int_{\R}dv\,j(v)e^{a|v|}<\varpi$ for some $a>0$,

\item $\sup_{-\frac{1}{4}\leq \theta \leq \frac{1}{4}} \sum_{n\in \Z} j\big(\theta+\frac{n}{2}\big)<\varpi $,    

\item $\inf_{v\in [-1,1]   } j(v)\geq \varpi^{-1}$.

\end{enumerate}

\end{assumptions}

Recall that $\vartheta:=\frac{16\mathbf{p}_{0}^{3}}{\alpha \mathcal{R} }$.  The theorem below states that the position distribution for the test particle  at time $\frac{T}{\lambda^{\gamma}}$ for $\lambda\ll 1$ and $\gamma \in (1,2)$ is approximately a Gaussian distribution with variance $T\vartheta$ when normalized by the factor $ \lambda^{-\frac{\gamma+2\varrho+3  }{2} }$.  

 \begin{theorem}\label{ThmMain}  Pick  $1<\gamma<2$ and $\varrho>\gamma$.
Let $\rho_{\lambda, t}$ be the solution to~(\ref{RETheModel})  and $ D_{\lambda,t}\in L^{1}(\R) $ be the density determined by $\rho_{\lambda, t}$ for the position distribution.  Define the measure $\mu_{\lambda,T}$ to have the density: 
   $$\frac{d\mu_{\lambda,T}}{dx}(x)= \lambda^{-\frac{\gamma+2\varrho+3  }{2} }D_{\lambda,\frac{T}{\lambda^{\gamma} }}\big( \lambda^{-\frac{\gamma+2\varrho+3  }{2} } x   \big).$$   For each $T\in \R^{+}$, the measures $\mu_{\lambda,T}$ converge in law as $\lambda\searrow 0$ to a mean zero Gaussian with variance $T\vartheta$.

 \end{theorem}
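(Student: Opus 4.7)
The plan is to reduce the quantum dynamics~(\ref{RETheModel}) to the classical Markov process $(Y_{t},K_{t})$ from Sect.~\ref{SecPreClassical} through a sequence of intermediate dynamics, and then to prove a functional CLT for that classical process.

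First, I would diagonalize $H$ using the quasi-momentum kets~(\ref{HairCut}), in which $H=\int_{\R}dp\,E(p)|p\rangle_{\scriptscriptstyle{Q}}\,{}_{\scriptscriptstyle{Q}}\langle p|$. In this basis the noise map $\Psi$ induces transitions $p'\to p'+v+n$ with amplitudes built from the $\kappa_{v}(p',n)$ of Eq.~(\ref{TheKappas}); the squared magnitudes $|\kappa_{v}(p',n)|^{2}$ are the diffractive probabilities defining the classical rates $J(p,p')$. Passing to the interaction picture, the coherences between $|p\rangle_{\scriptscriptstyle{Q}}$ and $|p'\rangle_{\scriptscriptstyle{Q}}$ oscillate at frequency $(E(p)-E(p'))/\lambda^{\varrho}$, so non-resonant off-diagonal terms average out on the fast time scale $\lambda^{\varrho}$, which is short compared both to the mean collision time $\mathcal{R}^{-1}$ and to the horizon $T/\lambda^{\gamma}$. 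The only coherences that survive are the Pendell\"osung pairs $(p,p')$ with $p'\approx -p$ modulo $\frac{1}{2}\Z$, as in Eq.~(\ref{Pendo}); these oscillate at frequency $g_{n}/\lambda^{\varrho}$, and since the Kr\"onig-Penney formula of Sect.~\ref{SecConvComb} gives $g_{n}=O(\alpha/n)$ at momenta of order $\mathbf{p}_{0}/\lambda$, the Pendell\"osung time scale $\lambda^{\varrho+1}$ is still small compared with $\mathcal{R}^{-1}$. A chain of trace-norm estimates between intermediate dynamics, which is the content of Sects.~\ref{SecQuasiMomentum}--\ref{SecTransition}, converts this heuristic into an $o(1)$ bound showing that $D_{\lambda,T/\lambda^{\gamma}}$ agrees up to such an error with the density of $Y_{T/\lambda^{\gamma}}$ in the classical model.

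Second, I would prove a CLT directly for $\lambda^{(\gamma+2\varrho+3)/2}\,Y_{T/\lambda^{\gamma}}$. Using the jump kernel $J$, a collision at momentum $p'$ produces a ``reflection'' (sending $K_{t}$ from near $p'$ to near $-p'$) with probability $\approx \alpha/(4|p'|)$, contributed by those $n$ that place $p'+v+n$ inside a reflection band; all other transitions shift $K_{t}$ by only $O(1)$. Consequently on the horizon $T/\lambda^{\gamma}$ the process $K_{t}$ undergoes $N_{\lambda}\approx \lambda^{1-\gamma}\alpha \mathcal{R} T/(4\mathbf{p}_{0})$ reflections separated by approximately exponential sojourns of mean $4\mathbf{p}_{0}/(\alpha\lambda\mathcal{R})$, during which $K_{t}$ stays near $\pm\mathbf{p}_{0}/\lambda$ by the exponential-moment bound in Rate assumption~\ref{Assumptions}(1). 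The position $Y_{T/\lambda^{\gamma}}=(2/\lambda^{\varrho})\int_{0}^{T/\lambda^{\gamma}}K_{r}\,dr$ then takes the alternating form~(\ref{Projectile}). A martingale CLT applied to the centered sum, together with the sojourn second-moment computation yielding $\vartheta=16\mathbf{p}_{0}^{3}/(\alpha\mathcal{R})$, delivers the Gaussian limit.

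The principal obstacle is the adiabatic elimination of coherences in the first step. These coherences couple momenta differing by $O(\mathbf{p}_{0}/\lambda)$ and oscillate at rates that depend sensitively on how close $K_{t}$ is to a band edge, so the elimination error must be controlled uniformly over the whole reflection-band structure visited during $[0,T/\lambda^{\gamma}]$. The Kr\"onig-Penney formulas from Sect.~\ref{SecConvComb} make $g_{n}$ and the coefficients $\eta(p,m)$ explicit, but one must still rule out anomalous accumulation of coherence near band edges, which is achieved using the non-degeneracy estimates of Rate assumptions~\ref{Assumptions}(2)--(3) on the jump density $j$.
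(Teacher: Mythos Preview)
Your proposal follows the paper's two-phase architecture: reduce the quantum evolution to the classical process $(Y_t,K_t)$ via the chain of intermediate approximations in Sects.~\ref{SecQuasiMomentum}--\ref{SecTransition}, and then establish a martingale CLT for the classical position as in Sect.~\ref{SecClassical}. One point of precision: the paper does not compare the position density $D_{\lambda,T/\lambda^\gamma}$ with the law of $Y_{T/\lambda^\gamma}$ directly; instead it rewrites the characteristic function of $\mu_{\lambda,T}$ as $\int_{\R}dp\,[\rho_{\lambda,T/\lambda^\gamma}]^{(u)}(p)$ for the scaled parameter $u=\tfrac12\lambda^{(\gamma+2\varrho+3)/2}k$ (via Prop.~\ref{PropFiberII}(2)), telescopes the $L^1$ bounds on the off-diagonal fibers from Lem.~\ref{StanToQuasi}, Thm.~\ref{ThmSemiClassical}, and Lem.~\ref{SemiToFull} to reach the classical characteristic function $\int_{\R}dp\,\mathcal{P}^{(u)}_{\lambda,T/\lambda^\gamma}(p)$, and only then invokes Thm.~\ref{ThmClassical}.
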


I will go through the main part of the proof for Thm.~\ref{ThmMain} after  defining notation and stating the main technical results that the argument depends upon.  The goal of the analysis in the quantum setting is to approximate the relevant quantities by  analogs  corresponding to the classical process $(K_{t},Y_{t})$ discussed in Sect.~\ref{SecPreClassical}.  To reach the classical quantities, there are three intermediary approximations that are roughly characterized by the following:       
\begin{enumerate}[(I).]

\item  \textbf{Momentum to extended-zone scheme momentum approximation:} The integral kernel $\big\langle p_{1}  \big|\rho_{\lambda,\frac{T}{\lambda^{\gamma} } } \big|  p_{2}\big\rangle $ for the state $\rho_{\lambda,\frac{T}{\lambda^{\gamma} } }$  in the momentum representation  encodes   the limiting spatial diffusive behavior    in the region near the diagonal $p_{1}=p_{2}$; see the proof of Thm.~\ref{ThmMain} below.  However, the state $\rho_{\lambda,\frac{T}{\lambda^{\gamma} } }$ has nearly the same integral kernel in the momentum representation $\big\langle p_{1}  \big|\rho_{\lambda,\frac{T}{\lambda^{\gamma} } } \big|  p_{2}\big\rangle $ and the extended-zone scheme representation ${   }_{\scriptscriptstyle{Q}}\big\langle p_{1}  \big|\rho_{\lambda,\frac{T}{\lambda^{\gamma} } } \big|  p_{2}\big\rangle_{\scriptscriptstyle{Q}}  $.  Propagation for the position variable can effectively be treated as if it were generated by the  extended-zone scheme momentum rather than the standard momentum.

\item  \textbf{Adiabatic approximation:} The evolution of the density matrix $\rho_{\lambda,t }$  over the time interval $r\in [0,\frac{T}{\lambda^{\gamma} }]$  has an approximate decomposition in the extended-zone scheme representation that emerges for $\lambda\ll 1$.  The limiting decomposition is such that  the off-diagonal functions $\big[\rho_{\lambda,t} \big]^{(k)}_{\scriptscriptstyle{Q}}\in L^{1}(\R)$, $k\in \R$  that are formally defined through
 \begin{align}\label{Jabber}
   \big[\rho \big]^{(k)}_{\scriptscriptstyle{Q}}(p):={   }_{\scriptscriptstyle{Q}}\big\langle p- k  \big|\,\rho\, \big|  p+ k \big\rangle_{\scriptscriptstyle{Q}} 
\end{align}
are approximately given by $\Phi_{\lambda,t}^{(k)}\big[\check{\rho}_{\lambda}  \big]^{(k)}_{\scriptscriptstyle{Q}}$ for a contractive semigroup of maps $\Phi_{\lambda,t}^{(k)}:L^{1}(\R)$.  The dynamics along the off-diagonal fibers of the density matrix are thus approximately autonomous.  This feature is a consequence of the relatively fast time scale of the Hamiltonian dynamics compared to the noise in the adiabatic regime.

\item  \textbf{Classical approximation:}  Fiber decompositions such as in the idealized form remarked upon in (II) are characteristic of translation covariant dynamics, i.e., the rate of momentum kicks does not depend on the position of the particle.  The Markovian process $(K_{t},Y_{t})$ discussed in Sect.~\ref{SecPreClassical} has a translation covariant law, so an analogous decomposition applies.   Let  $\mathcal{P}_{\lambda, t}\in L^{1}(\R^{2})$ be the joint density for the random variable $(K_{t},Y_{t})$.  For each $k\in \R$ there is a contractive semigroup  $\Upsilon_{\lambda, t}^{(k)}:L^{1}(\R)$ such that 
$ \Upsilon_{\lambda, t}^{(k)}\mathcal{P}_{\lambda,0}^{(k)}=\mathcal{P}_{\lambda,t}^{(k)}$ for all times $t\in \R^{+}$, where  $\mathcal{P}_{\lambda,t}^{(k)}: = \int_{\R}dx\, \mathcal{P}_{\lambda, t}(x,p)e^{\textup{i}2xk}$.  The semigroups $\Phi_{\lambda,t}^{(k)}$ and $\Upsilon_{\lambda, t}^{(k)}$ are close for small $k$, and   through this connection, the problem reduces to the classical case.

\end{enumerate}

Let  $\Phi_{\lambda,t}^{(k)}:L^{1}(\R)$ be the semigroup with generator $\mathcal{L}_{\lambda, k}$ that acts on elements  $f\in \mathcal{T}:= \big\{ f\in L^{1}(\R)\,\big|\,\int_{\R}dp\, |p|\, |f(p)|<\infty \big\}    $ as
\begin{align*}
\big(\mathcal{L}_{\lambda, k}f\big)(p)= &-\frac{\textup{i}}{\lambda^{\varrho}} \big(E\big(p- k\big)- E\big(p+ k \big)   \big)f(p) \\   & -\mathcal{R}f(p)+\int_{\R}dp'\,J_{k}(p,p')f(p')    ,
\end{align*}
where the kernel $ J_{k}$ is defined as
\begin{align}\label{OffJays}   J_{k}(p,p'):= \sum_{n\in \Z}j(p-p'-n)\kappa_{p-p'-n }\big(p'- k,\,n\big) \overline{\kappa}_{p-p'-n }\big(p'+k,\,n\big)   
\end{align}
for $\kappa_{v }(p,\,n)$ given by~(\ref{TheKappas}).  The operator $\mathcal{L}_{\lambda, k}$ is closed when assigned the domain $\mathcal{T}$, and the semigroup $\Phi_{\lambda,t}^{(k)}=e^{t\mathcal{L}_{\lambda, k} }$ is contractive. 
The semigroup of maps $\Upsilon_{\lambda, t}^{(k)}:L^{1}(\R)$ is defined to have generator $\mathcal{L}_{\lambda, k}'$ with domain $\mathcal{T}$ and operation given by 
\begin{align}\label{FlowerBasket}
\big(\mathcal{L}_{\lambda, k}' f\big)(p)=\frac{\textup{i}}{\lambda^{\varrho}} 4kp f(p)+\int_{\R}dp'\,\Big(J(p,p')f(p')-J(p',p)f(p) \Big)  .
\end{align}

For $k\in \R$ and $\rho \in \mathcal{B}_{1}(\mathcal{H})$, let $[\rho]^{(k)}\in L^{1}(\R)$  be defined  analogously to $[\rho]^{(k)}_{\scriptscriptstyle{Q}}$ as
$$ [\rho]^{(k)}(p):= \big\langle p-k  \big|\,\rho\,\big|  p+k\big\rangle  .   $$
The functions $[\rho]^{(k)}$ and $[\rho]^{(k)}_{\scriptscriptstyle{Q}}$ are extractions of the off-diagonals from the kernels for $\rho$ in the momentum and extended-zone scheme momentum representations, respectively, and  these objects are discussed more rigorously in Sect.~\ref{SecBlochFiber}.  The descriptions (I), (II), and (III) correspond respectively to   Lem.~\ref{StanToQuasi}, Thm~\ref{ThmSemiClassical}, and Lem.~\ref{SemiToFull} below. 

\begin{lemma}\label{StanToQuasi}
   For  $T\in \R^{+}$, $\gamma\in (1, 2)$, and  $\iota= \frac{2-\gamma}{3}$, there is a $C>0$ such that for all $\lambda<1$ and $|k|\leq  \lambda^{\iota}\mathbf{p}$,
$$\left\|   \big[\rho_{\lambda,\frac{T}{\lambda^{\gamma} } }  \big]^{(k)}-  \big[\rho_{\lambda,\frac{T}{\lambda^{\gamma} } } \big]^{(k)}_{\scriptscriptstyle{Q}}\right\|_{1}\leq C\lambda^{\iota}.    $$

\end{lemma}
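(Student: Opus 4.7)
The plan is to expand $[\rho]^{(k)}_{\scriptscriptstyle Q}$ by substituting $|p\rangle_{\scriptscriptstyle Q} = \sum_m \eta(p, m)|p+m\rangle$ and compare term-by-term with $[\rho]^{(k)}$. The essential content is that $\eta(p, m) \approx \delta_{m, 0}$ for momenta $p$ in the bulk region where $\rho_{\lambda, T/\lambda^\gamma}$ is concentrated, namely $|p| \sim \mathbf{p} \gg 1$ and $p$ away from the half-integer lattice $\frac{1}{2}\Z$.

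First I would establish fine pointwise asymptotics of the Kr\"onig-Penney coefficients. Using the identity
\[ \frac{1}{\mathbf{q}(p)+p+m}+\frac{1}{\mathbf{q}(p)-p-m}=\frac{2\mathbf{q}(p)}{\mathbf{q}(p)^{2}-(p+m)^{2}}, \]
the relation~\eqref{KronigPenney} giving $\mathbf{q}(p)-p = O(\alpha/p)$ for $p$ outside the reflection bands, and the normalization $\sum_{m}|\eta(p,m)|^{2}=1$, one obtains, for $p$ at distance at least $C\alpha/|p|$ from $\frac{1}{2}\Z$,
\[ \eta(p,0)=1+O(\alpha/p),\qquad |\eta(p,m)|\leq \frac{C\alpha}{|m|\cdot|2p+m|}\quad (m\neq 0), \]
so that $\sum_{m\neq 0}|\eta(p,m)| = O(\log|p|/|p|)$. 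Next I would control the momentum distribution of $\rho_{\lambda,T/\lambda^{\gamma}}$: tracing~\eqref{RETheModel} against $H$ shows that the Hamiltonian part is conservative while the noise contributes $\sigma:=\int j(v)v^{2}dv<\infty$ per unit time (finite by Assumption~\ref{Assumptions}(1)), so $\Tr[H\rho_{\lambda,t}]=\Tr[H\check\rho_{\lambda}]+\sigma t$; combined with $H\geq P^{2}$, this gives momentum variance $O(\lambda^{-\gamma})$ at time $T/\lambda^{\gamma}$, and Chebyshev bounds the mass outside $R_{0}:=[\mathbf{p}/2,2\mathbf{p}]$ by $O(\lambda^{2-\gamma})$. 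A local central limit theorem for the compound-Poisson smoothing action of $\Psi$ yields $\sup_{p}\rho_{\lambda,T/\lambda^{\gamma}}(p,p)= O(\lambda^{\gamma/2})$, so that the reflection-band region $\mathcal{B}\subset R_{0}$ (of total Lebesgue measure $O(\alpha)$) carries mass $O(\lambda^{\gamma/2})$.

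With these inputs in hand I would decompose
\[ [\rho]^{(k)}_{\scriptscriptstyle Q}(p)-[\rho]^{(k)}(p)=(\overline\eta(p-k,0)\eta(p+k,0)-1)\,\rho(p-k,p+k)+\mathcal{E}(p), \]
where $\mathcal{E}$ collects the $(m_1,m_2)\neq(0,0)$ terms, and split the integration over the good region $R'=R_{0}\setminus\mathcal{B}$, the reflection bands $R_{0}\cap\mathcal{B}$, and the complement $R_{0}^{c}$. Since $|k|\leq\lambda^{\iota}\mathbf{p}\ll \mathbf{p}$, the translates $p\pm k$ stay in $R_{0}$ whenever $p\in R_{0}$. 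On $R'$, the asymptotics give $|\overline\eta(p-k,0)\eta(p+k,0)-1|=O(\lambda)$, and the a~priori bound $\|[\rho]^{(c)}\|_{1}\leq\|\rho\|_{\mathbf{1}}=1$ valid for every $c\in \R$ (obtained via Cauchy-Schwarz on the singular-value decomposition of $\rho$) controls the two contributions by $O(\lambda)$ and $O(\lambda\log^{2}(1/\lambda))$ respectively. On $R_{0}\cap\mathcal{B}$ and $R_{0}^{c}$ I would use the crude bound $|\eta|\leq 1$ together with the positivity estimate $|\rho(p-k,p+k)|\leq\sqrt{\rho(p-k,p-k)\rho(p+k,p+k)}$ and Cauchy-Schwarz in $p$ to reduce to the mass estimates from the previous step, yielding $O(\lambda^{\gamma/2})$ and $O(\lambda^{(2-\gamma)/2})$ respectively. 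All three contributions are dominated by $\lambda^{\iota}=\lambda^{(2-\gamma)/3}$ for $\gamma\in(1,2)$.

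The main obstacle is the uniform pointwise density estimate $\sup_{p}\rho_{\lambda,T/\lambda^{\gamma}}(p,p)=O(\lambda^{\gamma/2})$: it requires a quantum local central limit theorem for the dynamics~\eqref{RETheModel}, verifying that the compound-Poisson smoothing from $\Psi$ is not defeated by resonant concentration of momentum density caused by the singular Dirac-comb Hamiltonian in the narrow reflection-band windows over the horizon $T/\lambda^{\gamma}$. This is plausible because $\Psi$ acts at full strength in~\eqref{RETheModel} while the Hamiltonian only redistributes mass across shifts by elements of the reciprocal lattice $\Z$, but a careful analysis — akin to what is done for classical compound Poisson processes via Fourier methods on the characteristic function — is needed to justify it rigorously.
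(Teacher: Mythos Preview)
Your approach has two genuine gaps, one that you flag yourself and one that you do not.

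\textbf{The unflagged gap.} Your claimed asymptotic $\sum_{m\neq 0}|\eta(p,m)|=O(\log|p|/|p|)$ on the region $\{|\Theta(p)|\geq C\alpha/|p|\}$ is incorrect. From your own pointwise bound $|\eta(p,m)|\leq C\alpha/(|m|\,|2p+m|)$, the dominant term in the sum is the reflected mode $m=-\mathbf{n}(p)$, for which $|2p+m|=2|\Theta(p)|$ and hence $|\eta(p,-\mathbf{n}(p))|=O(\alpha/(|p|\,|\Theta(p)|))$. On your ``good'' region this is only $O(1/C)$, not decaying in $|p|$. The correct control is $\big(\sum_{m\neq 0}|\eta(p,m)|^{2}\big)^{1/2}=O\big(1/(1+|\beta(p)|)\big)$ with $\beta(p)=\tfrac{1}{2}\mathbf{n}(p)\Theta(p)$ (this is Lemma~\ref{Ticks} in the paper), and the smallness must come from \emph{integrating} this over a period cell, not from a pointwise bound on a set excluding thin bands.

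\textbf{The flagged gap, and how the paper avoids it.} You correctly identify the pointwise density estimate $\sup_{p}\rho_{\lambda,T/\lambda^{\gamma}}(p,p)=O(\lambda^{\gamma/2})$ as the crux, and you are right that proving it would require a quantum local CLT robust to the Hamiltonian's resonant redistribution. The paper sidesteps this entirely. Because $H$ commutes with $e^{i2\pi P}$, the \emph{quasimomentum} (torus-contracted) density $\phi\mapsto\langle\rho_{t}\rangle^{(0)}_{\phi}=\Tr[\ell^{(0)}_{\phi}(\rho_{t})]$ satisfies a closed \emph{classical} Kolmogorov equation in which the Hamiltonian does not appear (Proposition~\ref{PropMoreFiber}(2)); only the noise acts, as convolution by the bounded kernel $J_{\mathbb{T}}(\phi,\phi')=\sum_{n}j(\phi-\phi'+n)$. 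This immediately gives the uniform bound $\sup_{\phi}\langle\rho_{\lambda,t}\rangle^{(0)}_{\phi}=O(1)$ (Proposition~\ref{PropMoreFiber}(4)) with no LCLT needed. The paper then writes
\[
\big\|[\rho]^{(k)}-[\rho]^{(k)}_{\scriptscriptstyle Q}\big\|_{1}
\leq \Big(\int_{\R}dp\,\big\langle\psi_{p+k}-\widetilde\psi_{p+k}\,\big|\,\ell^{(0)}_{\phi+\kappa}(\rho)\,\big|\,\psi_{p+k}-\widetilde\psi_{p+k}\big\rangle\Big)^{1/2}
\]
via a fiberwise Cauchy--Schwarz (Proposition~\ref{MiscFiber}(1)), and bounds the integrand on the bulk region $|p|\approx\mathbf{p}$ by $\sup_{\phi}\|\ell^{(0)}_{\phi}(\rho)\|_{\infty}\cdot\|\psi_{p+k}-\widetilde\psi_{p+k}\|_{2}^{2}$. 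The first factor is the $O(1)$ quasimomentum density; the second, integrated over a period, contributes $O(1/\mathbf{p})$ by Lemma~\ref{Ticks}, and the number of periods in the bulk window is $O(\lambda^{\iota}\mathbf{p})$. The complementary region $||p|-\mathbf{p}|\geq\lambda^{\iota}\mathbf{p}$ is handled by the energy-Chebyshev argument you already have (Lemma~\ref{LemMomDist}). No pointwise momentum-density decay is needed anywhere.
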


\begin{theorem}\label{ThmSemiClassical}
For $T\in \R^{+}$,  $\gamma\in (1,2)$, and $\iota=\min\big(\varrho-\gamma,\frac{2-\gamma}{2}\big)$,     there is a $C>0$ such that for all  $\lambda<1$ and $|k|\leq \frac{1}{2}\lambda^{2}$,
$$
\left\|  \big[\rho_{\lambda, \frac{T}{\lambda^{\gamma}} } \big]^{(k)}_{\scriptscriptstyle{Q}} -\Phi_{\lambda,\frac{T}{\lambda^{\gamma}}}^{(k)}\big[\check{\rho}_{\lambda} \big]^{(k)}_{\scriptscriptstyle{Q}} \right\|_{1} \leq C\lambda^{ \iota}.    $$

\end{theorem}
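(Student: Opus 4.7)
The plan is to derive the evolution equation for the extended-zone fiber $[\rho_{\lambda,t}]^{(k)}_{\scriptscriptstyle{Q}}$ directly from~(\ref{RETheModel}), identify $\mathcal{L}_{\lambda,k}$ as the autonomous leading part of this equation, and control the remaining coupling to off-fiber quantities by a Duhamel expansion exploiting the rapid oscillatory phase $\lambda^{-\varrho}(E(q_1)-E(q_2))$ generated by the Hamiltonian. The Hamiltonian is diagonal in $\{|q\rangle_{\scriptscriptstyle{Q}}\}$, so it contributes exactly the phase multiplier $-\textup{i}\lambda^{-\varrho}(E(p-k)-E(p+k))$ appearing in $\mathcal{L}_{\lambda,k}$. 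The noise $\Psi$, which is a shift in the standard momentum basis, does \emph{not} act diagonally on the extended-zone fibers: expanding $|q\rangle_{\scriptscriptstyle{Q}}=\sum_m\eta(q,m)|q+m\rangle$ inside~(\ref{TheNoise}) produces contributions indexed by $(\Delta_1,\Delta_2)\in\Z^2$ that couple $[\rho]^{(k)}_{\scriptscriptstyle{Q}}$ to the off-fiber kernels $\mathcal{K}_\Delta(s;p):={}_{\scriptscriptstyle{Q}}\langle p-k+\Delta_1|\rho_{\lambda,s}|p+k+\Delta_2\rangle_{\scriptscriptstyle{Q}}$. The term $(\Delta_1,\Delta_2)=(0,0)$ reproduces the kernel $J_k$ of~(\ref{OffJays}) together with the loss rate $-\mathcal{R}$; the remaining couplings define the error operator $\mathcal{E}_{\lambda,k}$.

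By the variation-of-constants formula and $L^1$-contractivity of $\Phi_{\lambda,t}^{(k)}$,
\begin{equation*}
\bigl\|[\rho_{\lambda,t}]^{(k)}_{\scriptscriptstyle{Q}} - \Phi_{\lambda,t}^{(k)}[\check\rho_\lambda]^{(k)}_{\scriptscriptstyle{Q}}\bigr\|_1 \leq \int_0^t \bigl\|\mathcal{E}_{\lambda,k}\rho_{\lambda,s}\bigr\|_1\,ds,
\end{equation*}
so it suffices to bound the time integral on the right. Each $\mathcal{K}_\Delta$ with $(\Delta_1,\Delta_2)\neq(0,0)$ satisfies a Lindblad evolution whose dominant piece is multiplication by $-\textup{i}\lambda^{-\varrho}\Omega_\Delta(p,k)$ with $\Omega_\Delta(p,k):=E(p+k+\Delta_2)-E(p-k+\Delta_1)$. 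Passing to the interaction picture generated by the Hamiltonian turns the remaining generator into a uniformly bounded time-dependent integral operator while $\mathcal{K}_\Delta$ acquires an explicit oscillatory factor $e^{-\textup{i}s\lambda^{-\varrho}\Omega_\Delta(p,k)}$. A single integration by parts in $s$ against this factor produces a gain $1/(\lambda^{-\varrho}\Omega_\Delta)=O(\lambda^\varrho)$ away from band crossings. Substituting back into the Duhamel expression reduces the pointwise integrand by $O(\lambda^\varrho)$, and integrating over $[0,T/\lambda^\gamma]$ yields the $\lambda^{\varrho-\gamma}$ contribution to $\iota$.

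The $(2-\gamma)/2$ part of $\iota$ reflects a distinct mechanism: over the window $T/\lambda^\gamma$ the momentum process has typical fluctuations of size $\lambda^{-\gamma/2}$, which is $\lambda^{(2-\gamma)/2}$ relative to the initial momentum $\mathbf{p}=\mathbf{p}_0/\lambda$. Replacing the relevant data $\kappa_v(p\pm k,n)$, $\Omega_\Delta'(p,k)$, and the coefficients $\eta(p\pm k,m)$ of~(\ref{HairCut}) by their values at $p=\mathbf{p}$ is needed in order to match $\mathcal{L}_{\lambda,k}$ cleanly; Taylor expansion in $p-\mathbf{p}$ incurs a relative error of the size just described, giving the competing $\lambda^{(2-\gamma)/2}$ factor. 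The hypothesis $|k|\leq\tfrac12\lambda^{2}$ and Assumptions~\ref{Assumptions} are then used to absorb $k$-dependent corrections (since $k\mathbf{p}\leq\tfrac12\lambda$) and to ensure summability of the $\Delta$-series uniformly in $p$.

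The main obstacle is that a single integration by parts in $s$ requires the driver of $\mathcal{K}_\Delta$ to be well-controlled, yet the driver itself couples back into the full hierarchy of off-fibers through $\Psi$. I would handle this by iterating the interaction-picture Duhamel expansion to a controlled depth and bounding the resulting nested integrals using the absolute $L^1$-contraction of the jump semigroup $e^{t(\Psi-\mathcal{R})}$. A secondary technical difficulty is the degeneration of the oscillatory estimate near the band-edge set $\tfrac12\Z$, where $|\Omega_\Delta|$ can be small and the Kronig--Penney coefficients $\eta$ diverge; this is handled by a cutoff around these lattice points whose excluded-measure contribution is absorbed into the $\lambda^{(2-\gamma)/2}$ bound using the $L^\infty$ regularity of $\mathcal{K}_\Delta(s;\cdot)$ propagated from~(\ref{Gauss}).
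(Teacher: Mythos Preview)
Your identification of the $\lambda^{\varrho-\gamma}$ mechanism---an oscillatory phase $e^{-\textup{i}s\lambda^{-\varrho}\Omega_\Delta}$ that, when integrated, buys a factor $\lambda^\varrho$ against the time-interval length $T\lambda^{-\gamma}$---is correct and matches the paper.  The paper, however, does \emph{not} run a Duhamel on the differential equation for $[\rho_{\lambda,t}]^{(k)}_{\scriptscriptstyle Q}$ and then integrate by parts in $s$.  Instead it exploits the pseudo-Poisson structure $\Psi^*(I)=\mathcal{R}I$ to unravel both $\Phi_{\lambda,t}$ and $\Phi_{\lambda,t}^{(k)}$ as expectations over a rate-$\mathcal{R}$ Poisson process (Lemma~\ref{Trivial}), inserts a telescoping sum indexed by the Poisson times $t_n$, and then for each $n$ integrates explicitly over $t_n\in[t_{n-1},t_{n+1}]$ the difference
\[
\int_{t_{n-1}}^{t_{n+1}}\!\!dt_n\,\Bigl(\mathcal{R}^{-1}U_{\lambda,-t_n}^{(k)}\bigl[\Psi\bigl(e^{-\textup{i}t_nH/\lambda^\varrho}\rho\,e^{\textup{i}t_nH/\lambda^\varrho}\bigr)\bigr]^{(k)}_{\scriptscriptstyle Q}-U_{\lambda,-t_n}^{(k)}T_kU_{\lambda,t_n}^{(k)}[\rho]^{(k)}_{\scriptscriptstyle Q}\Bigr)
\]
(Lemma~\ref{Grim}).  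This sidesteps precisely the hierarchy difficulty you flag: the integration over $t_n$ is performed \emph{before} any norm is taken, the resulting $\lambda^\varrho$ gain is per collision, and the remaining factors are controlled by the contractivity of $\Phi_{\lambda,\xi,r,t}^{(k)}$ in $L^1$ and of $\Phi_{\lambda,\xi,t}$ in trace norm.  Your proposal to ``iterate the interaction-picture Duhamel to a controlled depth'' is vague at exactly the point where the paper has a clean device; it is not obvious how you would close the iteration, since each layer couples to all $(\Delta_1,\Delta_2)$ and each time integral is over an interval of length $\lambda^{-\gamma}$.

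Your explanation of the $\lambda^{(2-\gamma)/2}$ contribution is incorrect.  The generator $\mathcal{L}_{\lambda,k}$ is \emph{not} frozen at $p=\mathbf{p}$; it carries full $p$-dependence through $E(p\pm k)$ and $J_k(p,p')$, so no Taylor expansion in $p-\mathbf{p}$ is needed to match it.  In the paper the $(2-\gamma)/2$ arises from a high-momentum cutoff $\mathbf{Q}'(p)=1_{|p|\le 2\lambda^{-3/2-\gamma/4}}$ that is introduced in Lemma~\ref{Grim} to guarantee the denominator bound~(\ref{Terk}), namely that $|E(p-k)-E(p+k)|$ stays below the band-gap scale so that the energy difference in the oscillatory denominator cannot vanish for $n\neq m$.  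The cost of this cutoff is controlled by Chebyshev against $\Tr[H\rho]$, which grows linearly in the Poisson count via Part~(3) of Prop.~\ref{LemSubMartBasic}; truncating the Poisson count at $\lfloor\mathcal{R}T/\lambda^2\rfloor$ (with superpolynomially small tail) yields a per-term bound $O(\lambda^{1+\gamma/2})$ and hence $O(\lambda^{1-\gamma/2})$ after summing.  The near-lattice cutoff $\mathbf{Q}$ that you mention contributes only $O(\lambda^{2-\gamma})$, which is subdominant.  You would need to discover and implement the high-momentum cutoff (and the energy-growth control it demands) for your argument to produce the stated exponent.
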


\begin{lemma}\label{SemiToFull}
  There is a $C>0$ such that  for all $\lambda<1$, $t\in \R^{+}$, and $|k|\leq \lambda^{\frac{\gamma}{2}+\frac{5}{4}+\varrho }$,
$$ \left\| \Phi_{\lambda,t }^{(k)}\big[\check{\rho}_{\lambda} \big]^{(k)}_{\scriptscriptstyle{Q}}- \mathcal{P}_{\lambda, t}^{(k)}\right\|_{1} \leq C\big(\lambda^{\frac{1}{2}} + t\lambda^{\frac{\gamma}{2}+1} \big) .  $$

\end{lemma}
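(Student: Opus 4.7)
The plan is to run a Duhamel comparison between the two contractive semigroups. Setting $u(s):=\Phi_{\lambda,s}^{(k)}\bigl[\check{\rho}_{\lambda}\bigr]^{(k)}_{\scriptscriptstyle Q}$ and $v(s):=\Upsilon_{\lambda,s}^{(k)}\mathcal{P}_{\lambda,0}^{(k)}$, the pair satisfies $\frac{d}{ds}(u-v)=\mathcal{L}'_{\lambda,k}(u-v)+(\mathcal{L}_{\lambda,k}-\mathcal{L}'_{\lambda,k})u$, and the $L^{1}$-contractivity of $\Upsilon^{(k)}_{\lambda,s}$ (which follows from the stochastic structure of $\mathcal{L}'_{\lambda,k}$ and the purely imaginary form of its multiplicative part) gives
\[
\|u(t)-v(t)\|_{1} \,\leq\, \bigl\|[\check{\rho}_{\lambda}]^{(k)}_{\scriptscriptstyle Q}-\mathcal{P}_{\lambda,0}^{(k)}\bigr\|_{1} \,+\, \int_{0}^{t}\!\bigl\|(\mathcal{L}_{\lambda,k}-\mathcal{L}'_{\lambda,k})u(s)\bigr\|_{1}\,ds.
\]
The two summands on the right will furnish the $O(\lambda^{1/2})$ and the $O(t\lambda^{\gamma/2+1})$ contributions of the lemma, respectively.

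Since $\mathcal{P}^{(k)}_{\lambda,0}(p)=|\frak{h}(p)|^{2}$ by the choice of initial law $\delta_{0}(y)|\frak{h}(p)|^{2}$, and since $[\check{\rho}_{\lambda}]^{(k)}_{\scriptscriptstyle Q}(p)=\tilde{\frak{h}}(p-k)\overline{\tilde{\frak{h}}(p+k)}$ with $\tilde{\frak{h}}(p):=\sum_{m\in\Z}\overline{\eta(p,m)}\frak{h}(p+m)$, a Cauchy--Schwarz splitting reduces the initial-data term to (i) $\|\tilde{\frak{h}}-\frak{h}\|_{2}=O(\lambda^{1/2})$ plus (ii) the smoothness bound $\|\frak{h}(\cdot-k)\overline{\frak{h}(\cdot+k)}-|\frak{h}|^{2}\|_{1}=O(|k|\,\|\frak{h}_{0}'\|_{2})$, which is negligible under the hypothesis $|k|\leq\lambda^{\gamma/2+5/4+\varrho}$. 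Bound (i) follows from the closed form (\ref{HairCut}): since $\eta(p,0)\to 1$ and $\eta(p,m)=O(|p|^{-1})$ for $m\neq 0$ uniformly away from $\frac{1}{2}\Z$, while $\frak{h}$ is concentrated in an $O(1)$-window around $\mathbf{p}=\mathbf{p}_{0}/\lambda$, the bulk contribution to $\|\tilde{\frak{h}}-\frak{h}\|_{2}^{2}$ is $O(\lambda^{2})$, and the dominant $O(\lambda)$ contribution comes from the reflection bands near $\frac{1}{2}\Z$, each of width $O(\lambda)$, that the support of $\frak{h}$ may intersect.

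For the generator difference, decompose $(\mathcal{L}_{\lambda,k}-\mathcal{L}'_{\lambda,k})f=m_{k}f+\mathcal{K}_{k}f$ with multiplicative symbol $m_{k}(p):=-\textup{i}\lambda^{-\varrho}\bigl[E(p-k)-E(p+k)+4kp\bigr]$ and kernel correction $(\mathcal{K}_{k}f)(p):=\int (J_{k}-J)(p,p')f(p')dp'$. Off the half-integer lattice, the Kr\"onig--Penney relation (\ref{KronigPenney}) gives $E(p)=p^{2}+\alpha/(2\pi)+O(|p|^{-2})$ with uniformly controlled higher derivatives, so Taylor expansion yields the bulk bound $|m_{k}(p)|=O(|k|\lambda^{-\varrho}|p|^{-1})$; in the $O(|k|)$-neighborhood of each lattice point one has the crude bound $|m_{k}(p)|\leq\lambda^{-\varrho}\sup_{n}g_{n}$. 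Combining these with the fact that $u(s,\cdot)$ remains concentrated near $\mathbf{p}\sim\lambda^{-1}$ (its mass is transported by $\Phi^{(k)}_{\lambda,s}$ diffusively, with spread $O(\lambda^{-\gamma/2})\ll\lambda^{-1}$ on the time scales considered) gives $\|m_{k}u(s)\|_{1}=O(|k|\lambda^{-\varrho})=O(\lambda^{\gamma/2+5/4})$ uniformly in $s$. A Lipschitz-in-$p$ estimate on $\kappa_{v}(p,n)$ extracted from (\ref{TheKappas}) yields $\|\mathcal{K}_{k}u(s)\|_{1}=O(|k|)$, which is strictly smaller. Integrating over $[0,t]$ and using $\lambda^{5/4}\leq\lambda$ produces the $O(t\lambda^{\gamma/2+1})$ contribution.

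The main obstacle is the multiplicative-part estimate in the neighborhoods of the Bragg lattice $\frac{1}{2}\Z$, where $E$ has band-gap discontinuities and $m_{k}$ is not a small $L^{\infty}$ multiplier globally. The resolution is to trade the crude sup bound on $m_{k}$ against an equidistribution estimate for $u(s,\cdot)$ showing that the reflection-band neighborhoods carry at most $O(|k|)$ of its total mass, uniformly in $s\in[0,t]$. This concentration propagation parallels in simpler form the fiber analysis underlying Thm.~\ref{ThmSemiClassical}, and is accessible here because $\mathcal{L}_{\lambda,k}$ has a uniformly bounded jump kernel $J_{k}$ and a purely imaginary multiplicative symbol whose action preserves the pointwise $L^{1}$-density of mass.
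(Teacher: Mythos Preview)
Your overall Duhamel strategy and the treatment of the initial-data term are essentially what the paper does; the $O(\lambda^{1/2})$ bound on $\bigl\|[\check{\rho}_{\lambda}]^{(k)}_{\scriptscriptstyle Q}-\mathcal{P}^{(k)}_{\lambda,0}\bigr\|_{1}$ is correct and proved in the paper by the same reduction to $\|\tilde{\frak h}-\frak h\|_2$ and a $k$-continuity estimate.

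The gap is in the generator-difference integral, and it is twofold. First, your assertion that $E(p)=p^{2}+\alpha/(2\pi)+O(|p|^{-2})$ with \emph{uniformly controlled higher derivatives} is false: $E'$ and $E''$ blow up at each point of $\tfrac{1}{2}\Z$ (see Lem.~\ref{CritEstimates}), so the bulk estimate $|m_{k}(p)|=O(|k|\lambda^{-\varrho}|p|^{-1})$ is not valid uniformly---the correct bound (Lem.~\ref{HillDog}(1)) carries a factor $|\Theta(p)|^{-1}$ that diverges at the lattice. In the band-crossing region $|\Theta(p)|\lesssim|k|$ one only has $|m_{k}(p)|\lesssim\lambda^{-\varrho}$. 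Second, and more seriously, to absorb this you invoke an equidistribution estimate on $u(s)=\Phi^{(k)}_{\lambda,s}[\check{\rho}_{\lambda}]^{(k)}_{\scriptscriptstyle Q}$ bounding its mass on lattice neighborhoods. But $\Phi^{(k)}_{\lambda,s}$ has the \emph{complex} jump kernel $J_{k}$, so there is no pointwise domination $|\Phi^{(k)}_{\lambda,s}f|\leq\Upsilon^{(0)}_{\lambda,s}|f|$, and the contracted-density machinery (Prop.~\ref{PropMoreFiber}(4), Lem.~\ref{HorseMeister}) does not apply to $u(s)$. Your closing appeal to ``the fiber analysis underlying Thm.~\ref{ThmSemiClassical}'' does not supply this; Thm.~\ref{ThmSemiClassical} concerns the full quantum maps $\Phi_{\lambda,\xi,t}$, not the idealized fiber semigroup $\Phi^{(k)}_{\lambda,t}$.

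The paper's proof is engineered precisely to avoid both difficulties. It inserts an intermediate semigroup $\Upsilon^{(k),\prime}_{\lambda,t}$ (same drift as $\Phi^{(k)}$, same positive jump kernel $J$ as $\Upsilon^{(k)}$) and runs two comparisons. For $\Phi^{(k)}-\Upsilon^{(k),\prime}$ the Duhamel is written so that $\Upsilon^{(k),\prime}_{\lambda,r}[\check{\rho}_{\lambda}]^{(k)}_{\scriptscriptstyle Q}$ sits \emph{inside}; this function satisfies $|\Upsilon^{(k),\prime}_{\lambda,r}f|\leq\Upsilon^{(0)}_{\lambda,r}|f|$ pointwise, and the contracted-density bound is then available. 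For $\Upsilon^{(k),\prime}-\Upsilon^{(k)}$ the paper abandons the infinitesimal Duhamel and uses the pseudo-Poisson unraveling, replacing the unbounded symbol $m_{k}$ by the propagator difference $|U^{(k)}_{\lambda,r}-\widetilde{U}^{(k)}_{\lambda,r}|\leq 2$; this trivial sup bound, multiplied by the $O(\lambda^{\gamma/2+1})$ mass near the lattice, gives $O(\lambda^{\gamma/2+1})$ per Poisson interval and hence $O(t\lambda^{\gamma/2+1})$ total. Your direct route could be rescued by reversing the Duhamel so that $v(s)=\Upsilon^{(k)}_{\lambda,s}\mathcal{P}^{(k)}_{\lambda,0}$ sits inside (for which the pointwise domination holds), but you would still need to replace the infinitesimal drift comparison by the bounded-propagator argument to handle the lattice region.
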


\begin{theorem}\label{ThmClassical}
For $T\in \R^{+}$, the processes $Y_{\lambda}=\big(  \lambda^{\frac{\gamma+2\varrho +3}{2} }Y_{\frac{s}{\lambda^{\gamma} } },\, s\in [0,T]\big)$ converge in law as $\lambda\searrow 0$ to a Brownian motion with diffusion constant $\vartheta $.  In particular,   the characteristic functions of the densities $\mathcal{P}_{\lambda, \frac{T}{\lambda^{\gamma}} }\in L^{1}(\R^2 )$ satisfy the   pointwise convergence  as $\lambda\searrow 0$ given by
$$\int_{\R^{2}} dx dp\, \mathcal{P}_{\lambda, \frac{T}{\lambda^{\gamma}}}(x,p)e^{\textup{i}xu   } \longrightarrow e^{-\frac{T\vartheta}{2}k^{2}}$$  
for $u:=\frac{1}{2}\lambda^{\frac{\gamma+2\varrho +3}{2} }k$.  The convergence of the processes $ Y_{\lambda}$ is with respect to the uniform metric on paths.  

\end{theorem}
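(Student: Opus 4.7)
My plan is to reduce the claim to a central limit theorem for the alternating sum of exponentials indicated in the heuristic display~(\ref{Projectile}).  The momentum process $K_t$ starts with density $|\frak{h}(p)|^{2}$ concentrated near $\mathbf{p}=\mathbf{p}_{0}/\lambda$ and, on the time scale $T/\lambda^{\gamma}$ with $\gamma\in(1,2)$, undergoes only $\mathcal{O}(\lambda^{1-\gamma})$ reflection events (jumps that flip the sign of $K$), while all remaining jumps perturb $K$ by $\mathcal{O}(1)$, which is negligible compared to $\mathbf{p}$.  Hence $Y_{T/\lambda^{\gamma}}=(2/\lambda^{\varrho})\int_{0}^{T/\lambda^{\gamma}}K_r\,dr$ is well approximated by $(2\mathbf{p}/\lambda^{\varrho})$ multiplied by the signed sum of sojourn lengths between reflections.

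The first technical step is the reflection rate estimate:  for $K\approx\pm\mathbf{p}$, a single jump drawn from the kernel $J(\cdot,K)$ flips the sign of $K$ with conditional probability $\alpha/(4\mathbf{p})+o(\lambda)$.  Using the closed form of $\kappa_{v}(p,n)$ and the Kr\"onig--Penney relation~(\ref{KronigPenney}), the dominant contribution integrates $j(v)|\kappa_{v}(p,n)|^{2}$ over the reflection band of width $\approx \alpha/(8|p|)$ around the half-integer lattice point with $n=-\operatorname{sgn}(K)\lceil 2\mathbf{p}\rceil$; Assumption~\ref{Assumptions}(2) bounds the tail contribution from other $n$ and Assumption~\ref{Assumptions}(3) keeps $j$ bounded below on the relevant range of $v$.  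Combined with the escape rate $\mathcal{R}$, the resulting reflection rate is $\lambda\mathcal{R}\alpha/(4\mathbf{p}_{0})+o(\lambda)$, so sojourn times between reflections are approximately exponential with this rate.

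Next I would couple $(K_t)$ to a telegraph process $\tilde{K}_t\in\{\pm\mathbf{p}\}$ that flips sign at exactly this rate, so that $Y_{T/\lambda^{\gamma}}$ is close to $(2\mathbf{p}/\lambda^{\varrho})\int_{0}^{T/\lambda^{\gamma}}\operatorname{sgn}(\tilde{K}_r)\,dr$.  Applying a standard CLT to the alternating i.i.d.\ mean-one exponentials produces the one-dimensional characteristic function convergence to $e^{-T\vartheta k^{2}/2}$ with the prescribed $\vartheta=16\mathbf{p}_{0}^{3}/(\alpha\mathcal{R})$, which is the ``in particular'' assertion.  Extending to finite-dimensional distributions uses the same computation applied to increments over disjoint time intervals, and tightness in the uniform metric follows from Aldous's criterion:  between reflections the rescaled trajectory $s\mapsto\lambda^{(\gamma+2\varrho+3)/2}Y_{s/\lambda^{\gamma}}$ is linear with slope of order~$1$, and the number of reflections in any subinterval of length $\delta$ is $\mathcal{O}(\delta)$ with high probability.

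The principal obstacle is controlling the accumulated $\mathcal{O}(1)$ deviations of $K$ from $\pm\mathbf{p}$ throughout the long time horizon $T/\lambda^{\gamma}$.  Since there are $\mathcal{O}(\lambda^{-\gamma})$ non-reflective kicks, a naive random walk bound would allow the momentum to drift as far as $\mathcal{O}(\lambda^{-\gamma/2})$ from $\pm\mathbf{p}$; a martingale maximal inequality combined with the exponential tail on $j$ from Assumption~\ref{Assumptions}(1) is required to show that this drift stays $o(\lambda^{-1})=o(\mathbf{p})$, which holds precisely because $\gamma<2$.  This uniform control is what preserves the accuracy of the reflection rate estimate throughout the interval and permits the coupling to the telegraph process to succeed.
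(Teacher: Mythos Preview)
Your heuristic is exactly the right picture, and the paper follows the same intuition, but the execution differs in two important ways.

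First, rather than \emph{coupling} to a telegraph process and then invoking a CLT for i.i.d.\ exponentials plus Aldous tightness, the paper builds an explicit $\widetilde{\mathcal{F}}_{\lambda,s}$-martingale
\[
\mathbf{m}_{\lambda,s}=\lambda^{\frac{\gamma+3}{2}}\sum_{m=1}^{\mathbf{N}_{s/\lambda^{\gamma}}}\chi(\tau_{m}<\varsigma)\,K_{\tau_{m}}\bigl(\Delta\tau_{m}-\mathbb{E}[\Delta\tau_{m}\mid\widetilde{\mathcal{F}}_{\tau_{m}^{-}}]\bigr),
\]
shows $\sup_{s}|Y_{\lambda,s}-\mathbf{m}_{\lambda,s}|\to 0$ in probability, and then applies a martingale functional CLT (Pollard~\cite[Thm.~VIII.2.13]{Pollard}).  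That theorem gives convergence in the uniform metric directly once one checks $[\mathbf{m}_{\lambda},\mathbf{m}_{\lambda}]_{s}\to s\vartheta$ and the Lindeberg condition $\sup_{s}|\Delta\mathbf{m}_{\lambda,s}|\to 0$.  This bypasses both the explicit coupling construction (which you do not carry out) and the separate tightness argument.  Your Aldous sketch is not obviously complete: the rescaled slope between reflections is order~$1$, but there are order $\lambda^{1-\gamma}\to\infty$ reflections, so linearity between reflections does not by itself control the modulus of continuity.

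Second, there is a structural complication you do not address: the sign of $K_{t}$ can flip at one Poisson time and flip back at the very next one with non-negligible probability, so the naive ``reflection times'' are not the right partition.  The paper introduces a bespoke notion of \emph{sign-flip} time $\tau_{m}$ (not a hitting time) specifically to filter out these double flips, and the moment estimates for $\Delta\tau_{m}$ are imported from Proposition~\ref{TimeFlip} (proved in~\cite{Dispersion}).  Your telegraph coupling would need to confront this same issue.

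For the deviation control you identify correctly as the principal obstacle, the paper does not use exponential tails on $j$; it uses that $E^{1/2}(K_{t})$ is a submartingale with $\mathbb{E}[E(K_{t})]=E(K_{0})+\sigma t$ (Proposition~\ref{SubMart}), and a Doob maximal inequality on its Doob--Meyer decomposition gives $\mathbb{P}\bigl[\sup_{t\leq T/\lambda^{\gamma}}\bigl||K_{t}|-\mathbf{p}\bigr|>\lambda^{\epsilon}\mathbf{p}\bigr]=\mathcal{O}(\lambda^{2-\gamma-2\epsilon})$ for any $\epsilon<(2-\gamma)/2$ (Lemma~\ref{LemTerm}).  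This is cleaner than a random-walk bound and uses only the second moment of $j$.
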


\vspace{.4cm}

\begin{proof}[Proof of Thm.~\ref{ThmMain}]
 Let $\varphi_{\lambda, T}$ be the characteristic function for the probability measure $\mu_{\lambda,T}$.  To show that  $\mu_{\lambda,T}$ converges in distribution to a   zero mean Gaussian with variance $T\vartheta $, it is sufficient to prove that $\varphi_{\lambda, T}(k)$ converges pointwise to $e^{- \frac{T\vartheta}{2}  k^{2} }  $. The characteristic function $\varphi_{\lambda, T}(k)$ is equal to the following:
\begin{align}\label{Alt0}
\varphi_{\lambda, T}(k) := &\int_{\R} d\mu_{\lambda, T}(x)\,e^{\textup{i}k x}\nonumber \\ =& \int_{\R} dx\, D_{\lambda,  \frac{T }{\lambda^{\gamma}}  }(x)\,e^{\textup{i}2u x}\nonumber   \\
=&
\Tr\big[\rho_{\lambda, \frac{T }{\lambda^{\gamma}}}e^{\textup{i}2uX }\big]\nonumber    =\int_{\R} dp\,\big\langle p\big|\rho_{\lambda, \frac{T }{\lambda^{\gamma}} }\big|p+2 u\big\rangle  \\ =& \int_{\R}dp\, [\rho_{\lambda,\frac{T }{\lambda^{\gamma}}}]^{(u)}(p), 
\end{align}
where $u:=\frac{1}{2} \lambda^{ \frac{\gamma+2\varrho+3}{2}  }k $.   The fourth equality uses the formal kernel relation $ \big\langle p\big|\rho_{\lambda, \frac{T }{\lambda^{\gamma}} } e^{\textup{i}2uX }\big|p\big\rangle=    \big\langle p\big|\rho_{\lambda, \frac{T }{\lambda^{\gamma}} }\big|p+2 u\big\rangle $; more rigorously, the equality between the  first expression on the third line  of~(\ref{Alt0}) and the expression on the fourth line will follow from  Part (2) of Prop.~\ref{PropFiberII}.   The results of  Lem.~\ref{StanToQuasi}, Thm.~\ref{ThmSemiClassical}, and Lem.~\ref{SemiToFull} yield that the telescoping differences
$$
   \big[\rho_{\lambda,\frac{T}{\lambda^{\gamma} } }  \big]^{(u)}-  \big[\rho_{\lambda,\frac{T}{\lambda^{\gamma} } } \big]^{(u)}_{\scriptscriptstyle{Q}},\hspace{.7cm}  \big[\rho_{\lambda, \frac{T}{\lambda^{\gamma}} } \big]^{(u)}_{\scriptscriptstyle{Q}} -\Phi_{\lambda,\frac{T}{\lambda^{\gamma}}}^{(u)}\big[\check{\rho}_{\lambda}\big]^{(u)}_{\scriptscriptstyle{Q}} , \hspace{.7cm}\Phi_{\lambda,\frac{T}{\lambda^{\gamma} } }^{(u)}\big[\check{\rho}_{\lambda}\big]^{(u)}_{\scriptscriptstyle{Q}}- \mathcal{P}_{\lambda, \frac{T}{\lambda^{\gamma} }}^{(u)}  $$
decay in the $L^{1}$-norm on the order $\mathit{O}(\lambda^{\iota})$ for $\lambda\ll 1$ and small enough choice of the exponent $\iota>0$.  Thus,
  $$\Big|\int_{\R}dp\, [\rho_{\lambda,  \frac{T}{\lambda^{\gamma} }}]^{(u)}(p)-\int_{\R}dp\,  \mathcal{P}_{\lambda, \frac{T}{\lambda^{\gamma} }}^{(u)}(p)\Big|=\mathit{O}(\lambda^{\iota}).      $$
Finally, by Thm.~\ref{ThmClassical} there is convergence as $\lambda\searrow 0$ for each $k\in \R$:
 $$\int_{\R}dp\,  \mathcal{P}_{\lambda, \frac{T}{\lambda^{\gamma} }}^{(u)}(p)=\int_{\R^{2}} dx dp \,\mathcal{P}_{\lambda, \frac{T}{\lambda^{\gamma}}}(x,p)e^{\textup{i}xu  } \longrightarrow e^{-\frac{T\vartheta}{2}k^{2}}.$$

\end{proof}

\subsection{The ballistic cases}\label{SecBall}

Theorem~\ref{ThmEcho} characterizes the test particle's motion without the Dirac comb or the random kicking from the gas.  In those situations, the test particle retains an effectively ballistic motion with speed $\frac{2\mathbf{p}}{\lambda^{\varrho}}$, and the position distribution at time $\frac{T}{\lambda^{\gamma}}$ is centered around the location $ \frac{T}{\lambda^{\gamma}} \frac{2\mathbf{p}}{\lambda^{\varrho}}= 2\mathbf{p}_{0}T\lambda^{-\gamma-\varrho-1 }   $.  In particular, diffractive effects do not appear without the aid of random kicks from the environment.

\begin{theorem}\label{ThmEcho}
Let $\gamma$, $\varrho$,  $D_{\lambda,t}$, and $\rho_{\lambda, t}$ be as in Thm.~\ref{ThmMain} except with $\alpha=0$ (no Dirac comb) or $\Psi=0$ (no noise).  Define the measure $\mu_{\lambda,T}$ to have the density: 
$$\frac{d\mu_{\lambda,T}}{dx}(x)=\lambda^{-\gamma-\varrho-1 }D_{\lambda,\frac{T}{\lambda^{\gamma} }}(\lambda^{-\gamma-\varrho-1 } x   ).$$  
 For each $T\in \R^{+}$, the measures $\mu_{\lambda,T}$ converge in law as $\lambda\searrow 0$ to a $\delta$-distribution at $2\mathbf{p}_{0}T$.
\end{theorem}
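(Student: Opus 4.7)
The plan is to follow the characteristic function strategy from the proof of Theorem~\ref{ThmMain}. Since the manipulation~(\ref{Alt0}) relies only on the definition of $\mu_{\lambda,T}$ in terms of the position density, it carries over without change to yield
\[\varphi_{\lambda,T}(k) \;=\; \int_{\R}dp\,\bigl[\rho_{\lambda, T/\lambda^{\gamma}}\bigr]^{(u)}(p), \qquad u := \tfrac{1}{2}\lambda^{\gamma+\varrho+1}k.\]
By L\'evy's theorem it suffices to prove pointwise convergence $\varphi_{\lambda,T}(k) \to e^{2i\mathbf{p}_{0}Tk}$ for every $k\in \R$, as the right-hand side is the characteristic function of $\delta_{2\mathbf{p}_{0}T}$. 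The two sub-cases are then handled by direct computation.

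For $\alpha = 0$, the Hamiltonian $H = P^{2}$ is diagonal in the momentum representation, and taking the $(p-k,p+k)$ matrix elements of~(\ref{RETheModel}) gives the closed evolution equation
\[\partial_{t}\bigl[\rho_{\lambda,t}\bigr]^{(k)}(p) = \tfrac{4ipk}{\lambda^{\varrho}}\bigl[\rho_{\lambda,t}\bigr]^{(k)}(p) + \int_{\R}dv\,j(v)\,\bigl[\rho_{\lambda,t}\bigr]^{(k)}(p-v) - \mathcal{R}\bigl[\rho_{\lambda,t}\bigr]^{(k)}(p).\]
A Fourier transform in $p$ converts this to a first-order linear transport equation that I integrate exactly along characteristics. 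Evaluating the solution at $y=0$ (to obtain $\int dp\,[\rho_{\lambda,t}]^{(k)}(p)$) and then substituting $k = u$ and $t = T/\lambda^{\gamma}$ produces a closed-form factorization into a shifted initial Fourier factor times a noise exponential of shape $\exp(\int_{0}^{t}(\hat{j}(\cdots) - \mathcal{R})ds)$. The plane-wave modulation $\frak{h}(x) = e^{ix\mathbf{p}}\frak{h}_{0}(x)$ extracts the phase $e^{2i\mathbf{p}_{0}Tk}$ from the Fourier factor, and both remaining scalar factors tend to $1$ as $\lambda \searrow 0$: the Fourier factor by dominated convergence using the smoothness of $\hat{\frak{h}}_{0}$ guaranteed by~(\ref{Gauss}), and the noise exponential because $\hat{j}(r) - \mathcal{R} = O(r^{2})$ near $r=0$ (from $\int v^{2}j(v)dv < \infty$) bounds the time integral by $O(\lambda^{2-\gamma})$, which is small since $\gamma < 2$.

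For $\Psi = 0$ the dynamics is purely unitary and diagonal in the extended-zone scheme, ${}_{Q}\langle p|\rho_{\lambda,t}|q\rangle_{Q} = e^{-it(E(p)-E(q))/\lambda^{\varrho}}\,{}_{Q}\langle p|\check{\rho}_{\lambda}|q\rangle_{Q}$. I rewrite $\varphi_{\lambda,T}(k) = \Tr[\rho_{\lambda,T/\lambda^{\gamma}}\,e^{2iuX}]$ and expand both extended-zone kets using~(\ref{HairCut}); the matrix element ${}_{Q}\langle q|e^{2iuX}|p\rangle_{Q}$ becomes a sum of Dirac deltas enforcing $q - p = 2u + \ell$ with $\ell \in \Z$. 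Since $\hat{\frak{h}}$ is concentrated within an $O(1)$ neighborhood of $\mathbf{p} = \mathbf{p}_{0}/\lambda$, the $\ell \neq 0$ contributions are negligible, and the surviving $\ell = 0$ piece reduces to
\[\int_{\R}dp\,\hat{\frak{h}}^{Q}(p)\overline{\hat{\frak{h}}^{Q}(p+2u)}\,\Lambda(p,u)\,e^{-iT(E(p)-E(p+2u))/\lambda^{\gamma+\varrho}},\]
with $\Lambda(p,u) := \sum_{n \in \Z}\eta(p,n)\overline{\eta(p+2u,n)}$. Taylor expansion $E(p+2u) - E(p) = 2u\,E'(p) + O(u^{2})$ together with $E'(p) = 2p + O(\alpha/p)$ outside band gaps produces the phase $e^{2i\mathbf{p}_{0}Tk}$ plus a vanishing correction, while $\Lambda(p,u) \to 1$ by $\sum_{n}|\eta(p,n)|^{2} = 1$ and dominated convergence.

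The main obstacle lies in the $\Psi = 0$ case in controlling the contribution from the reflection bands $\{|p - n/2| \lesssim \alpha/n\}$, where $E'(p)$ is irregular and the Bloch coefficients $\eta(p,m)$ are far from $\delta_{m,0}$. The total measure of bands intersecting the $O(1)$-support of $\hat{\frak{h}}$ around $\mathbf{p} = \mathbf{p}_{0}/\lambda$ is of order $\sum_{n \lesssim \mathbf{p}}\alpha/n = O(\ln(1/\lambda))$, so combined with the polynomial decay of $\hat{\frak{h}}_{0}$ from $\|P^{2}\frak{h}_{0}\|_{2} < \infty$ and a uniform bound on $\Lambda$, one must verify that these bands contribute only $o(1)$ to the integral. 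This step is morally identical to the momentum-vs.-quasi-momentum comparison underlying Lemma~\ref{StanToQuasi} and is the only non-routine piece of the argument.
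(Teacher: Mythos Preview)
Your proposal is correct and follows essentially the same route as the paper. For $\alpha=0$ both you and the paper exploit the closed-form solvability of the Galilean-covariant dynamics; the paper packages this as Lemma~\ref{LemExplicit} (the quantum characteristic function formula) while you derive the same identity by Fourier-transforming the fiber equation, and both extract the phase $e^{2i\mathbf{p}_0Tk}$ plus an $O(\lambda^{2-\gamma})$ noise correction in the same way. For $\Psi=0$ the paper is organized slightly more cleanly: rather than expanding ${}_Q\langle q|e^{2iuX}|p\rangle_Q$ directly, it writes $\varphi_{\lambda,T}(k)=\int dp\,[\rho_{\lambda,T/\lambda^\gamma}]^{(u)}(p)$ and applies Lemma~\ref{StanToQuasi} to pass to $[\rho]^{(u)}_Q$, observes that $[\rho_{\lambda,t}]^{(u)}_Q = U^{(u)}_{\lambda,t}[\check{\rho}_\lambda]^{(u)}_Q$ holds exactly in the noiseless case, and then invokes a simplified version of Lemma~\ref{SemiToFull} (the bound~(\ref{Hoover})) to replace $U^{(u)}_{\lambda,t}[\check{\rho}_\lambda]^{(u)}_Q$ by $\widetilde{U}^{(u)}_{\lambda,t}[\check{\rho}_\lambda]^{(u)}$, after which the remaining integral reduces to the free case already handled. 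Your direct Bloch-coefficient expansion, the Taylor expansion of $E(p+2u)-E(p)$, and your identification of the reflection-band contribution as the delicate piece amount to unpacking exactly these two lemmas, so the content is the same.
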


It is natural to apply different techniques to prove Thm.~\ref{ThmEcho} for the cases in which either the Dirac comb or the noise is set to zero.  The scenario without the comb is mathematically trivial since the dynamics has a well-known closed form when viewed through the quantum characteristic function representation; see Lem.~\ref{LemExplicit}.  When only the Dirac comb is present, the proof follows by a reduction of the strategy  applied in the proof of Thm.~\ref{ThmMain}.

For $k\in \R$ and $t,\lambda \in \R^{+}$, define the maps $U_{\lambda,t}^{(k)},\widetilde{U}_{\lambda,t}^{(k)}:L^{1}(\R)$ to act as multiplication by the functions
\begin{align}\label{Jigsaw}
U_{\lambda,t}^{(k)}(p):=e^{-\frac{\textup{i}t}{\lambda^{\varrho}}\big(E(p- k )-E(p+ k ) \big)}\quad \text{and} \quad  \widetilde{U}_{\lambda,t}^{(k)}(p):=e^{\frac{\textup{i}t}{\lambda^{\varrho}} 4pk}.
\end{align}

  The following lemma holds for a more general class of quantum dynamical  semigroups~\cite{Holevo} satisfying a symmetry known as \textit{Galilean covariance}.
\begin{lemma}\label{LemExplicit}
Let $\rho_{\lambda,t}$ satisfy the Lindblad equation~(\ref{RETheModel}) with $\alpha=0$ and beginning from a density matrix $\rho\in \mathcal{B}_{1}( \mathcal{H} )$.  The quantum characteristic function for $\rho_{\lambda,t}$ has the closed form
\begin{align*}
\Tr\big[  \rho_{\lambda,t}e^{\textup{i}uX+\textup{i}qP}  \big]=e^{\int_{0}^{t}dr\,\big( \widehat{\varphi}(q+\frac{2u}{\lambda^{\varrho}}(t-r))- \widehat{\varphi}(0)  \big)}\Tr\big[  \rho e^{\textup{i}uX+\textup{i}(q+\frac{2u t}{\lambda^{\varrho}}  )P}  \big],
\end{align*}
where $\widehat{\varphi}(q):=\int_{\R}dv\, j(v) e^{\textup{i}qv} $.

\end{lemma}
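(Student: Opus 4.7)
The plan is to derive a first-order linear PDE for the quantum characteristic function
\begin{equation*}
\chi_{t}(u,q) := \Tr\big[\rho_{\lambda,t}\,W(u,q)\big], \qquad W(u,q) := e^{\textup{i}uX+\textup{i}qP},
\end{equation*}
by differentiating in $t$ against~(\ref{RETheModel}) with $\alpha=0$ (so $H=P^{2}$) and $\Psi^{*}(I)=\mathcal{R}I$, and then to solve it by the method of characteristics.

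Two trace computations are required.  For the Hamiltonian part, I would use $[P,W(u,q)] = u\,W(u,q)$ together with the BCH factorization $W(u,q) = e^{\textup{i}uq/2}\,e^{\textup{i}uX}\,e^{\textup{i}qP}$, which yields $P\,W(u,q) = -\textup{i}\partial_{q} W(u,q) + \tfrac{u}{2}W(u,q)$ and $W(u,q)\,P = -\textup{i}\partial_{q} W(u,q) - \tfrac{u}{2}W(u,q)$.  Combining, $[P^{2},W(u,q)] = u(PW+WP) = -2\textup{i}u\,\partial_{q} W(u,q)$, with the scalar $u^{2}$ pieces cancelling.  Applying cyclicity of the trace, the Hamiltonian term contributes $\tfrac{2u}{\lambda^{\varrho}}\,\partial_{q}\chi_{t}(u,q)$ to $\partial_{t}\chi_{t}$.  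For the noise, the conjugation identity $e^{-\textup{i}vX} W(u,q) e^{\textup{i}vX} = e^{\textup{i}vq} W(u,q)$ (an immediate consequence of $e^{-\textup{i}vX} P e^{\textup{i}vX} = P + v$) and cyclicity give $\Tr[\Psi(\rho_{\lambda,t})W(u,q)] = \widehat{\varphi}(q)\chi_{t}$, while the damping term $-\tfrac{1}{2}\{\Psi^{*}(I),\rho_{\lambda,t}\}$ contributes $-\widehat{\varphi}(0)\chi_{t}$.  The resulting PDE is
\begin{equation*}
\partial_{t}\chi_{t}(u,q) \;=\; \tfrac{2u}{\lambda^{\varrho}}\,\partial_{q}\chi_{t}(u,q) \;+\; \bigl(\widehat{\varphi}(q) - \widehat{\varphi}(0)\bigr)\chi_{t}(u,q).
\end{equation*}

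The characteristic through $(q,t)$ solves $\tfrac{dq}{dr} = -\tfrac{2u}{\lambda^{\varrho}}$ and emanates from $q_{0} = q + \tfrac{2ut}{\lambda^{\varrho}}$ at $r=0$; along it the PDE reduces to the ODE $\tfrac{d\chi}{dr} = \bigl(\widehat{\varphi}\bigl(q+\tfrac{2u(t-r)}{\lambda^{\varrho}}\bigr)-\widehat{\varphi}(0)\bigr)\chi$, whose solution with initial data $\chi_{0}(u,q_{0}) = \Tr[\rho\,W(u,q_{0})]$ is precisely the asserted closed form.  The main technical obstacle is the unboundedness of $X$ and $P$, since $W(u,q)$ does not lie in any natural domain of the Lindblad generator; the cleanest route is not to differentiate $\chi_{t}$ in $t$ directly, but rather to verify that the proposed closed form weakly satisfies $\partial_{t}\Tr[\rho_{\lambda,t} W(u,q)] = \Tr[\rho_{\lambda,t}\,\mathcal{L}_{\lambda}^{*}W(u,q)]$ for $\rho$ in a dense class (e.g., finite-rank operators built from Schwartz functions, on which all moments of $X$ and $P$ are finite) and then to invoke the uniqueness of the semigroup $\Phi_{\lambda,t}$ noted below~(\ref{RETheModel}).
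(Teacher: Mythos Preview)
Your derivation is correct: the PDE you obtain for $\chi_{t}(u,q)$ is right, the characteristic computation is right, and the solution matches the stated formula.  The paper does not actually give a proof of this lemma; it simply states the result and attributes the general principle to Holevo's work on Galilean-covariant semigroups.  So there is nothing to compare against, and your direct computation via the method of characteristics is exactly the kind of self-contained argument one would supply here.  Your closing remark about the domain issue is appropriate, and the resolution you sketch (verify on a dense class of Schwartz-built finite-rank $\rho$ and appeal to uniqueness of $\Phi_{\lambda,t}$) is the standard and adequate one.
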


\begin{proof}[Proof of Thm.~\ref{ThmEcho}]
 Let  $u:= \frac{1}{2}\lambda^{ \gamma+\varrho+1 }k $ and denote the characteristic function for the  measure $\mu_{\lambda,T}$ by $\varphi_{\lambda, T}$.  By the proof of Thm.~\ref{ThmMain}, the function $\varphi_{\lambda, T}$ can be written in the forms
\begin{align}
\varphi_{\lambda, T}(k) =  &\Tr\big[\rho_{\lambda, \frac{T }{\lambda^{\gamma}}}e^{\textup{i}2uX }\big] \label{ForNoComb} \\ = &\int_{\R}dp\, \big[\rho_{\lambda,\frac{T }{\lambda^{\gamma}}}\big]^{(u)}(p). \label{ForNoNoise} 
\end{align}
The cases without the Dirac comb and without the noise are handled in (i) and (ii) below, respectively.  It is sufficient in each case to show the pointwise convergence of $\varphi_{\lambda, T}(k)$ to the value $ e^{\textup{i}2T\mathbf{p}_{0}k   } $ as $\lambda\searrow 0$.
\vspace{.4cm}

\noindent (i).\hspace{.2cm} By~(\ref{ForNoComb}) and the formula for the quantum characteristic function from Lem.~\ref{LemExplicit}, I have the first equality below:
\begin{align}\label{Harbinger}
 \varphi_{\lambda, T}(k)=&e^{\int_{0}^{\frac{T}{\lambda^{\gamma}} }dr\,\big( \widehat{\varphi}(\frac{2u}{\lambda^{\varrho}}(\frac{T}{\lambda^{\gamma}}-r))- \widehat{\varphi}(0)  \big)}\Tr\Big[  \check{\rho}_{\lambda} e^{\textup{i}uX+\textup{i}\frac{2uT}{\lambda^{\gamma+\varrho}}P}  \Big]  \nonumber  \\  = & \Tr\Big[  \check{\rho}_{\lambda} e^{\textup{i}uX+\textup{i}\frac{2uT}{\lambda^{\gamma+\varrho}}P}  \Big]+\mathit{O}(\lambda)  \nonumber \\  
=& \Big\langle \frak{h}_{0}\Big| e^{\textup{i}uX+\textup{i}\frac{2uT}{\lambda^{\gamma+\varrho}}P}   \frak{h}_{0}    \Big\rangle  e^{\textup{i}\frac{2uT}{\lambda^{\gamma+\varrho}}\mathbf{p} }+\mathit{O}(\lambda)  \nonumber \\ 
=&    e^{\textup{i}\frac{2uT}{\lambda^{\gamma+\varrho}}\mathbf{p} }+\mathit{O}(\lambda)=  e^{\textup{i}2\mathbf{p}_{0}Tk   }+\mathit{O}(\lambda).
\end{align}
In the above $\widehat{\varphi}$ is the  Fourier transform  of $j:\R\rightarrow \R^{+}$.  The second equality holds since  $\frac{2 uT}{\lambda^{\gamma+\varrho}}=2kT \lambda \ll 1$, the first derivative of $\widehat{\varphi}$ is zero at zero, and the second  derivative  of $\widehat{\varphi}$ is bounded by $\sigma=\int_{\R}dv\, j(v) v^{2}$.   The third equality uses that $\check{\rho}_{\lambda}$ can be written as  $e^{\textup{i}\mathbf{p}X  }\big| \frak{h}_{0}\big\rangle \big\langle \frak{h}_{0}\big|   e^{-\textup{i}  \mathbf{p}X  } $ and Weyl intertwining relations.

\vspace{.4cm}

\noindent (ii). By~(\ref{ForNoNoise}) and the triangle inequality,
\begin{align}\label{Beckham}
\Big|\varphi_{\lambda, T}(k)-\int_{\R}dp\,\widetilde{U}_{\lambda,\frac{T}{\lambda^{\gamma}}}^{(u)}(p)[\check{\rho}_{\lambda}]^{(u)}(p)\Big|\leq & \Big\| [\rho_{\lambda,\frac{T }{\lambda^{\gamma}}}]^{(u)}-[\rho_{\lambda,\frac{T }{\lambda^{\gamma}}}]^{(u)}_{\scriptscriptstyle{Q}}     \Big\|_{1} +\Big\| \big[\rho_{\lambda,\frac{T }{\lambda^{\gamma}}}\big]^{(u)}_{\scriptscriptstyle{Q}} -\widetilde{U}_{\lambda,\frac{T}{\lambda^{\gamma}}}^{(u)}\big[\check{\rho}_{\lambda}\big]^{(u)}    \Big\|_{1} . 
\end{align}
 The difference $\big[\rho_{\lambda,\frac{T }{\lambda^{\gamma}}}\big]^{(u)}-\big[\rho_{\lambda,\frac{T }{\lambda^{\gamma}}}\big]^{(u)}_{\scriptscriptstyle{Q}}$  converges to zero in the $L^{1}$-norm as $\lambda\searrow 0$ by Lem.~\ref{StanToQuasi}.  The density matrix evolved to time $t\in \R^{+}$ is given by $\rho_{\lambda,t}=e^{-\frac{\textup{i}t}{\lambda^{\gamma}} H}\check{\rho}_{\lambda}e^{\frac{\textup{i}t}{\lambda^{\gamma}} H}$,  and consequently $[\rho_{\lambda,t}]^{(k')}_{\scriptscriptstyle{Q}}=U_{\lambda,t}^{(k')}[\check{\rho}_{\lambda}]^{(k')}_{\scriptscriptstyle{Q}}$ for $k'\in \R$.  
A much simplified version of the proof for Lem.~\ref{SemiToFull} shows that there is a $C>0$ such that for all $\lambda<1$, $t\in \R^{+}$, and $|k'|\leq \lambda^{\frac{\gamma}{2}+\frac{5}{4}+\varrho}$,
\begin{align}\label{Hoover}
\Big\| U_{\lambda,t}^{(k')}[\check{\rho}_{\lambda}]^{(k')}_{\scriptscriptstyle{Q}}-\widetilde{U}_{\lambda,t}^{(k')}[\check{\rho}_{\lambda}]^{(k')}\Big\|_{1}\leq C\big(\lambda^{\frac{1}{2}}+ t\lambda^{\frac{\gamma}{2}+1}\big).
\end{align}
Since $\gamma< 2$ applying~(\ref{Hoover}) for $t=\frac{T}{\lambda^{\gamma}}$ implies that  the term on the right side of~(\ref{Beckham}) converges to zero for small $\lambda$.  

Finally, we have the equalities
\begin{align*}
\int_{\R}dp\,\widetilde{U}_{\lambda,\frac{T}{\lambda^{\gamma}}}^{(u)}(p)[\check{\rho}_{\lambda}]^{(u)}(p)= & \Tr\Big[e^{-\textup{i}\frac{T}{\lambda^{\gamma+\varrho  }   } P^{2}  }  \check{\rho}_{\lambda} e^{\textup{i}\frac{T}{\lambda^{\gamma+\varrho  }   } P^{2}  }   e^{\textup{i}uX }  \Big] \\ =&\Tr\Big[ \check{\rho}_{\lambda} e^{\textup{i}uX+\textup{i}\frac{Tu}{\lambda^{\gamma+\varrho  }   }P }  \Big] \\ = & e^{\textup{i}2\mathbf{p}_{0}Tk   }+\mathit{O}(\lambda),
\end{align*}
where the last equality holds by the analysis in (i).  

\end{proof}

\section{Bloch functions and the fiber decompositions}\label{SecBlochFiber}

\subsection{Fiber decomposition for the Hamiltonian}\label{SecHamFiber}
The invariance of the Hamiltonian $H$ under spatial shifts by $2\pi$ is characterized by the commutation relation
\begin{align}\label{Commy}
e^{\textup{i}2\pi P}H=H e^{\textup{i}2\pi P}.
\end{align}
It follows that $H$ acts invariantly on the eigenspaces of $ e^{\textup{i}2\pi P}$, which is the foundation for Bloch theory~\cite{Reed}.   The Hilbert space $\mathcal{H}=L^{2}(\R)$ has a canonical tensor product decomposition $\mathcal{H}=L^{2}(\mathbb{T})\otimes L^{2}(\mathcal{I} ) $ in which an element $f\in \mathcal{H}$ is related to an $L^{2}$-function $ \widehat{f}:\mathbb{T}\rightarrow L^{2}(\mathcal{I}) $ through the partial Fourier transform
 $$\hspace{3cm}\widehat{f}_{\phi}(x)= \sum_{n\in \Z}e^{-\textup{i}2\pi \phi n}f(x+2\pi n),\hspace{1cm} x\in \mathcal{I},      $$      
where the argument $\phi \in \mathbb{T}$ of  $ \widehat{f}$ is placed as a subscript.  The commutation relation~(\ref{Commy}) implies that there are self-adjoint operators $H_{\phi}$, $\phi\in \mathbb{T}$ defined on dense domains of $L^{2}(\mathcal{I}) $  such that  $ (\widehat{Hf})_{\phi}= H_{\phi}\widehat{f}_{\phi}$.  The operators $H_{\phi}$ have the form 
$$H_{\phi}= -\big(\frac{d^{2}}{dx^{2}}\big)_{\phi}^{(\alpha)},$$
where $\big(\frac{d^{2}}{dx^{2}}\big)_{\phi}^{(\alpha)}$ is the self-adjoint extension of the second derivative over the domain $(-\pi,0)\cup (0,\pi)$ with the boundary conditions
\begin{eqnarray*}
  \alpha g(0)&=&\frac{dg}{dx}(0+)-\frac{dg}{dx}(0-),   \\
 g(-\pi)&=&e^{-\textup{i} 2\pi \phi   }  g(\pi ), \\
   \frac{dg}{dx}(-\pi )&=&e^{-\textup{i} 2\pi \phi   }  \frac{dg}{dx}(\pi )  .
   \end{eqnarray*}

\subsection{Bloch functions}

The eigenfunctions for the operators $H_{\phi}$, $\phi\in \mathbb{T}$ have closed forms in the case of the Dirac comb, which  are   Bloch functions $\widetilde{\psi}_{p}\in L^{2}(\mathcal{I})$ for $p=\phi \text{ mod } 1$  given by 
\begin{align}\label{BlochFunctions}
\widetilde{\psi}_{p}(x)= N_{p}^{-\frac{1}{2}}\left\{  \begin{array}{cc} \frac{e^{\textup{i}2\pi (\mathbf{q}(p)-p)  }-1    }{ e^{\textup{i}2\pi (\mathbf{q}(p)+p)}  -1 } e^{-\textup{i} x \mathbf{q}(p) }+e^{\textup{i}2\pi ( \mathbf{q}(p)-p) }  e^{\textup{i} x \mathbf{q}(p)  } &  -\pi \leq x\leq 0  ,  \\  \quad & \quad \\ \frac{e^{\textup{i}2\pi(\mathbf{q}(p)-p)  }-1     }{1-e^{-\textup{i}2\pi(\mathbf{q}(p)+p)} }e^{-\textup{i} x \mathbf{q}(p)  }+   e^{\textup{i}x  \mathbf{q}(p)  }     & 0\leq x<\pi ,   \end{array} \right.  
\end{align}
where $N_{p}>0$ is a normalization, and $  \mathbf{q}:\R\rightarrow \R $ is defined  in the Kr\"onig-Penney relation~(\ref{KronigPenney}).   I  denote the Bloch functions for the momentum operator by $\psi_{p}(x):=(2\pi)^{-\frac{1}{2}}e^{\textup{i}x p}$.   When $|p|\gg 1 $, then  $\mathbf{q}(p)\approx p$ and the Bloch function $\widetilde{\psi}_{p}$ is approximately equal to $\psi_{p}$ except for $p$ near an element of the  lattice $\frac{1}{2}\Z$ (see Lem.~\ref{Ticks} below).  Under the usual  conventions, the eigenvalues $E_{N,\phi}$ and corresponding eigenvectors $ \widetilde{\psi}_{N,\phi}$ for the Hamiltonian $H_{\phi}$ are labeled progressively $E_{ N+1,\phi}\geq E_{ N,\phi}$ by the \textit{band index} $N\in \mathbb{N}$ and the \textit{quasimomentum} $\phi\in \mathbb{T}$.  For $\phi \neq -\frac{1 }{2},0 $, the extended-zone scheme parameter $p\in \R$ is determined by the pair $N,\phi$ through the relations 
\begin{align*}
p=\phi\,\textup{mod}\,1, \hspace{1cm}\text{and}\hspace{1cm}   N= \left\{  \begin{array}{cc}  \frac{1}{2}|p-\phi|    &  S(p)=S(\phi) ,  \\  \quad & \quad \\   \frac{1}{2} |p-\phi|-1   &  S(p)=-S(\phi),    \end{array} \right.  
\end{align*}
where $S:\R\rightarrow \{\pm 1\}$ is the sign function.  The assignment convention for the measure zero set  $\phi \in \{-\frac{1 }{2},0\}$ is not important for the purpose of this article.


    Let $\Theta:\R\rightarrow [-\frac{1}{4 },\frac{1}{4 })$ and $\mathbf{n}:\R\rightarrow \Z$ be defined such that
\begin{align}\label{Barrack}
\Theta(p)=p\,\text{mod}\,\frac{1}{2}, \quad \quad \mathbf{n}(p)=2\big(p-\Theta(p)\big).\quad 
\end{align}
Also, define $\beta:\R\rightarrow \R$ such that $\beta(p):=\frac{1}{2}\mathbf{n}(p)\Theta(p)$.  Lemma~\ref{Ticks} bounds the difference between the Bloch functions $\widetilde{\psi}_{p}$,  $\psi_{p}$, and the proof is contained in the proof of~\cite[Prop.4.2]{Dispersion}.
\begin{lemma}\label{Ticks}
There is a $C>0$ such that for all $p\in \R$,
$$  \big\|\psi_{p}-\widetilde{\psi}_{p}\big\|_{2}\leq \frac{C}{1+ |\beta(p)|  } .  $$ 
\end{lemma}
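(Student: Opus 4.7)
The plan is to estimate $\|\psi_{p}-\widetilde{\psi}_{p}\|_{2}$ by directly comparing the explicit Bloch formula~(\ref{BlochFunctions}) to the plane wave $\psi_{p}(x)=(2\pi)^{-\frac{1}{2}}e^{\textup{i}xp}$, using the Kr\"onig-Penney relation~(\ref{KronigPenney}) to extract a quantitative rate of approach in $\beta(p)=\frac{1}{2}\mathbf{n}(p)\Theta(p)$.  Since the target bound $C/(1+|\beta(p)|)$ only improves on the trivial inequality $\|\psi_{p}-\widetilde{\psi}_{p}\|_{2}\leq 2$ once $|\beta(p)|\geq 1$, I would split into the regimes $|\beta(p)|\leq 1$ (trivial) and $|\beta(p)|\geq 1$ (the substantive case) and only work in detail on the latter.

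\textbf{Quantitative control of $\mathbf{q}(p)-p$.}  Writing $\mathbf{q}(p)=p+\epsilon(p)$ and linearizing both sides of~(\ref{KronigPenney}) in $\epsilon(p)$ yields
$$2\pi\epsilon(p)\sin\bigl(2\pi p\bigr) + \mathit{O}\bigl(\epsilon(p)^{2}\bigr)=\frac{\alpha}{2\mathbf{q}(p)}\sin\bigl(2\pi \mathbf{q}(p)\bigr).$$
Since $|\sin(2\pi p)|=|\sin(2\pi \Theta(p))|$ and $\mathbf{q}(p)$ is comparable to $\mathbf{n}(p)/2$, I expect to deduce $|\epsilon(p)|\leq C'/|\mathbf{n}(p)|$ in the regime $|\beta(p)|\geq 1$, which in particular implies $|\epsilon(p)|\leq C''/|\beta(p)|$.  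As a byproduct, since $\mathbf{q}(p)+p=\mathbf{n}(p)+2\Theta(p)+\epsilon(p)$, one gets $|e^{\textup{i}2\pi(\mathbf{q}+p)}-1|\geq c|\Theta(p)|$ whenever $|\Theta(p)|$ dominates $|\epsilon(p)|$.  The delicate point is the transitional regime $|\Theta(p)|\sim|\epsilon(p)|$, where the linearization becomes unreliable and one must use the full identity~(\ref{KronigPenney}) together with a bootstrap exploiting $\alpha/\mathbf{q}(p)=\mathit{O}(1/|\mathbf{n}(p)|)$ to bound $|\epsilon(p)|$ independently of $|\Theta(p)|$.

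\textbf{Coefficient estimates and normalization.}  On $-\pi\leq x\leq 0$, rewrite~(\ref{BlochFunctions}) as
$$\widetilde{\psi}_{p}(x)=N_{p}^{-\frac{1}{2}}e^{\textup{i}2\pi(\mathbf{q}-p)}e^{\textup{i}x\mathbf{q}}+N_{p}^{-\frac{1}{2}}\frac{e^{\textup{i}2\pi(\mathbf{q}-p)}-1}{e^{\textup{i}2\pi(\mathbf{q}+p)}-1}e^{-\textup{i}x\mathbf{q}},$$
with a symmetric formula on $[0,\pi)$.  Step 1 yields $|e^{\textup{i}2\pi(\mathbf{q}-p)}-1|=\mathit{O}(1/|\mathbf{n}(p)|)$, and together with $|e^{\textup{i}2\pi(\mathbf{q}+p)}-1|\geq c|\Theta(p)|$, the coefficient of the ``reflected" exponential $e^{-\textup{i}x\mathbf{q}}$ is $\mathit{O}(1/(|\mathbf{n}(p)||\Theta(p)|))=\mathit{O}(1/|\beta(p)|)$.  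Using $\int_{\mathcal{I}}|e^{-\textup{i}x\mathbf{q}}|^{2}dx=2\pi$ and a lower bound $N_{p}\geq c'''>0$, the reflected piece contributes $\mathit{O}(1/|\beta(p)|)$ to $\|\widetilde{\psi}_{p}\|_{2}$.  Expanding $\|\widetilde{\psi}_{p}\|_{2}^{2}=1$ via~(\ref{BlochFunctions}) then forces $N_{p}=2\pi+\mathit{O}(1/|\beta(p)|)$, so $N_{p}^{-\frac{1}{2}}$ matches the plane-wave normalization $(2\pi)^{-\frac{1}{2}}$ with the same error.

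\textbf{Plane-wave comparison of the main term.}  Finally, compare $N_{p}^{-\frac{1}{2}}e^{\textup{i}2\pi(\mathbf{q}-p)}e^{\textup{i}x\mathbf{q}}$ with $\psi_{p}(x)$.  The normalization discrepancy was treated above; the remaining prefactor satisfies $|e^{\textup{i}2\pi(\mathbf{q}-p)}-1|=\mathit{O}(1/|\mathbf{n}(p)|)$ and the phase correction satisfies $|e^{\textup{i}x\mathbf{q}}-e^{\textup{i}xp}|=|e^{\textup{i}x\epsilon(p)}-1|\leq \pi|\epsilon(p)|=\mathit{O}(1/|\mathbf{n}(p)|)$ uniformly for $x\in \mathcal{I}$.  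Summing the three error sources and integrating in $x$ closes the proof in the regime $|\beta(p)|\geq 1$, and the trivial bound covers $|\beta(p)|\leq 1$.  As already flagged, the main technical hurdle is justifying Step 1 uniformly in the transitional regime $|\Theta(p)|\sim|\epsilon(p)|$ near the half-integer lattice, where the naive linear expansion of~(\ref{KronigPenney}) breaks down and one instead has to use the exact identity together with the smallness of $\alpha/\mathbf{q}(p)$.
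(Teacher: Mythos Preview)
Your approach is correct and is essentially the direct computation one would carry out; the paper itself does not prove Lemma~\ref{Ticks} in-text but defers to \cite[Prop.~4.2]{Dispersion}, and the ingredients you use---the explicit Bloch formula~(\ref{BlochFunctions}), the bound $|\mathbf{q}(p)-p|\leq C/(1+|p|)$, and the lower bound $|e^{\textup{i}2\pi(\mathbf{q}+p)}-1|\gtrsim |\Theta(p)|$---are exactly the ones the paper provides elsewhere (Lemma~\ref{CritEstimates}(1) and equation~(\ref{Delaware})).

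One remark on your flagged ``main technical hurdle'': it is not actually a hurdle. The bound $|\epsilon(p)|\leq C/(1+|p|)$ is proved in the paper (Lemma~\ref{CritEstimates}(1)) by an exact convexity argument on the Kr\"onig--Penney relation, not by linearization, so it holds uniformly in $\Theta(p)$. More to the point, the transitional regime $|\Theta(p)|\sim|\epsilon(p)|\sim 1/|\mathbf{n}(p)|$ corresponds precisely to $|\beta(p)|=\frac{1}{2}|\mathbf{n}(p)|\,|\Theta(p)|=\mathit{O}(1)$, which is already absorbed by your trivial case. If you simply replace the threshold $|\beta(p)|\geq 1$ by $|\beta(p)|\geq C_{0}$ for a suitably large constant $C_{0}$, then in the nontrivial regime you automatically have $|\Theta(p)|\geq 2C_{0}/|\mathbf{n}(p)|\gg |\epsilon(p)|$, and all of your coefficient estimates go through without further care.
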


\subsection{Dissecting a density matrix}\label{SecDissect}

There are various substructures for a density matrix $\rho\in \mathcal{B}_{1}(\mathcal{H})$ that are useful to identify and define rigorously.  Recall that an element $f\in \mathcal{H}$ can be identified with an element $\widehat{f}\in L^{2}\big(\mathbb{T},L^{2}(\mathcal{I} )\big)$ through the tensor product decomposition $\mathcal{H}=L^{2}(\mathbb{T})\otimes L^{2}(\mathcal{I} ) $.  For an Hilbert-Schmidt operator $\rho\in \mathcal{B}_{2}(\mathcal{H})  $, there are operator coefficients $\ell_{\phi}^{(\kappa)}(\rho)\in \mathcal{B}_{2}\big(L^{2}(\mathcal{I} )\big) $  defined for a.e. $(\phi,\kappa)\in \mathbb{T}\times [-\frac{1}{4}, \frac{1}{4})  $ through the  relation 
\begin{align}\label{Kernel}
 \langle f |\rho g\rangle =  \int_{\mathbb{T}\times [-\frac{1}{4}, \frac{1}{4})}d\phi d\kappa \, \big\langle \widehat{f}_{\phi- \kappa }\big|\ell_{\phi}^{(\kappa)}(\rho)  \widehat{g}_{\phi+ \kappa  }\big\rangle     
 \end{align}
for all $f,g\in \mathcal{H}$.   The operators $\ell_{\phi}^{(\kappa)}(\rho):L^{2}(\mathcal{I} )$ are merely the blocks associated with the Hilbert space tensor product $L^{2}(\mathbb{T})\otimes L^{2}(\mathcal{I} ) $.  When $\rho $ is trace class, the operator  coefficients  $\ell_{\phi}^{(\kappa)}(\rho)$ can be taken to be in $ \mathcal{B}_{1}\big(L^{2}(\mathcal{I} )\big) $ and are determined in a stricter sense than a.e. $(\phi,\kappa)\in \mathbb{T}\times  [-\frac{1}{4}, \frac{1}{4}) $: For each $\kappa\in  [-\frac{1}{4}, \frac{1}{4})$, the operators $\ell_{\phi}^{(\kappa)}(\rho)\in \mathcal{B}_{1}\big(L^{2}(\mathcal{I}  )\big) $ are defined a.e. $\phi\in \mathbb{T}$.   In fact, for all $\kappa\in [-\frac{1}{4}, \frac{1}{4})$, the function $\ell^{(\kappa)}(\rho)$ that sends $\phi\in \mathbb{T}$ to $\ell^{(\kappa)}_{\phi}(\rho)$ can be regarded as an element in $L^{1}\left(\mathbb{T},\,\mathcal{B}_{1}(L^2(\mathcal{I}))\right)$.

 The function $\ell^{(\kappa)}(\rho):\mathbb{T}\rightarrow  \mathcal{B}_{1}\big(L^2(\mathcal{I})\big)$ is defined in the lemma below though the  Banach algebra $\mathcal{A}_{\mathbb{T}}\subset \mathcal{B}(\mathcal{H})$  of operators that commute with $e^{\textup{i}2\pi P}$.  The algebra  $\mathcal{A}_{\mathbb{T}}$ is isometrically isomorphic to $L^{\infty}\big(\mathbb{T}, \mathcal{B}(L^2(\mathcal{I}) )\big)$.  Elements  $G\in  \mathcal{A}_{\mathbb{T}}$ are identified with elements $\widetilde{G} \in L^{\infty}\big(\mathbb{T}, \mathcal{B}(L^2(\mathcal{I}))\big)$ through the equality
$$ \hspace{3cm}  \langle f| G g\rangle = \int_{\mathbb{T}}d\phi  \,\big\langle \widehat{f}_{\phi}\big| \widetilde{G}_{\phi}\,   \widehat{g}_{\phi} \big\rangle ,  \hspace{2cm} f,g\in \mathcal{H} .               $$

\begin{lemma}\label{Israelistine}
Let $\rho\in \mathcal{B}(\mathcal{H})$ and $\kappa\in [-\frac{1}{4}, \frac{1}{4})$.  There is a unique function $\ell^{(\kappa)}(\rho)\in L^{1}\left(\mathbb{T},\,\mathcal{B}_{1}\big(L^2(\mathcal{I})\big)\right)$ satisfying that for all $G\in \mathcal{A}_{\mathbb{T}}$, 
$$\Tr\big[\rho e^{\textup{i}\kappa X}G e^{\textup{i}\kappa X} \big]=\int_{\mathbb{T}}d\phi \, \Tr\big[ \ell^{(\kappa)}_{\phi}(\rho) e^{\textup{i}\kappa X_{\mathbb{T}}}\widetilde{G}_{\phi} e^{\textup{i}\kappa X_{\mathbb{T}}}  \big],     $$
where $X_{\mathbb{T}}\in \mathcal{B}\big(L^{2}(\mathcal{I})\big)$ acts as the multiplication operator $(X_{\mathbb{T}}f)=xf(x)$ for $f\in L^2(\mathcal{I})$.  Moreover, the norm for $\ell^{(\kappa)}(\rho)$ has the bound $\big\| \ell^{(\kappa)}(\rho) \big\|_{1}\leq \|\rho\|_{\mathbf{1}}   $.

\end{lemma}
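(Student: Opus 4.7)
The plan is to exploit how the multiplication operator $e^{\textup{i}\kappa X}$ acts under the tensor decomposition $\mathcal{H}=L^{2}(\mathbb{T})\otimes L^{2}(\mathcal{I})$. A direct computation with the partial Fourier transform gives
\[
\widehat{e^{\textup{i}\kappa X}f}_{\phi}(x)\,=\,e^{\textup{i}\kappa x}\,\widehat{f}_{\phi-\kappa}(x),\qquad x\in\mathcal{I},\;\phi\in\mathbb{T},
\]
so that $e^{\textup{i}\kappa X}$ corresponds to the quasimomentum shift $\phi\mapsto\phi-\kappa$ combined with fiber-wise multiplication by $e^{\textup{i}\kappa X_{\mathbb{T}}}$. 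All of the algebraic content of the lemma is forced by this single observation.

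With this in hand I would first define $\ell^{(\kappa)}(\rho)$ on rank-one trace class operators and extend by linearity. For $\rho=|f\rangle\langle g|$ set
\[
\ell^{(\kappa)}_{\phi}(\rho):=\big|\widehat{f}_{\phi-\kappa}\big\rangle\big\langle\widehat{g}_{\phi+\kappa}\big|\,\in\,\mathcal{B}_{1}\big(L^{2}(\mathcal{I})\big),
\]
and verify the trace identity directly from the fiber formula above:
\begin{align*}
\Tr\big[\rho\,e^{\textup{i}\kappa X}G\,e^{\textup{i}\kappa X}\big]
&=\langle g|\,e^{\textup{i}\kappa X}G\,e^{\textup{i}\kappa X}f\rangle\\
&=\int_{\mathbb{T}}d\phi\,\big\langle e^{-\textup{i}\kappa x}\widehat{g}_{\phi+\kappa}\big|\,\widetilde{G}_{\phi}\,e^{\textup{i}\kappa x}\widehat{f}_{\phi-\kappa}\big\rangle\\
&=\int_{\mathbb{T}}d\phi\,\Tr\big[\ell^{(\kappa)}_{\phi}(\rho)\,e^{\textup{i}\kappa X_{\mathbb{T}}}\widetilde{G}_{\phi}\,e^{\textup{i}\kappa X_{\mathbb{T}}}\big].
\end{align*}
For a general trace class $\rho$ with singular value decomposition $\rho=\sum_{n}s_{n}|f_{n}\rangle\langle g_{n}|$ (orthonormal $f_{n},g_{n}$ and $\sum_{n}s_{n}=\|\rho\|_{\mathbf{1}}$), I would define $\ell^{(\kappa)}_{\phi}(\rho)$ as the corresponding sum, and deduce both its $L^{1}$-membership and the claimed norm bound from
\[
\int_{\mathbb{T}}d\phi\,\big\|\ell^{(\kappa)}_{\phi}(\rho)\big\|_{1}\,\leq\,\sum_{n}s_{n}\int_{\mathbb{T}}d\phi\,\big\|(\widehat{f_{n}})_{\phi-\kappa}\big\|_{2}\big\|(\widehat{g_{n}})_{\phi+\kappa}\big\|_{2}\,\leq\,\sum_{n}s_{n}\,\|f_{n}\|_{2}\|g_{n}\|_{2}\,=\,\|\rho\|_{\mathbf{1}},
\]
where the inner inequality is Cauchy-Schwarz in $\phi$, combined with the fact that the partial Fourier transform is an $L^{2}$-isometry and that Lebesgue measure on $\mathbb{T}$ is translation invariant. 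The trace identity for general $\rho$ then follows by trace-norm continuity of both sides.

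For uniqueness I would invoke the duality $L^{1}\big(\mathbb{T},\mathcal{B}_{1}(L^{2}(\mathcal{I}))\big)^{*}\cong L^{\infty}\big(\mathbb{T},\mathcal{B}(L^{2}(\mathcal{I}))\big)$, together with the identification $\mathcal{A}_{\mathbb{T}}\cong L^{\infty}\big(\mathbb{T},\mathcal{B}(L^{2}(\mathcal{I}))\big)$ recalled just above the lemma. Since the fiber-wise conjugation $\widetilde{G}_{\phi}\mapsto e^{\textup{i}\kappa X_{\mathbb{T}}}\widetilde{G}_{\phi}\,e^{\textup{i}\kappa X_{\mathbb{T}}}$ is an isometric automorphism of $L^{\infty}\big(\mathbb{T},\mathcal{B}(L^{2}(\mathcal{I}))\big)$, as $G$ ranges over $\mathcal{A}_{\mathbb{T}}$ the right-hand side of the identity tests the candidate $\ell^{(\kappa)}(\rho)$ against a full norming family for its predual; any two candidates therefore coincide as elements of $L^{1}\big(\mathbb{T},\mathcal{B}_{1}(L^{2}(\mathcal{I}))\big)$.

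The main obstacle I anticipate is not the algebraic content, which is dictated by the fiber computation, but rather the measure-theoretic bookkeeping: establishing Bochner measurability of $\phi\mapsto\ell^{(\kappa)}_{\phi}(\rho)$ as a $\mathcal{B}_{1}(L^{2}(\mathcal{I}))$-valued function, justifying the interchange of summation and integration in the SVD extension (the sum converges in the trace norm only after integration in $\phi$), and verifying that the identification $\mathcal{A}_{\mathbb{T}}\cong L^{\infty}(\mathbb{T},\mathcal{B}(L^{2}(\mathcal{I})))$ behaves cleanly under the $e^{\textup{i}\kappa X_{\mathbb{T}}}$-twist so that the duality pairing can be applied. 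These points are standard but require some care with weak measurability and separability arguments.
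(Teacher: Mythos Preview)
Your proposal is correct and follows essentially the same route as the paper: both construct $\ell^{(\kappa)}(\rho)$ via the singular value decomposition of $\rho$ together with the fiber identity $\langle f|Gg\rangle=\int_{\mathbb{T}}d\phi\,\langle\widehat{f}_{\phi}|\widetilde{G}_{\phi}\widehat{g}_{\phi}\rangle$. The one noteworthy difference is the norm bound: you estimate $\|\ell^{(\kappa)}(\rho)\|_{1}$ directly by Cauchy--Schwarz on the SVD sum, whereas the paper obtains it via the duality $\|\ell^{(\kappa)}(\rho)\|_{1}=\sup_{\|G\|=1}\Tr[\rho\,e^{\textup{i}\kappa X}G\,e^{\textup{i}\kappa X}]\leq\|\rho\|_{\mathbf{1}}$, identifying the maximizer through the polar decomposition of $\ell^{(\kappa)}_{\phi}(\rho)$; your argument is slightly more elementary, the paper's gives the supremum characterization for free.
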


\begin{proof}
The equality follows by expanding $\rho$ in terms of its singular value decomposition and using  that for $G \in \mathcal{A}_{\mathbb{T}}$ and  $f,g\in \mathcal{H}$, then $ \langle f| G g\rangle = \int_{\mathbb{T}}d\phi \, \big\langle \widehat{f}_{\phi}\big| \widetilde{G}_{\phi}\,   \widehat{g}_{\phi} \big\rangle $  by the definition of  $\widetilde{G}$.  The following relations bound the integral norm of $ \ell^{(\kappa)}(\rho)$:
\begin{align*}
\big\|  \ell^{(\kappa)}(\rho)  \big\|_{1}=\int_{\mathbb{T}}d\phi \, \big\|\ell^{(\kappa)}_{\phi}(\rho) \big\|_{\mathbf{1}} &=\sup_{\substack{\widetilde{G}\in L^{\infty}(\mathbb{T},\mathcal{B}(L^2(\mathcal{I}))), \\ \| \widetilde{G}   \|_{\infty}=1 } }\int_{\mathbb{T}}d\phi \, \Tr\big[ \ell^{(\kappa)}_{\phi}(\rho) e^{\textup{i}\kappa X_{\mathbb{T}} }\widetilde{G}_{\phi} e^{\textup{i}\kappa X_{\mathbb{T}} } \big]\\ &= \sup_{\substack{G\in \mathcal{A}_{\mathbb{T}}, \\ \| G   \|=1 } }\Tr\big[\rho e^{\textup{i}\kappa X}G e^{\textup{i}\kappa X} \big]\\ & \leq \|\rho\|_{\mathbf{1}},
\end{align*}
where the third equality above holds by the definition of $\widetilde{G}$.  The supremum on the first line is obtained as a maximum with $\widetilde{G}_{\phi}=e^{-\textup{i}\kappa X_{\mathbb{T}} }U_{\rho,\kappa,\phi}e^{-\textup{i}\kappa X_{\mathbb{T}} }$  for the unitary $U_{\rho,\kappa,\phi}\in \mathcal{B}\big(L^{2}(\mathcal{I})\big) $ in the polar decomposition of $ \ell^{(\kappa)}_{\phi}(\rho)  $, i.e.,   $\ell^{(\kappa)}_{\phi}(\rho)=U_{\rho,\kappa,\phi} |\ell^{(\kappa)}_{\phi}(\rho) | $.

\end{proof}

Recall that  $[\rho ]^{(k)}_{\scriptscriptstyle{Q}}$ and $[\rho ]^{(k)}$ are formally defined to be functions of $p\in \R$ given by $\text{ }_{\scriptscriptstyle{Q}}\langle p-k|\rho| p+ k\rangle_{\scriptscriptstyle{Q}}$ and $\langle p- k|\rho| p+ k\rangle$, respectively.  I interpret the  mathematical definitions for expressions involving kets as referring to analogous expressions formulated in terms of Bloch functions and the tensor product decomposition $\mathcal{H}=L^{2}(\mathbb{T})\otimes L^{2}(\mathcal{I} ) $; for instance,
\begin{align*}\text{ }_{\scriptscriptstyle{Q}}\big\langle p- k\big|\rho \big|\, p+k\big\rangle_{\scriptscriptstyle{Q}}:= &\big\langle \widetilde{\psi}_{p- k}\big| \ell^{(\kappa)}_{\phi}(\rho)\big| \widetilde{\psi}_{p+  k}\big\rangle \quad \text{and}\\ \quad \quad  \big\langle p-k\big|\rho \big|\, p+k\big\rangle := &\big\langle \psi_{p-  k}\big| \ell^{(\kappa)}_{\phi}(\rho)\big| \psi_{p+  k}\big\rangle 
\end{align*}
for $\phi \in \mathbb{T}$ and $\kappa\in [-\frac{1}{4},\frac{1}{4})$  equal, respectively, to $p \in \R$  modulo    $ 1$   and $k\in \R$ modulo $\frac{1}{2}$.  
Finally, I also define $\langle \rho \rangle^{(\kappa)}\in L^{1}(\mathbb{T})$ for $\kappa\in [-\frac{1}{4},\frac{1}{4})$ and $\rho\in \mathcal{B}_{1}(\mathcal{H})$ such that 
 $\langle \rho \rangle_{\phi}^{(\kappa)}: =\Tr[\ell^{(\kappa)}_{\phi}(\rho)] $. 
In future, the expressions   $\ell^{(k)}_{p}(\rho)$ and $\langle \rho \rangle_{p}^{(k)} $ for $k,p\in \R$ should be understood as  $\ell^{(\kappa)}_{\phi}(\rho)$ and $\langle \rho \rangle_{\phi}^{(\kappa)} $ for $k,p$ related to $\kappa,\phi$ as before.

The inequalities in Prop.~\ref{MiscFiber} and~\ref{PropFiberII} are all consequence of the Cauchy-Schwarz inequality. 

\begin{proposition}\label{MiscFiber}

 Let $\rho \in \mathcal{B}_{1}(\mathcal{H})$, and define  $|\rho|:=\sqrt{\rho^{*}\rho}$ and $|\rho|_{*}:=\sqrt{\rho \rho^{*}}$.

\begin{enumerate}
 \item Let $f_{\phi},g_{\phi}\in L^{2}(\mathcal{I})$ be square-integrable functions of the parameter $\phi\in \mathbb{T}$.  For each $\kappa \in  [-\frac{1}{4}, \frac{1}{4})$, the following inequality  holds for a.e. $\phi\in \mathbb{T}$:
\begin{align}\label{KungFu}
\big|\big\langle f_{\phi} \big| \ell_{\phi}^{(\kappa)} (\rho)g_{\phi}\big\rangle \big| \leq \big\langle f_{\phi} \big| \ell_{\phi- \kappa }^{(0)}\big(|\rho|_{*}\big)  f_{\phi}\big\rangle^{\frac{1}{2}}\big\langle g_{\phi} \big| \ell_{\phi+ \kappa }^{(0)}\big(|\rho|\big)  g_{\phi}\big\rangle^{\frac{1}{2}}  .
\end{align}

\item  For any $\kappa\in [-\frac{1}{4}, \frac{1}{4})$ and a.e.  $\phi\in \mathbb{T}$,
 $$\big|\langle \rho \rangle_{\phi}^{(\kappa)}\big|\leq \big(\big\langle |\rho|_{*} \big\rangle_{\phi- \kappa }^{(0)}\big)^{\frac{1}{2}} \big(\big\langle |\rho| \big\rangle_{\phi+\kappa }^{(0)}\big)^{\frac{1}{2}}.$$

\item  For any $k\in \mathbb{R}$ and a.e.  $p\in \mathbb{R}$,
$$\big| [\rho ]^{(k)}(p) \big|\leq  \Big( \big[|\rho|_{*} \big]^{(0)}\big(p-k\big) \Big)^{\frac{1}{2}} \Big( \big[|\rho| \big]^{(0)}\big(p+ k\big) \Big)^{\frac{1}{2}}  .  $$
The analogous equality holds with the $[\rho ]^{(k)}$'s replaced by the $[\rho ]^{(k)}_{\scriptscriptstyle{Q}}$'s.

\end{enumerate}

\end{proposition}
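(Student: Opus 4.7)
The plan is to deduce all three inequalities from a single fibered Cauchy-Schwarz argument applied to a factorization $\rho = AB$ with $AA^* = |\rho|_*$ and $B^*B = |\rho|$. Such a factorization comes from the polar decomposition: writing $\rho = |\rho|_* U$ for a partial isometry $U$, I set $A := |\rho|_*^{1/2}$ and $B := |\rho|_*^{1/2} U$. Then $AA^* = |\rho|_*$ is immediate, while the identity $|\rho|_* = U|\rho|U^*$ gives $B^*B = U^*|\rho|_* U = |\rho|$. Both $A$ and $B$ are Hilbert-Schmidt because $\Tr|\rho|<\infty$.

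The technical heart of the argument is a kernel-composition identity. Interpreting $\ell^{(\kappa)}_\phi(C)$ as the operator-valued Bloch kernel $C(\phi-\kappa,\phi+\kappa)$ of a Hilbert-Schmidt operator $C$ under the tensor decomposition $\mathcal{H} = L^2(\mathbb{T}) \otimes L^2(\mathcal{I})$, I need
\begin{equation*}
\ell^{(\kappa)}_\phi(AB) = \int_{\mathbb{T}} d\phi'\, A(\phi-\kappa,\phi')\, B(\phi',\phi+\kappa),
\end{equation*}
valid in $\mathcal{B}_{2}\big(L^2(\mathcal{I})\big)$ for a.e.\ $\phi \in \mathbb{T}$. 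This identity can be obtained by pairing both sides against operators $G \in \mathcal{A}_{\mathbb{T}}$ via Lem.~\ref{Israelistine}, or by direct verification on the dense set of finite-rank operators of the form $|f\rangle\langle g|$ and passing to the limit in Hilbert-Schmidt norm.

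With the kernel identity in place, Part (1) is immediate: the matrix element reads
\begin{equation*}
\big\langle f_\phi \,\big|\, \ell^{(\kappa)}_\phi(\rho)\, g_\phi \big\rangle = \int_{\mathbb{T}} d\phi'\, \big\langle A(\phi-\kappa,\phi')^* f_\phi \,\big|\, B(\phi',\phi+\kappa)\, g_\phi \big\rangle,
\end{equation*}
and Cauchy-Schwarz in $L^2(\mathbb{T}, L^2(\mathcal{I}))$ bounds the modulus by the product of $\int d\phi'\,\|A(\phi-\kappa,\phi')^* f_\phi\|^2$ and $\int d\phi'\,\|B(\phi',\phi+\kappa)\, g_\phi\|^2$ under the square root; the same kernel identity applied to $AA^* = |\rho|_*$ and $B^*B = |\rho|$ identifies these two integrals with $\big\langle f_\phi \,\big|\, \ell^{(0)}_{\phi-\kappa}(|\rho|_*) f_\phi\big\rangle$ and $\big\langle g_\phi \,\big|\, \ell^{(0)}_{\phi+\kappa}(|\rho|) g_\phi\big\rangle$ respectively. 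Part (2) is the parallel argument with vector matrix elements replaced by operator traces: one bounds $|\Tr[A(\phi-\kappa,\phi') B(\phi',\phi+\kappa)]|$ pointwise in $\phi'$ by the product of the Hilbert-Schmidt norms, then applies Cauchy-Schwarz in $\phi'$. Part (3) is Part (1) specialized to $f_\phi = \psi_{p-k}$ (or $\widetilde{\psi}_{p-k}$) and $g_\phi = \psi_{p+k}$ (or $\widetilde{\psi}_{p+k}$), with $(\phi,\kappa)$ the reductions of $(p,k)$ modulo $(1,\tfrac{1}{2})$; the dictionary of Sect.~\ref{SecDissect} then rewrites the right-hand side as $\big[|\rho|_*\big]^{(0)}(p-k)^{1/2}\big[|\rho|\big]^{(0)}(p+k)^{1/2}$.

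The main obstacle I anticipate is making the kernel composition identity rigorous, since Bloch kernels are only defined almost everywhere as operator-valued functions, so some care is needed to realize the $\phi'$-integral as a genuine Bochner integral in $\mathcal{B}_{2}(L^2(\mathcal{I}))$ and to verify the requisite measurability. A cleaner alternative is to sidestep the pointwise identity and instead pair both sides of the desired inequality against a suitable test operator in $\mathcal{A}_{\mathbb{T}}$, applying Cauchy-Schwarz directly at the level of the trace pairing guaranteed by Lem.~\ref{Israelistine}. Once that bookkeeping is handled, all three bounds collapse to routine uses of Cauchy-Schwarz.
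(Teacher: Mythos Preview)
Your proposal is correct, and the underlying idea---factorize $\rho$ via the polar decomposition and apply Cauchy--Schwarz---is the same as the paper's, but you organize the argument differently. You work \emph{fiberwise}: you invoke the Bloch-kernel composition identity $\ell^{(\kappa)}_\phi(AB)=\int_{\mathbb{T}}d\phi'\,A(\phi-\kappa,\phi')B(\phi',\phi+\kappa)$ for Hilbert--Schmidt $A,B$ and then apply Cauchy--Schwarz in the $\phi'$-integral. The paper instead lifts the problem \emph{globally}: for an arbitrary measurable set $A\subset\mathbb{T}$ it builds rank-one test operators $G^{(A)},G^{(A),\prime},G^{(A),\prime\prime}\in\mathcal{A}_{\mathbb{T}}$, converts $\int_A d\phi\,|\langle f_\phi|\ell^{(\kappa)}_\phi(\rho)g_\phi\rangle|$ into a single trace $\Tr[\rho\,e^{i\kappa X}G^{(A)}e^{i\kappa X}]$ via Lem.~\ref{Israelistine}, applies the Hilbert--Schmidt Cauchy--Schwarz inequality $|\Tr[Y^*Z]|^2\le\Tr[|Y|^2]\Tr[|Z|^2]$ at the trace level, and then deduces the pointwise-a.e.\ bound from the arbitrariness of $A$. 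In other words, the ``cleaner alternative'' you flag at the end---pairing against test operators in $\mathcal{A}_{\mathbb{T}}$ and applying Cauchy--Schwarz at the trace level---is precisely the route the paper takes. Your approach is arguably more direct conceptually but, as you correctly anticipate, requires some care with the a.e.\ Bochner-integral formulation of the kernel identity; the paper's approach sidesteps that by never writing the kernel composition explicitly. For Part~(2) the paper deduces the trace inequality from Part~(1) by summing over an orthonormal basis and applying a second (scalar) Cauchy--Schwarz, whereas you run a parallel Hilbert--Schmidt argument; both are fine. Part~(3) is handled identically in both.
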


\begin{proof}\text{   }\\
\noindent Part (1):  For a measurable set $A\subset \mathbb{T}$, let $G^{(A)},G^{(A),\prime},G^{(A),\prime \prime}\in \mathcal{A}_{\mathbb{T}}$ have respective corresponding  elements in $L^{\infty}\big(\mathbb{T},\mathcal{B}(L^2(\mathcal{I}))\big)$ given by 
\begin{eqnarray*}
\widetilde{G}^{(A)}_{\phi}&:=&\overline{s}_{\phi }1_{A}(\phi)e^{-\textup{i} \kappa  X_{\mathbb{T}} } |g_{\phi}\rangle \langle f_{\phi}| e^{-\textup{i} \kappa  X_{\mathbb{T}} } ,
\\    \widetilde{G}^{(A), \prime }_{\phi}&:=& s_{\phi }  1_{A}(\phi)|\psi_{0}\rangle \langle g_{\phi}| e^{\textup{i}\kappa X_{\mathbb{T}} } ,
\\
\widetilde{G}^{(A),\prime \prime}_{\phi}&:=&1_{A}(\phi)|\psi_{0}\rangle \langle f_{\phi}|  e^{-\textup{i}\kappa X_{\mathbb{T}} } 
,
\end{eqnarray*}
where $s_{\phi }:=\frac{ \langle f_{\phi}| \ell^{(\kappa)}_{\phi}(\rho)|g_{\phi}\rangle  }{  | \langle f_{\phi}| \ell^{(\kappa)}_{\phi}(\rho)|g_{\phi}\rangle |  }    $ and $X_{\mathbb{T}}\in \mathcal{B}\big(L^{2}(\mathcal{I})\big)$ is defined as in the statement of  Lem.~\ref{Israelistine}.  The choice of the vector  $ \psi_{0}\in L^2(\mathcal{I})  $ is arbitrary, and I will only use that $ \langle  \psi_{0}\,| \psi_{0}\rangle =1$. Notice that for all $\phi\in \mathbb{T}$ $\widetilde{G}^{(A)}_{\phi}=  ( \widetilde{G}^{(A), \prime }_{\phi} )^{*}\widetilde{G}^{(A),\prime \prime}_{\phi}$, and thus $G^{(A)}=(G^{(A),\prime})^{*}G^{(A),\prime \prime}$.

The second and fifth equalities below  invoke the definition for  $\ell^{(\kappa)}(\rho)$:
\begin{align}\label{Axis}
\int_{A}d\phi \, \big| \big\langle f_{\phi}\big| \ell^{(\kappa)}_{\phi}(\rho)\big|g_{\phi}\big\rangle\big| &=\int_{\mathbb{T}}d\phi \, \Tr\big[ \ell^{(\kappa)}_{\phi}(\rho)e^{\textup{i}\kappa  X_{\mathbb{T}} } \widetilde{G}_{\phi}^{(A)} e^{\textup{i}\kappa X_{\mathbb{T}} }  \big]=\Tr\big[\rho e^{\textup{i}\kappa X}G^{(A)} e^{\textup{i}\kappa X} \big]\nonumber  \\ &=\Tr\Big[\big(  G^{(A),\prime}e^{-\textup{i}\kappa X}|\rho|^{\frac{1}{2}}\big)^{*}\big(  G^{(A),\prime \prime}   e^{\textup{i}\kappa X}U_{\rho}|\rho|^{\frac{1}{2}}\big)      \Big] \nonumber \\  &\leq    \Tr\Big[\big|  G^{(A),\prime}e^{-\textup{i}\kappa X}|\rho|^{\frac{1}{2}}\big|^{2}\Big]^{\frac{1}{2}} \Tr\Big[\big|  G^{(A),\prime \prime}e^{\textup{i}\kappa X}U_{\rho}|\rho|^{\frac{1}{2}}\big|^{2}\Big]^{\frac{1}{2}} \nonumber  \\ &=    \Tr\Big[|\rho| \big(e^{\textup{i}\kappa X} |  G^{(A),\prime}|^2 e^{-\textup{i}\kappa X}\big)\Big]^{\frac{1}{2}}  \Tr\Big[|\rho|_{*} \big(e^{-\textup{i}\kappa X} |  G^{(A),\prime \prime}|^2 e^{\textup{i}\kappa X}\big)\Big]^{\frac{1}{2}}\nonumber \\ & = \Big(  \int_{A}d\phi  \,\big\langle g_{\phi }\big| \ell^{(0)}_{\phi +\kappa}(|\rho|)\big| g_{\phi}\big\rangle    \Big)^{\frac{1}{2}} \Big(  \int_{A}d\phi  \, \big\langle f_{\phi }\big| \ell^{(0)}_{\phi -\kappa}(|\rho|_{*})\big| f_{\phi}\big\rangle  \Big)^{\frac{1}{2}},
\end{align}
where  $U_{\rho}\in \mathcal{B}(\mathcal{H})$ is the unitary operator in the polar decomposition of $\rho$.   The third and fourth equalities hold by the cyclicity of trace, and the fourth also uses that $|\rho|_{*}= U_{\rho}|\rho|U_{\rho}^{*}$.   The inequality is the Cauchy-Schwarz inequality $|\Tr[Y^{*}Z]|^2\leq \Tr[|Y|^2]\Tr[|Z|^2]$ for Hilbert-Schmidt operators $Y,Z$.   For the fifth equality, the operators $e^{\textup{i}\kappa X} |  G^{(A),\prime}|^2 e^{-\textup{i}\kappa X}$ and $e^{-\textup{i}\kappa X} |  G^{(A),\prime \prime}|^2 e^{\textup{i}\kappa X}$ are in $\mathcal{A}_{\mathbb{T}}$, and the corresponding elements in $L^{\infty}\big(\mathbb{T},\mathcal{B}(L^2(\mathcal{I}) )\big)$ are respectively $1_{A}(\phi-\kappa)|g_{\phi-\kappa}\rangle \langle g_{\phi-\kappa}|$ and $1_{A}(\phi+\kappa)|f_{\phi+\kappa}\rangle \langle f_{\phi+\kappa}|$.  

Since~(\ref{Axis}) holds for all measurable sets $A\subset \mathbb{T}$, it follows that for a.e. $\phi\in \mathbb{T}$, 
$$   \big| \big\langle f_{\phi}\big| \ell^{(\kappa)}_{\phi}(\rho)\big|g_{\phi}\big\rangle\big|\leq  \big\langle f_{\phi}\big| \ell^{(0)}_{\phi-\kappa }(|\rho|)\big| f_{\phi} \big\rangle^{\frac{1}{2}}  \big\langle g_{\phi }\big| \ell^{(0)}_{\phi +\kappa}(|\rho|_{*})\big| g_{\phi}\big\rangle^{\frac{1}{2}} .   $$

\vspace{.4cm}

\noindent Part (2):  By definition $\langle \rho \rangle_{\phi}^{(\kappa)}= \Tr[\ell^{(\kappa)}_{\phi}(\rho)]$.  The result follows by choosing an orthonormal basis for $L^{2}(\mathcal{I})$ in which to compute the trace:
\begin{align*}
\big|\Tr[\ell^{(\kappa)}_{\phi}(\rho)]\big|= &\Big|\sum_{n\in \Z}\big\langle \psi_{n }\big| \ell^{(\kappa)}_{\phi}(\rho)\big| \psi_{n }  \big\rangle    \Big|  \\ \leq & \sum_{n\in \Z} \big\langle \psi_{n }\big| \ell^{(0)}_{\phi-\kappa }(|\rho|)\big| \psi_{n }  \big\rangle^{\frac{1}{2}} \big\langle \psi_{n }\big| \ell^{(0)}_{\phi+\kappa }(|\rho|_{*})\big| \psi_{n }  \big\rangle^{\frac{1}{2}}
\\ \leq & \Big( \sum_{n\in \Z} \big\langle \psi_{n }\big| \ell^{(0)}_{\phi-\kappa }(|\rho|)\big| \psi_{n }  \big\rangle   \Big)^{\frac{1}{2}}\Big( \sum_{n\in \Z} \big\langle \psi_{n }\big| \ell^{(0)}_{\phi+\kappa }(|\rho|_{*})\big| \psi_{n }  \big\rangle   \Big)^{\frac{1}{2}}\\ = & \Tr\big[\ell^{(0)}_{\phi-\kappa}(|\rho|)\big]^{\frac{1}{2}}\Tr\big[\ell^{(0)}_{\phi+\kappa}(|\rho|_{*})\big]^{\frac{1}{2}}=\big(\big\langle |\rho| \big\rangle_{\phi-\kappa}^{(0)}\big)^{\frac{1}{2}}\big(\big\langle |\rho|_{*} \big\rangle_{\phi+\kappa}^{(0)}\big)^{\frac{1}{2}}.
\end{align*}
The first inequality above is by Part (1) and the second is by Cauchy-Schwarz. 

\vspace{.4cm}

\noindent Part (3):  By definition $[\rho]^{(k)}(p):=\langle \psi_{p-k}|  \ell^{(\kappa)}_{\phi}(\rho) \psi_{p+k}\rangle  $ for $p=\phi\,\textup{mod}\,1$ and  $k=\kappa\,\textup{mod}\,\frac{1}{2}$.  The result follows directly from Part (1).  The same argument applies for $[\rho]^{(k)}_{\scriptscriptstyle{Q}}$.

\end{proof}

\begin{proposition}\label{PropFiberII}

Let $\rho \in \mathcal{B}_{1}(\mathcal{H})$. 
\begin{enumerate}

\item The following equalities hold:
$$
\int_{\R}dp\, [\rho ]^{(0)}_{\scriptscriptstyle{Q}}(p)=\int_{\R}dp\, [\rho ]^{(0)}(p)= \int_{\mathbb{T}}d\phi \,\big\langle \rho \big\rangle_{\phi}^{(0)}=\Tr[\rho].    
$$

\item For any $k\in \R $,
$$\Tr\big[\rho e^{\textup{i}2k X}    \big]=\int_{\R}dp \, [\rho ]^{(k)}(p).  $$

\item  For any  $\kappa\in [-\frac{1}{4},\, \frac{1}{4})$ and  $k\in \R $, 
$$ \big\|\langle \rho \rangle^{(\kappa)} \big\|_{1}, \big\| [\rho ]^{(k)}\big\|_{1}, \, \big\| [\rho ]^{(k)}_{\scriptscriptstyle{Q}}\big\|_{1}\leq \|\rho\|_{\mathbf{1}}. $$

\end{enumerate}

\end{proposition}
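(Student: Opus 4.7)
The plan is to prove Part (1) first, then obtain Part (2) by essentially the same computation with one technical twist, and finally deduce Part (3) as a Cauchy--Schwarz consequence of the preceding parts combined with Proposition~\ref{MiscFiber}. Two orthonormality facts drive everything: for each fixed $\phi\in\mathbb{T}$, the plane waves $\{\psi_{\phi+n}\}_{n\in\Z}$ form an orthonormal basis of $L^{2}(\mathcal{I})$ (ordinary Fourier series on the interval), and the Bloch functions $\{\widetilde{\psi}_{\phi+n}\}_{n\in\Z}$ do as well, since these are exactly the eigenbasis $\{\widetilde{\psi}_{N,\phi}\}_{N\in\mathbb{N}}$ of the self-adjoint fiber Hamiltonian $H_{\phi}$ under the extended-zone scheme relabelling.

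For Part (1), I will unfold the $p$-integral as $\int_{\R}dp = \int_{\mathbb{T}}d\phi\,\sum_{n\in\Z}$ and compute
\begin{align*}
\int_{\R}dp\,[\rho]^{(0)}(p) \;=\; \int_{\mathbb{T}}d\phi\,\sum_{n\in\Z}\big\langle \psi_{\phi+n}\big|\,\ell^{(0)}_{\phi}(\rho)\,\big|\psi_{\phi+n}\big\rangle \;=\; \int_{\mathbb{T}}d\phi\,\Tr\big[\ell^{(0)}_{\phi}(\rho)\big] \;=\; \int_{\mathbb{T}}d\phi\,\langle\rho\rangle^{(0)}_{\phi},
\end{align*}
and the identical argument with Bloch functions gives the same value for $\int_{\R}dp\,[\rho]^{(0)}_{\scriptscriptstyle Q}(p)$. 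Finally, the rightmost expression equals $\Tr[\rho]$ by specialising Lemma~\ref{Israelistine} to $\kappa = 0$ and $G = I \in \mathcal{A}_{\mathbb{T}}$ (so $\widetilde{G}_{\phi} = I$, and the identity reduces to $\Tr[\rho] = \int_{\mathbb{T}}d\phi\,\Tr[\ell^{(0)}_{\phi}(\rho)]$).

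Part (2) is similar in spirit but requires care with how the global operator $X$ interacts with the fiber decomposition, and this is the step I expect to be the main obstacle. The relation $\psi_{p\pm k}(x) = e^{\pm ikx}\psi_{p}(x)$ on $\mathcal{I}$ lets me rewrite $[\rho]^{(k)}(p) = \langle\psi_{p}|\,e^{ikX_{\mathbb{T}}}\ell^{(\kappa)}_{\phi}(\rho)\,e^{ikX_{\mathbb{T}}}|\psi_{p}\rangle$ and, proceeding exactly as in Part (1), obtain $\int_{\R}dp\,[\rho]^{(k)}(p) = \int_{\mathbb{T}}d\phi\,\Tr\big[\ell^{(\kappa)}_{\phi}(\rho)\,e^{i2kX_{\mathbb{T}}}\big]$. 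The snag is that $e^{i2kX}$ itself is typically not an element of $\mathcal{A}_{\mathbb{T}}$, so Lemma~\ref{Israelistine} cannot be applied with $G = e^{i2kX}$ directly. The fix is to split off the fractional part of $k$: writing $k = \kappa + n_{0}/2$ with $n_{0} := 2(k - \kappa) \in \Z$ (which is permitted since $\kappa = k\,\textup{mod}\,\tfrac{1}{2}$), the operator $G := e^{in_{0}X}$ does lie in $\mathcal{A}_{\mathbb{T}}$ because $n_{0}\in\Z$, and its fiber is simply $\widetilde{G}_{\phi} = e^{in_{0}X_{\mathbb{T}}}$. Lemma~\ref{Israelistine} then yields $\Tr\big[\rho\, e^{i2kX}\big] = \Tr\big[\rho\, e^{i\kappa X}e^{in_{0}X}e^{i\kappa X}\big] = \int_{\mathbb{T}}d\phi\,\Tr\big[\ell^{(\kappa)}_{\phi}(\rho)\,e^{i2kX_{\mathbb{T}}}\big]$, matching the quantity just computed.

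Part (3) will then follow routinely. Applying Part (2) of Proposition~\ref{MiscFiber} pointwise, integrating over $\phi\in\mathbb{T}$, and using the Cauchy--Schwarz inequality together with translation invariance of Lebesgue measure on $\mathbb{T}$ gives
\begin{align*}
\|\langle\rho\rangle^{(\kappa)}\|_{1} \;\leq\; \Big(\int_{\mathbb{T}}d\phi\,\langle|\rho|_{*}\rangle^{(0)}_{\phi}\Big)^{1/2}\Big(\int_{\mathbb{T}}d\phi\,\langle|\rho|\rangle^{(0)}_{\phi}\Big)^{1/2},
\end{align*}
and each factor equals $\|\rho\|_{\mathbf{1}}^{1/2}$ by Part (1) applied to $|\rho|$ and $|\rho|_{*}$, which finishes the estimate. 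The same structural argument, using Part (3) of Proposition~\ref{MiscFiber} and the $p$-integral in place of the $\phi$-integral, yields the analogous bounds on $\|[\rho]^{(k)}\|_{1}$ and $\|[\rho]^{(k)}_{\scriptscriptstyle Q}\|_{1}$.
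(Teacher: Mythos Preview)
Your proposal is correct and follows essentially the same route as the paper: Part~(1) is identical (unfold $\int_{\R}dp=\int_{\mathbb{T}}d\phi\sum_{n}$ and use that $\{\psi_{\phi+n}\}$ and $\{\widetilde{\psi}_{\phi+n}\}$ are orthonormal bases); Part~(2) uses the same key trick of writing $e^{\textup{i}2kX}=e^{\textup{i}\kappa X}e^{\textup{i}n_{0}X}e^{\textup{i}\kappa X}$ with $n_{0}\in\Z$ so that Lemma~\ref{Israelistine} applies, only you compute the $p$-integral side first whereas the paper starts from $\Tr[\rho e^{\textup{i}2kX}]$; and Part~(3) is the Cauchy--Schwarz argument the paper alludes to in one line, which you have written out explicitly.
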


\begin{proof}[Proof of Prop.~\ref{PropFiberII}]\text{ }\\
\noindent Part (1):   The first and fourth equalities below hold by the definitions for $[\rho]^{(0)}$ and $\ell^{(0)}_{\phi}(\rho)$, respectively:
\begin{align*}
\int_{\R}dp\, [\rho]^{(0)}(p)= &  \int_{\R}dp \,\big\langle \psi_{p}\big|  \ell^{(0)}_{\phi}( \rho) \psi_{p}\big\rangle \\ = &  \int_{\mathbb{T}}d\phi \, \sum_{n\in \Z}  \big\langle \psi_{\phi+n }\big|  \ell^{(0)}_{\phi}( \rho) \psi_{\phi+n}\big\rangle\\ =&\int_{\mathbb{T}}d\phi \,\Tr[  \ell^{(0)}_{\phi}( \rho) ]=\Tr[\rho],  
\end{align*}
where, on the first line, $\phi \in [-\frac{1}{2},\frac{1}{2})$ with   $\phi =p\,\textup{mod}\,1$.  The third equality uses that $\psi_{\phi+n}$, $n\in \Z$ is an orthonormal basis of $L^{2}(\mathcal{I})$ for each $\phi \in \mathbb{T}$.  I also have the equality $\int_{\mathbb{T}}d\phi \,\langle \rho \rangle_{\phi}^{(0)}=\Tr[\rho]$,  since  $\langle \rho \rangle_{\phi}^{(0)}=\Tr[  \ell^{(0)}_{\phi}( \rho) ]$.   The  argument is analogous for $[\rho]^{(0)}$ replaced by  $[\rho]^{(0)}_{\scriptscriptstyle{Q}}$.

\vspace{.4cm}

\noindent Part (2): Let $\phi\in \mathbb{T}$ be equal to  $p$ modulo $1$ and $ \kappa\in [-\frac{1 }{4},\frac{1}{4})$ be equal to $k $ modulo $\frac{1}{2}$.   Also let $n:=2k-2\kappa $.  Since  $e^{\textup{i}nX}      \in \mathcal{A}_{\mathbb{T}}$ for $n\in \Z$, the definition for $\ell^{(\kappa)}(\rho)$ yields the second equality below:
\begin{align*}
 \Tr\big[\rho e^{\textup{i}2k X}    \big] & = \Tr\big[\rho e^{\textup{i}\kappa X} e^{\textup{i}n X}       e^{\textup{i}\kappa X}\big]=\int_{\mathbb{T}}d\phi \, \Tr\big[\ell_{\phi}^{(\kappa)}(\rho)e^{\textup{i}\kappa X_{\mathbb{T} }} e^{\textup{i}nX_{\mathbb{T}} } e^{\textup{i}\kappa X_{\mathbb{T} }} \big]\\ &= \int_{\mathbb{T}}d\phi \,\sum_{m\in \Z}\big\langle   \psi_{m}\,\big|\,e^{\textup{i}k X_{\mathbb{T} }}\ell_{\phi}^{(\kappa)}(\rho)e^{\textup{i}k X_{\mathbb{T} }}  \psi_{m}\big\rangle = \int_{\mathbb{T}}d\phi \,\sum_{m\in \Z} \big\langle   \psi_{m-k}  \,\big|\,\ell_{\phi}^{(\kappa)}(\rho) \psi_{m+ k }\big\rangle  \\ & = \int_{\R}dp \,\big\langle   \psi_{p- k}  \,\big|\,\ell_{\phi}^{(\kappa)}(\rho) \psi_{p+ k }\big\rangle = \int_{\R}dp\, [\rho]^{(k)}(p) .
 \end{align*}
 The fourth equality uses that $e^{\textup{i}v X_{\mathbb{T} }} \psi_{p} =\psi_{p+ v} $ for $p,v\in \R$.

\vspace{.4cm}

\noindent Part (3): This is a consequence of Part (1) above and Parts (2) and (3) of Prop.~\ref{MiscFiber}.

\end{proof}

\subsection{Fiber decomposition for the Lindblad dynamics}

Since the Hamiltonian is spatially periodic and the noise~(\ref{TheNoise}) is invariant under all spatial shifts,  the Lindblad dynamics~(\ref{TheModel}) is invariant under  shifts by the period $2\pi $ of the Dirac comb.  In terms of the dynamical semigroup $\Phi_{\lambda,t}:\mathcal{B}_{1}(\mathcal{H})$, the spatial  symmetry translates to the covariance:
\begin{align*}
\hspace{3.5cm}\Phi_{\lambda,t}\big(e^{\textup{i}2\pi P}\rho e^{-\textup{i}2\pi P}\big)= e^{\textup{i}2\pi P}\Phi_{\lambda,t}(\rho) e^{-\textup{i}2\pi P}, \hspace{1.5cm} \rho\in\mathcal{B}_{1}( \mathcal{H} ).
\end{align*}
As a consequence, the dynamics decomposes into fibers as stated in Part (1) of Prop.~\ref{PropMoreFiber}.  To be mathematically rigorous, the differential equation in Part (1) of Prop.~\ref{PropMoreFiber} should be phrased as an integral equation and evaluated against an appropriate class of test functions.  The constant $\varpi >0 $ in Parts (3) and (4) of the proposition below is from the jump rate assumptions in List~\ref{Assumptions}.

\begin{proposition}\label{PropMoreFiber} \text{   }
\begin{enumerate}
\item For each $\kappa\in  [-\frac{1}{4}, \frac{1}{4})$, the functions $ \ell^{(\kappa)} \big( \Phi_{\lambda,t}(\rho) \big)\in L^{1}\left(\mathbb{T},\,\mathcal{B}_{1}(L^{2}(\mathcal{I}))\right)$ satisfy the differential equation
\begin{align*} 
\frac{d}{dt}\ell^{(\kappa)}_{\phi} \big( \Phi_{\lambda,t}(\rho) \big)=&-\frac{\textup{i}}{\lambda^{\varrho}}\Big(H_{\phi- \kappa } \ell^{(\kappa)}_{\phi} \big( \Phi_{\lambda,t}(\rho) \big)-  \ell^{(\kappa)}_{\phi} \big( \Phi_{\lambda,t}(\rho) \big) H_{\phi+\kappa}\Big)-\mathcal{R}\ell^{(\kappa)}_{\phi} \big( \Phi_{\lambda,t}(\rho) \big)\\ &+\int_{\mathbb{T}}d\phi'\, \sum_{n\in \Z}j(\phi-\phi'+n) e^{\textup{i}(\phi-\phi'+ n)  X  }\ell^{(\kappa)}_{\phi'} \big( \Phi_{\lambda,t}(\rho) \big) e^{-\textup{i}(\phi-\phi'+n)  X   } .
\end{align*}
In particular,  there is a contractive semigroup $\Gamma_{\lambda,t}^{(\kappa)}:L^{1}\left(\mathbb{T},\,\mathcal{B}_{1}(L^{2}(\mathcal{I}) )\right)   $   such that
for all $\rho \in \mathcal{B}_{1}(\mathcal{H})$,
$$ \Gamma_{\lambda,t}^{(\kappa)}\big(\ell^{(\kappa)} (\rho)\big)=  \ell^{(\kappa)} \big( \Phi_{\lambda,t}(\rho) \big).    $$

\item  For all $\kappa\in  [-\frac{1}{4}, \frac{1}{4})$ and $\lambda>0 $, the functions $\big\langle \Phi_{\lambda,t}(\rho)\big\rangle^{(\kappa)}\in L^{1}(\mathbb{T})$ satisfy the Kolmogorov equation 
$$ \frac{d}{dt}\big\langle \Phi_{\lambda,t}(\rho)\big\rangle^{(\kappa)}_{\phi}=-\mathcal{R}\big\langle\Phi_{\lambda,t}(\rho)\big\rangle^{(\kappa)}_{\phi}+\int_{\mathbb{T}}d\phi'\, J_{\mathbb{T}}(\phi,\phi')\big\langle \Phi_{\lambda,t}(\rho)\big\rangle^{(\kappa)}_{\phi'},  $$
where  $J_{\mathbb{T}}(\phi,\phi'):=\sum_{n\in \Z}j(\phi-\phi'+n)  $.

\item  For all  $\kappa\in  [-\frac{1}{4}, \frac{1}{4})$ and $\rho \in \mathcal{B}_{1}(\mathcal{H})  $, the following inequality holds: $ \big\|\big\langle \Psi(\rho) \big\rangle^{(\kappa)}\big\|_{\infty} \leq \varpi\|\rho\|_{\mathbf{1}}.$

\item For all $\lambda>0$, $\kappa\in  [-\frac{1}{4}, \frac{1}{4})$,  $t\in \R^{+}$,  and $\rho \in \mathcal{B}_{1}(\mathcal{H}) $, $$  \big\|\big\langle \Phi_{\lambda,t}(\rho) \big\rangle^{(\kappa)}\big\|_{\infty} \leq e^{-\mathcal{R}t} \big\|\langle \rho \rangle^{(\kappa)}\big\|_{\infty}+ \frac{\varpi}{\mathcal{R}}\|\rho\|_{\mathbf{1}}.$$

\end{enumerate}

\end{proposition}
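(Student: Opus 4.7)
The plan is to derive the fiber equation of Part (1) directly from the Lindblad equation~(\ref{RETheModel}) by pushing the decomposition $\rho \mapsto \ell^{(\kappa)}(\rho)$ through each term of the generator, and then to obtain Parts (2)--(4) as consequences.

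For Part (1), I would first handle the Hamiltonian piece. Since $H$ commutes with $e^{\textup{i}2\pi P}$ by Sect.~\ref{SecHamFiber}, it lies in $\mathcal{A}_{\mathbb{T}}$ with fiber element $\widetilde{H}_\phi = H_\phi$, so the formal block identity $([H,\rho])(\phi_1,\phi_2) = H_{\phi_1}\rho(\phi_1,\phi_2) - \rho(\phi_1,\phi_2) H_{\phi_2}$ translates, on setting $(\phi_1,\phi_2) = (\phi-\kappa,\phi+\kappa)$, into $\ell^{(\kappa)}_\phi([H,\rho]) = H_{\phi-\kappa}\ell^{(\kappa)}_\phi(\rho) - \ell^{(\kappa)}_\phi(\rho)H_{\phi+\kappa}$. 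For the noise, the Fourier-shift identity $(\widehat{e^{\textup{i}vX}f})_\phi(x) = e^{\textup{i}vx}\widehat{f}_{\phi-v}(x)$ combined with the $1$-periodicity of $\widehat{f}$ in $\phi$ gives
\begin{equation*}
\ell^{(\kappa)}_\phi\big(e^{\textup{i}vX}\rho\, e^{-\textup{i}vX}\big) = e^{\textup{i}vX_{\mathbb{T}}}\,\ell^{(\kappa)}_{\phi-v}(\rho)\, e^{-\textup{i}vX_{\mathbb{T}}},
\end{equation*}
and writing $v = (\phi - \phi') + n$ with $\phi' \in \mathbb{T}$ and $n \in \Z$ then assembles the integral form. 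Finally, $\Psi^*(I) = \mathcal{R}I$ contributes the $-\mathcal{R}\ell^{(\kappa)}_\phi(\rho)$ piece. To construct $\Gamma^{(\kappa)}_{\lambda,t}$, I would view this PDE as a Cauchy problem on $L^{1}(\mathbb{T},\mathcal{B}_{1}(L^2(\mathcal{I})))$; the intertwining $\Gamma^{(\kappa)}_{\lambda,t}\circ \ell^{(\kappa)} = \ell^{(\kappa)}\circ \Phi_{\lambda,t}$ follows from uniqueness of solutions, and contractivity is obtained from the trace-norm contractivity of $\Phi_{\lambda,t}$ together with Lemma~\ref{Israelistine}, which gives $\|\ell^{(\kappa)}(\rho)\|_1 \leq \|\rho\|_{\mathbf{1}}$.

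Part (2) then follows by taking the trace of the equation in Part (1): cyclicity gives $\Tr[e^{\textup{i}vX_{\mathbb{T}}}\ell^{(\kappa)}_{\phi'}(\rho)e^{-\textup{i}vX_{\mathbb{T}}}] = \langle \rho\rangle^{(\kappa)}_{\phi'}$, which produces the kernel $J_{\mathbb{T}}$; the $-\mathcal{R}$ term is immediate; and the Hamiltonian contribution $\Tr[H_{\phi-\kappa}\ell^{(\kappa)}_\phi - \ell^{(\kappa)}_\phi H_{\phi+\kappa}]$ collapses by the weak form of cyclicity, leaving the autonomous Kolmogorov equation. For Part (3), I would combine the formula $\langle \Psi(\rho)\rangle^{(\kappa)}_\phi = \int_{\mathbb{T}}d\phi'\, J_{\mathbb{T}}(\phi,\phi')\langle\rho\rangle^{(\kappa)}_{\phi'}$ obtained along the way with the pointwise bound $|\langle\rho\rangle^{(\kappa)}_{\phi'}| \leq \|\ell^{(\kappa)}_{\phi'}(\rho)\|_{\mathbf{1}}$, the uniform estimate $J_{\mathbb{T}}(\phi,\phi') \leq \varpi$ extracted from Assumption~\ref{Assumptions}(2), and the integral bound $\int_{\mathbb{T}}d\phi'\,\|\ell^{(\kappa)}_{\phi'}(\rho)\|_{\mathbf{1}} \leq \|\rho\|_{\mathbf{1}}$ of Lemma~\ref{Israelistine}, yielding $|\langle\Psi(\rho)\rangle^{(\kappa)}_\phi| \leq \varpi\|\rho\|_{\mathbf{1}}$. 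For Part (4), Duhamel's principle applied to the Kolmogorov equation of Part (2) gives
\begin{equation*}
\langle \Phi_{\lambda,t}(\rho)\rangle^{(\kappa)}_\phi = e^{-\mathcal{R}t}\langle \rho\rangle^{(\kappa)}_\phi + \int_0^t ds\, e^{-\mathcal{R}(t-s)}\langle \Psi(\Phi_{\lambda,s}(\rho))\rangle^{(\kappa)}_\phi,
\end{equation*}
and bounding the integrand by $\varpi\|\rho\|_{\mathbf{1}}$ via Part (3) together with the contractivity $\|\Phi_{\lambda,s}(\rho)\|_{\mathbf{1}} \leq \|\rho\|_{\mathbf{1}}$ produces the claimed estimate, since $\int_0^t e^{-\mathcal{R}(t-s)}ds \leq 1/\mathcal{R}$.

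The hardest step is the rigorous setup of Part (1): the operators $H_{\phi\pm\kappa}$ are unbounded with $\phi$-dependent domains, so the fiber PDE must be formulated weakly (tested against elements of $\mathcal{A}_{\mathbb{T}}$), and the intertwining $\Gamma^{(\kappa)}_{\lambda,t}\circ \ell^{(\kappa)} = \ell^{(\kappa)}\circ \Phi_{\lambda,t}$ is best established by verifying that $\ell^{(\kappa)}(\Phi_{\lambda,t}(\rho))$ satisfies the weak form of the equation, exploiting the existing well-posedness of $\Phi_{\lambda,t}$ on $\mathcal{B}_{1}(\mathcal{H})$ from~\cite{Dispersion} rather than constructing the semigroup from scratch. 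The vanishing of the Hamiltonian contribution in Part (2) is similarly delicate because $H_{\phi-\kappa}$ and $H_{\phi+\kappa}$ have distinct domains for $\kappa \neq 0$, forcing the cyclicity argument to be interpreted in the appropriate weak sense.
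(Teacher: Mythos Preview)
Your overall strategy aligns with the paper's, but the execution differs in one structural respect and contains one gap.

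The paper works throughout with the \emph{integral} (Duhamel) form of the dynamics rather than the differential Lindblad equation. Starting from
\[
\Phi_{\lambda,t}(\rho) = e^{-\mathcal{R}t}e^{-\frac{\textup{i}t}{\lambda^\varrho}H}\rho\,e^{\frac{\textup{i}t}{\lambda^\varrho}H} + \int_0^t dr\, e^{-\mathcal{R}(t-r)}e^{-\frac{\textup{i}(t-r)}{\lambda^\varrho}H}\Psi\big(\Phi_{\lambda,r}(\rho)\big)e^{\frac{\textup{i}(t-r)}{\lambda^\varrho}H},
\]
it applies $\ell^{(\kappa)}_\phi$ to both sides using the identities $\ell^{(\kappa)}_\phi(e^{-\textup{i}tH}\rho\,e^{\textup{i}tH}) = e^{-\textup{i}tH_{\phi-\kappa}}\ell^{(\kappa)}_\phi(\rho)\,e^{\textup{i}tH_{\phi+\kappa}}$ and $\ell^{(\kappa)}_\phi(\Psi(\rho)) = \int_{\mathbb{T}}d\phi'\,\widehat{\Psi}_{\phi-\phi'}(\ell^{(\kappa)}_{\phi'}(\rho))$, obtaining an integral equation from which $\Gamma^{(\kappa)}_{\lambda,t}$ is built by Dyson series. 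The advantage over your differential route is that only the \emph{bounded} unitaries $e^{-\textup{i}tH_{\phi\pm\kappa}}$ ever appear, so the domain issues you correctly flag as the hardest step simply do not arise; the differential equation in the statement is then recovered formally. The paper in fact remarks just before the proposition that the rigorous formulation is the integral one.

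The gap is your contractivity argument for $\Gamma^{(\kappa)}_{\lambda,t}$. The chain $\|\Gamma^{(\kappa)}_{\lambda,t}(\ell^{(\kappa)}(\rho))\|_1 = \|\ell^{(\kappa)}(\Phi_{\lambda,t}(\rho))\|_1 \leq \|\Phi_{\lambda,t}(\rho)\|_{\mathbf{1}} \leq \|\rho\|_{\mathbf{1}}$ bounds by $\|\rho\|_{\mathbf{1}}$, not by $\|\ell^{(\kappa)}(\rho)\|_1$, and $\ell^{(\kappa)}$ is not a surjective isometry onto $L^1\big(\mathbb{T},\mathcal{B}_1(L^2(\mathcal{I}))\big)$, so this does not yield $\|\Gamma^{(\kappa)}_{\lambda,t}(f)\|_1 \leq \|f\|_1$ for general $f$. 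The paper instead proves contractivity directly from the structure of the generator: the noise convolution satisfies
\[
\int_{\mathbb{T}}d\phi\,\Big\|\int_{\mathbb{T}}d\phi'\,\widehat{\Psi}_{\phi-\phi'}(f_{\phi'})\Big\|_{\mathbf{1}} \leq \mathcal{R}\,\|f\|_1,
\]
which exactly balances the $e^{-\mathcal{R}t}$ factor in the Dyson expansion.

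Your treatments of Parts (2)--(4) match the paper's in substance; your Duhamel argument for Part (4) is if anything slightly cleaner than the paper's full Poisson expansion.
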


\begin{proof}\text{   }\\
\noindent Part (1):   The maps  $\Phi_{\lambda,t}:\mathcal{B}_{1}(\mathcal{H}   )$ satisfy the  Duhamel equation
\begin{align}\label{Mary}
\Phi_{\lambda,t}(\rho)= e^{-\mathcal{R}t}e^{-\frac{\textup{i}t}{\lambda^{\varrho}}   H }\rho  e^{\frac{\textup{i}t}{\lambda^{\varrho}}    H }+\int_{0}^{t}dr\, e^{-\mathcal{R}(t-r)}e^{-\frac{\textup{i}(t-r ) }{\lambda^{\varrho}} H }\Psi\big(\Phi_{\lambda,r}(\rho)\big)e^{\frac{\textup{i}(t-r) }{\lambda^{\varrho}} H } .  
\end{align}
Thus, I have the integral equation
\begin{align*}
\ell^{(\kappa)}_{\phi} \big( \Phi_{\lambda,t}(\rho) \big)=  e^{-\mathcal{R}t}\ell^{(\kappa)}_{\phi}\big(e^{-\frac{\textup{i}t}{\lambda^{\varrho}}  H }\rho  e^{\frac{\textup{i}t}{\lambda^{\varrho}} H }\big)+\int_{0}^{t}dr \,e^{-\mathcal{R}(t-r)}\ell^{(\kappa)}_{\phi} \Big( e^{-\frac{\textup{i}(t-r  ) }{\lambda^{\varrho}}H }\Psi\big(\Phi_{\lambda,r}(\rho)\big)e^{\frac{\textup{i}(t-r ) }{\lambda^{\varrho}} H }    \Big).
\end{align*}
From the definition of $\ell^{(\kappa)}(\rho)$, it  can be shown  that
\begin{align*}
\ell^{(\kappa)}_{\phi}\big(e^{-\textup{i}t  H }\rho  e^{\textup{i}t   H }\big)=& e^{-\textup{i}t H_{\phi-\kappa} } \ell^{(\kappa)}_{\phi}(\rho  )e^{\textup{i}t H_{\phi+\kappa} }\quad \text{and} \quad  \ell^{(\kappa)}_{\phi}\big(\Psi(\rho) \big)=\int_{\mathbb{T}}d\phi'\,\widehat{\Psi}_{\phi-\phi'}\big(\ell^{(\kappa)}_{\phi '}(\rho)\big) ,
\end{align*}
where $\widehat{\Psi}_{\phi}:\mathcal{B}_{1}\big(L^2(\mathcal{I})   \big)$ is defined for $\phi \in \mathbb{T}$ as
$$\hspace{3.5cm}\widehat{\Psi}_{\phi}(h)=\sum_{n\in \Z}j\big(\phi +n\big) e^{\textup{i}(\phi+n ) X_{\mathbb{T}}  }h e^{-\textup{i}(\phi+n)  X_{\mathbb{T}}  },\hspace{1cm}h\in \mathcal{B}_{1}\big(L^2(\mathcal{I})   \big). $$
From the above equalities, it follows that $\ell^{(\kappa)} \big( \Phi_{\lambda,t}(\rho) \big)$ satisfies an integral equation of its own:
\begin{align}\label{Jubee}
\ell^{(\kappa)}_{\phi} \big( \Phi_{\lambda,t}(\rho) \big)= & e^{-\mathcal{R}t}e^{-\frac{\textup{i}t}{\lambda^{\varrho}} H_{\phi-\kappa } } \ell^{(\kappa)}_{\phi}(\rho  )e^{\frac{\textup{i}t}{\lambda^{\varrho}} H_{\phi+\kappa} }\nonumber \\ & +\int_{0}^{t}dr \, e^{-\mathcal{R}(t-r)} e^{-\frac{\textup{i}(t-r)}{\lambda^{\varrho}}H_{\phi-\kappa} }\Big(\int_{\mathbb{T}}d\phi' \,\widehat{\Psi}_{\phi-\phi'}\big(\ell^{(\kappa)}_{\phi'} (\Phi_{\lambda,r}(\rho) )\big)\Big)e^{\frac{\textup{i}(t-r)}{\lambda^{\varrho}}H_{\phi+\kappa} }.  
\end{align}
Since convolution with $\widehat{\Psi}_{\phi}$ is a bounded map on $L^{1}\big(\mathbb{T}, \mathcal{B}_{1}\big(L^2(\mathcal{I}) \big)  \big)$ and  the operators $e^{-\textup{i}t H_{\phi} }$, $\phi\in \mathbb{T}$ are unitary, a semigroup $\Gamma_{\lambda,t}^{(\kappa)}:L^{1}\big(\mathbb{T}, \mathcal{B}_{1}(L^2(\mathcal{I}) )  \big)$ can be constructed  through the Dyson series corresponding to the integral equation~(\ref{Jubee}) that satisfies $\Gamma_{\lambda,t}^{(\kappa)}\big(\ell^{(\kappa)}_{\phi}(\rho)\big)=\ell^{(\kappa)}_{\phi} \big( \Phi_{\lambda,t}(\rho) \big)$.  The semigroup $\Gamma_{\lambda,t}^{(\kappa)}$ is contractive since the noise term conforms to the bound
$$ \int_{\mathbb{T}}d\phi\,\Big\| \int_{\mathbb{T}}d\phi' \widehat{\Psi}_{\phi-\phi'}\big(\ell^{(\kappa)}_{\phi'} (\rho)\big)    \Big\|_{\mathbf{1}} \leq \int_{\mathbb{T}}d\phi \,\int_{\mathbb{T}}d\phi'\, J_{\mathbb{T}}(\phi, \phi')\big\|\ell^{(\kappa)}_{\phi'} (\rho)\big\|_{\mathbf{1}}=\mathcal{R}\big\| \ell^{(\kappa)} (\rho)  \big\|_{1}.  $$

\vspace{.4cm}

\noindent Part (2): 
By taking the trace of both sides of~(\ref{Jubee}), I obtain the integral equation
\begin{align}\label{Hum}
\big\langle \Phi_{\lambda,t}(\rho) \big\rangle^{(\kappa)}_{\phi} =  e^{-\mathcal{R}t}\big\langle \rho \big\rangle^{(\kappa)}_{\phi} +\int_{0}^{t}dr\, e^{-\mathcal{R} (t-r)}\int_{\mathbb{T}}d\phi' J_{\mathbb{T}}(\phi,\phi') \big\langle \Phi_{\lambda,r}(\rho) \big\rangle^{(\kappa)}_{\phi'} , 
\end{align}
where I have used that $\Tr\big[\widehat{\Psi}_{\phi-\phi '}(h)\big]=J_{\mathbb{T}}(\phi , \phi')\Tr[h]    $ for  $h\in \mathcal{B}_{1}\big(L^2(\mathcal{I})   \big)$.  Differentiating~(\ref{Hum}) yields the Kolmogorov equation.

\vspace{.4cm}

\noindent Part (3):  For all $\kappa\in [-\frac{1}{4},\frac{1}{4})$ and $\rho\in \mathcal{B}_{1}(\mathcal{H})$, I have the closed formula
$$ \big \langle \Psi(\rho) \big\rangle^{(\kappa)}_{\phi}=\int_{\mathbb{T}}d\phi'\, J_{\mathbb{T}}(\phi,\phi')  \big \langle \rho \big\rangle^{(\kappa)}_{\phi '} . $$
Thus, taking the infemum norm of both sides yields the inequality 
$$\big\|  \big\langle \Psi(\rho) \big\rangle^{(\kappa)} \big\|_{\infty}\leq  \Big( \sup_{\phi,\phi'\in \mathbb{T} }J_{\mathbb{T}}(\phi,\phi')    \Big) \big\|   \big \langle \rho \big\rangle^{(\kappa)} \big\|_{1}\leq \varpi \|\rho\|_{\mathbf{1}}, $$
where the second inequality uses assumption (2) of List~\ref{Assumptions}  followed by Part (3) of Prop.~\ref{PropFiberII}.  

\vspace{.4cm}

\noindent Part (4): The integral equation from Part (2) implies that $ \big \langle \Phi_{\lambda,t}(\rho) \big\rangle^{(\kappa)} $ can be written as
$$\big \langle \Phi_{\lambda,t}(\rho) \big\rangle^{(\kappa)}= e^{-\mathcal{R}t}\big \langle \rho \big\rangle^{(\kappa)} +\frac{J_{\mathbb{T}}}{\mathcal{R}}\Big(e^{-\mathcal{R}t}\sum_{n=1}^{\infty} \frac{(\mathcal{R}t)^{n}}{n!} \frac{J_{\mathbb{T}}^{n-1}}{\mathcal{R}^{n-1}}  \Big) \langle \rho \big\rangle^{(\kappa)} . $$
The inequality $\| \big \langle \Phi_{\lambda,t}(\rho) \big\rangle^{(\kappa)}\|_{\infty}\leq e^{-\mathcal{R}t}  \| \langle \rho \big\rangle^{(\kappa)}\|_{\infty}  +\frac{\varpi}{\mathcal{R}}\|\rho\|_{\mathbf{1}}$ follows by $\| \sum_{n=1}^{\infty} \frac{(\mathcal{R}t)^{n}}{n!} \frac{J_{\mathbb{T}}^{n-1}}{\mathcal{R}^{n-1}}\langle \rho \big\rangle^{(\kappa)} \|_{1}\leq  \|\langle \rho \big\rangle^{(\kappa)} \|_{1}$ and the reasoning in Part (3).



\end{proof}

\section{Energy submartingales and unravelings of the dynamical maps  }\label{SecPseudoPoisson}

\subsection{Pseudo-Poisson and L\'evy  unravelings}\label{SecPandL}
As mentioned in Sect.~\ref{SectIntro}, the map $\Psi:\mathcal{B}_{1}(\mathcal{H})$  generating the noise for the quantum dynamical semigroup $\Phi_{\lambda,t}:\mathcal{B}_{1}(\mathcal{H})$ satisfies $\Psi^{*}(I)=\mathcal{R}I$.  This should be interpreted as meaning that the escape rates for the quantum Markovian process are invariant of the state.  This property, referred to as \textit{pseudo-Poisson} for classical processes, implies that the dynamical maps $\Phi_{\lambda,t} $ can be unraveled in terms of an underlying Poisson process with rate $\mathcal{R}$.  For each $k$, the semigroups $\Phi_{\lambda,t}^{(k)}:L^{1}(\R)$ also have the pseudo-Poisson property.  The proof of Part (1) from Lem.~\ref{Trivial} is contained in~\cite[Appx.A]{Dispersion}, and the proof of Part (2) is similar.

 Define the transition operator  $T_{k}:L^{1}(\R)$ to have kernel
$T_{k}(p,p')= \mathcal{R}^{-1}J_{k}(p,p') $ for $J_{k}$ defined in~(\ref{OffJays})  and recall that $U_{\lambda, t}^{(k)}:L^{1}(\R)$ acts as multiplication by the function $U_{\lambda, t}^{(k)}(p)=e^{-\frac{\textup{i}t}{\lambda^{\varrho}}\left(E(p - k )-E(p +k)\right)    }$.

\begin{lemma}\label{Trivial}
Let $\mathbb{E}$ denote the expectation with respect to a rate-$\mathcal{R}$ Poisson process $\mathcal{N}\equiv \mathcal{N}_{t}(\xi)$ with  realizations $\xi=(t_{1},t_{2},\dots)\in (\R^{+})^{\infty}$, $t_{j}\leq t_{j+1}$.    

\begin{enumerate}
\item  The map $\Phi_{\lambda,t}:\mathcal{B}_{1}(\mathcal{H})$ can be written as  $\Phi_{\lambda,t}=\mathbb{E}\big[ \Phi_{\lambda, \xi,t}  \big]$, where
$$\Phi_{\lambda,\xi, t}(\rho):=\mathcal{R}^{-\mathcal{N}}    e^{-\frac{\textup{i}(t-t_{\mathcal{N}})}{\lambda^{\varrho}}H}  \Psi(   \cdots  e^{-\frac{\textup{i}(t_{2}-t_{1})}{\lambda^{\varrho}}H}\Psi(e^{-\frac{\textup{i}t_{1} }{\lambda^{\varrho}}H}\rho e^{\frac{\textup{i}t_{1}}{\lambda^{\varrho}}H})e^{\frac{\textup{i}(t_{2}-t_{1})}{\lambda^{\varrho}}H}\cdots)e^{\frac{\textup{i}(t-t_{\mathcal{N} })}{\lambda^{\varrho}}H}.     $$ 

\item Similarly, the map $\Phi_{\lambda,t}^{(k)}:L^{1}(\R)$ can be written as $\Phi_{\lambda,\xi,t}^{(k)}=\mathbb{E}\big[ \Phi_{\lambda, \xi,t}^{(k)}  \big]$, where 
$$\Phi_{\lambda,\xi, t}^{(k)}(\rho):= U_{\lambda, t-t_{\mathcal{N}}}^{(k)} T_{k} \cdots  U_{\lambda, t_{2}-t_{1}}^{(k)}T_{k}U_{\lambda, t_{1}}^{(k)}.     $$

\end{enumerate}

\end{lemma}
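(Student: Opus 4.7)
The plan is to derive both unravellings from the Duhamel equation by iteration and to recognize the resulting Dyson series term-by-term as an expectation over a rate-$\mathcal{R}$ Poisson process. Since $\Psi^{*}(I)=\mathcal{R}I$, the Lindblad equation~(\ref{RETheModel}) splits as $\frac{d}{dt}\rho_{\lambda,t}=\mathcal{G}(\rho_{\lambda,t})+\Psi(\rho_{\lambda,t})$ with $\mathcal{G}(\rho):=-\frac{\textup{i}}{\lambda^{\varrho}}[H,\rho]-\mathcal{R}\rho$ generating the damped unitary evolution $\rho\mapsto e^{-\mathcal{R}t}e^{-\frac{\textup{i}t}{\lambda^{\varrho}}H}\rho\, e^{\frac{\textup{i}t}{\lambda^{\varrho}}H}$. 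Treating $\Psi$ as a perturbation yields exactly the Duhamel equation~(\ref{Mary}) used in the proof of Prop.~\ref{PropMoreFiber}.

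Iterating~(\ref{Mary}) $n$ times produces the Dyson series
\begin{align*}
\Phi_{\lambda,t}(\rho)=\sum_{n=0}^{\infty}e^{-\mathcal{R}t}\int_{0<t_{1}<\cdots<t_{n}<t}dt_{1}\cdots dt_{n}\, \mathcal{E}_{n}(t_{1},\dots,t_{n})(\rho),
\end{align*}
where $\mathcal{E}_{n}(t_{1},\dots,t_{n})(\rho)$ is precisely the nested composition of free unitary evolutions and applications of $\Psi$ appearing in the definition of $\Phi_{\lambda,\xi,t}(\rho)$ after stripping the prefactor $\mathcal{R}^{-\mathcal{N}}$. The joint density (with respect to Lebesgue measure on ordered tuples) for a rate-$\mathcal{R}$ Poisson process to realize exactly $n$ jumps in $[0,t]$ at times $t_{1}<\cdots<t_{n}$ equals $\mathcal{R}^{n}e^{-\mathcal{R}t}$; thus the prefactor $\mathcal{R}^{-\mathcal{N}}$ in the definition of $\Phi_{\lambda,\xi,t}$ cancels $\mathcal{R}^{n}$ term-by-term and identifies the Dyson expansion with $\mathbb{E}[\Phi_{\lambda,\xi,t}(\rho)]$. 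Convergence and equality with the semigroup are standard: $\Psi$ has trace-norm $\|\Psi\|\leq\mathcal{R}$ (because $\Psi$ is completely positive with $\Psi^{*}(I)=\mathcal{R}I$) and the unitary conjugations are trace-norm isometries, so each Dyson term is bounded by $e^{-\mathcal{R}t}(\mathcal{R}t)^{n}/n!\cdot\|\rho\|_{\mathbf{1}}$, and uniqueness of the solution to~(\ref{RETheModel}) is already recorded in the introduction via the reference to~\cite[Appx.A]{Dispersion}.

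For Part (2) the identical scheme applies to the $L^{1}(\R)$-generator $\mathcal{L}_{\lambda,k}$, which I split as $\mathcal{M}_{\lambda,k}+J_{k}$ with $\mathcal{M}_{\lambda,k}f(p):=-\bigl(\frac{\textup{i}}{\lambda^{\varrho}}(E(p-k)-E(p+k))+\mathcal{R}\bigr)f(p)$ and $J_{k}$ the integral operator with kernel $J_{k}(p,p')$ from~(\ref{OffJays}). The free semigroup $e^{t\mathcal{M}_{\lambda,k}}$ acts as multiplication by $e^{-\mathcal{R}t}U_{\lambda,t}^{(k)}(p)$, so the corresponding Duhamel expansion reads
\begin{align*}
\Phi_{\lambda,t}^{(k)}f=\sum_{n=0}^{\infty}e^{-\mathcal{R}t}\int_{0<t_{1}<\cdots<t_{n}<t}dt_{1}\cdots dt_{n}\,U_{\lambda,t-t_{n}}^{(k)}J_{k}U_{\lambda,t_{n}-t_{n-1}}^{(k)}\cdots J_{k}U_{\lambda,t_{1}}^{(k)}f.
\end{align*}
Substituting $J_{k}=\mathcal{R}T_{k}$ absorbs the factor $\mathcal{R}^{n}$ into the Poisson density, so the sum equals $\mathbb{E}[\Phi_{\lambda,\xi,t}^{(k)}(f)]$ after rewriting the prefactor as $\mathcal{R}^{-\mathcal{N}}\cdot(\mathcal{R}^{n}e^{-\mathcal{R}t})$. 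Contractivity of the resulting Dyson series on $L^{1}(\R)$ follows from the elementary estimate $\|T_{k}\|_{L^{1}\to L^{1}}\leq 1$, which I expect to be the only genuinely nontrivial bookkeeping step: it is a direct consequence of Part (1) of Prop.~\ref{MiscFiber} applied to the off-diagonal kernel $\kappa_{v}(\cdot-k,n)\overline{\kappa}_{v}(\cdot+k,n)$ bounded via Cauchy--Schwarz by the diagonal weights $|\kappa_{v}|^{2}$, whose sum over $n$ is $1$, leaving the total mass $\leq\int dv\,j(v)/\mathcal{R}=1$. With that bound in hand, uniqueness of the semigroup generated by $\mathcal{L}_{\lambda,k}$ on its domain $\mathcal{T}$ forces the identification $\Phi_{\lambda,t}^{(k)}=\mathbb{E}[\Phi_{\lambda,\xi,t}^{(k)}]$.
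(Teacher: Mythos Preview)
Your proposal is correct and follows the standard Dyson--series identification that is precisely what the paper intends: the paper does not give a self-contained argument but cites \cite[Appx.~A]{Dispersion} for Part~(1) and states Part~(2) is similar, and the Duhamel iteration you carry out is the canonical content of that reference. Your verification that $\|T_{k}\|_{L^{1}\to L^{1}}\leq 1$ via Cauchy--Schwarz on $\sum_{n}|\kappa_{v}(p'-k,n)||\kappa_{v}(p'+k,n)|\leq 1$ is exactly the point needed for convergence in Part~(2), though you do not actually need Prop.~\ref{MiscFiber} for it---the direct Cauchy--Schwarz on the $\ell^{2}$-normalized $\kappa$'s suffices.
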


  The semigroup $\Phi_{\lambda,t}$ has an even more restrictive property than being pseudo-Poisson, since the noise map $\Psi$ has a Kraus decomposition~(\ref{TheNoise}) comprised of an integral combination of unitary conjugations.  This allows the maps $\Phi_{\lambda,t}$ to be written as convex combinations of unitary conjugations using an underlying L\'evy process with jump rate density $j(v)$.  Given an element $\xi=(v_{1},t_{1};\, v_{2},t_{2};\dots )\in (\mathbb{R}\times \mathbb{R}^{+})^{\infty}$ with $t_{j}\leq t_{j+1}$, the unitary operator  $U_{\lambda,t}(\xi):\mathcal{H}$ is defined by the product
 \begin{align}\label{LaughableMan}
 U_{\lambda,t}(\xi):= e^{-\frac{\textup{i}(t-t_{n })}{\lambda^{\varrho}}   H }e^{\textup{i}v_{n }X}  \cdots  e^{-\frac{\textup{i}(t_{2}-t_{1}) }{\lambda^{\varrho}}H} e^{\textup{i}v_{1}X}  e^{-\frac{\textup{i} t_{1}}{\lambda^{\varrho}} H}.
 \end{align} 
Lemma~\ref{LemLevyUnravel} is from~\cite[Appx.A]{Dispersion}.

\begin{lemma}\label{LemLevyUnravel}
Let  $\xi=(v_{1},t_{1};\, v_{2},t_{2};\dots )\in (\mathbb{R}\times \mathbb{R}^{+})^{\infty}$ be the realization for a L\'evy process with jump rate density $j(v)$.  The dynamical maps $\Phi_{\lambda,t}$ can be written as
\begin{align}\label{Egret}
 \Phi_{\lambda,t}(\rho)=  \mathbb{E}\Big[  U_{\lambda,t}(\xi)\,   \rho \, U_{\lambda,t}^{*}(\xi) \Big], 
 \end{align}
where the expectation is with respect to the law of the L\'evy process.  
 \end{lemma}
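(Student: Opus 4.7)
\bigskip

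\noindent\textbf{Proof plan for Lemma~\ref{LemLevyUnravel}.}

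The plan is to expand $\Phi_{\lambda,t}(\rho)$ as a Dyson series arising from the Duhamel identity~(\ref{Mary}) and then recognize the resulting sum of integrals as the expectation on the right-hand side of~(\ref{Egret}). First I would iterate~(\ref{Mary}) indefinitely, obtaining
\begin{align*}
\Phi_{\lambda,t}(\rho) = \sum_{n=0}^{\infty} e^{-\mathcal{R}t}\int_{0<s_1<\cdots<s_n<t}\!\!\!ds_1\cdots ds_n\, V_n\,\Psi\bigl(V_{n-1}\,\Psi(\cdots \Psi(V_0\,\rho\, V_0^*)\cdots)V_{n-1}^*\bigr)V_n^* ,
\end{align*}
where $V_k:=\exp(-\tfrac{\textup{i}(s_{k+1}-s_k)}{\lambda^{\varrho}}H)$ with $s_0:=0$, $s_{n+1}:=t$, and where all the free-propagation weights $e^{-\mathcal{R}(s_{k+1}-s_k)}$ telescope into the single factor $e^{-\mathcal{R}t}$.

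Next I would substitute the continuous Kraus decomposition $\Psi(\sigma)=\int_{\R}dv\, j(v)\,e^{\textup{i}vX}\sigma e^{-\textup{i}vX}$ at each occurrence of $\Psi$. Collecting the intertwining of $V_k$'s with the Weyl kicks $e^{\textup{i}v_k X}$ exactly reconstructs the unitary $U_{\lambda,t}(\xi)$ defined in~(\ref{LaughableMan}) for $\xi=(v_1,s_1;\ldots;v_n,s_n)$. The series therefore becomes
\begin{align*}
\Phi_{\lambda,t}(\rho)=\sum_{n=0}^{\infty} \int_{\Delta_n(t)}\!\!ds_1\cdots ds_n \int_{\R^n}\!\!dv_1\cdots dv_n\,\Bigl(e^{-\mathcal{R}t}\prod_{i=1}^{n} j(v_i)\Bigr)U_{\lambda,t}(\xi)\,\rho\, U_{\lambda,t}^{*}(\xi),
\end{align*}
with $\Delta_n(t):=\{0<s_1<\cdots<s_n<t\}$. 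I would then identify the weight $e^{-\mathcal{R}t}\prod_i j(v_i)$ on $\Delta_n(t)\times\R^n$ as precisely the joint density of $(s_1,\ldots,s_n;v_1,\ldots,v_n)$ under a L\'evy process with jump rate density $j(v)$ restricted to the event of having exactly $n$ jumps in $[0,t]$: the Poisson-in-time weight $e^{-\mathcal{R}t}\mathcal{R}^n$ combines with the product of i.i.d.\ jump-size densities $j(v_i)/\mathcal{R}$ to yield exactly $e^{-\mathcal{R}t}\prod_i j(v_i)$. Summing in $n$ then rewrites the series as the expectation in~(\ref{Egret}).

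The convergence of the series in trace norm is not a real obstacle: since $\|\Psi(\sigma)\|_{\mathbf{1}}\leq \mathcal{R}\|\sigma\|_{\mathbf{1}}$ and unitary conjugations are isometries of $\mathcal{B}_1(\mathcal{H})$, the $n$-th term is dominated by $e^{-\mathcal{R}t}(\mathcal{R}t)^n\|\rho\|_{\mathbf{1}}/n!$, whose sum is $\|\rho\|_{\mathbf{1}}$; this also justifies interchanging the expectation with the sum via dominated convergence, using $\|U_{\lambda,t}(\xi)\rho U_{\lambda,t}^{*}(\xi)\|_{\mathbf{1}}=\|\rho\|_{\mathbf{1}}$ as the majorant. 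The only delicate point is purely combinatorial bookkeeping: verifying that the Hamiltonian factors in the iterated Duhamel expansion glue together into the single unitary $U_{\lambda,t}(\xi)$ of~(\ref{LaughableMan}) and that the free-propagation weights collapse to $e^{-\mathcal{R}t}$. No genuine analytic difficulty enters because the unboundedness of $H$ appears only through unitary groups, and the Kraus decomposition of $\Psi$ is absolutely integrable against $j(v)\,dv$.
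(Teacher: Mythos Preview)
Your proposal is correct and is the natural argument: iterate the Duhamel identity~(\ref{Mary}) into a Dyson series, plug in the Kraus form~(\ref{TheNoise}) of $\Psi$, and identify the resulting weight $e^{-\mathcal{R}t}\prod_i j(v_i)$ on $\Delta_n(t)\times\R^n$ with the restriction of the compound Poisson (L\'evy) law to $\{\mathcal{N}_t=n\}$. The norm control you sketch is exactly right, and the bookkeeping that the $V_k$'s and kicks assemble into $U_{\lambda,t}(\xi)$ is immediate from~(\ref{LaughableMan}).

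As for comparison: the paper does not actually prove Lemma~\ref{LemLevyUnravel} here; it simply cites~\cite[Appx.~A]{Dispersion}. Your Dyson-series route is the standard one and is almost certainly what that appendix contains (it is also the same mechanism behind the pseudo-Poisson unraveling in Lemma~\ref{Trivial}, with the additional step of disintegrating each $\Psi$ over the jump size $v$).
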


\subsection{Energy submartingales}

Parts (1) and (2) of the proposition below are energy submartingale properties from~\cite[Prop.4.1]{Dispersion}.  Part (3) follows by a similar argument as Part (2).  Parts (1) and (2) carry over to analogous properties for the classical stochastic process $K_{t}$ discussed in Sect.~\ref{SecPreClassical};  see Prop.~\ref{SubMart}.

\begin{proposition}\label{LemSubMartBasic}
Let the unitary process $U_{\lambda,t}(\xi)\in \mathcal{B}\big(L^{2}(\mathcal{H})\big)$ be defined as in~(\ref{LaughableMan}) for times $t\in \R^{+}$ and a realization $\xi\in (\mathbb{R}\times \mathbb{R}^{+})^{\infty}$ of a L\'evy process with rate density $j(v)$.

\begin{enumerate}
\item For every $f\in \textup{D}(H)$, the stochastic process $ \big\langle U_{\lambda,t}(\xi)  f\big|\,H^{\frac{1}{2}} U_{\lambda,t}(\xi)  f      \big\rangle $
is an integrable submartingale with respect to the filtration of the L\'evy process.   

\item The evaluation of the Hamiltonian by the Heisenberg evolution maps $\Phi_{\lambda,t}^{*}$ has the explicit form:
$$\Phi_{\lambda,t}^{*}(H)=H+\sigma t   I .  $$

\item A similar formula holds for each map $\Phi_{\lambda,\xi, t_{n} }^{*}$ when acting on the Hamiltonian:
$$\Phi_{\lambda,\xi, t_{n}}^{*}(H)=H+\frac{\sigma n   }{\mathcal{R}  }I.$$

\end{enumerate}

\end{proposition}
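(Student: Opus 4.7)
My plan is to dispatch the three parts in reverse order of difficulty, leveraging the Weyl intertwining identity $e^{-\textup{i}vX}He^{\textup{i}vX} = (P+v)^{2}+V(X) = H + 2vP + v^{2}I$ throughout. For part (2), I would apply the Heisenberg generator
$$\mathcal{L}^{*}(A) = \tfrac{\textup{i}}{\lambda^{\varrho}}[H,A] + \Psi^{*}(A) - \tfrac{1}{2}\{\Psi^{*}(I), A\}$$
to $A=H$. The commutator vanishes, and using the identity above together with the symmetry $j(v)=j(-v)$ to kill the odd term, one computes $\Psi^{*}(H) = \mathcal{R}H + \sigma I$. Since $\Psi^{*}(I) = \mathcal{R}I$, the anticommutator gives $\mathcal{R}H$, yielding $\mathcal{L}^{*}(H) = \sigma I$. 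Integrating the dual semigroup flow $\frac{d}{dt}\Phi_{\lambda,t}^{*}(H) = \Phi_{\lambda,t}^{*}(\mathcal{L}^{*}(H)) = \sigma I$ from zero to $t$ yields $\Phi_{\lambda,t}^{*}(H) = H + \sigma t I$.

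Part (3) follows by iterating the same identity through the pseudo-Poisson unraveling of Lem.~\ref{Trivial}(1). The Heisenberg dual of $\Phi_{\lambda,\xi,t_{n}}$ is an alternating composition of unitary conjugations by $e^{\textup{i}aH/\lambda^{\varrho}}$ (which commute with any function of $H$ and hence fix $H$) and $n$ applications of the normalized map $\mathcal{R}^{-1}\Psi^{*}$. The computation in part (2) shows $\mathcal{R}^{-1}\Psi^{*}(H) = H + (\sigma/\mathcal{R})I$, and since $\mathcal{R}^{-1}\Psi^{*}(I) = I$, each subsequent application shifts the identity coefficient by $\sigma/\mathcal{R}$. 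A trivial induction over the $n$ noise applications then gives $\Phi_{\lambda,\xi,t_{n}}^{*}(H) = H + (n\sigma/\mathcal{R})I$, with no dependence on the specific jump times $t_{1},\ldots,t_{n}$ — an expected feature since conjugation by the free Hamiltonian evolution is $H$-invariant.

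For part (1), I would observe first that between jumps $t_{k}<s<t_{k+1}$, the factor $e^{-\textup{i}(s-t_{k})H/\lambda^{\varrho}}$ at the left of $U_{\lambda,s}(\xi)$ commutes with $H^{1/2}$, so the process $M_{s}=\langle U_{\lambda,s}(\xi)f|H^{1/2}U_{\lambda,s}(\xi)f\rangle$ is piecewise constant. At a jump of value $v_{k}$, one has
$$M_{t_{k}}-M_{t_{k}^{-}} = \bigl\langle U_{\lambda,t_{k}^{-}}(\xi)f\bigm|\bigl((P+v_{k})^{2}+V(X)\bigr)^{1/2}-\bigl(P^{2}+V(X)\bigr)^{1/2}\bigm|U_{\lambda,t_{k}^{-}}(\xi)f\bigr\rangle.$$
Taking the expectation over $v_{k}$ with density $j(v_{k})/\mathcal{R}$ and pairing $v$ with $-v$, the non-negativity of the conditional jump increment reduces to the operator-valued inequality $\tfrac{1}{2}\bigl(\sqrt{A_{v}}+\sqrt{A_{-v}}\bigr)\geq \sqrt{H}$, where $A_{v}=(P+v)^{2}+V(X)$. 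This I would establish via the spectral calculus representation $\sqrt{H}=\pi^{-1}\int_{0}^{\infty}t^{-1/2}H(H+t)^{-1}dt$, reducing the claim to a pointwise inequality for the scalar symbols $(p+v)^{2}+\lambda$ where $\lambda$ ranges over the spectrum of $V(X)$; the scalar function $x\mapsto \sqrt{(p+x)^{2}+\lambda}$ is convex by direct second-derivative computation, so $\tfrac{1}{2}(f(v)+f(-v))\geq f(0)$, and reassembly gives the operator inequality. Integrability of $M_{t}$ follows from the bound $M_{t}\leq (N_{t})^{1/2}\|f\|$ via Cauchy–Schwarz in the form $\langle g|H^{1/2}|g\rangle^{2}\leq\langle g|H|g\rangle\|g\|^{2}$, combined with the linear growth $\mathbb{E}[N_{t}]=\langle f|H|f\rangle+\sigma t\|f\|^{2}$ of part (2).

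The main obstacle is the operator inequality in part (1): the na\"ive route through operator concavity of $\sqrt{\cdot}$ produces the reverse direction, so one must exploit the specific affine-quadratic dependence of $A_{v}$ on $v$ rather than treat $A_{v}$ and $A_{-v}$ as generic positive operators with prescribed average. The functional-calculus reduction to the scalar convexity $\partial_{x}^{2}\sqrt{(p+x)^{2}+\lambda}=\lambda/[(p+x)^{2}+\lambda]^{3/2}\geq 0$ is precisely the mechanism that bypasses this issue, since it turns the operator statement into a parameterized scalar one where the joint spectral measure of $P$ and $V(X)$ plays no role.
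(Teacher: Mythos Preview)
Your arguments for Parts (2) and (3) are correct and are exactly the intended computations; the paper itself does not spell them out but defers Part (2) to \cite[Prop.~4.1]{Dispersion} and remarks that Part (3) ``follows by a similar argument,'' which is precisely the induction you describe.

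For Part (1), your reduction is also the right one: the process is piecewise constant between Poisson times, and the submartingale property comes down to the operator inequality
\[
\tfrac{1}{2}\bigl(e^{-\textup{i}vX}H^{1/2}e^{\textup{i}vX}+e^{\textup{i}vX}H^{1/2}e^{-\textup{i}vX}\bigr)\ \geq\ H^{1/2},
\]
which the paper also isolates (and invokes again in the proof of Lem.~\ref{LemVariance}, once more citing \cite{Dispersion} for the proof). However, your proposed proof of this inequality has a genuine gap. You say the integral representation $\sqrt{H}=\pi^{-1}\int_{0}^{\infty}t^{-1/2}H(H+t)^{-1}\,dt$ ``reduces the claim to a pointwise inequality for the scalar symbols $(p+v)^{2}+\lambda$'' with $\lambda$ ranging over the spectrum of $V(X)$---but $P$ and $V(X)$ do \emph{not} commute, so there is no joint spectral decomposition and no such scalar reduction is available. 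Your closing remark that ``the joint spectral measure of $P$ and $V(X)$ plays no role'' does not rescue this: the argument you outline explicitly parametrizes by simultaneous eigenvalues of $P$ and $V(X)$, which do not exist.

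One sees the difficulty already at the level of the resolvent integrand: even for $V=0$, where a scalar reduction \emph{is} legitimate, the pointwise-in-$t$ inequality $(H+t)^{-1}\geq\tfrac{1}{2}\bigl((A_{v}+t)^{-1}+(A_{-v}+t)^{-1}\bigr)$ fails (take $p=v=t=1$: the left side is $\tfrac{1}{2}$, the right side $\tfrac{3}{5}$). So the integral in $t$ must be handled globally, and the specific structure $A_{v}=(P+v)^{2}+V(X)$ exploited through a genuinely operator-theoretic argument rather than a scalar one. You will need a different mechanism here---e.g.\ working with the double commutator $-[X,[X,H^{1/2}]]$ and a representation of $H^{1/2}$ that interacts well with it---but the route you sketched does not go through.
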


\section{From the momentum to the extended-zone scheme representation}\label{SecQuasiMomentum}

In this section I focus on proving Lem.~\ref{StanToQuasi}.

\subsection{Control over the energy        }

The lemma below contains estimates for the square roots of the  dispersion relation $E(p)$ and the Hamiltonian $H$.

\begin{lemma}\label{LemLittleEnergy}
There is a $C>0$ such that for all $\lambda< 1 $ and $p\in \R$,
\begin{enumerate}

\item $\big| E^{\frac{1}{2}}(p)-|p|   \big| \leq C ,$

\item $\Tr\Big[ \check{\rho}_{\lambda}\big(H^{\frac{1}{2}}-\big(P^{2}   \big)^{\frac{1}{2}} \big) \Big]\leq C.$

\end{enumerate}

\end{lemma}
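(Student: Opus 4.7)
\textbf{Plan for proving Lem.~\ref{LemLittleEnergy}.}

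For Part (1), I use that $E^{1/2}(p) = |\mathbf{q}(p)|$, reducing the claim to the uniform bound $\bigl||\mathbf{q}(p)| - |p|\bigr| \leq C$. Rearranging the Kr\"onig-Penney relation~(\ref{KronigPenney}) gives
\begin{equation*}
\cos(2\pi p) - \cos(2\pi\mathbf{q}(p)) = \frac{\alpha}{2\mathbf{q}(p)}\sin(2\pi\mathbf{q}(p)),
\end{equation*}
whose right side is $O(1/|\mathbf{q}(p)|)$. Using the identity $\cos A - \cos B = -2\sin\tfrac{A+B}{2}\sin\tfrac{A-B}{2}$ together with the anchors $\mathbf{q}(n/2) = n/2$ and monotonicity of $\mathbf{q}$, one inverts to get $|\mathbf{q}(p) - p| = O(1/|p|)$ on the interior of each period interval $\bigl(n/2,(n+1)/2\bigr)$. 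The band gaps $g_n = O(\alpha/|n|)$ contribute only bounded jumps in $\mathbf{q}$ at the lattice points, and local boundedness of both $E^{1/2}$ and $|p|$ for bounded $p$ absorbs the remaining contribution.

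For Part (2), I use the spectral representations
\begin{equation*}
\Tr[\check\rho_\lambda H^{1/2}] = \int_\R dp\, E^{1/2}(p)\bigl|{}_{\scriptscriptstyle{Q}}\langle p|\frak{h}\rangle\bigr|^2, \qquad \Tr[\check\rho_\lambda (P^2)^{1/2}] = \int_\R dp\, |p||\hat\frak{h}(p)|^2,
\end{equation*}
and split the target as $I_1 + I_2$, where
\begin{equation*}
I_1 = \int_\R dp\,(E^{1/2}(p) - |p|)\bigl|{}_{\scriptscriptstyle{Q}}\langle p|\frak{h}\rangle\bigr|^2, \qquad I_2 = \int_\R dp\,|p|\bigl(\bigl|{}_{\scriptscriptstyle{Q}}\langle p|\frak{h}\rangle\bigr|^2 - |\hat\frak{h}(p)|^2\bigr).
\end{equation*}
Part (1) together with $\int\bigl|{}_{\scriptscriptstyle{Q}}\langle p|\frak{h}\rangle\bigr|^2 dp = \|\frak{h}\|_2^2 = 1$ gives $|I_1| \leq C$. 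For $I_2$, I pass to the Bloch fiber decomposition: with $\phi = p\,\textup{mod}\,1$, one has $\bigl|{}_{\scriptscriptstyle{Q}}\langle p|\frak{h}\rangle\bigr|^2 = |\langle\tilde\psi_p|\hat\frak{h}_\phi\rangle|^2$ and $|\hat\frak{h}(p)|^2 = |\langle\psi_p|\hat\frak{h}_\phi\rangle|^2$. Applying $\bigl||a|^2 - |b|^2\bigr| \leq |a - b|(|a| + |b|)$ with Lem.~\ref{Ticks} produces the pointwise bound
\begin{equation*}
\bigl|\bigl|{}_{\scriptscriptstyle{Q}}\langle p|\frak{h}\rangle\bigr|^2 - |\hat\frak{h}(p)|^2\bigr| \leq \frac{2C\|\hat\frak{h}_\phi\|_2}{1 + |\beta(p)|}\bigl(\bigl|{}_{\scriptscriptstyle{Q}}\langle p|\frak{h}\rangle\bigr| + |\hat\frak{h}(p)|\bigr).
\end{equation*}

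The main obstacle is that the remaining weight $|p|/(1 + |\beta(p)|)$ is not uniformly bounded: since $|\beta(p)| \approx |p||\Theta(p)|$, the factor degrades to order $|p|$ inside the reflection band $\{|\Theta(p)| \lesssim 1/|p|\}$ around each half-integer. I plan to close the estimate by splitting the $p$-integral into contributions inside and outside the reflection bands. Outside, $|p|/(1 + |\beta(p)|) = O(1)$ uniformly, and integrating against $\|\hat\frak{h}_\phi\|_2\bigl(\bigl|{}_Q\langle p|\frak{h}\rangle\bigr| + |\hat\frak{h}(p)|\bigr)$ and applying Cauchy-Schwarz with $\int\|\hat\frak{h}_\phi\|_2^2 d\phi = 1$ yields an $O(1)$ contribution. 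Inside the reflection band of width $O(1/|n|)$ around $n/2$, the factor can be as large as $O(|n|)$, so each band contributes $O(1)\cdot\sup_{p\approx n/2}|\hat\frak{h}(p)|^2$ after multiplication by the band width; summability over $n$ then follows from the pointwise decay $|\hat\frak{h}_0(q)| \leq C(1+|q|^2)^{-1}$, a consequence of $\|X^2\frak{h}_0\|_2$ and $\|P^2\frak{h}_0\|_2$ being finite combined with 1D Sobolev embedding. The bound is uniform in $\lambda < 1$ and hence in $\mathbf{p} = \mathbf{p}_0/\lambda$, completing the proof.
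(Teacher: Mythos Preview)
Your Part~(1) is fine and matches the paper's argument (both reduce to $|\mathbf q(p)-p|\le C$; the paper simply notes this is $\le\tfrac12$ directly from~\eqref{KronigPenney}).

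Your Part~(2) strategy, however, has a genuine gap in the treatment of $I_2$. The claim that ``outside the reflection band, $|p|/(1+|\beta(p)|)=O(1)$ uniformly'' is false. Since $\beta(p)=\tfrac12\mathbf n(p)\Theta(p)$, in the cell around $n/2$ one has
\[
\frac{|p|}{1+|\beta(p)|}\;\approx\;\frac{|n|/2}{1+\tfrac{|n|}{2}|\Theta(p)|}\;=\;\frac{1}{\,2/|n|+|\Theta(p)|\,},
\]
which is $O(1)$ only when $|\Theta(p)|$ is bounded below by a \emph{fixed} constant, not when $|\Theta(p)|\gtrsim 1/|n|$. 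Just outside your band of width $1/|n|$ the factor is still of order $|n|$, and integrating $1/|\Theta(p)|$ over the remainder of the cell produces $O(\log|n|)$, not $O(1)$. Your band contribution also does not reduce to $\sup_{p\approx n/2}|\hat{\frak h}(p)|^2$: the bound you wrote carries the factor $\|\hat{\frak h}_\phi\|_2$, which depends only on $\phi\in\mathbb T$ and does not decay in the band index, and the term $|{}_{\scriptscriptstyle Q}\langle p|\frak h\rangle|$ near a gap picks up the reflected momentum $p-\mathbf n(p)$ rather than being controlled by $|\hat{\frak h}(p)|$ alone. Finally, the pointwise decay $|\hat{\frak h}_0(q)|\le C(1+q^2)^{-1}$ does not follow from $\|X^2\frak h_0\|_2,\|P^2\frak h_0\|_2<\infty$; those assumptions give $q\hat{\frak h}_0\in H^1$, hence at best $|\hat{\frak h}_0(q)|=O(1/|q|)$, which is not summable over the half-integer lattice and so cannot close your sum over bands.

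The paper bypasses all of this by never comparing Bloch functions. It writes $H^{1/2}-(P^2)^{1/2}=\tfrac1\pi\int_0^\infty\epsilon^{-1/2}\big(\tfrac{H}{\epsilon+H}-\tfrac{P^2}{\epsilon+P^2}\big)d\epsilon$, passes to the fibers $H_\phi,(P^2)_\phi$, and uses the Krein resolvent identity for the point interaction: on each fiber, $(\epsilon+(P^2)_\phi)^{-1}-(\epsilon+H_\phi)^{-1}$ is the explicit rank-one operator $\tfrac{\alpha}{1+\alpha G_{\phi,\epsilon}(0)}|G_{\phi,\epsilon}\rangle\langle G_{\phi,\epsilon}|$. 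Evaluating this against $\hat{\frak h}_\phi$ gives a bound $c/\epsilon^2$ (uniformly in $\phi,\lambda$) using only $\sup_\phi\sum_n|\frak h_0(\phi+n)|^2<\infty$, which follows from $\|X\frak h_0\|_2<\infty$. The small-$\epsilon$ piece is handled by operator monotonicity of $x\mapsto x/(\epsilon+x)$.
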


\begin{proof}\text{  }\\
\noindent Part (1):  The inequality holds with $C=\frac{1}{2}$  since $\big| E^{\frac{1}{2}}(p)-|p|   \big|=|\mathbf{q}(p)-p|$ and the values $\mathbf{q}(p) $ and $p$ can not be separated by more than $\frac{1}{2}$ by the Kr\"onig-Penney relation~(\ref{KronigPenney}).

\vspace{.4cm}

\noindent Part (2): Since $\check{\rho}_{\lambda}:=|\frak{h}\rangle \langle \frak{h}|$ for $\frak{h}\in \mathcal{H}$ of defined above,  I have the first equality below:   
\begin{align}\label{Jango}
\Tr\Big[ \check{\rho}_{\lambda}\Big(H^{\frac{1}{2}}-( P^{2}   )^{\frac{1}{2}} \Big) \Big]=\Big\langle \frak{h}\Big| H^{\frac{1}{2}}-( P^{2}   )^{\frac{1}{2}}\Big|\frak{h}\Big \rangle=\int_{\mathbb{T}}d\phi \, \Big\langle \widehat{\frak{h}}_{\phi}\Big| H_{\phi}^{\frac{1}{2}}-( P^{2}    )_{\phi}^{\frac{1}{2}}\Big|\widehat{\frak{h}}_{\phi}\Big \rangle  .
\end{align}
The second equality  invokes the fiber decomposition discussed in Sect.~\ref{SecHamFiber}.  The operators   $H_{\phi}$, $(P^2)_{\phi}$ for $\phi\in \mathbb{T}$ denote the operation of $H$ and $P^2$ on the $\phi$-fiber copy  of $L^{2}(\mathcal{I})$ in the tensor product decomposition $\mathcal{H}=L^{2}(\mathbb{T})\otimes L^{2}(\mathcal{I})$.

By using the formula $u^{\frac{1}{2}}=\frac{1}{\pi}\int_{0}^{\infty}d\epsilon\,\epsilon^{-\frac{1}{2}} \frac{u}{\epsilon+u}$  for $u\in \R^{+}$ and functional calculus~\cite[Ch.VIII.Ex.50]{Simon}, I can write the difference between the square roots of the Hamiltonians $H_{\phi}$ and $(P^{2}  )_{\phi}$ as
\begin{align}\label{Babel} H^{\frac{1}{2}}_{\phi}-(P^{2} )_{\phi}^{\frac{1}{2}} =&\frac{1}{\pi}\int_{0}^{\infty}\,\frac{d\epsilon}{\epsilon^{\frac{1}{2}}}\Big(\frac{H_{\phi}}{\epsilon+H_{\phi}}-\frac{  (P^{2})_{\phi}   }{ \epsilon+ (P^{2})_{\phi}   }   \Big)\nonumber \\=&\frac{1}{\pi} \int_{0}^{1} \frac{d\epsilon}{\epsilon^{\frac{1}{2}}}\Big(\frac{H_{\phi}}{\epsilon+H_{\phi} }-\frac{  (P^{2} )_{\phi}      }{ \epsilon+ (P^{2} )_{\phi}       }   \Big)     +  \frac{1}{\pi} \int_{1 }^{\infty}d\epsilon\,\epsilon^{\frac{1}{2}}\Big(\frac{  1    }{ \epsilon+ (P^{2} )_{\phi}      }  -\frac{1}{\epsilon+H_{\phi}}  \Big) .   
\end{align}
However, the operators in the integrands have the bounds 
\begin{align*}
(\textup{i}).\hspace{.2cm}\Big\|\frac{H_{\phi}}{\epsilon+H_{\phi}}-\frac{  (P^{2} )_{\phi}    }{ \epsilon+ (P^{2}  )_{\phi}  } \Big\|\leq 1 \quad \text{and} \quad (\textup{ii}).\hspace{.2cm}\Big\langle \widehat{\frak{h}}_{\phi}\Big| \frac{  1    }{ \epsilon+ (P^{2} )_{\phi}      }  -\frac{1}{\epsilon+H_{\phi} } \Big|\widehat{\frak{h}}_{\phi}\Big\rangle  \leq  \frac{c}{\epsilon^2  },  
\end{align*}
where the second inequality is for some  $c>0$.  The inequality (i) uses that the function $\frac{ x }{\epsilon+x   }$ is operator monotonically increasing for each $\epsilon\in \R^{+}$ and that $\frac{H}{\epsilon+H}\leq 1$.  I will prove (ii) below.   Applying ~(\ref{Jango}) and~(\ref{Babel}) with the inequalities (i) and (ii) yields the bound 
\begin{align*}
 \Tr\Big[ \check{\rho}_{\lambda}\Big(H^{\frac{1}{2}}-\big(P^{2}  \big)^{\frac{1}{2}} \Big) \Big] \leq   \frac{1}{\pi}+\frac{2 c   }{\pi   } ,
\end{align*}
 which would complete the proof.

 \vspace{.4cm}
\noindent (ii).  The operator $\epsilon+( P^{2})_{\phi}$ has Green function $G_{\phi,\epsilon}:\mathcal{I}\rightarrow \C$ with the closed form
$$  G_{\phi,\epsilon}(x)=\frac{1}{2\pi}\sum_{n\in \Z}\frac{1}{\epsilon+(\phi+n)^2    } e^{\textup{i}x(\phi+n)}.    $$
Let $A_{\phi,\epsilon}\in \mathcal{B}_{1}\big(L^2(\mathcal{I})\big)$  be defined as the rank one operator $A_{\phi,\epsilon}= |G_{\phi,\epsilon}\rangle \langle G_{\phi,\epsilon}|$.  By the general theory of Schr\"odinger operators with point potentials~\cite{Solve},  the difference between the resolvents of $H_{\phi}$ and $(P^{2})_{\phi}$  has the closed form 
\begin{align}\label{Flash}
  \frac{  1    }{ \epsilon+ (P^{2})_{\phi}      }  -\frac{1}{\epsilon+H_{\phi}}=\frac{  \alpha  }{ 1+\alpha G_{\phi,\epsilon}(0)  }A_{\phi,\epsilon }. 
\end{align}

Thus, I must bound
\begin{align}\label{Blithe}
 \Tr\Big[ \check{\rho}_{\lambda}\Big(H^{\frac{1}{2}}-\big(P^{2}  \big)^{\frac{1}{2}} \Big) \Big]= &  \Big\langle \widehat{\frak{h}}_{\phi}\Big| \frac{  1    }{ \epsilon+ \big(P^{2}\big)_{\phi}      }  -\frac{1}{\epsilon+H_{\phi}}\Big| \widehat{\frak{h}}_{\phi}\Big\rangle \nonumber \\   = & \frac{  \alpha  }{ 1+\alpha G_{\phi,\epsilon}(0)  }\big|\big\langle  G_{\phi,\epsilon} \big| \widehat{\frak{h}}_{\phi} \big\rangle\big|^{2} \nonumber \\   = &\frac{1}{4\pi^{2} } \frac{  \alpha  }{ 1+\alpha G_{\phi,\epsilon}(0)  }\Big| \sum_{n\in \Z}\frac{1}{\epsilon+(\phi+n)^2    }\int_{\mathcal{I} }dx \,\widehat{\frak{h}}_{\phi}(x)\frac{ e^{-\textup{i}x(\phi+n)} }{\sqrt{2\pi}}\Big|^{2} .
\end{align}
The Fourier coefficients of $\widehat{\frak{h}}_{\phi}\in L^{2}(\mathcal{I})$ have the form
\begin{align}\label{Phoebe}
\int_{\mathcal{I} }dx \,\widehat{\frak{h}}_{\phi}(x)\frac{ e^{-\textup{i}x(\phi+n)} }{\sqrt{2\pi}} =\frak{h}(\phi+n)=\frak{h}_{0}(\phi+n-\mathbf{p})  ,     
\end{align}
where in the above $\frak{h}$ and  $\frak{h}_{0}$ are evaluated in the momentum representation, and I have used that $\frak{h}:=e^{\textup{i}\mathbf{p}X}\frak{h}_{0}$.   With~(\ref{Phoebe}) the last line of~(\ref{Blithe}) is equal to 
\begin{align*}
 \frac{1}{4\pi^{2} } \frac{  \alpha  }{ 1+\alpha G_{\phi,\epsilon}(0)  }\Big| \sum_{n\in \Z}\frac{\frak{h}_{0}(\phi+n-\mathbf{p}) }{\epsilon+(\phi+n)^2    }\Big|^{2}  \leq &  \frac{\alpha }{4\pi^{2}  }\Big( \sup_{\phi \in \mathbb{T} }\sum_{n\in \Z} \frac{1}{\big(\epsilon+(\phi+n)^2 \big)^{2}   }    \Big)\Big(\sup_{\phi \in \mathbb{T} }\sum_{n\in \Z}\big| \frak{h}_{0}(\phi+n)  \big|^2 \Big)\\   \leq &  \frac{C }{ \epsilon^{2}  }\Big(\sup_{\phi \in \mathbb{T} }\sum_{n\in \Z}\big| \frak{h}_{0}(\phi+n)  \big|^2 \Big),
\end{align*}
where the first inequality follows by Cauchy-Schwarz and $G_{\phi,\epsilon}(0) >0$.   The second inequality above bounds the sum over $ \big(\epsilon+(\phi+n)^2 \big)^{-2}   $ by a constant multiple $C>0$ of $\frac{1}{\epsilon^{2}}$.    The summation on the last line is uniformly finite as a consequence of the assumption~(\ref{Gauss}) on $\frak{h}_{0}$ that  $\| X^{2} \frak{h}_{0}\|_{2}<\infty$.  To see this, notice that   for any $\phi \in \mathbb{T}$
\begin{align*}
 \| \frak{h}_{0}\|_{2}=&\Big(\int_{\R}dp\,\big| \frak{h}_{0}(p)\big|^{2}\Big)^{\frac{1}{2}} = \Big(\sum_{n\in \Z}\int_{[-\frac{1}{2},\frac{1}{2}) }d\phi' \,\big| \frak{h}_{0}(\phi+n+\phi')  \big|^{2}\Big)^{\frac{1}{2}}
\\  \geq &   \Big(\sum_{n\in \Z}\big| \frak{h}_{0}(\phi+n )  \big|^{2}\Big)^{\frac{1}{2}}-  \Big(\sum_{n\in \Z}\int_{[-\frac{1}{2},\frac{1}{2}) }d\phi' \,\Big| \int_{0}^{\phi'}d\phi'' \,\frak{h}_{0}'(\phi+n+\phi'')  \Big|^{2}\Big)^{\frac{1}{2}} \\  \geq &   \Big(\sum_{n\in \Z}\big| \frak{h}_{0}(\phi+n )  \big|^{2}\Big)^{\frac{1}{2}}- \| X  \frak{h}_{0}\|_{2}  ,
\end{align*}
where the first inequality uses calculus to write $\frak{h}_{0}(\phi+n+\phi') = \frak{h}_{0}(\phi+n)+\int_{0}^{\phi'}d\phi''\,\frak{h}_{0}'(\phi+n+\phi'')$ and applies the triangle inequality.   It follows that the supremum of $\sum_{n\in \Z}\big| \frak{h}_{0}(\phi+n )  \big|^{2}$ over $\phi \in \mathbb{T}$ is bounded by $( \| \frak{h}_{0}\|_{2}+\| X  \frak{h}_{0}\|_{2} )^{2}$.

\end{proof}

  Lemma~\ref{LemMomDist} states that the extend-zone scheme momentum for the state $\rho_{\lambda,\frac{T}{\lambda^{\gamma} } } \in \mathcal{B}_{1}(\mathcal{H})$  is concentrated ``near" the values $\pm\mathbf{p}$.

\begin{lemma}\label{LemMomDist}
Let $\gamma\in (1,2)$ and $\iota\in  (0, \frac{2-\gamma}{2})$.  For fixed $T>0$ there is a $C>0$ such that for all $\lambda<1$,
 $$ \int_{||p|-\mathbf{p}|  \geq \lambda^{\iota}\mathbf{p}  }\Big| \big[\rho_{\lambda,\frac{T}{\lambda^{\gamma} } } \big]^{(0)}_{\scriptscriptstyle{Q}}(p) \Big| \leq C\lambda^{2-\gamma-2\iota} .$$

\end{lemma}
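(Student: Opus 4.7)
The plan is to apply a Markov-type second moment bound, exploiting that $[\rho_{\lambda,t}]^{(0)}_{\scriptscriptstyle{Q}}(p)$ is a non-negative probability density on $\R$. Non-negativity follows because $\rho_{\lambda,t}\geq 0$ makes $\ell^{(0)}_{\phi}(\rho_{\lambda,t})$ a positive operator, hence $\langle\widetilde{\psi}_p|\ell^{(0)}_{\phi}(\rho_{\lambda,t})\widetilde{\psi}_p\rangle\geq 0$; the unit mass follows from Part~(1) of Prop.~\ref{PropFiberII}. Expanding in the Bloch basis $\{\widetilde{\psi}_{N,\phi}\}$ (which diagonalizes $H_\phi^{1/2}$) together with functional calculus yields the moment identities
\begin{align*}
\int_{\R}E(p)\,[\rho_{\lambda,t}]^{(0)}_{\scriptscriptstyle{Q}}(p)\,dp=\Tr[\rho_{\lambda,t}H],\qquad \int_{\R}E^{1/2}(p)\,[\rho_{\lambda,t}]^{(0)}_{\scriptscriptstyle{Q}}(p)\,dp=\Tr[\rho_{\lambda,t}H^{1/2}].
\end{align*}

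First I apply Markov's inequality to reduce the statement to the estimate $\int_{\R}(|p|-\mathbf{p})^2\,[\rho_{\lambda,T/\lambda^\gamma}]^{(0)}_{\scriptscriptstyle{Q}}\,dp=O(\lambda^{-\gamma})$. Part~(1) of Lem.~\ref{LemLittleEnergy} gives $|E^{1/2}(p)-|p||\leq C$, so the triangle inequality yields $(|p|-\mathbf{p})^2\leq 2(E^{1/2}(p)-\mathbf{p})^2+2C^2$, and matters reduce to bounding
\begin{align*}
\int_{\R}(E^{1/2}(p)-\mathbf{p})^2\,[\rho_{\lambda,t}]^{(0)}_{\scriptscriptstyle{Q}}\,dp=\Tr[\rho_{\lambda,t}H]-2\mathbf{p}\Tr[\rho_{\lambda,t}H^{1/2}]+\mathbf{p}^2
\end{align*}
by $O(\sigma t+\mathbf{p})$.

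For the upper bound on $\Tr[\rho_{\lambda,t}H]$, Part~(2) of Prop.~\ref{LemSubMartBasic} gives the exact formula $\Tr[\rho_{\lambda,t}H]=\Tr[\check{\rho}_\lambda H]+\sigma t$; using $\check{\rho}_\lambda=|e^{\textup{i}\mathbf{p}X}\frak{h}_0\rangle\langle e^{\textup{i}\mathbf{p}X}\frak{h}_0|$, the Weyl intertwining $e^{-\textup{i}\mathbf{p}X}Pe^{\textup{i}\mathbf{p}X}=P+\mathbf{p}$, the moment assumptions~(\ref{Gauss}) on $\frak{h}_0$, and the elementary estimate $\alpha\sum_{n}|\frak{h}_0(2\pi n)|^2=O(1)$ (valid by the pointwise decay of $\frak{h}_0$ implied by $\|X^2\frak{h}_0\|_2<\infty$) yields $\Tr[\check{\rho}_\lambda H]\leq\mathbf{p}^2+C(\mathbf{p}+1)$. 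For the lower bound on $\Tr[\rho_{\lambda,t}H^{1/2}]$, Lem.~\ref{LemLevyUnravel} combined with the submartingale property in Part~(1) of Prop.~\ref{LemSubMartBasic} gives $\Tr[\rho_{\lambda,t}H^{1/2}]\geq\Tr[\check{\rho}_\lambda H^{1/2}]$. Since $H\geq P^2$ as quadratic forms (because $V\geq 0$), operator monotonicity of $\sqrt{\cdot}$ gives $H^{1/2}\geq|P|$, and hence $\Tr[\check{\rho}_\lambda H^{1/2}]\geq\Tr[\check{\rho}_\lambda|P|]$. Passing to the momentum representation and translating by $\mathbf{p}$, the latter equals $\int_{\R}|p+\mathbf{p}|\,|\frak{h}_0(p)|^2\,dp\geq\mathbf{p}-\int_{\R}|p|\,|\frak{h}_0(p)|^2\,dp\geq\mathbf{p}-C$, by the elementary inequality $|p+\mathbf{p}|\geq\mathbf{p}-|p|$ and the moment assumption on $\frak{h}_0$. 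Combining, the second moment is bounded by $\sigma t+O(\mathbf{p})$; at $t=T/\lambda^\gamma$ and $\mathbf{p}=\mathbf{p}_0/\lambda$ this is $O(\lambda^{-\gamma})$ since $\gamma>1$ implies $\lambda^{-\gamma}\gg\lambda^{-1}$. Dividing by $(\lambda^\iota\mathbf{p})^2=\mathbf{p}_0^2\lambda^{2\iota-2}$ yields the claimed $O(\lambda^{2-\gamma-2\iota})$ bound.

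The main obstacle is engineering the cancellation at order $\mathbf{p}^2$ in the expansion $\Tr[\rho_{\lambda,t}H]-2\mathbf{p}\Tr[\rho_{\lambda,t}H^{1/2}]+\mathbf{p}^2$: the two $\mathbf{p}^2$ contributions must be exactly canceled by the middle term so that only the $\sigma t$ and $O(\mathbf{p})$ corrections remain. This demands that the lower bound on $\Tr[\check{\rho}_\lambda H^{1/2}]$ be sharp at order $\mathbf{p}-O(1)$ rather than merely $\mathbf{p}(1-o(1))$, which is the reason for invoking operator monotonicity to replace $H^{1/2}$ by $|P|$ and estimating the latter directly.
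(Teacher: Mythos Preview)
Your argument is correct and follows the same skeleton as the paper's proof: a Chebyshev bound using the second moment of $E^{1/2}(\cdot)$ about $\mathbf{p}$ (the paper centers at $E^{1/2}(\mathbf{p})$, which differs only by $O(1)$), controlled via the exact energy growth $\Tr[\rho_{\lambda,t}H]=\Tr[\check\rho_\lambda H]+\sigma t$ and the submartingale lower bound on $\Tr[\rho_{\lambda,t}H^{1/2}]$.

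Where you differ is in handling the initial state. The paper uses the L\'evy unraveling to rewrite the second moment, expands $(U^*H^{1/2}U-E^{1/2}(\mathbf p))^2$ algebraically, drops the submartingale term, and then invokes Part~(2) of Lem.~\ref{LemLittleEnergy} (a resolvent computation for $H^{1/2}-(P^2)^{1/2}$) to control $\Tr[\check\rho_\lambda(H^{1/2}-E^{1/2}(\mathbf p))]$. You instead use operator monotonicity of the square root: since $H\ge P^2$ as forms (the $\delta$-comb contributes nonnegatively), $H^{1/2}\ge|P|$, and the lower bound $\Tr[\check\rho_\lambda|P|]\ge\mathbf p-C$ follows by a one-line momentum shift. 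This is a genuine simplification that removes the need for the resolvent identity. One small comment: your justification of $\alpha\sum_n|\frak h_0(2\pi n)|^2=O(1)$ needs both moment assumptions in~(\ref{Gauss}) --- $\|X^2\frak h_0\|_2<\infty$ alone does not give pointwise decay; combined with $\frak h_0\in H^1$ (from $\|P^2\frak h_0\|_2<\infty$) one gets $\int(1+x^2)\,\big|\tfrac{d}{dx}|\frak h_0|^2\big|\,dx<\infty$, from which summability follows. The paper, for its part, never explicitly verifies the analogous bound $\Tr[\check\rho_\lambda(H-E(\mathbf p))]=O(\mathbf p)$ either, declaring it ``similar'' to the $H^{1/2}$ case.
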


\begin{proof} The integral $ \big[\rho_{\lambda,\frac{t}{\lambda^{\gamma} } } \big]^{(0)}_{\scriptscriptstyle{Q}}(p) $ over the domain $\big||p|-\mathbf{p}\big|\geq \lambda^{\iota}\mathbf{p} $ has the bound
\begin{align}\label{Tyrone}
 \int_{||p|-\mathbf{p}|\geq \lambda^{\iota}\mathbf{p}}\big[\rho_{\lambda,\frac{T}{\lambda^{\gamma} } } \big]^{(0)}_{\scriptscriptstyle{Q}}(p)\leq &\frac{\int_{\R} dp\,\big[\rho_{\lambda,\frac{T}{\lambda^{\gamma} } } \big]^{(0)}_{\scriptscriptstyle{Q}}(p)\big| E^{\frac{1}{2}}(p)-E^{\frac{1}{2}}(\mathbf{p})   \big|^{2}}{\inf_{||p|-\mathbf{p}|\geq \lambda^{\iota}   \mathbf{p} }\big| E^{\frac{1}{2}}(p)-E^{\frac{1}{2}}(\mathbf{p})   \big|^{2}   }\nonumber  \\   < & \frac{4}{\mathbf{p}^{2}\lambda^{2\iota}}  \int_{\R} dp\, \big[\rho_{\lambda,\frac{T}{\lambda^{\gamma} } } \big]^{(0)}_{\scriptscriptstyle{Q}}(p)\big| E^{\frac{1}{2}}(p)-E^{\frac{1}{2}}(\mathbf{p})   \big|^{2}.
\end{align}
The first inequality is Chebyshev's.  The second inequality holds for small $\lambda$ and  uses that $\big|E^{\frac{1}{2}}(p)- |p|\big|$ is bounded by Part (1) of Lem.~\ref{LemLittleEnergy}.  The analysis below shows that the integral on the second line of~(\ref{Tyrone}) is bounded by a constant multiple of $\lambda^{-\gamma}$ for $\lambda\ll 1$.  This would imply that $ \int_{||p|-\mathbf{p}|\leq \lambda^{\iota}\mathbf{p}}\big[\rho_{\lambda,\frac{T}{\lambda^{\gamma} } } \big]^{(0)}_{\scriptscriptstyle{Q}}(p) $ is bounded by a constant multiple of $\lambda^{2-\gamma-2\iota} $, which is the statement of the lemma.

Using the unraveling for the dynamical map $\Phi_{\lambda,t}:\mathcal{B}(\mathcal{H})$ from Lem.~\ref{LemLevyUnravel}, I have the following relations:
\begin{align}\label{Emma}
\int_{\R} dp\,\big[\rho_{\lambda,\frac{T}{\lambda^{\gamma} } } \big]^{(0)}_{\scriptscriptstyle{Q}}(p)\big| E^{\frac{1}{2}}(p)-E^{\frac{1}{2}}(\mathbf{p})   \big|^{2}= & \Tr\Big[ \Phi_{\lambda,\frac{T}{\lambda^{\gamma}}}(\check{\rho}_{\lambda}  )\Big(H^{\frac{1}{2}}-E^{\frac{1}{2}}(\mathbf{p})   \Big)^{2}    \Big] \nonumber  \\
=&\mathbb{E}\Big[ \Tr\Big[ \check{\rho}_{\lambda}  \Big( U_{\lambda,\frac{T}{\lambda^\gamma}}^{*}(\xi)H^{\frac{1}{2}}U_{\lambda,\frac{T}{\lambda^\gamma} }(\xi)-E^{\frac{1}{2}}(\mathbf{p})   \Big)^{2}    \Big] \Big]\nonumber 
\\
 \leq  & \Big| \mathbb{E}\Big[ \Tr\Big[ \check{\rho}_{\lambda} \Big( U_{\lambda,\frac{T}{\lambda^{\gamma} }}^{*}(\xi)H U_{\lambda,\frac{T}{\lambda^{\gamma}}}(\xi)-H   \Big)   \Big] \Big]\Big| \nonumber \\ & + 2E^{\frac{1}{2}}(\mathbf{p})\Big|\Tr\left[ \check{\rho}_{\lambda} \Big( H^{\frac{1}{2}}-E^{\frac{1}{2}}(\mathbf{p})   \Big)    \right]\Big|+\Big|\Tr\Big[ \check{\rho}_{\lambda} \Big( H-E(\mathbf{p})  \Big)    \Big]\Big|.
\end{align}
To obtain the inequality~(\ref{Emma}), I write
\begin{align}\label{Boast}
\Big( U_{\lambda,\frac{T}{\lambda^\gamma}}^{*}(\xi)H^{\frac{1}{2}}U_{\lambda,\frac{T}{\lambda^\gamma} }(\xi)-E^{\frac{1}{2}}(\mathbf{p})   \Big)^{2} = & \Big(U_{\lambda,\frac{T}{\lambda^\gamma}}^{*}(\xi)H U_{\lambda,\frac{T}{\lambda^\gamma}}(\xi)-H\Big) -2E^{\frac{1}{2}}(\mathbf{p})\Big(H^{\frac{1}{2}}-E^{\frac{1}{2}}(\mathbf{p})       \Big)\nonumber \\ &+\Big(H-E(\mathbf{p})\Big)-2E^{\frac{1}{2}}(\mathbf{p})\Big( U_{\lambda,\frac{T}{\lambda^\gamma}}^{*}(\xi)H^{\frac{1}{2}}U_{\lambda,\frac{T}{\lambda^\gamma} }(\xi)-H^{\frac{1}{2}} \Big),  
\end{align}
and use the triangle inequality for the first three terms.  The fourth term on the right side of~(\ref{Boast}) can be ignored since it makes a negative contribution by the inequality
 \begin{align}\label{Forsake}
  \mathbb{E}\Big[ \Tr\Big[\check{\rho}_{\lambda} \Big( U_{\lambda,\frac{T}{\lambda^\gamma}}^{*}(\xi)H^{\frac{1}{2}}U_{\lambda,\frac{T}{\lambda^\gamma} }(\xi)-H^{\frac{1}{2}}\Big) \Big]\Big]\geq 0.
  \end{align} 
The inequality~(\ref{Forsake}) holds since the process $U_{\lambda,t}^{*}(\xi)H^{\frac{1}{2}}U_{\lambda,t }(\xi)-H^{\frac{1}{2}}$ is an operator-valued submartingale by Part (1) of Prop.~\ref{LemSubMartBasic}.

 For the term in the third line of~(\ref{Emma}), the second equality below holds by Part (2) of Prop.~\ref{LemSubMartBasic}:
\begin{align*}
  \mathbb{E}\Big[ \Tr\Big[ \check{\rho}_{\lambda}  \Big( U_{\lambda,\frac{T}{\lambda^\gamma}}^{*}(\xi)H U_{\lambda,\frac{T}{\lambda^\gamma}}(\xi)-H  \Big)  \Big] \Big] = \Tr\Big[  \check{\rho}_{\lambda} \Big(\Phi_{\lambda,\frac{T}{\lambda^\gamma}}^{*}(H)-H    \Big)   \Big]= \frac{\sigma T }{ \lambda^{\gamma} }.
\end{align*}
  The two terms on the last line of~(\ref{Emma}) 
are both bounded by a constant multiple for $\lambda^{-1}$ for small $\lambda$, and I will show this only for the first  since the terms are similar.  The factor $\mathit{E}^{\frac{1}{2}}(\mathbf{p})\approx \mathbf{p}= \lambda^{-1}\mathbf{p}_{0}$ is $\mathit{O}(\lambda^{-1})$, and by the triangle inequality
\begin{align}\label{YoungHag}
\Big|\Tr\Big[ \check{\rho}_{\lambda}  \Big( H^{\frac{1}{2}}-E^{\frac{1}{2}}(\mathbf{p})   \Big)    \Big] \Big|\leq &\Big|\Tr\Big[ \check{\rho}_{\lambda}  \Big( H^{\frac{1}{2}}-(P^2)^{\frac{1}{2}}   \Big)    \Big] \Big|+ \Big|\Tr\Big[\check{\rho}_{\lambda}\Big((P^2)^{\frac{1}{2}} - \mathbf{p}\Big)\Big]\Big|\nonumber \\ &+ \sup_{p\in \R}\Big| |p| -E^{\frac{1}{2}}(p)\Big| .
\end{align} 
The  first and third terms on the right side of~(\ref{YoungHag}) are uniformly bounded  by    Parts (2) and (1) of Lem.~\ref{LemLittleEnergy}, respectively.  
The second term on the right side of~(\ref{YoungHag})  is uniformly bounded   for small $\lambda$ since 
\begin{align}\label{Commie}
\Big|\Tr\Big[\check{\rho}_{\lambda}\Big((P^2)^{\frac{1}{2}} - \mathbf{p} \Big)\Big]\Big|=\Big| \int_{\R}dp\, \big|\frak{h}_{0}(p)\big|^{2} \big|  \mathbf{p}+p   \big|-\mathbf{p} \Big| \leq \int_{\R}dp\, \big|\frak{h}_{0}(p)\big|^{2}|p| = \big\langle\frak{h}_{0} \big| |P| \frak{h}_{0}\big\rangle,
\end{align}
where the first equality uses that  $\check{\rho}_{\lambda}:= | \frak{h}\rangle \langle \frak{h}|$ for $\frak{h}:=e^{\textup{i}\mathbf{p} X}   \frak{h}_{0}$.   The right side of~(\ref{Commie}) is finite by our assumption~(\ref{Gauss}) that  $\|P^{2}\frak{h}_{0}\|_{2}<\infty $.

\end{proof}

\subsection{Proof of Lemma~\ref{StanToQuasi}}

The main ingredient for the proof of Lem.~\ref{StanToQuasi} is the bound for the difference between the Bloch functions $\psi_{p}$ and $\tilde{\psi}_{p}$ for $|p|\gg 1$ in Lem.~\ref{Ticks}. 
The upper bound  stated in Lem.~\ref{StanToQuasi} is weak for $p$ close to elements in $\frac{1 }{2}\Z$, and I apply Part (4) of Prop.~\ref{PropMoreFiber} to ensure that the momentum densities are bounded when contracted to the torus $\mathbb{T}$, and thus are not concentrated in the troublesome region around the lattice.   Other elements in the proof are the Cauchy-Schwarz-type inequalities of Sect.~\ref{SecDissect}.

\vspace{.4cm}

\begin{proof}[Proof of Lem.~\ref{StanToQuasi}]

By adding and subtracting $\big\langle p- k\big| \rho_{\lambda,\frac{T}{\lambda^{\gamma} } }\big| p+k \big\rangle_{\scriptscriptstyle{Q}}  $ and using the triangle inequality, I have the bound
\begin{align}\label{Bus}
\left\|   \big[\rho_{\lambda,\frac{T}{\lambda^{\gamma} } }  \big]^{(k)}-  \big[\rho_{\lambda,\frac{T}{\lambda^{\gamma} } } \big]^{(k)}_{\scriptscriptstyle{Q}}\right\|_{1}\leq & \int_{\R}dp\,\big| \big\langle p- k \big| \rho_{\lambda,\frac{T}{\lambda^{\gamma} } }\big(\big| p+ k \big\rangle- \big| p+k \big\rangle_{\scriptscriptstyle{Q}} \big) \big| \nonumber   \\
&+\int_{\R}dp\,\big| \big(\big\langle p-k\big| - \text{  }_{\scriptscriptstyle{Q}}\big\langle p-k \big|\big) \rho_{\lambda,\frac{T}{\lambda^{\gamma} } } \big| p+k \big\rangle_{\scriptscriptstyle{Q}} \big| .
\end{align}
The terms on the right side of~(\ref{Bus}) are similar, so I will treat only the first.   For $\phi\in \mathbb{T}, \kappa\in [-\frac{1}{4},\frac{1}{4})$ with 
$\phi=  p\,\text{mod}\,1$ and $\kappa=  k\,\text{mod}\,\frac{1}{2}$, translating from ket notation to Bloch functions yields 
\begin{align*}
 \big\langle p- k \big| \rho_{\lambda,\frac{T}{\lambda^{\gamma} } }\big(\big| p+k\big\rangle- \big| p+ k \big\rangle_{\scriptscriptstyle{Q}} \big) = \Big\langle \psi_{p-k} \Big| \ell_{\phi}^{(\kappa)}\big(\rho_{\lambda,\frac{T}{\lambda^{\gamma} } }\big) \Big(\psi_{p+k}-\widetilde{\psi}_{p+k} \Big)\Big\rangle . 
\end{align*} 
The first term on the right side of~(\ref{Bus}) is bounded by 
\begin{align}\label{AllTheDancers}
 \int_{\R}dp\,\Big| \Big\langle & \psi_{p- k } \Big| \ell_{\phi}^{(\kappa)}\big(\rho_{\lambda,\frac{T}{\lambda^{\gamma} } }\big) \Big(\psi_{p+k}-\widetilde{\psi}_{p+k} \Big)\Big\rangle  \Big|\nonumber \\  \leq & \int_{\R}dp\,\Big\langle \psi_{p-k}\Big|\,\ell_{\phi- \kappa }^{(0)}\big(\rho_{\lambda,\frac{T}{\lambda^{\gamma} } }\big) \psi_{p-k}\Big\rangle^{\frac{1}{2}}  \Big\langle \psi_{p+k}-\widetilde{\psi}_{p+k} \,\Big| \ell_{\phi+ \kappa }^{(0)}\big(\rho_{\lambda,\frac{T}{\lambda^{\gamma} } }\big) \Big|\psi_{p+k}-\widetilde{\psi}_{p+k}\Big\rangle^{\frac{1}{2}} \nonumber   \\  \leq & \Big(\int_{\R  }dp\,\Big\langle \psi_{p+k}-\widetilde{\psi}_{p+k} \,\Big| \ell_{\phi+ \kappa }^{(0)}\big(\rho_{\lambda,\frac{T}{\lambda^{\gamma} } }\big) \Big|\psi_{p+k}-\widetilde{\psi}_{p+k}\Big\rangle \Big)^{\frac{1}{2}}  .
\end{align}
The first inequality above  is by Part (1) of Prop.~\ref{MiscFiber}, and the second inequality uses  Cauchy-Schwarz  along with Part (1) of Prop.~\ref{PropFiberII} to    obtain 
\begin{align*}
\int_{\R}dp\,\Big\langle \psi_{p-k}\Big|\,\ell_{\phi- \kappa }^{(0)}\big(\rho_{\lambda,\frac{T}{\lambda^{\gamma} } }\big) \Big| \psi_{p-k}\Big\rangle=  \int_{\R}dp\, \big[\rho_{\lambda,\frac{T}{\lambda^{\gamma} } }  \big]^{(0)}\big(p-k\big) =\Tr\big[\rho_{\lambda,\frac{T}{\lambda^{\gamma} } }\big]=1.  
\end{align*}
To bound the bottom line of~(\ref{AllTheDancers}), I will treat the integrand separately for the domains $ |p|\in [\mathbf{p}-2\lambda^{\iota}\mathbf{p}  ,  \mathbf{p}+2\lambda^{\iota}\mathbf{p}]$ and  $ |p|\notin [\mathbf{p}-2\lambda^{\iota}\mathbf{p}  ,  \mathbf{p}+2\lambda^{\iota}\mathbf{p}]$ in (i) and (ii) below. \vspace{.3cm}
 
\noindent (i). For the domain $|p|\in [\mathbf{p}-2\lambda^{\iota}\mathbf{p}  ,  \mathbf{p}+2\lambda^{\iota}\mathbf{p}]$,
\begin{align}\label{Nike}
 \Big(\int_{|p|\in [\mathbf{p}-2\lambda^{\iota}\mathbf{p}  ,  \mathbf{p}+2\lambda^{\iota}\mathbf{p}]  }\,&\Big\langle \psi_{p+k}-\widetilde{\psi}_{p+k} \,\Big| \ell_{\phi+ \kappa}^{(0)}\big(\rho_{\lambda,\frac{T}{\lambda^{\gamma} } }\big)\Big|\psi_{p+k}-\widetilde{\psi}_{p+k}\Big\rangle \Big)^{\frac{1}{2}}\nonumber  \\   \leq &\Big(\sup_{\phi\in \mathbb{T}}\big\|\ell_{\phi}^{(0)}\big(\rho_{\lambda,\frac{T}{\lambda^{\gamma} } }\big) \big\|_{\infty}  \Big)^{\frac{1}{2}}  \Big( 8\lambda^{\iota}\mathbf{p}  \int_{\mathbb{T} }d\phi \, \sup_{\substack{p=\phi\,\textup{mod}\,1, \\ |p|\geq  \frac{1}{2}\mathbf{p} }  }  \big\|\psi_{p}-\widetilde{\psi}_{p}\big\|_{2}^{2} \Big)^{\frac{1}{2}},
\end{align}
 where I have bounded the number of $p\in [\mathbf{p}-2\lambda^{\iota}\mathbf{p}  ,  \mathbf{p}+2\lambda^{\iota}\mathbf{p}] $ with $p+k=\phi\,\textup{mod}\,1$ for a fixed $\phi\in \mathbb{T}$ by $8\mathbf{p}\lambda^{\iota}$. The left term on the second line of~(\ref{Nike}) is bounded independently of $T,\lambda >0$ since
 \begin{align}\label{Mule} \sup_{\phi\in \mathbb{T}}\big\|\ell_{\phi}^{(0)}\big(\rho_{\lambda,\frac{T}{\lambda^{\gamma} } }\big)\big\|_{\infty} \leq \sup_{\phi\in \mathbb{T}} \Tr\big[ \ell_{\phi}^{(0)}\big(\rho_{\lambda,\frac{T}{\lambda^{\gamma} } }\big)   \big] & \leq \sup_{\phi\in \mathbb{T}}\langle \check{\rho}_{\lambda}\rangle^{(0)}_{\phi}+ \frac{\varpi}{\mathcal{R}}\nonumber \\ &= \sup_{\phi\in \mathbb{T}}\sum_{n\in\Z}\big| \frak{h}_{0}(n+\phi)  \big|^{2}   + \frac{\varpi}{\mathcal{R}},
 \end{align}
 where the second inequality is by Part (4) of Prop.~\ref{PropMoreFiber}. The equality in~(\ref{Mule}) holds since $\check{\rho}_{\lambda}:=|\frak{h}\rangle \langle \frak{h}| $ for  $\frak{h}:=e^{\textup{i}\mathbf{p}X}\frak{h}_{0}$, and the second line of~(\ref{Mule}) is finite by the argument at the end of the proof of Lem.~\ref{LemLittleEnergy}.
 The right term on the second line of~(\ref{Nike}) is smaller than
\begin{align}\label{MuleII}
8\lambda^{\iota}\mathbf{p} \int_{\mathbb{T} }d\phi \, \sup_{\substack{p=\phi\,\textup{mod}\,1, \\ |p|\geq  \frac{1}{2}\mathbf{p} }  }  \big\|\psi_{p}-\widetilde{\psi}_{p}\big\|_{2}^{2} & \leq  8\lambda^{\iota}\mathbf{p} \int_{[-\frac{1}{4}, \frac{1}{4}] }d\theta \,\frac{C}{(1+|\frac{\mathbf{p}}{2}  \theta|)^2}\nonumber  \\  &\leq  8C\lambda^{\iota} \int_{[- \frac{\mathbf{p}}{8}, \frac{\mathbf{p}}{8}] }dy \frac{1}{(1+|y|)^2}\nonumber \\ & \leq      16C\lambda^{\iota}  ,  
\end{align}
where the first inequality is for some $C>0$ by Lem.~\ref{Ticks}.  The inequalities~(\ref{Mule}) and~(\ref{MuleII})  yield that the right side of~(\ref{Nike}) is $\mathit{O}(\lambda^{\iota})$.  \vspace{.4cm}

\noindent (ii).  As a preliminary, notice that the above analysis implies 
\begin{align*}
\int_{|p|\in [\mathbf{p}-\lambda^{\iota}\mathbf{p}  ,  \mathbf{p}+\lambda^{\iota}\mathbf{p}] }\Big| \big[\rho_{\lambda,\frac{T}{\lambda^{\gamma} } }  \big]^{(0)}(p)-  \big[\rho_{\lambda,\frac{T}{\lambda^{\gamma} } } \big]^{(0)}_{\scriptscriptstyle{Q}}(p)\Big|=\mathit{O}(\lambda^{\iota}).
\end{align*}
Moreover, since $\int_{\R}dp\,\big[\rho_{\lambda,\frac{T}{\lambda^{\gamma} } }  \big]^{(0)}(p)=\int_{\R}dp\, \big[\rho_{\lambda,\frac{T}{\lambda^{\gamma} } } \big]^{(0)}_{\scriptscriptstyle{Q}}(p)=1$ by Part (1) of Prop.~\ref{PropFiberII},
\begin{align}\label{Gyro}
\Big|\int_{|p|\notin [\mathbf{p}-\lambda^{\iota}\mathbf{p}  ,  \mathbf{p}+\lambda^{\iota}\mathbf{p}] } \big[\rho_{\lambda,\frac{T}{\lambda^{\gamma} } }  \big]^{(0)}(p)-  \int_{|p|\notin [\mathbf{p}-\lambda^{\iota}\mathbf{p}  ,  \mathbf{p}+\lambda^{\iota}\mathbf{p}] }\big[\rho_{\lambda,\frac{T}{\lambda^{\gamma} } } \big]^{(0)}_{\scriptscriptstyle{Q}}(p)\Big|=\mathit{O}(\lambda^{\iota}).
\end{align}

 For the integration over the domain $|p|\notin [\mathbf{p}-2\lambda^{\iota}\mathbf{p}  ,  \mathbf{p}+2\lambda^{\iota}\mathbf{p}]$, I have the following inequalities:
\begin{align*}
&\int_{|p|\notin [\mathbf{p}-2\lambda^{\iota}\mathbf{p}  ,  \mathbf{p}+2\lambda^{\iota}\mathbf{p}]  }\,\Big\langle \psi_{p+k}-\widetilde{\psi}_{p+k} \,\Big| \ell^{(0)}_{\phi}\big(\rho_{\lambda,\frac{T}{\lambda^{\gamma} } }\big)\Big|\psi_{p+k}-\widetilde{\psi}_{p+k}\Big\rangle\\ & \leq 2 \int_{|p|\notin [\mathbf{p}-2\lambda^{\iota}\mathbf{p}  ,  \mathbf{p}+2\lambda^{\iota}\mathbf{p}]  }\,\Big(\Big\langle \psi_{p+k}\,\Big|  \ell^{(0)}_{\phi}\big(\rho_{\lambda,\frac{T}{\lambda^{\gamma} } }\big)\Big|\psi_{p+k}\Big\rangle + \Big\langle \widetilde{\psi}_{p+k} \,\Big|  \ell^{(0)}_{\phi}\big(\rho_{\lambda,\frac{T}{\lambda^{\gamma} } }\big)\Big|\widetilde{\psi}_{p+k}\Big\rangle \Big)\\ & \leq  2 \int_{|p|\notin [\mathbf{p}-\lambda^{\iota}\mathbf{p}  ,  \mathbf{p}+\lambda^{\iota}\mathbf{p}]  }\,\Big(  \big[\rho_{\lambda,\frac{T}{\lambda^{\gamma} } }  \big]^{(0)}(p)+  \big[\rho_{\lambda,\frac{T}{\lambda^{\gamma} } } \big]^{(0)}_{\scriptscriptstyle{Q}}(p)\Big) \\ & \leq  4 \int_{|p|\notin [\mathbf{p}-\lambda^{\iota}\mathbf{p}  ,  \mathbf{p}+\lambda^{\iota}\mathbf{p}]  }\, \big[\rho_{\lambda,\frac{T}{\lambda^{\gamma} } } \big]^{(0)}_{\scriptscriptstyle{Q}}(p)+\mathit{O}(\lambda^{\iota})\\  & \leq C'\lambda^{2-\gamma-2\iota}+\mathit{O}(\lambda^{\iota})=\mathit{O}(\lambda^{\iota}).
\end{align*}
The  second inequality uses the definitions of $[\rho_{\lambda,\frac{T}{\lambda^{\gamma} } }  ]^{(0)}$, $[\rho_{\lambda,\frac{T}{\lambda^{\gamma} } }]^{(0)}_{\scriptscriptstyle{Q}}$ and the assumption $|k|\leq \lambda^{\iota}\mathbf{p}$. The third inequality follows from~(\ref{Gyro}), and the last inequality holds for some $C'>0$ by Lem.~\ref{LemMomDist}.  Finally, the order equality uses that $\iota=2-\gamma-2\iota$.

\end{proof}

\section{The adiabatic approximation}\label{SecFreidlin}

In this section I prove Thm.~\ref{ThmSemiClassical}.  The analysis in the proof of Thm.~\ref{ThmSemiClassical} is an extension of the analysis for the proof of~\cite[Thm.2.1]{Dispersion}.  The previous result only  characterized the limiting  autonomous dynamics for the diagonals of the time-evolved density matrices in the extended-zone scheme representation whereas the treatment here includes a region of off-diagonals. 

\subsection{Preliminary estimates for the adiabatic approximation  }

Recall that the function $\mathbf{n}:\R\rightarrow \Z$ is defined such that $\mathbf{n}(p)=2\big(p-\theta\big)$ for $\theta\in [-\frac{1}{4},\frac{1}{4})$ with $p=\theta\,\textup{mod}\,\frac{1}{2}$.  Given $p,v\in \R $, define the set $I(p,v)\subset \Z$ to be 
$$I(p,v):=\big\{ 0,-\mathbf{n}(p),-\mathbf{n}(p+v),\mathbf{n}(p) -\mathbf{n}(p+v)  \big\} . $$
The following technical lemma is from~\cite[Lem.2.2]{Dispersion}.

\begin{lemma}\label{BadTerms}
  There exists a $C>0$ such that the following inequalities hold:
\begin{enumerate}
\item For all $p,v\in \R$ with $|v|\leq \frac{1}{2}|p|$,
$$ \sum_{n\notin I(p,v)}|\kappa_{v}(p,n)|^{2}\leq \frac{C}{1+| p|^{2}}. $$

\item For all  $m,n\in \Z$ with $m\neq -\mathbf{n}(p)$, $n\neq 0$, and $\big|\frac{1}{2}m-n-p\big|\leq \frac{1}{2}|p| $, 
$$ \int_{-\frac{1}{4}  }^{\frac{1}{4} }d\theta\,\big|\kappa_{\theta+\frac{1}{2}m-n-p}(p,n)\big|^{2}  \leq  \frac{C }{1+\big|p\big|}.  $$

\end{enumerate}

\end{lemma}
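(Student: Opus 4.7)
The proof rests on the fact that, for $|p|\gg 1$, the coefficient
$$\eta(p,m)=-\textup{i}\,N_{p}^{-1/2}\bigl(e^{\textup{i}2\pi(\mathbf{q}(p)-p)}-1\bigr)\Big(\frac{1}{\mathbf{q}(p)+p+m}+\frac{1}{\mathbf{q}(p)-p-m}\Big)$$
is concentrated near the two ``resonant'' indices $m=0$ and $m=-\mathbf{n}(p)$, where one of the two denominators can become small, while $|\eta(p,m)|$ decays inverse-quadratically in the distance from $m$ to those two indices.  Extracting quantitative versions of these peak/tail estimates is straightforward once one has uniform control on the normalization constant $N_p$: namely, the ratio $N_{p}^{-1}|e^{\textup{i}2\pi(\mathbf{q}(p)-p)}-1|^{2}$ must be shown to remain $\mathit{O}(1)$ uniformly in $p$.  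Verifying this near the lattice $\tfrac{1}{2}\Z$, where both the prefactor and $N_p$ degenerate simultaneously, is the delicate point, and I expect it to be the main obstacle.

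For Part (1), I would expand
$$\kappa_{v}(p,n)=\sum_{m\in\Z}\overline{\eta}(p+v+n,\,m-n)\,\eta(p,m)$$
and split the $m$-sum into four regions according to the possible pairings of a resonant index of $\eta(p,\cdot)$ with one of $\eta(p+v+n,\cdot)$.  The resonances of $\eta(p,\cdot)$ sit at $\{0,-\mathbf{n}(p)\}$; since adding an integer to the first argument of $\mathbf{n}$ shifts it by twice that integer, the resonances of $\eta(p+v+n,\cdot)$ sit at $\{0,-\mathbf{n}(p+v)-2n\}$.  Simultaneous alignment of both factors forces $n$ into $\{0,-\mathbf{n}(p+v),-\mathbf{n}(p),\mathbf{n}(p)-\mathbf{n}(p+v)\}=I(p,v)$.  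For $n\notin I(p,v)$, in each of the four pieces at least one of the $\eta$-factors lies in the tail of its resonance, and a Cauchy--Schwarz pairing of the tail against the peaked factor gives
$$|\kappa_{v}(p,n)|\leq \frac{C}{(1+|p|)(1+\textup{dist}(n,I(p,v)))}.$$
Squaring and summing over $n\notin I(p,v)$ produces the claimed bound of order $|p|^{-2}$.

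Part (2) follows the same template but exploits the averaging over $\theta\in[-\tfrac14,\tfrac14)$.  Writing $v=\theta+\tfrac{1}{2}m-n-p$, the argument $p+v+n=\theta+\tfrac{m}{2}$ sweeps an interval of length $\tfrac{1}{2}$ around the half-integer $\tfrac{m}{2}$ as $\theta$ varies.  The hypotheses $m\neq-\mathbf{n}(p)$, $n\neq 0$, and $|\tfrac{1}{2}m-n-p|\leq \tfrac{|p|}{2}$ exclude the configurations that would produce a simultaneous double resonance of the two $\eta$-factors, so the integrand $|\kappa_{\theta+m/2-n-p}(p,n)|^{2}$ encounters at most a single peak in $\theta$, of width $\mathit{O}(|p|^{-1})$ and height $\mathit{O}(1)$.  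The contribution of this peak to the $\theta$-integral is $\mathit{O}(|p|^{-1})$, while the off-peak contribution is smaller by the pointwise estimate used in Part (1); combining these yields the stated bound.
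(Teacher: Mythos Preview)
The paper does not prove this lemma here; it simply cites the author's earlier work \cite[Lem.~2.2]{Dispersion} without reproducing any argument. So there is no in-paper proof to compare your sketch against, and your outline stands on its own merits.

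Your approach is the natural one and almost certainly matches what appears in the cited reference: identify the two resonant indices of $\eta(p,\cdot)$ via the denominators $\mathbf{q}(p)\pm(p+m)$, observe that simultaneous resonance of both $\eta$-factors in the convolution forces $n\in I(p,v)$, and control the off-resonant contributions by tail estimates. Your identification of the resonant set and the role of the normalization ratio $N_p^{-1}|e^{\textup{i}2\pi(\mathbf{q}(p)-p)}-1|^{2}$ are both correct; the latter can be checked directly from the closed form of $N_p$ given in Sect.~\ref{SecCoe} of this paper, and the paper's Lem.~\ref{Ticks} encodes the peak/tail dichotomy you are relying on.

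One point where your writeup is a bit loose: the claimed pointwise bound $|\kappa_v(p,n)|\leq C\big((1+|p|)(1+\textup{dist}(n,I(p,v)))\big)^{-1}$ is stronger than what you strictly need and would require more care to justify, since the four resonance pairings interact differently and the constraint $|v|\leq\tfrac12|p|$ is what prevents the resonance sets from collapsing or drifting. For Part~(1) it suffices to bound $\sum_{n\notin I(p,v)}|\kappa_v(p,n)|^2$ directly without the intermediate pointwise decay in $n$, which is easier. For Part~(2) your averaging argument is on target; the hypotheses on $m,n$ rule out precisely the double-resonance configurations, and the single peak of width $\mathit{O}(|p|^{-1})$ in $\theta$ gives the $|p|^{-1}$ bound after integration.
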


The following lemma is similar to Part (4) of Prop.~\ref{PropMoreFiber}, and I will neglect the proof.  

\begin{lemma}\label{HorseMeister}
Let the maps $\Phi_{\lambda, \xi, t}$ and the times $t_{n}$ be defined as in Lem.~\ref{Trivial}.  The following inequality holds for all $\rho\in \mathcal{B}_{1}\big(L^{2}(\R)   \big)$:  
$$\big\|\big\langle \Phi_{\lambda, \xi, t_{n}}  (  \rho ) \big\rangle^{(\kappa)}\big\|_{\infty}\leq \delta_{0,n}\big\|\big\langle  \rho \big\rangle^{(\kappa)}\big\|_{\infty}+(1-\delta_{0,n})\frac{\varpi }{\mathcal{R}}.  $$
\end{lemma}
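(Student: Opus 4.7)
The plan is to observe that, by Lemma~\ref{Trivial}, the outermost Hamiltonian propagator in the unraveling of $\Phi_{\lambda,\xi,t}$ carries duration $t - t_{\mathcal{N}}$, which vanishes at $t = t_n$. Hence, for $n \geq 1$,
$$\Phi_{\lambda,\xi,t_n}(\rho) = \mathcal{R}^{-n}\,\Psi\big(\omega_{n-1}\big),$$
where $\omega_{n-1} \in \mathcal{B}_1(\mathcal{H})$ is the intermediate trace-class operator built from the inner $n-1$ noise applications interleaved with unitary Hamiltonian conjugations, with the $\mathcal{R}^{-n}$ factor pulled out front. This places the estimate squarely within the scope of Part (3) of Prop.~\ref{PropMoreFiber}, which immediately gives
$$\big\|\big\langle \Phi_{\lambda,\xi,t_n}(\rho)\big\rangle^{(\kappa)}\big\|_{\infty} \leq \mathcal{R}^{-n}\,\varpi\,\|\omega_{n-1}\|_{\mathbf{1}}.$$

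What remains is to control $\|\omega_{n-1}\|_{\mathbf{1}}$ by $\mathcal{R}^{n-1}\|\rho\|_{\mathbf{1}}$. Hamiltonian conjugations preserve trace norm by unitarity, and each application of $\Psi$ inflates trace norm by a factor of at most $\mathcal{R}$: for self-adjoint trace-class $\sigma$ with Jordan decomposition $\sigma = \sigma_+ - \sigma_-$, $\|\Psi(\sigma_\pm)\|_{\mathbf{1}} = \Tr[\Psi(\sigma_\pm)] = \Tr[\Psi^*(I)\sigma_\pm] = \mathcal{R}\|\sigma_\pm\|_{\mathbf{1}}$, with the bound extending to general $\sigma$ by taking real and imaginary parts. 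Iterating through the $n-1$ noise applications in the construction of $\omega_{n-1}$ produces $\|\omega_{n-1}\|_{\mathbf{1}} \leq \mathcal{R}^{n-1}\|\rho\|_{\mathbf{1}}$, and combining with the preceding display yields the bound $\frac{\varpi}{\mathcal{R}}\|\rho\|_{\mathbf{1}}$, matching the lemma under the implicit normalization $\|\rho\|_{\mathbf{1}} \leq 1$ present in the applications. For the boundary case $n = 0$, the convention $t_0 = 0$ makes $\Phi_{\lambda,\xi,t_0}$ the identity on $\mathcal{B}_1(\mathcal{H})$, so the inequality holds with equality.

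No serious obstacle is expected; this is essentially an accounting argument. The one thing to watch is that the prefactor $\mathcal{R}^{-n}$ from the unraveling precisely cancels the $\mathcal{R}^{n-1}$ growth accumulated inside $\omega_{n-1}$, leaving the single factor $\mathcal{R}^{-1}$ multiplying $\varpi$ in the final bound --- the same balance that drives the analogous $\frac{\varpi}{\mathcal{R}}$ in Part (4) of Prop.~\ref{PropMoreFiber}, except here the bound holds uniformly in $n$ since the number of noise applications is fixed at $n$ rather than being averaged against Poisson probabilities.
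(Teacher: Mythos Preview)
Your proof is correct. The paper actually omits the proof entirely, saying only that the lemma ``is similar to Part (4) of Prop.~\ref{PropMoreFiber}.'' Your approach---peeling off the outermost $\Psi$ at $t=t_n$ (where the final Hamiltonian propagator is trivial) and invoking Part~(3) of Prop.~\ref{PropMoreFiber} directly on the trace-class operator $\mathcal{R}^{-n}\omega_{n-1}$---is a clean realization of that hint. The implicit route through Part~(4) would instead pass to the scalar functions $\langle\,\cdot\,\rangle^{(\kappa)}$ on $\mathbb{T}$, use that Hamiltonian conjugation leaves these invariant (as in Part~(2) of Prop.~\ref{PropMoreFiber}), and write $\langle\Phi_{\lambda,\xi,t_n}(\rho)\rangle^{(\kappa)}$ as an $n$-fold convolution by $\mathcal{R}^{-1}J_{\mathbb{T}}$ acting on $\langle\rho\rangle^{(\kappa)}$; the outermost convolution then supplies the factor $\varpi/\mathcal{R}$ exactly as in your argument. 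The two routes are equivalent, and yours avoids the intermediate fiber-level bookkeeping.

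Your observation about the missing $\|\rho\|_{\mathbf{1}}$ factor on the right-hand side is also correct: as stated the inequality fails under scaling of $\rho$, and the applications in the paper (e.g.\ in the proof of Thm.~\ref{ThmSemiClassical} and in~(\ref{ShyLo})) all take $\rho$ with $\|\rho\|_{\mathbf{1}}\leq 1$.
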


   The bounds from Lem.~\ref{BadTerms} will be applied in the proof of Lem.~\ref{Fructose}.     Define  the function $\mathbf{Q}:\R\rightarrow \{0,1\}$ as  $\mathbf{Q}(p)= 1-\sum_{n\in \Z} 1_{[\frac{1}{2}n -\lambda^{2}, \frac{1}{2}n +\lambda^{2}]}(p)$.  I introduce the factor  $ \mathbf{Q}(p)$ in the statement of Lem.~\ref{Fructose} and  the proof of Lem.~\ref{Grim} to ensure that $p+N $ and $p+ 2k+N $ live on the same energy band for $p\in \textup{Supp}(\mathbf{Q})$ and $|k|\leq \frac{1}{2}\lambda^{2}$.  Throughout the analysis of this section, the reader should remember that $ k$ for $|k|\leq \frac{1}{2}\lambda^{2}$ is negligible compared to the length $\frac{1}{2}$ between  neighboring momenta satisfying the Bragg condition.  In the proof of Lem.~\ref{Fructose}, I rely mainly on decay that arises from the term  $| E(p)-E(p+ 2k+N )|^{-1}$ for large $|N|$.  However, it can occur that $|N|\gg 1$ but the  energies $ E(p)$  and   $E(p+ 2k+N )$ are not far apart, in which case I use Lem.~\ref{BadTerms} to extract some additional decay from the sum $ \sum_{n\in \Z }\big|\kappa_{v}(p,n)\big|\,\big|\kappa_{v}\big(p+ 2k+N,n-N\big)\big|$.

\begin{lemma}\label{Fructose}
Let $|k|\leq \frac{1}{2}\lambda^{2}$ and $\rho \in \mathcal{B}_{1}( \mathcal{H})$ be  positive. There is a $C>0$ such that for $\lambda < 1$,
\begin{align*}
 \sum_{N\neq 0} \int_{\R} dp\, \mathbf{Q}(p) &\int_{\R}dv\, j(v)\Big| \big[\rho\big]^{(k+\frac{1}{2}N)}_{\scriptscriptstyle{Q}}\big(p+k+\frac{1}{2}N\big)\Big|    \\ & \times \frac{ \sum_{n\in \Z }\big|\kappa_{v}(p,n)\big|\,\big|\kappa_{v}\big(p+ 2k+N,n-N\big)\big|}{\big| E(p)-E(p+2 k+N ) \big| }\leq C\|\rho\|_{\mathbf{1}}.  
\end{align*}

\end{lemma}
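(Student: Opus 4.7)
The plan is to peel off the $\rho$-dependence using Proposition~\ref{PropFiberII}(3), which gives $\|[\rho]^{(k+N/2)}_{\scriptscriptstyle{Q}}\|_{1}\leq\|\rho\|_{\mathbf{1}}$ uniformly in $N$, and thereby reduce the estimate to a uniform bound on a $\lambda$-independent kernel. After the change of variable $q=p+k+\tfrac{N}{2}$ and majorizing, the desired inequality would follow from
$$\sum_{N\neq 0}\;\sup_{q}\;\mathbf{Q}\big(q-k-\tfrac{N}{2}\big)\int_{\R}dv\,j(v)\,\frac{\sum_{n\in\Z}|\kappa_{v}(p,n)|\,|\kappa_{v}(p+2k+N,n-N)|}{|E(p)-E(p+2k+N)|}\leq C,$$
where $p$ is shorthand for $q-k-N/2$ and $C$ must be independent of $\lambda$ and of $k\in[-\tfrac12\lambda^{2},\tfrac12\lambda^{2}]$.

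The $n$-sum is then split according to membership in the resonance sets $I(p,v)$ and $N+I(p+2k+N,v)$ from Lemma~\ref{BadTerms}. On the "good" complement, Cauchy--Schwarz combined with Lemma~\ref{BadTerms}(1) gives the pointwise bound
$$\sum_{\text{good }n}|\kappa_{v}(p,n)|\,|\kappa_{v}(p+2k+N,n-N)|\leq \frac{C}{\sqrt{(1+|p|^{2})(1+|p+2k+N|^{2})}},$$
while for the at most eight "bad" indices I use the trivial bound $|\kappa||\kappa|\leq 1$ and then absorb the contribution into the $v$-integral via Lemma~\ref{BadTerms}(2) together with Assumption~\ref{Assumptions}(2), extracting an extra decay factor of $C/(1+|p|)$. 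The energy gap is bounded below by factoring $E(p)-E(p+2k+N)=(\mathbf{q}(p)-\mathbf{q}(p+2k+N))(\mathbf{q}(p)+\mathbf{q}(p+2k+N))$; on $\text{supp}(\mathbf{Q})$ the first factor is at least $c|N|$ (since $\mathbf{q}$ is monotone and close to the identity away from $\tfrac{1}{2}\Z$) and the second is comparable to $2|q|$.

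These ingredients dispose of the "far" regime $|q|\geq 1$, where the kernel is of order $|N|^{-2}$ uniformly in $q$ and the series converges. The main obstacle is the Bragg regime $|q|<1$, where the energy denominator is degenerate and the constraint from $\mathbf{Q}$ only gives $|q|\geq \tfrac12\lambda^{2}$, so a naive $\sup_{q}$ is $\lambda$-dependent. The resolution is to abandon the supremum in this region and instead integrate the weight $[\rho]^{(k+N/2)}_{\scriptscriptstyle{Q}}(q)$ against the $|q|^{-1}$ singularity, using the $L^{1}$-bound in tandem with the $L^{\infty}$-control on $\langle\rho\rangle^{(\kappa)}_{\phi}$ from Proposition~\ref{PropMoreFiber}(4); the bad-index contributions then supply the $N$-gain $C/(1+|p|)\leq C/|N|$ via Lemma~\ref{BadTerms}(2), which is ultimately what produces both the $N$-summability and the $\lambda$-independence of the final constant.
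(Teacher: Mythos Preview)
Your far-zone strategy is close to the paper's (both pull the $\rho$-weight out in $L^{1}$ and use Lemma~\ref{BadTerms} to extract $N$-decay from the $\kappa$-sum), though you skip real work: Lemma~\ref{BadTerms}(1) carries the hypothesis $|v|\leq\tfrac12|p|$, so the $v$-integral must be cut---the paper uses Chebyshev on $|v|>\tfrac18|N|$ after restricting to $|p|\wedge|p'|\geq\tfrac14|N|$, and handles the complementary case $|p|\wedge|p'|\leq\tfrac14|N|$ separately via the cruder bound $|E(p)-E(p')|\gtrsim N^{2}$. Applying Lemma~\ref{BadTerms}(2) to the resonant indices also requires a case-by-case identification of which of the two $\kappa$-factors meets its hypotheses $n\neq 0$, $m\neq -\mathbf{n}(p)$; after Cauchy--Schwarz this yields only $O(|N|^{-1/2})$ (not the $C/(1+|p|)$ you claim), hence $O(|N|^{-3/2})$ for $C_{N}$ overall.

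The genuine gap is in the near zone $|q|<1$. Your factorization $|E(p)-E(p')|=|\mathbf{q}(p)-\mathbf{q}(p')|\cdot|\mathbf{q}(p)+\mathbf{q}(p')|$ does have second factor $\approx 2|q|$, but the conclusion that the denominator degenerates is wrong: since $p\in\textup{Supp}(\mathbf{Q})$ and $|2k|\leq\lambda^{2}$, the momenta $p$ and $p+2k+N$ lie on \emph{different} energy bands for every $N\neq 0$, whence $|E(p)-E(p+2k+N)|\geq\inf_{n}g_{n}>0$ uniformly in $p,k,\lambda$. With this constant lower bound the near regime is immediate: Cauchy--Schwarz gives $\sum_{n}|\kappa||\kappa|\leq 1$, the $v$-integral gives $\mathcal{R}$, and the windows $|q|<1$ summed over $N$ cover each point of $\R$ boundedly often, so Proposition~\ref{MiscFiber}(3) and Proposition~\ref{PropFiberII}(1) give $C\|\rho\|_{\mathbf{1}}$. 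Your proposed fix via Proposition~\ref{PropMoreFiber}(4) is both unnecessary and inapplicable: that proposition bounds $\langle\Phi_{\lambda,t}(\rho)\rangle^{(\kappa)}$, not $\langle\rho\rangle^{(\kappa)}$ for an arbitrary positive $\rho$, and the lemma as stated demands a bound by $\|\rho\|_{\mathbf{1}}$ alone.
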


\begin{proof}
By splitting the integration $\int_{\R}dp$ into parts  $|p+\frac{1}{2}N|\leq 1$ and $|p+\frac{1}{2}N|> 1$, I have the inequality
\begin{align}\label{Bouncer}
 \sum_{N\neq 0}  \int_{\R}& dp\, \mathbf{Q}(p) \int_{\R}dv\, j(v)\Big| \big[\rho\big]^{(k+\frac{1}{2}N)}_{\scriptscriptstyle{Q}}\big(p+k+\frac{1}{2}N\big)\Big|   \frac{ \sum_{n\in \Z }\big|\kappa_{v}(p,n)\big|\,\big|\kappa_{v}\big(p+2 k+N,n-N\big)\big|}{ \big| E(p)-E(p+2 k+N ) \big| }\nonumber \\  \leq & \Big(\frac{\mathcal{R}}{\inf_{n\in \mathbf{N}}g_{n}}   \Big)\sum_{N\neq 0} \int_{|p+\frac{1}{2}N|\leq 1} dp\,\int_{\R}dv\,\frac{j(v)}{\mathcal{R}} \Big| \big[\rho\big]^{(k+\frac{1}{2}N)}_{\scriptscriptstyle{Q}}\big(p+k+\frac{1}{2}N\big)\Big| \nonumber  \\ & \times \sum_{n\in \Z} \big|\kappa_{v}(p,n)\big|\, \big|\kappa_{v}\big(p+2 k+N,n-N\big)\big|+\|\rho\|_{\mathbf{1}}\sum_{N\neq 0}C_{N}   ,
\end{align}
where $g_{n}$ is the $n$th energy band gap, and the values $C_{N}>0$ are defined as
$$C_{N}=  \sup_{|p+\frac{1}{2}N|\geq 1 } \int_{\R}dv\, j(v)  \frac{\sum_{n\in \Z}\big|\kappa_{v}(p,n)\big|\,\big|\kappa_{v}\big(p+ 2k+N,n-N\big)\big|}{ |E(p)-E(p+ 2k+N) | }.$$
For the domain $|p+\frac{1}{2}N|\leq 1$, I have used that 
the momenta $p$ and $p+ 2k+N $ belong to different energy bands when $p\in \textup{Supp}(\mathbf{Q})$, $|k|\leq \frac{1}{2}\lambda^{2}$, and $N\neq 0$.  It follows that 
$\big| E(p)-E(p+ 2k+N ) \big| $ must be bounded from below by the infemum of the energy gaps $g_{n}$.     For the domain $|p+\frac{1}{2}N|\geq 1$, I have applied Holder's inequality and $\big\|\big[\rho\big]^{(k+\frac{1}{2}N)}_{\scriptscriptstyle{Q}}\big\|_{1}\leq \|\rho\|_{\mathbf{1}}$, where the latter follows by Part (3) of Prop.~\ref{PropFiberII}.  I will show that the first and second terms on the right side of~(\ref{Bouncer}) are bounded by multiples of  $  \|\rho\|_{\mathbf{1}}$ in parts (i) and (ii), respectively, below.  

\vspace{.4cm}
\noindent (i).\hspace{.2cm} For the first term on the right side of~(\ref{Bouncer}), 
 the integral has the bound 
\begin{align}\label{Urbane}
\sum_{N\neq 0} &\int_{|p+\frac{1}{2}N|\leq 1 } dp\,\int_{\R}dv\,\frac{j(v)}{\mathcal{R}} \Big| \big[\rho\big]^{(k+\frac{1}{2}N)}_{\scriptscriptstyle{Q}}\big(p+ k+\frac{1}{2}N\big)\Big|\nonumber \\ & \times\sum_{n\in \Z }\big|\kappa_{v}(p,n)\big|\,\big|\kappa_{v}\big(p+2 k+N,n-N\big)\big|  \nonumber \\  \leq  &\sum_{N\neq 0}\int_{|p+\frac{1}{2}N|\leq   1 } dp\,\Big(\frac{1}{2} \big[\rho\big]^{(0)}_{\scriptscriptstyle{Q}}\big(p\big)+ \frac{1}{2}\big[\rho\big]^{(0)}_{\scriptscriptstyle{Q}}\big(p+ 2k+N\big)\Big)\nonumber  \\ \leq &  2\int_{\R}dp\,\big[\rho\big]^{(0)}_{\scriptscriptstyle{Q}}(p)=2\|\rho\|_{\mathbf{1}}. 
\end{align}
For the first inequality above, I have used that $\int_{\R}dv\frac{j(v)}{\mathcal{R}}=1$ and applied  the Cauchy-Schwarz inequality to get
\begin{align}\label{Cauchy}
\sum_{n\in \Z}& \big|\kappa_{v}\big(p,n\big)\big|\,\big|\kappa_{v}\big(p+ 2k+N,n-N\big)\big| \nonumber   \\  &\leq  \Big(\sum_{n\in \Z}\big|\kappa_{v}\big(p,n\big)\big|^{2}\Big)^{\frac{1}{2}}\,\Big(\sum_{n\in \Z}\big|\kappa_{v}\big(p+2k+N,n-N\big)\big|^2 \Big)^{\frac{1}{2}}\leq 1. 
\end{align}
Also for the first inequality  in~(\ref{Urbane}), I have applied Part (3) of Prop.~\ref{MiscFiber} to $\big[\rho\big]^{(k+\frac{1}{2}N)}_{\scriptscriptstyle{Q}}$ in combination with the relation $|xy|\leq \frac{x^2}{2}+\frac{y^2}{2}$.

\vspace{.4cm}
\noindent (ii).\hspace{.2cm}   It is sufficient to prove that the sum of the $C_{N}$'s is finite and has a bound independent of $\lambda<1$.  I will show that the $C_{N}$'s decay on the order of $|N|^{-\frac{3}{2}}$.   A single $C_{N}$ can be bounded  independently of $\lambda<1$ by the same reasoning as in (i).   The  difference $|E(p)-E(p+2 k+N)|$  becomes large for large $|N|\gg 1$ except when $p+ 2k+N$ is close to $-p$.  By the restrictions $|p+\frac{1}{2}N|\geq 1$ and $|k|\leq \frac{1}{2}\lambda^{2} $, the momenta $p$ and $p+2k+N$ can not lie on the same or neighboring energy bands.  Thus the absolute value of the difference between the energies $E(p)$ and $E(p+2 k+N)$ must be at least the length $L_{|N|}$  for $L_{m}$ defined by 
$$\hspace{4cm} L_{m}:=  E\big(\frac{m}{2}\big)-E\big(\frac{m-1}{2}\big)=\frac{2m-1}{4} , \hspace{1.5cm} m\in \mathbb{N}.         $$
By the same reasoning, if $|p|\wedge |p+ 2k+N|\leq \frac{1}{4}|N|$, then the momenta $p$ and $p+ k+N$ must be separated by the energy bands with band index between $\frac{1}{4}|N|$ and $\frac{3}{4}|N|$:
\begin{align}\label{Joad}
\Big|E\big(p\big)-E\big(p+ 2k+N\big)\Big|\geq \sum_{ \frac{1}{4}|N|< m< \frac{3}{4}|N|}L_{m} \propto N^2,
\end{align}
 where the  asymptotic proportion is for $|N|\gg 1$.   
  
By the above remarks, $C_{N}< \frac{1}{ |N|}C^{\prime}_{N}+C_{N}^{\prime \prime}$, where $C^{\prime}_{N}$ and $C^{\prime \prime}_{N}$ are defined as
\begin{eqnarray*}  
  C_{N}^{\prime}&:=& \sup_{\substack{|p+\frac{1}{2}N|\geq 1, \\  |p|\wedge |p+2 k+N|\geq \frac{1}{4}| N| }  } \int_{\R}dv \,\frac{j(v)}{\mathcal{R}}  \sum_{n\in \Z}\big|\kappa_{v}\big(p,n\big)\big|\,\big|\kappa_{v}\big(p+2 k+N,n-N\big)\big|,  \\
   C_{N}^{\prime \prime}&:=& \sup_{ |p|\wedge |p+ 2k+N|\leq \frac{1}{4}|N|   } \int_{\R}dv\, j(v)  \frac{\sum_{n\in \Z}\big|\kappa_{v}(p,n)\big|\,\big|\kappa_{v}\big(p+2 k+N,n-N\big)\big|}{\big| E(p)-E(p+2k+N) \big| } .
  \end{eqnarray*}
By the same reasoning as in (i), I have the  inequality below
$$C_{N}^{\prime \prime} \leq  \sup_{ |p|\wedge |p+2 k+N|\leq \frac{1}{4}|N|   } \frac{ \mathcal{R}  }{\big| E(p)-E(p+ 2k+N) \big| }=\mathit{O}\big(N^{-2}) . $$
The order equality follows from~(\ref{Joad}). Thus, the $C_{N}^{\prime \prime}$'s decay quadratically and are  summable.   In the analysis below, I  show that the $C_{N}'$'s have order $\mathit{O}(|N|^{-\frac{1}{2}})$, which implies that the $C_{N}$'s are summable.

 Bounding the $C_{N}^{\prime}$'s is trickier than the $C_{N}^{\prime \prime}$'s, since I depend on some decay for large $|N|$ arising from the sum of the terms $|\kappa_{v}(p,n)|\,|\kappa_{v}(p+2 k+N,n-N)|$, and there are various cases in which   $|\kappa_{v}(p,n)|$ and $|\kappa_{v}(p+2 k+N,n-N)|$ may not both be small. As a preliminary, I will partition the integration over $v\in \R$ into the sets $|v|>\frac{1}{8} |N|$ and $|v|\leq\frac{1}{8} |N|$.  For the domain $|v|>\frac{1}{8} |N|$ there is quadratic decay since  Chebyshev's inequality and~(\ref{Cauchy}) imply that 
\begin{align*}\int_{|v|> \frac{1}{8} |N|}dv\, \frac{j(v)}{\mathcal{R}}  \sum_{n\in \Z}\big|\kappa_{v}(p,n)\big|\,\big|\kappa_{v}(p+ 2k+N,n-N)\big| & \leq \int_{|v| > \frac{1}{8}N}dv\,\frac{j(v)}{\mathcal{R}}\\ & \leq  \frac{64\sigma  }{ \mathcal{R} }  N^{-2},
 \end{align*}
 where $\sigma=\int_{\R}dv\, j(v)v^{2}$.  For the domain $|v|\leq  \frac{1}{8} |N|$, I will rely on the results from Lem.~\ref{BadTerms}.  Given $p,v\in \R$, the Cauchy-Schwarz inequality and~(\ref{Cauchy})  yield that
\begin{align}\label{Rabble}
\sum_{n\in \Z }\big| & \kappa_{v}\big(p,n\big)\big|\,\big|\kappa_{v}\big(p+ 2k+N,n-N\big)\big| \nonumber  \\  \leq & \sum_{ \substack{ n\in I(p,v), \\  n\in I(p+ 2k+N,v)+N}}\big|\kappa_{v}\big(p,n\big)\big|\,\big|\kappa_{v}\big(p+ 2k+N,n-N\big)\big| \nonumber  \\ &+\Big(\sum_{n\notin I(p,v)}\big|\kappa_{v}\big(p,n\big)\big|^{2}\Big)^{\frac{1}{2}}+\Big(\sum_{n\notin I(p+ 2k+N,v)+N}\big|\kappa_{v}\big(p+ 2k+N,n\big)\big|^{2}\Big)^{\frac{1}{2}}.      
\end{align}
Under the constraints $|p|\wedge |p+2 k+N|\geq \frac{1}{4}|N|$ and $|v|\leq  \frac{1}{8} |N|$, Part (1) of Lem.~\ref{BadTerms} implies that the  terms on the bottom line of~(\ref{Rabble}) are bounded by multiples of  $|p|^{-1}\leq 4|N|^{-1}$ and $|p+2 k+N|^{-1}\leq 4|N|^{-1}$, respectively.  Since the total weight of the integration $\int_{|v|\leq  \frac{1}{8} |N|} dv\, \frac{j(v)}{\mathcal{R}}$ is less than one,  these terms make contributions to $C_{N}^{\prime}$ that vanish with order $\mathit{O}(|N|^{-\frac{1}{2}})$.   

The final task is to bound the sum on the second line of~(\ref{Rabble}).  Let $p':=p+ 2k+N$.  Note that  $p'\approx p+N $ since $| k|\leq \frac{1}{2}\lambda^{2}$.  The inequalities $|p|\wedge |p'|\geq \frac{1}{4}|N|$ and $|v| \leq \frac{1}{8}|N|$ imply that the set $  (I(p',v)+N )\cup I(p,v)$ must be empty unless the momenta $p$ and  $p'$ have opposite signs. If $p$ and  $p'$  have opposite signs,  the matching possibilities for elements in  $I(p,v)$  and $I(p',v)+N $ are those in the same rows below: 
\begin{center}
\begin{tabular}{|p{3cm}|p{4.5cm}|}
\hline 
\hspace{1.1cm}$I(p,v)$   &\hspace{1.3cm} $I(p',v)+N $ \\ 
\hline
\hline
\hspace{.05cm}$0, \mathbf{n}(p)-\mathbf{n}(p+v) $ &  $ -\mathbf{n}(p' )+N, -\mathbf{n}(p'+v )+N $\\
\hline
 $- \mathbf{n}(p), -\mathbf{n}(p+v)$ &  $ N, -\mathbf{n}(p' ) +\mathbf{n}(p'+v )+N$  \\
\hline 
\end{tabular}

\end{center}
However, the inequalities $|p+\frac{1}{2}N|\geq 1$ and $| k|\leq \frac{1}{2}\lambda^{2}$ leave only the following possibilities:
 \begin{center}
\begin{tabular}{|p{2.8cm}|p{3.8cm}|}
\hline
\hspace{.9cm}$I(p,v)$   &\hspace{.8cm} $I(p',v)+N $   \\
\hline \hline
\hspace{1.2cm} $0 $ & \hspace{.4cm} $ -\mathbf{n}(p'+v )+N $\\
\hline
\hspace{.1cm}$ \mathbf{n}(p)-\mathbf{n}(p+v) $ &  \hspace{.7cm} $ -\mathbf{n}(p' )+N $\\
\hline
\hspace{.7cm} $- \mathbf{n}(p)$ &  $ -\mathbf{n}(p' ) +\mathbf{n}(p'+v )+N$  \\
\hline
\hspace{.4cm} $-\mathbf{n}(p+v)$ & \hspace{1.5cm} $ N$  \\
 \hline
\end{tabular}

\end{center}
For each case of $n\in (I(p',v)+N )\cup I(p,v)$, either $n\neq 0,-\mathbf{n}(p)$ or $n':=n-N  $ satisfies $n'\neq 0,-\mathbf{n}(p')$.  The cases listed above are similar, so I will take  $n=N=-\mathbf{n}(p+v)$:    
\begin{align}\label{ShizyMay}
\sup_{\substack{|p+\frac{1}{2}N|\geq  1, \\  |p|\wedge |p+2 k+N|\geq \frac{1}{4}| N| }  } &\int_{|v|\leq \frac{1}{8}|N| }dv\,\frac{j(v)}{\mathcal{R}}\, \big|\kappa_{v}\big(p,N\big)\big|\,\big|\kappa_{v}\big(p+2 k+N,0\big)\big| \chi\big(N=-\mathbf{n}(p+v)   \big)\nonumber \\ &\leq    \frac{\varpi}{\mathcal{R}2^{\frac{1}{2}} } \sup_{  |p| \geq  \frac{1}{4}| N|   } \Big(  \int_{-\frac{1}{4} }^{ \frac{1}{4} }d\theta \, \big|\kappa_{\theta+\frac{1}{2}N-p  }\big(p,N\big)\big|^2      \Big)^{\frac{1}{2}}\nonumber  \\ & \leq\frac{c}{|N|^{\frac{1}{2}}},
\end{align}
where the second inequality holds for some $c>0$  by Part (2) of Lem.~\ref{BadTerms}.  In the expression on the first line of~(\ref{ShizyMay}), the integrand has support over the set $v\in N-p+[-\frac{1}{4},\frac{1}{4}]$ due to the factor $\chi(N=-\mathbf{n}(p+v)  )$.  In the fist inequality of~(\ref{ShizyMay}), I have used that $j(v)\leq \varpi $ by assumption (2) of List~\ref{Assumptions}, the Cauchy-Schwarz inequality, and that $\big|\kappa_{v}\big(p+2k+N,0\big)\big|\leq 1$.

\end{proof}

The proof of Lem.~\ref{Grim} proceeds by subtracting-off small parts from the expressions
$$\int_{s_{1}}^{s_{2}}dr \,   \mathcal{R}^{-1}U_{\lambda, -r}^{(k)}\big[\Psi(e^{-\frac{\textup{i}r}{\lambda^{\varrho}}H}\rho e^{\frac{\textup{i}r}{\lambda^{\varrho}}rH})\big]^{(k)}_{\scriptscriptstyle{Q}}\quad \text{and} \quad \int_{s_{1}}^{s_{2}}dr \,  U_{\lambda, -r}^{(k)}T_{k} U_{\lambda, r}^{(k)}[\rho  ]^{(k)}_{\scriptscriptstyle{Q}}$$
such that the difference between the remaining expressions can be bounded by an application of Lem.~\ref{Fructose}.  The parts removed from the expressions are associated with momenta near the lattice $\frac{1}{2}\Z$, as usual, and also momenta that are ``too high".  For technical purposes, capping the momentum is necessary to maintain that the difference of energies $|E(p-k )-E(p+k)|$ is small compared to $\frac{\alpha}{\pi}$, which is the scale for the gaps between the energy bands; recall $\lim_{N\rightarrow \infty} g_{N}=\frac{\alpha}{\pi}$.  Although assuming that $p\in \textup{Supp}(\mathbf{Q})$ guarantees  $p\pm k $ are on the same energy band, there will still be linear growth $|E(p- k )-E(p+ k)|\approx  4| k|\,| p|  $ for  high momenta $|p|\gg 1$ bounded away from the lattice $\frac{1}{2}\Z$.  The linear rate of growth,  $4| k|$, is slow for $\lambda\ll 1 $ under the constraint  $|k|\leq \frac{1}{2}\lambda^{2}$.

Recall from Sect.~\ref{SecPseudoPoisson} that the  operator  $T_{k}:L^{1}(\R)$ is defined to have kernel
$T_{k}(p,p')= \mathcal{R}^{-1}J_{k}(p,p') $ and $U_{\lambda, t}^{(k)}:L^{1}(\R)$  acts as multiplication by the function $U_{\lambda, t}^{(k)}(p)=e^{-\frac{\textup{i}t}{\lambda^{\varrho}}\left(E(p -k)-E(p +k)\right)    }$.

\begin{lemma}\label{Grim}
Let $\rho \in \mathcal{B}_{1}(\mathcal{H}    )$ be positive.   There is  a $C>0$ such that for all $0<\lambda<1$, $|k|\leq \frac{1}{2}\lambda^{2}$, and $s_{1}\leq s_{2}$,
\begin{align*}
  \Big\| \int_{s_{1}}^{s_{2}}dr\,\Big(  &\mathcal{R}^{-1} U_{\lambda, -r}^{(k)}\big[\Psi(e^{-\frac{\textup{i}r}{\lambda^{\varrho}} H}\rho e^{\frac{\textup{i}r}{\lambda^{\varrho}} H})\big]^{(k)}_{\scriptscriptstyle{Q}} -U_{\lambda, -r}^{(k)}T_{k} U_{\lambda, r}^{(k)}[\rho]^{(k)}_{\scriptscriptstyle{Q}} \Big)  \Big\|_{1}\\ &\leq C(s_{2}-s_{1})\Big( \lambda^{3+\frac{\gamma}{2} }\Tr[H\rho] +\lambda^{2}\|\langle \rho \rangle^{(0)}\|_{\infty}+ \lambda^{2}\|\rho\|_{\mathbf{1}}  \Big)+C\lambda^{\varrho }\|\rho\|_{\mathbf{1}}.
 \end{align*}

\end{lemma}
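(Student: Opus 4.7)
The plan is to expand both integrands explicitly in the extended-zone scheme, recognize the difference as a sum over $N\in\Z\setminus\{0\}$ of oscillatory-in-$r$ contributions, and combine an integration-by-parts in $r$ with Lem.~\ref{Fructose} to produce the leading $\lambda^{\varrho}\|\rho\|_{\mathbf{1}}$ error. Three residual regions---high momentum, neighborhood of the Bragg lattice, and a small mass remainder---yield the $(s_2-s_1)$-proportional corrections via direct mass estimates.

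First I would use the Bloch-to-plane-wave matrix element ${}_{\scriptscriptstyle{Q}}\langle p+v+n|e^{\textup{i}vX}|p\rangle_{\scriptscriptstyle{Q}} = \kappa_v(p,n)$, which is read off from~(\ref{HairCut}), to expand each of the two factors $e^{\pm\textup{i}vX}$ in $\mathcal{R}^{-1}\big[\Psi\big(e^{-\textup{i}rH/\lambda^{\varrho}}\rho e^{\textup{i}rH/\lambda^{\varrho}}\big)\big]^{(k)}_{\scriptscriptstyle{Q}}(p)$. After pulling the free Hamiltonian phases out and reparametrizing the resulting double sum $(n_1,n_2)\in\Z^2$ by $n:=n_2$ and $N:=n_1-n_2$, with $\tilde p := p-v-n-N/2$ and $\tilde k := k+N/2$, the expansion takes the compact form
\begin{align*}
\mathcal{R}^{-1}\sum_{N\in\Z}\sum_{n\in\Z}\int_{\R}dv\,j(v)\,\kappa_v(\tilde p-\tilde k,n+N)\overline{\kappa_v(\tilde p+\tilde k,n)}\,e^{\textup{i}r\omega_N/\lambda^{\varrho}}[\rho]^{(\tilde k)}_{\scriptscriptstyle{Q}}(\tilde p),
\end{align*}
with $\omega_N := [E(p-k)-E(p+k)]+[E(\tilde p+\tilde k)-E(\tilde p-\tilde k)]$. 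A direct computation shows the $N=0$ summand reproduces $U^{(k)}_{\lambda,-r}T_k U^{(k)}_{\lambda,r}[\rho]^{(k)}_{\scriptscriptstyle{Q}}(p)$ exactly, so the quantity whose $L^1$-norm must be bounded is the $N\neq 0$ tail.

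On the ``good'' region where $\tilde p - \tilde k \in\textup{Supp}(\mathbf{Q})$ and $|\tilde p|\leq M:=\lambda^{-3/2-\gamma/4}$, the constraint $|k|\leq\tfrac12\lambda^2$ and the bound $|E'(q)|\lesssim 1+|q|$ give $|E(p-k)-E(p+k)|\lesssim |k|M=\lambda^{1/2-\gamma/4}$, while $N\neq 0$ places $\tilde p\pm\tilde k$ on different bands (they are separated by $|2k+N|>1/2$ for $\lambda\ll 1$), so $|E(\tilde p+\tilde k)-E(\tilde p-\tilde k)|$ is bounded below by the infimum of the band gaps $g_m$, a positive constant. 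Hence $|\omega_N|\geq \tfrac12|E(\tilde p+\tilde k)-E(\tilde p-\tilde k)|$ for small enough $\lambda$, and integration by parts in $r$ yields $\bigl|\int_{s_1}^{s_2}dr\,e^{\textup{i}r\omega_N/\lambda^{\varrho}}\bigr|\leq 2\lambda^{\varrho}/|\omega_N|$. After the measure-preserving change of variable $p\to\tilde p$ and the summation shift $m:=n+N$, the good-region contribution becomes $2\lambda^{\varrho}\mathcal{R}^{-1}$ times exactly the quantity bounded by Lem.~\ref{Fructose}, producing the $C\lambda^{\varrho}\|\rho\|_{\mathbf{1}}$ term.

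In the residual regions I use the crude bound $\bigl|\int_{s_1}^{s_2}dr\,e^{\textup{i}r\omega_N/\lambda^{\varrho}}\bigr|\leq s_2-s_1$. When $\tilde p - \tilde k$ lies within $\lambda^2$ of $\tfrac12\Z$, Part~(3) of Prop.~\ref{MiscFiber} dominates $|[\rho]^{(\tilde k)}_{\scriptscriptstyle{Q}}(\tilde p)|$ by the arithmetic mean of $[\rho]^{(0)}_{\scriptscriptstyle{Q}}(\tilde p\mp\tilde k)$, and integration over the $\lambda^2$-bands on $\R$ lifts via Part~(1) of Prop.~\ref{PropFiberII} to integration of $\langle\rho\rangle^{(0)}$ over a subset of $\mathbb{T}$ of measure $O(\lambda^2)$, producing the $\lambda^2\|\langle\rho\rangle^{(0)}\|_\infty$ contribution; a leftover $\lambda^2\|\rho\|_{\mathbf{1}}$ piece absorbs the defect coming from the $N$-dependent shift $\tilde k$. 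For $|\tilde p|>M$, Chebyshev's inequality with the energy observable gives $\int_{|q|>M}[\rho]^{(0)}_{\scriptscriptstyle{Q}}(q)dq\leq M^{-2}\Tr[H\rho]=\lambda^{3+\gamma/2}\Tr[H\rho]$. The principal obstacle will be the $N$-summation in these residual regions: without the energy-difference denominator that Lem.~\ref{Fructose} supplied in the good region, the convergence in $N$ must come from the decay of the Bloch overlap coefficients $\kappa_v$ encoded in Lem.~\ref{BadTerms}, so the same case analysis of the resonant index set $I(p,v)$ used in proving Lem.~\ref{Fructose} needs to be rerun carefully, now under mass-based control rather than energy-gap-based control.
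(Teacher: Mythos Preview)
Your overall architecture—expand both integrands in the extended-zone scheme, observe that the $N=0$ term cancels, integrate the $N\neq 0$ oscillatory tail in $r$ to pick up a $\lambda^{\varrho}$, and invoke Lem.~\ref{Fructose}—matches the paper's. But there is a genuine gap in how the cutoffs are placed, and it bites in two ways.

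First, on your good region you impose $\tilde p-\tilde k\in\textup{Supp}(\mathbf{Q})$ and $|\tilde p|\leq M$, i.e.\ constraints on the \emph{incoming} variable only. Yet the bound $|E(p-k)-E(p+k)|\lesssim |k|M$ that you need concerns the \emph{outgoing} momentum $p=\tilde p+v+n+\tfrac{N}{2}$, which is unconstrained in your setup: it can be arbitrarily large, and worse, it can sit within $\lambda^{2}$ of $\tfrac12\Z$, in which case $p\pm k$ lie on different bands and $|E(p-k)-E(p+k)|$ is of order one. So the lower bound $|\omega_{N}|\geq \tfrac12|E(\tilde p+\tilde k)-E(\tilde p-\tilde k)|$ fails on a set of positive measure, and the oscillation argument breaks.

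Second, the obstacle you flag at the end—summability in $N$ on the residual regions without the energy denominator—is real and is not repaired by rerunning the $I(p,v)$ case analysis from Lem.~\ref{BadTerms}. That analysis, as used in the proof of Lem.~\ref{Fructose}, produces only $C_{N}'=\mathit{O}(|N|^{-1/2})$, which is summable there solely because of the extra $|N|^{-1}$ coming from the energy-gap denominator; stripped of that denominator the bound is $\mathit{O}(1)$ in $N$ and the sum diverges.

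The paper resolves both issues simultaneously by excising the bad regions \emph{before} expanding in $N$. Concretely, it introduces an outgoing cutoff $\mathbf{Q}''=\mathbf{Q}\mathbf{Q}'$ acting by multiplication in the $p$-variable, and an incoming cutoff via $\widetilde{\rho}:=\rho-(I-\mathbf{Q})\rho(I-\mathbf{Q})$ acting on the state itself. The residual pieces $(I-\mathbf{Q}'')[\Psi(\cdots)]^{(k)}_{\scriptscriptstyle Q}$, $(I-\mathbf{Q}'')T_{k}U^{(k)}_{\lambda,r}[\cdots]$, and the replacement $\rho\to\widetilde{\rho}$ are each bounded directly—no $N$-expansion—using only that $\Psi$ and $T_{k}$ are bounded in trace norm / $L^{1}$, Chebyshev with $E(p)\geq p^{2}$ for the high-momentum piece, and the torus-contraction estimate (Part~(3) of Prop.~\ref{PropMoreFiber}) for the lattice-neighborhood pieces. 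Only the core term $\mathbf{Q}''\int dr\,(\cdots)$ with $\widetilde{\rho}$ in place is expanded; there both cutoffs are active, the energy comparison $|E(p-k)-E(p+k)|\leq\tfrac12|E(\tilde p-\tilde k)-E(\tilde p+\tilde k)|$ is legitimate, and Lem.~\ref{Fructose} applies cleanly.
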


\begin{proof}

Let $\mathbf{Q}$, $\mathbf{Q}'$, $\mathbf{Q}''$ be the projections on $L^{2}(\R)$, or alternatively  $L^{1}(\R)$, that act as multiplication by the functions
 $$\mathbf{Q}(p)= 1-\sum_{n\in \Z} 1_{[\frac{1}{2}n -\lambda^{2}   , \frac{1}{2}n +\lambda^{2}  ] }(p), \quad  \mathbf{Q}'(p)=1_{|p|\leq 2\lambda^{-\frac{3}{2}-\frac{\gamma}{4}  }   },\quad \mathbf{Q}''(p)=\mathbf{Q}(p)\mathbf{Q}'(p).   $$
   Also, denote $\widetilde{\rho}=\rho-(I-\mathbf{Q})\rho (I-\mathbf{Q})  $.   There is a $C>0$ such that  the following inequalities hold for all $\lambda<1$, $|k|\leq \frac{1}{2} \lambda^{2}$, and  $\rho\in \mathcal{B}_{1}(\mathcal{H}  )$:   

\begin{enumerate}[(i).]

\item  
 \begin{align*}
 \Big\| \int_{s_{1}}^{s_{2}}dr\,  &  \mathcal{R}^{-1}U_{\lambda, -r}^{(k)}\big[\Psi(e^{-\frac{\textup{i}r}{\lambda^{\varrho}}H}\rho e^{\frac{\textup{i}r}{\lambda^{\varrho}}H})\big]^{(k)}_{\scriptscriptstyle{Q}} - \mathbf{Q}''\int_{s_{1}}^{s_{2}}dr\,\mathcal{R}^{-1}U_{\lambda, -r}^{(k)}\big[\Psi(e^{-\frac{\textup{i}r}{\lambda^{\varrho}} H}\widetilde{\rho} e^{\frac{\textup{i}r}{\lambda^{\varrho}}  H})\big]^{(k)}_{\scriptscriptstyle{Q}}  \Big\|_{1}\\ & \leq C(s_{2}-s_{1})\Big( \lambda^{3+\frac{\gamma}{2}}\Tr[H\rho] +\lambda^{2}\|\langle \rho \rangle^{(0)}\|_{\infty}+ \lambda^{2}\|\rho\|_{\mathbf{1}}  \Big),
 \end{align*}

\item 

\begin{align*}
\Big\| \int_{s_{1}}^{s_{2}}dr\,  & U_{\lambda, -r}^{(k)}T_{k} U_{\lambda, r}^{(k)}[\rho]^{(k)}_{\scriptscriptstyle{Q}}  -\mathbf{Q}''\int_{s_{1}}^{s_{2}}dr\, U_{\lambda, -r}^{(k)}T_{k} U_{\lambda, r}^{(k)}[\widetilde{\rho}]^{(k)}_{\scriptscriptstyle{Q}}  \Big)  \Big\|_{1} \\ & \leq C(s_{2}-s_{1})\Big(\lambda^{3+\frac{\gamma}{2}}\Tr[H\rho] +\lambda^{2}\|\langle \rho \rangle^{(0)}\|_{\infty}+ \lambda^{2}\|\rho\|_{\mathbf{1}}  \Big),
\end{align*}

\item  

\begin{align*}
  \Big\| \mathbf{Q}'' \int_{s_{1}}^{s_{2}}dr\,\Big(  &  \mathcal{R}^{-1}U_{\lambda, -r}^{(k)}\big[\Psi(e^{-\frac{\textup{i}r}{\lambda^{\varrho}}H}\widetilde{\rho} e^{\frac{\textup{i}r}{\lambda^{\varrho}}H})\big]^{(k)}_{\scriptscriptstyle{Q}} -U_{\lambda, -r}^{(k)}T_{k} U_{\lambda, r}^{(k)}[\widetilde{\rho}]^{(k)}_{\scriptscriptstyle{Q}} \Big)  \Big\|_{1}\leq C\lambda^{\varrho}\|\rho\|_{\mathbf{1}}.
\end{align*}

\end{enumerate}

\vspace{.4cm}

\noindent (i).\hspace{.2cm} By the triangle inequality, $\mathbf{Q}''(p)\leq 1$, and $1-\mathbf{Q}''(p)\leq 1-\mathbf{Q}(p)+1-\mathbf{Q}'(p)$, the left side of (i) is smaller than
 \begin{align}\label{LikeARat}
 \Big\| (I-&\mathbf{Q}') \int_{s_{1}}^{s_{2}}dr \,   \mathcal{R}^{-1}U_{\lambda, -r}^{(k)}\big[\Psi(e^{-\frac{\textup{i}r}{\lambda^{\varrho}}H}\rho e^{\frac{\textup{i}r}{\lambda^{\varrho}}H})\big]^{(k)}_{\scriptscriptstyle{Q}}\Big\|_{1}+\Big\|(I-\mathbf{Q}) \int_{s_{1}}^{s_{2}}dr   \, \mathcal{R}^{-1}U_{\lambda, -r}^{(k)}\big[\Psi(e^{-\frac{\textup{i}r}{\lambda^{\varrho}}H}\rho e^{\frac{\textup{i}r}{\lambda^{\varrho}}H})\big]^{(k)}_{\scriptscriptstyle{Q}}\Big\|_{1}\nonumber \\ &+ \Big\| \int_{s_{1}}^{s_{2}}dr   \, \mathcal{R}^{-1}U_{\lambda, -r}^{(k)}\big[\Psi(e^{-\frac{\textup{i}r}{\lambda^{\varrho}}H}(I-\mathbf{Q})\rho (I-\mathbf{Q}) e^{\frac{\textup{i}r}{\lambda^{\varrho}}H})\big]^{(k)}_{\scriptscriptstyle{Q}}\Big\|_{1}\nonumber  \\  \leq  & \mathcal{R}^{-1}\int_{s_{1}}^{s_{2}}dr \, \Big\| (I-\mathbf{Q}')  \big[\Psi(e^{-\frac{\textup{i}r}{\lambda^{\varrho}}H}\rho e^{\frac{\textup{i}r}{\lambda^{\varrho}}H})\big]^{(k)}_{\scriptscriptstyle{Q}}\Big\|_{1} + \mathcal{R}^{-1}\int_{s_{1}}^{s_{2}}dr\, \Big\|(I-\mathbf{Q}) \big[\Psi(e^{-\frac{\textup{i}r}{\lambda^{\varrho}}H}\rho e^{\frac{\textup{i}r}{\lambda^{\varrho}}H})\big]^{(k)}_{\scriptscriptstyle{Q}}\Big\|_{1}\nonumber \\ &+ (s_{2}-s_{1})\big\| (I-\mathbf{Q})\rho (I-\mathbf{Q}) \big\|_{\mathbf{1}}   .
 \end{align}
The inequality above uses that $U_{\lambda, -r}^{(k)}$ is a multiplication operator with multiplication function bounded by one, i.e., $|U_{\lambda, -r}^{(k)}(p)|\leq 1$. Also, for the last term, I have applied Part (3) of Prop.~\ref{PropFiberII} to get the inequality below:
\begin{align*}
  \Big\| \big[\Psi(e^{-\frac{\textup{i}r}{\lambda^{\varrho}}H}(I-\mathbf{Q})\rho (I-\mathbf{Q}) e^{\frac{\textup{i}r}{\lambda^{\varrho}}H})\big]^{(k)}_{\scriptscriptstyle{Q}}\Big\|_{1} & \leq \big\| \Psi(e^{-\frac{\textup{i}r}{\lambda^{\varrho}}H}(I-\mathbf{Q})\rho (I-\mathbf{Q}) e^{\frac{\textup{i}r}{\lambda^{\varrho}}H}  \big)  \big\|_{\mathbf{1}}\\ &= \mathcal{R}\| (I-\mathbf{Q})\rho (I-\mathbf{Q})\|_{\mathbf{1}}. 
  \end{align*}

For the first term on the right side of~(\ref{LikeARat}),  
\begin{align}\label{Bashir}
\frac{1}{\mathcal{R}}\int_{s_{1}}^{s_{2}}dr\, \Big\| (I-\mathbf{Q}')  \big[\Psi\big(e^{-\frac{\textup{i}r}{\lambda^{\varrho}}H}\rho e^{\frac{\textup{i}r}{\lambda^{\varrho}}H}\big)\big]^{(k)}_{\scriptscriptstyle{Q}}\Big\|_{1}& \leq \frac{1}{ \mathcal{R} }\int_{s_{1}}^{s_{2}}dr\, \int_{|p|>\lambda^{-\frac{3}{2}-\frac{\gamma}{4 }}  } \big[\Psi\big(e^{-\frac{\textup{i}r}{\lambda^{\varrho}}H}\rho e^{\frac{\textup{i}r}{\lambda^{\varrho}}H}\big)\big]^{(0)}_{\scriptscriptstyle{Q}}(p )\nonumber  \\  & \leq \frac{\lambda^{ 3+\frac{\gamma}{2}}  }{ \mathcal{R}}\int_{s_{1}}^{s_{2}}dr\,\int_{\R }dp E(p)\big[\Psi\big(e^{-\frac{\textup{i}r}{\lambda^{\varrho}}H}\rho e^{\frac{\textup{i}r}{\lambda^{\varrho}}H}\big)\big]^{(0)}_{\scriptscriptstyle{Q}}(p )\nonumber \\ &=  \frac{\lambda^{ 3+\frac{\gamma}{2} } }{ \mathcal{R}  }\int_{s_{1}}^{s_{2}}dr\,\Tr\big[ H \Psi\big( e^{-\frac{\textup{i}r}{\lambda^{\varrho}}H}\rho e^{\frac{\textup{i}r}{\lambda^{\varrho}}H} \big)\big]\nonumber \\ &=\lambda^{ 3+\frac{\gamma}{2} }(s_{2}-s_{1})\Big(\Tr[H\rho]+\frac{\sigma}{\mathcal{R}}\Tr[\rho]\Big).
\end{align}
The first inequality above is by Part (3) of Prop~\ref{MiscFiber} and holds for $\lambda<1$.  The second inequality in~(\ref{Bashir}) is Chebyshev's combined with $E(p)\geq p^{2}$, and  the second equality is by the explicit form $\Psi^{*}(H)=\mathcal{R}H+\sigma I$.

To bound the second term on the right side of~(\ref{LikeARat}), notice that by Part (3) of Prop.~\ref{MiscFiber} and the inequality $|2xy| \leq x^{2}+y^{2}$,
\begin{align}\label{Jujubee}
\int_{s_{1}}^{s_{2}}dr\,\Big\|(I-\mathbf{Q}) \big[\Psi\big(e^{-\frac{\textup{i}r}{\lambda^{\varrho}}H}\rho e^{\frac{\textup{i}r}{\lambda^{\varrho}}H}\big)\big]^{(k)}_{\scriptscriptstyle{Q}}\Big\|_{1}\leq  & \frac{1}{2 }\int_{s_{1}}^{s_{2}}dr\,\int_{\R}dp\,\big(1-\mathbf{Q}(p)\big) \big[\Psi\big(e^{-\frac{\textup{i}r}{\lambda^{\varrho}}H}\rho e^{\frac{\textup{i}r}{\lambda^{\varrho}}H}\big)\big]^{(0)}_{\scriptscriptstyle{Q}}\big(p-k\big)\nonumber \\ &+\frac{1}{2 }\int_{s_{1}}^{s_{2}}dr\,\int_{\R}dp\big(1-\mathbf{Q}(p)\big) \big[\Psi\big(e^{-\frac{\textup{i}r}{\lambda^{\varrho}}H}\rho e^{\frac{\textup{i}r}{\lambda^{\varrho}}H}\big)\big]^{(0)}_{\scriptscriptstyle{Q}}\big(p+k\big)
\nonumber \\ \leq &  6\lambda^{2} \varpi (s_{2}-s_{1})\|\rho \|_{\mathbf{1}} .
\end{align}
To see the second inequality above,   notice that the first is bounded by
\begin{align*}
 \frac{1}{2}\int_{s_{1}}^{s_{2}}dr &\int_{ [ -\lambda^{2}  ,\lambda^{2}  ]  \cup [\frac{1}{2} -\lambda^{2}    , \frac{1}{2} ] \cup [-\frac{1}{2}   ,- \frac{1}{2}+\lambda^{2} ]  }d\phi \,\Big\langle  \Psi\big(e^{-\frac{\textup{i}r}{\lambda^{\varrho}}H}\rho e^{\frac{\textup{i}r}{\lambda^{\varrho}}H}\big)  \Big\rangle_{\phi- \kappa }^{(0)}\\ &\leq  2\lambda^{2}\int_{s_{1}}^{s_{2}}dr\, \Big\|  \Big\langle  \Psi\big(e^{-\frac{\textup{i}r}{\lambda^{\varrho}}H}\rho e^{\frac{\textup{i}r}{\lambda^{\varrho}} H}\big)  \Big\rangle^{(0)}   \Big\|_{\infty}\\ &\leq 2\lambda^{2}\varpi (s_{2}-s_{1}) \|\rho\|_{\mathbf{1}},
\end{align*}
where $\kappa\in [-\frac{1}{4},\frac{1}{4})$ with $\kappa=k \mod \frac{1}{2}$.  The second inequality above is by Part (3) of Prop.~\ref{PropMoreFiber} and the fact  that the trace norm is invariant of unitary conjugation.  The second term after the first inequality in~(\ref{Jujubee}) has the same bound.

For the third term on the right side of~(\ref{LikeARat}),
\begin{align*}
  \big\| (I-\mathbf{Q})\rho (I-\mathbf{Q}) \big\|_{\mathbf{1}} = & \sum_{n\in\Z}\int_{ [\frac{1}{2}n -\lambda^{2}   , \frac{1}{2}n +\lambda^{2}]  }dp\, [\rho]^{(0)}_{\scriptscriptstyle{Q}}(p)\\ = & \int_{ [ -\lambda^{2}   , \lambda^{2} ]  \cup [\frac{1}{2} -\lambda^{2}   , \frac{1}{2} ] \cup [-\frac{1}{2}   ,- \frac{1}{2}+\lambda^{2} ]  }d\phi \,\langle \rho\rangle_{\phi}^{(0)}\\ \leq  & 4\lambda^{2} \big\| \langle \rho\rangle^{(0)}\big\|_{\infty} .
  \end{align*}

\vspace{.4cm}

\noindent (ii).\hspace{.2cm} This follows by similar analysis as for  (i).

\vspace{.4cm}

\noindent (iii).\hspace{.2cm}  By an evaluation of the integral, I have the following equality: 
\begin{align}\label{Hoser}
\mathbf{Q}^{\prime \prime}(p)\int_{s_{1}}^{s_{2}}dr\, &\Big(    \mathcal{R}^{-1}U_{\lambda, -r}^{(k)}\big[\Psi(e^{-\frac{\textup{i}r}{\lambda^{\varrho}}H}\rho e^{\frac{\textup{i}r}{\lambda^{\varrho}}H})\big]^{(k)}_{\scriptscriptstyle{Q}}(p) -U_{\lambda, -r}^{(k)}T_{k} U_{\lambda, r}^{(k)}[\widetilde{\rho}]^{(k)}_{\scriptscriptstyle{Q}}(p) \Big) \nonumber  \\   = & \lambda^{\varrho}  1_{A}(p,v,n,m)\int_{\R}dv\,\frac{j(v)}{\mathcal{R}}\sum_{n\neq m}\kappa_{v}\big(p- k-n-v,n\big)\overline{\kappa}_{v}\big(p+ k-m-v,m\big)\nonumber \\ &\times [\rho]_{\scriptscriptstyle{Q}}^{( k+\frac{1}{2}n-\frac{1}{2}m   ) }\big(p-v-\frac{1}{2}(n+m)\big) \nonumber \\ & \times \textup{i}\Big(\frac{e^{-\frac{\textup{i}s_{2}}{\lambda^{\varrho}}\big(E(p-k-n-v)-E(p+ k-m-v)-E(p-k )+E(p+k)\big)}   }{E(p- k-n-v)-E(p+k-m-v)-E(p- k )+E(p+ k) }\nonumber \\ &  -\frac{ e^{-\frac{\textup{i}s_{1}}{\lambda^{\varrho}}\big(E(p- k-n-v)-E(p+k-m-v)-E(p-k)+E(p+ k)\big)}     }{E(p- k-n-v)-E(p+ k-m-v)-E(p- k)+E(p+ k) } \Big) , 
\end{align}
where $A\subset \R^{2}\times \Z^2 $ is the set of $p,v,n,m$ such that:
\begin{enumerate}[(I).]
 \item  Either $p-k-n-v$ or $p+ k-m-v$ is not in the set $\cup_{N\in \Z}  [\frac{1}{2}N -\lambda^{2}   , \frac{1}{2}N +\lambda^{2} ] $.
 
 \item The number $p$ is not in the set $\cup_{N\in \Z}  [\frac{1}{2}N -\lambda^{2}   , \frac{1}{2}N +\lambda^{2} ] $, and 
 $|p|\leq 2\lambda^{-\frac{3}{2}-\frac{\gamma}{4} }$.  
 
 \end{enumerate}
I will argue below that statements (I) and (II) guarantee the inequality 
\begin{align}\label{Terk}
\Big| E(p- k )-E(p+ k)\Big| \leq \frac{1}{2}\Big|E(p- k-n-v)-E(p+ k-m-v)\Big|,
\end{align}
which obviously implies that
\begin{align}\label{Bezerk}
\Big|E\big(p- k-n-v\big)-& E\big(p+k-m-v)-E(p- k \big)+E\big(p+k\big) \Big|^{-1}\nonumber \\
&\leq 2\Big|E\big(p- k-n-v\big)-E\big(p+k-m-v\big) \Big|^{-1}.
\end{align}
It is an advantage to have a simplified denominator in later analysis, and the purpose of introducing $\mathbf{Q}''$ and $\widetilde{\rho}$ earlier in the proof was to avoid some scenarios in which the denominator on the left side of~(\ref{Bezerk}) becomes small.  

To see~(\ref{Terk}) notice that statement (I) and $|k|\leq \frac{1}{2}\lambda^{2}$ imply that  $p- k-n-v$ and $p+k-m-v$ always lie on different energy bands for $n\neq m$.  It follows that 
\begin{align}\label{Stars}
\Big|E\big(p-k-n-v\big)- E\big(p+ k-m-v)\Big|\geq \inf_{n\in \mathbb{N}}g_{n}:= c',  
\end{align}
where $g_{n}$ is the $n$th gap between energy bands.  Moreover, statement (II) implies that $p-k$ and $p+k$ belong to the same energy band,  and I have the bound    
\begin{align}\label{Stripes}
\Big|E\big(p-k\big)-E\big(p+k\big)\Big|& \leq 2|k|\Big(\sup_{\pm} \big|\mathbf{q}(p\pm  k)\big|\Big)\Big(\sup_{p\in \R-\frac{1}{2}\Z   } \big| \mathbf{q}'(p)\big|  \Big)\nonumber \\ & \leq  c''  |k| \big\lceil 2|p|\big\rceil   \leq 4c''|k|\,  |p |\leq \frac{c'}{2},   
\end{align}
where $c'':= \sup_{p\in \R-\frac{1}{2}\Z   } \big| \mathbf{q}'(p)\big|$ is finite by Part (2) or Lem.~\ref{CritEstimates}.  
The first inequality in~(\ref{Stripes}) uses calculus and $E(p):=q^{2}(p)   $.  The last inequality in~(\ref{Stripes}) is for small enough $\lambda$ and  uses the constraints $|k|\leq \frac{1}{2} \lambda^{2}$, $|p|\leq 2\lambda^{-\frac{3}{2}-\frac{\gamma}{4} }$.
   The second inequality in~(\ref{Stripes}) holds since $\mathbf{q}:\R\rightarrow \R^{+}$ is monotonically increasing, the highest value on the energy band containing  $p\pm  k$ is $\frac{1}{2}\lceil 2|p|\rceil$, and there is the explicit evaluation $\mathbf{q}(\frac{1 }{ 2}\lceil  2 |p|\rceil)= \frac{1}{2}\lceil 2 |p| \rceil $. Combining~(\ref{Stars}) and~(\ref{Stripes}) yields~(\ref{Terk}).

Making a change of variables $p- k-n-v\rightarrow p$ and $N=n-m$,  the relations (\ref{Hoser}) and (\ref{Bezerk}) imply the first inequality below 
\begin{align*}
 \Big\| \mathbf{Q}'' &\int_{s_{1}}^{s_{2}}dr\,\Big(    \mathcal{R}^{-1}U_{\lambda, -r}^{(k)}\big[\Psi(e^{-\frac{\textup{i}r}{\lambda^{\varrho}}H}\widetilde{\rho} e^{\frac{\textup{i}r}{\lambda^{\varrho}}H})\big]^{(k)}_{\scriptscriptstyle{Q}} -U_{\lambda, -r}^{(k)}T_{k} U_{\lambda, r}^{(k)}[\widetilde{\rho}]^{(k)}_{\scriptscriptstyle{Q}} \Big)  \Big\|_{1}\nonumber \\ &\leq \frac{\lambda^{\varrho}}{\mathcal{R}} \sum_{N\neq 0} \int_{\R} dp\,\mathbf{Q}(p)\int_{\R}dv\,j(v) \sum_{n\in \Z }\frac{\big|\rho\big(p,p+ 2k+N\big)\big|\,\big|\kappa_{v}(p,n)\big|\,\big|\kappa_{v}\big(p+2 k+N,n-N\big)\big|}{\big| E(p)-E(p+2 k+N ) \big| } \\ & \leq c\lambda^{\varrho}\|\rho\|_{\mathbf{1} }.
\end{align*}
The second inequality is for some $c>0$ by Lem.~\ref{Fructose}.

\end{proof}

\subsection{Proof of Theorem~\ref{ThmSemiClassical}}

 The proof of Thm.~\ref{ThmSemiClassical} follows closely  from Lem.~\ref{Grim} after unraveling the maps $\Phi_{\lambda, t}:\mathcal{B}_{1}(\mathcal{H})$ and $\Phi_{\lambda, t}^{(k)}:L^{1}(\R)$ through the pseudo-Poisson representation of Sect.~\ref{SecPandL} and introducing a  telescoping sum of intermediary dynamics that evolve according to the original dynamics up to the $n$th Poisson time and the idealized dynamics afterwards.  There is a technical difficulty in the application of Lem.~\ref{HorseMeister} resulting from the presence of the factor $\Tr[H\rho]$ in the upper bound since $\Tr[H\Phi_{\lambda,\xi, t_{n}}(\check{\rho})]$ increases linearly with $n\in \mathbb{N}$ by Part (3) of Lem.~\ref{LemSubMartBasic}.  This small problem is resolved by considering a suitable time cut-off that avoids unmanageable energy growth  and by bounding the remainder through a simpler estimate.

\vspace{.4cm}

\begin{proof}[Proof of Thm.~\ref{ThmSemiClassical}]

  For $\mathcal{N}$ and $\xi$ defined as in Lem.~\ref{Trivial} and $0\leq r\leq t$, define   
$$\Phi_{\lambda,\xi,r, t}^{(k)}:= U_{\lambda, t-t_{\mathcal{N}}}^{(k)} T_{k} \cdots  U_{\lambda, t_{n+1}-t_{n}}^{(k)}T_{k} U_{\lambda, t_{n}-r}^{(k)},     $$ 
where $t_{n}$ is the first Poisson time $>r$.  By Lem.~\ref{Trivial} I have the first equality below: 
\begin{align*}
\big[\Phi_{\lambda,t }( \check{\rho}_{\lambda})\big]^{(k)}_{\scriptscriptstyle{Q}}-\Phi_{\lambda,t }^{(k)}[ \check{\rho}_{\lambda}]^{(k)}_{\scriptscriptstyle{Q}}& = \mathbb{E}\Big[  \big[\Phi_{\lambda,\xi, t} (  \check{\rho}_{\lambda})\big]^{(k)}_{\scriptscriptstyle{Q}}-   \Phi_{\lambda,\xi, t}^{(k)}[ \check{\rho}_{\lambda}]^{(k)}_{\scriptscriptstyle{Q}} \Big]\\ &=\mathbb{E}\Big[\sum_{n=1}^{\mathcal{N}_{t}(\xi)}      \Phi_{\lambda,\xi,t_{n}, t}^{(k)}\big[\Phi_{\lambda, \xi, t_{n}}  (  \check{\rho}_{\lambda})\big]^{(k)}_{\scriptscriptstyle{Q}}-\Phi_{\lambda,\xi,t_{n-1}, t}^{(k)}\big[\Phi_{\lambda,\xi,  t_{n-1}} (  \check{\rho}_{\lambda})\big]^{(k)}_{\scriptscriptstyle{Q}}          \Big].  
\end{align*}
For the second equality, I have inserted terms   $ \Phi_{\lambda,\xi,t_{n}, t}^{(k)}\big[\Phi_{\lambda, \xi, t_{n}}  ( \check{\rho}_{\lambda})\big]^{(k)}_{\scriptscriptstyle{Q}}$ in the form of a telescoping sum.    The difference between $\big[\Phi_{\lambda,t }(\check{\rho}_{\lambda} )\big]^{(k)}_{\scriptscriptstyle{Q}}$ and $\Phi_{\lambda,t }^{(k)}[\check{\rho}_{\lambda} ]^{(k)}_{\scriptscriptstyle{Q}}$ at time $t=\frac{T}{\lambda^{\gamma}}$ is smaller than
\begin{align}\label{Beaut}
\Big\| \big[&\Phi_{\lambda,\frac{T}{\lambda^{\gamma}} }( \check{\rho}_{\lambda} )\big]^{(k)}_{\scriptscriptstyle{Q}}-\Phi_{\lambda,\frac{T}{\lambda^{\gamma}} }^{(k)}[\check{\rho}_{\lambda} ]^{(k)}_{\scriptscriptstyle{Q}}\Big\|_{1}\leq  2e^{-\frac{\mathcal{R}T}{\lambda^{\gamma}}}\sum_{\mathcal{N}=\lfloor  \frac{\mathcal{R}T}{\lambda^{2}}\rfloor+1}^{\infty} \frac{1}{\mathcal{N}!}\big(\frac{\mathcal{R}T}{\lambda^{\gamma}}\big)^{\mathcal{N}}\nonumber \\ & + e^{-\frac{\mathcal{R}T}{\lambda^{\gamma}}}\sum_{\mathcal{N}=1}^{\lfloor  \frac{\mathcal{R}T}{\lambda^{2}}\rfloor}\mathcal{R}^{\mathcal{N}}\sum_{n=1}^{\mathcal{N}}\Big \|\int_{0\leq t_{1}\cdots \leq t_{\mathcal{N}}\leq \frac{T}{\lambda^{\gamma}} }    \Phi_{\lambda,\xi,t_{n}, \frac{T}{\lambda^{\gamma}} }^{(k)}\big[\Phi_{\lambda, \xi, t_{n}}  ( \check{\rho}_{\lambda} )\big]^{(k)}_{\scriptscriptstyle{Q}}-\Phi_{\lambda,\xi,t_{n-1}, \frac{T}{\lambda^{\gamma}} }^{(k)}\big[\Phi_{\lambda,\xi,  t_{n-1}} ( \check{\rho}_{\lambda} )\big]^{(k)}_{\scriptscriptstyle{Q}}     \Big \|_{1} .
\end{align}
In the above, I have applied the triangle inequality to the telescoping sums for the first $\lfloor  \frac{\mathcal{R} T}{\lambda^{2}}\rfloor$ terms.  For the remaining terms, I have used that  $ \Phi_{\lambda,\xi, t}^{(k)}$ is contractive in the $1$-norm, $\Phi_{\lambda,\xi,t}$ is contractive in the trace norm, and $\|[  \check{\rho}_{\lambda}]^{(k)}_{\scriptscriptstyle{Q}}\|_{1}\leq \| \check{\rho}_{\lambda}\|_{\mathbf{1}}=1$.  The first term on the second line of~(\ref{Beaut}) decays superpolynomially as $\lambda$ goes to zero.   

A single term from the sum on the second line of ~(\ref{Beaut}) is smaller than 
\begin{align}\label{Janet} \Big\| &\int_{0\leq t_{1}\cdots \leq t_{\mathcal{N}}\leq \frac{T}{\lambda^{\gamma}} }    \Phi_{\lambda,\xi,t_{n},\frac{T}{\lambda^{\gamma}} }^{(k)}\big[\Phi_{\lambda, \xi, t_{n}}  ( \check{\rho}_{\lambda} )\big]^{(k)}_{\scriptscriptstyle{Q}}-\Phi_{\lambda,\xi,t_{n-1}, \frac{T}{\lambda^{\gamma}} }^{(k)}\big[\Phi_{\lambda,\xi,  t_{n-1}} ( \check{\rho}_{\lambda} )\big]^{(k)}_{\scriptscriptstyle{Q}}     \Big\|_{1}\nonumber \\ \leq &\int_{0\leq t_{1}\cdots \leq t_{n-1}\leq t_{n+1}\leq \cdots  t_{\mathcal{N}}\leq \frac{T}{\lambda^{\gamma}}}  \Big\| \int_{t_{n-1}}^{t_{n+1}}dt_{n}\,\Big(    \mathcal{R}^{-1}U_{\lambda, -t_{n}+t_{n-1} }^{(k)}\big[\Psi(e^{-\frac{\textup{i} (t_{n}-t_{n-1} )}{\lambda^{\varrho}} H}\Phi_{\lambda, \xi, t_{n-1}}  ( \check{\rho}_{\lambda} ) e^{\frac{\textup{i} (t_{n}-t_{n-1} )}{\lambda^{\varrho}}H})\big]^{(k)}_{\scriptscriptstyle{Q}}\nonumber \\ &-U_{\lambda, -t_{n}+t_{n-1} }^{(k)}T_{\lambda}^{(k)} U_{\lambda, t_{n}-t_{n-1} }^{(k)}[\Phi_{\lambda, \xi, t_{n-1}}  (\check{\rho}_{\lambda} )]^{(k)}_{\scriptscriptstyle{Q}}  \Big) \Big\|_{1}\nonumber \\
\leq & \frac{C}{\mathcal{N}!}\big(\frac{T}{\lambda^{\gamma}}\big)^{\mathcal{N}}\Big(\lambda^{3+\frac{\gamma}{2} }  \Tr\big[H\Phi_{\lambda, \xi, t_{n-1}}  ( \check{\rho}_{\lambda} )\big] +\lambda^{2}\big\|\big\langle \Phi_{\lambda, \xi, t_{n-1}}  ( \check{\rho}_{\lambda} ) \big\rangle^{(0)}\big\|_{\infty}+ \lambda^{2} \Big)   +\frac{C\lambda^{\varrho}}{(\mathcal{N}-1)!}\big(\frac{T}{\lambda^{\gamma}}\big)^{\mathcal{N}-1}\nonumber  \\ \leq &    \frac{C'\lambda^{1+\frac{\gamma}{2}}}{\mathcal{N}!} \big(\frac{T}{\lambda^{\gamma}}\big)^{\mathcal{N}} +\frac{C\lambda^{\varrho}}{(\mathcal{N}-1)!}\big(\frac{T}{\lambda^{\gamma}}\big)^{\mathcal{N}-1} 
\end{align}   
for some constants $C,C'>0$, where  I  identify $t_{0}\equiv 0$ and $t_{\mathcal{N}+1}\equiv \frac{T}{\lambda^{\gamma}}$ for the boundary terms on the second line.  The first inequality above uses that $\Phi_{\lambda,\xi,r, t}^{(k)}$ and  $U_{\lambda, t}^{(k)}$ are contractive in the $1$-norm.    The second inequality in~(\ref{Janet}) is by Lem.~\ref{Grim} and $\|  \Phi_{\lambda, \xi, t_{n-1}}  ( \check{\rho}_{\lambda} )  \|_{\mathbf{1}}\leq 1$.   The first term of the fourth line can be bounded by an application of Part (3) from Lem.~\ref{LemSubMartBasic} to get
$$\sup_{0\leq n \leq \frac{T\mathcal{R}}{\lambda^{2} } } \lambda^{3+\frac{\gamma}{2} } \Tr\big[H\Phi_{\lambda, \xi, t_{n-1}}  ( \check{\rho}_{\lambda} )\big]\leq \lambda^{3+\frac{\gamma}{2} }\Big( \Tr\big[H \check{\rho}_{\lambda} \big]+\frac{\sigma T}{\lambda^{2}} \Big)=\mathit{O}\big(\lambda^{1+\frac{\gamma}{2}}\big).
$$ 
The expression $\|\langle \Phi_{\lambda, \xi, t_{n-1}}  ( \check{\rho}_{\lambda} ) \rangle^{(0)}\|_{\infty}$ from the second term of the fourth line of~(\ref{Janet}) is uniformly bounded for small $\lambda$ by Lem.~\ref{HorseMeister} and Part (3) of Prop.~\ref{PropFiberII}.

With the result~(\ref{Janet}), the second line of~(\ref{Beaut}) is smaller than
$$e^{-\frac{\mathcal{R}T}{\lambda^{\gamma}}}\sum_{\mathcal{N}=1}^{\lfloor  \frac{\mathcal{R}T}{\lambda^{2} }\rfloor}\mathcal{N}\mathcal{R}^{\mathcal{N}}\Big(\frac{C'\lambda^{1+\frac{\gamma}{2}}}{\mathcal{N}!} \big(\frac{T}{\lambda^{\gamma}}\big)^{\mathcal{N}} +\frac{C\lambda^{\varrho}}{(\mathcal{N}-1)!}\big(\frac{T}{\lambda^{\gamma}}\big)^{\mathcal{N}-1}\Big) \leq  C'\mathcal{R}T \lambda^{1-\frac{\gamma}{2} }+C(1+\mathcal{R} T)\lambda^{\varrho-\gamma}.  $$
To obtain the above inequality, I have replaced the upper bound of the sum, $\lfloor  \frac{\mathcal{R}T}{\lambda^{2}}\rfloor$, by infinity and applied elementary manipulations to the Taylor expansion of an exponential function.  Since $1<\gamma <2$ and $\gamma<\varrho$, the right side tends to zero  with order $\lambda^{\iota}$ for   $\iota=\textup{min}(\varrho -\gamma,1-\frac{\gamma}{2}) $ and small  $\lambda$.

\end{proof}

\section{Classical Markovian approximation}\label{SecTransition}

This section contains the proof of Thm.~\ref{ThmSemiClassical}.  The main estimates that I use to prove Thm.~\ref{ThmSemiClassical} are stated in the lemma below.   Part (1) of  Lem.~\ref{HillDog} essentially states that the dispersion relation $E(p)$ has derivative close to $2p $ for most $p\in \R$ with $|p|\gg1$, and Part (2) is related to the continuity in $p$ of the coefficients $\kappa_{v }(p,n)$.  Both estimates require that the momenta involved are not too close to the lattice $\frac{1}{2}\Z$.   Recall that the function $\Theta:\R\rightarrow [-\frac{1}{4},\frac{1}{4}   )$ contracts momenta modulo $\frac{1}{2}$.

\begin{lemma} \label{HillDog}
Let $|k|\leq \lambda^{\frac{\gamma}{2}+\frac{5}{4}+\varrho}$ and $A\subset \R^{2}$ be the set of all $(p,v)$ satisfying $|\Theta(p)|,|\Theta(p+v)|>\lambda^{\frac{\gamma}{2}+1} $.  
There is a $C>0$ and an $\iota >0$ such that for all $\lambda<1$ and $(p,v)\in A$, 
\begin{enumerate}
\item  $\big| E\big(p-k\big)-E\big(p+k\big)+4p  k\big| < C \lambda^{\frac{\gamma}{2}+1+\varrho+\iota} \Big(|\Theta(p)|^{-1}+1\Big) $,

\item $ \sum_{n\in \mathbb{Z} }\Big| \kappa_{v }\big(p-k,\,n\big) \overline{\kappa}_{v }\big(p+k,\,n\big)- \big|\kappa_{v }\big(p,\,n\big)\big|^{2}    \Big|\leq C\big( 1+v^{2}\big)  \lambda^{\frac{\gamma}{2}+1+\iota } 
$.
\end{enumerate}

\end{lemma}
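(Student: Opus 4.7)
The plan is to deduce both inequalities from Taylor expansions in the small parameter $k$ around the base points $p$ (and $p+v$), exploiting that on $A$ the segments $[p-k,\,p+k]$ and $[p+v-k,\,p+v+k]$ remain safely away from $\tfrac{1}{2}\Z$: indeed $|k|/|\Theta(p)| \leq \lambda^{1/4+\varrho} \ll 1$, so the Bloch objects $\mathbf{q}(p)$ and the coefficients $\eta(p,m)$ from~(\ref{HairCut}) are smooth along these segments. In both parts the first-order Taylor term will produce the claimed cancellation, and the remainder will be absorbed into the $\lambda^{\iota}$ factor using the extra $\lambda^{1/4}$ of headroom between $|k| \leq \lambda^{\gamma/2+5/4+\varrho}$ and the target exponent $\lambda^{\gamma/2+1+\varrho}$.

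For Part (1), writing $E(p) = \mathbf{q}^{2}(p)$ and applying a second-order symmetric Taylor expansion produces
\[
E(p-k) - E(p+k) + 4pk \;=\; -2k\bigl(E'(p) - 2p\bigr) \,+\, R_{3}(p,k),
\]
where $R_{3}$ is a third-order remainder bounded by $C|k|^{3} \sup_{|s|\leq |k|}|E'''(p+s)|$. The main term splits as $E'(p)-2p = 2\bigl(\mathbf{q}(p)-p\bigr)\mathbf{q}'(p) + 2p\bigl(\mathbf{q}'(p)-1\bigr)$. I would estimate each factor by implicit differentiation of the Kr\"onig-Penney relation~(\ref{KronigPenney}), drawing on the derivative estimates of Lem.~\ref{CritEstimates}; on $A$ these produce bounds of the form $|\mathbf{q}(p)-p|,\,|p(\mathbf{q}'(p)-1)| \lesssim (|\Theta(p)|^{-1}+1)$ with an additional decaying factor that reflects how far $p$ sits from the resonant set, after which the bound for Part (1) follows by inserting $|k| \leq \lambda^{\gamma/2+5/4+\varrho}$ and choosing $\iota$ no larger than $\tfrac{1}{4}$.

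For Part (2), I would Taylor expand $\kappa_{v}(p\pm k, n)$ to first order in $k$, obtaining for each $n$
\[
\kappa_{v}(p-k, n)\,\overline{\kappa_{v}(p+k, n)} - |\kappa_{v}(p, n)|^{2} \;=\; 2\mathrm{i}k\,\mathrm{Im}\bigl(\kappa_{v}(p, n)\,\overline{\partial_{p}\kappa_{v}(p, n)}\bigr) \;+\; O(k^{2}).
\]
Summing absolute values in $n$, together with Cauchy-Schwarz and the normalization $\sum_{n}|\kappa_{v}(p,n)|^{2}=1$, reduces the leading contribution to $2|k|\bigl(\sum_{n}|\partial_{p}\kappa_{v}(p, n)|^{2}\bigr)^{1/2}$. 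I would bound this by $C(1+|v|)$ by differentiating~(\ref{TheKappas}) term-by-term using the closed form of $\eta(p,m)$ from Sect.~\ref{SecConvComb}, where the shift $\eta(p+v+n,m-n)$ is the source of the $v$-dependence, with the dominant growth arising at indices $n$ near $-v$ where the $|p+m|^{-1}$ decay of $\eta$ in its first argument is weakest.

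The main obstacle will be Part (2): one must verify $\sum_{n}|\partial_{p}\kappa_{v}(p, n)|^{2} \leq C(1+v^{2})$ uniformly on $A$, and also control the $O(k^{2})$ remainder, which contains sums of the form $\sum_{n}|\kappa_{v}|\,|\partial_{p}^{2}\kappa_{v}|$ in which second derivatives of $\eta$ lose one further $|\Theta(p)|^{-1}$ factor. A natural strategy is to split the sum over $n$ into the ``Bragg-resonant'' indices in $I(p,v)$ of Sect.~\ref{SecFreidlin} (finitely many terms that may be bounded individually) and its complement (where Lem.~\ref{BadTerms} supplies quantitative decay), so that Cauchy-Schwarz on the leading piece and separate tail estimates on each regime combine to yield the required $(1+v^{2})$-controlled bound.
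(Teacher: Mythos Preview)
Your Part (1) is essentially the paper's argument. One cosmetic point: you invoke a third-order remainder bounded by $|k|^3\sup|E'''|$, which would require control on $\mathbf{q}'''$ that Lem.~\ref{CritEstimates} does not provide. The paper instead keeps the remainder at second order (an integral of $E''=2\mathbf{q}\mathbf{q}''+2(\mathbf{q}')^2$) and uses only Parts (1)--(3) of that lemma. This is a minor adjustment.

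Part (2) has a real gap. After Taylor expanding you apply Cauchy--Schwarz to reduce the first-order contribution to $2|k|\bigl(\sum_n|\partial_p\kappa_v(p,n)|^2\bigr)^{1/2}$ and then assert the inner sum is $\leq C(1+v^2)$ uniformly on $A$. It is not. Writing $\kappa_v(p,n)=\langle\widetilde\psi_{p+v+n}\,|\,e^{ivX_{\mathbb{T}}}\widetilde\psi_p\rangle$ and using Parseval in $n$, one piece of that sum equals $\|\partial_p\widetilde\psi_p\|_2^2$, which by the explicit form~(\ref{June}) blows up like $|\Theta(p)|^{-2}$. On $A$ this is only $\leq\lambda^{-\gamma-2}$, so after multiplying by $|k|$ you obtain $O(\lambda^{1/4+\varrho})$; this is not $\leq C\lambda^{\gamma/2+1+\iota}$ across the full range $\varrho>\gamma\in(1,2)$ (take e.g.\ $\gamma=1.1$, $\varrho=1.2$). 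The paper avoids this by noting that only $\mathrm{Im}\bigl[\overline{\kappa}_v(p,n)\partial_p\kappa_v(p,n)\bigr]$ enters the first-order term, and proving in Lem.~\ref{LemKappaDer}(2) that this imaginary part is bounded \emph{uniformly in $n,p,v$}: the singular $|\Theta(p)|^{-1}$ piece of $\partial_p\widetilde\psi_p$ is a \emph{real} scalar multiple of a component of $\widetilde\psi_p$, so it contributes only to the real part and drops out. Your Cauchy--Schwarz step discards precisely this cancellation.

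Your tail strategy via $I(p,v)$ and Lem.~\ref{BadTerms} also does not close as stated. That lemma gives $\sum_{n\notin I(p,v)}|\kappa_v(p,n)|^2\leq C(1+|p|^2)^{-1}$, which decays in $|p|$ but carries no $\lambda$-smallness; for $|p|=O(1)$ it yields nothing, while the statement requires a bound that is small in $\lambda$ uniformly over $A$. The paper's mechanism is different: it bounds each term of the sum uniformly by $C\lambda^{\gamma/2+1+\varrho}$ via Lem.~\ref{LemKappaDer}, then splits $n\in\Z$ by an energy cutoff $|E^{1/2}(p+v+n)-\mathcal{E}_{p,v}|\leq\lambda^{-\gamma/6-1/3-\varrho/3}$. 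The near set has $O(\lambda^{-\gamma/6-1/3-\varrho/3})$ elements, and the far set is controlled by Chebyshev together with the variance inequality of Lem.~\ref{LemVariance}, which gives $\sum_n\bigl(E^{1/2}(p+v+n)-\mathcal{E}_{p,v}\bigr)^2|\kappa_v(p,n)|^2\leq v^2$ and is the actual source of the $(1+v^2)$ factor in the statement.
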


The proofs for Parts (1) and (2) of Lem.~\ref{HillDog} are contained in Sects.~\ref{SecDisRel} and~\ref{SecCoe}, respectively.  

\subsection{Estimates for the dispersion relation}\label{SecDisRel}

The results of this section will require a closer examination of the function $\mathbf{q}:\R\rightarrow \R$ determined by the Kr\"onig-Penney relation~(\ref{KronigPenney}).  Recall that  $\mathbf{q}$ is anti-symmetric, increasing,  and satisfies
 $$\hspace{2cm}\frac{n}{2}=\mathbf{q}\big(\frac{n}{2}\big)=\lim_{\epsilon\searrow 0} \mathbf{q}\big(\frac{n}{2}-\epsilon\big)< \lim_{\epsilon\searrow 0} \mathbf{q}\big(\frac{n}{2}+\epsilon\big),\hspace{1cm}n\in \mathbb{N}.  $$ 
In words, the function $\mathbf{q}$ has jumps at points in $\frac{1}{2}\Z-\{0\}$ but is continuous from the direction of the origin.  It is convenient to view $\mathbf{q}(p)$ over bands $p\in\big(\frac{n-1}{2},\frac{n}{2} \big]$, $n\in \mathbb{N}$ over which the function is continuous.   For each $N\in \mathbb{N}$ define the functions $f_{N}:[0,\pi]\rightarrow [-1,\infty)$ and $g_{N}:[0,\pi]\rightarrow [0,\pi]$ as
\begin{eqnarray*}
   f_{N}(x)&:=&\cos(\pi-x)+\alpha \pi \frac{\sin(\pi-x)}{ \pi N-x }, \\  g_{N}(x)&:=&   f_{N}^{-1}\big( \cos(\pi-x)   \big).   
    \end{eqnarray*}
The function $\mathbf{q}:\R\rightarrow \R$  can be written in the form
\begin{align}\label{ReKronigPenney}
\hspace{5cm} \mathbf{q}(p)=\frac{1}{2} \lceil 2 p  \rceil - \frac{1}{2\pi}g_{N}\big( \pi \lceil 2p \rceil- 2\pi p  \big), \hspace{1cm} p> 0 .      
 \end{align}

The following proposition is a consequence of basic calculus, and I do not include the proof.   
\begin{proposition}\label{LemCalculus}
Set $\upsilon:= \alpha \pi   $.  
\begin{enumerate}
\item   The function $g_{N}:[0,\pi]\rightarrow [0,\pi]$ satisfies the differential equation 
$$  g_{N}'(x)=  \frac{ \sin(\pi-x)   }{-\frac{\upsilon\cos(\pi-x)}{\pi N-g_{N}(x)}+\sin(\pi-g_{N}(x))+\frac{\upsilon^{2} \sin(\pi-g_{N}(x)) }{(\pi N-g_{N}(x))^2  }+\frac{\upsilon\sin(\pi-g_{N}(x))}{(\pi N-g_{N}(x))^{2}  }    }  .     $$  

\item  The second derivative of $g_{N}$ can be written implicitly in the form
$$ g_{N}''(x)= -\frac{ \frac{\upsilon^{2}\cos(\pi-x)    }{ (\pi N-g_{N}(x))^{2}   }+r_{N}(x)   }{\big(-\frac{\upsilon\cos(\pi-x)}{\pi N-g_{N}(x)}+\sin(\pi-g_{N}(x))+\frac{\upsilon^{2} \sin(\pi-g_{N}(x)) }{(\pi N-g_{N}(x))^2  } +\frac{\upsilon\sin(\pi-g_{N}(x))}{(\pi N-g_{N}(x))^{2}  }  \big)^{3} } ,    $$
where $r_{N}:[0,\pi]\rightarrow \R$ is defined as
\begin{align*}
 r_{N}(x) :=  &\frac{ 2\upsilon^{2}\sin(\pi-g_{N}(x))}{(\pi N-g_{N}(x))^{3}  }-\frac{ 3\upsilon^{2}\sin^{2}(\pi-g_{N}(x))\cos(\pi-g_{N}(x))}{(\pi N-g_{N}(x))^{4}  }\\ &+\frac{ 2\upsilon \sin^{3}(\pi-g_{N}(x))}{(\pi N-g_{N}(x))^{3}  }-\frac{ \upsilon^{3} \sin^{3}(\pi-g_{N}(x))}{(\pi N-g_{N}(x))^{5}  } . 
 \end{align*}

\end{enumerate}
The first two terms of the denominators in (1) and (2) can be alternatively written with the equality 
\begin{align}\label{SmallRemark}
-\frac{\upsilon\cos(\pi-x)}{\pi N-g_{N}(x)}+\sin(\pi-g_{N}(x)) = \frac{\upsilon\cos(\pi-x)}{\pi N-g_{N}(x)}+\frac{ \sin^{2}(\pi-x)   }{ \sin(\pi-g_{N}(x))}.
\end{align}

\end{proposition}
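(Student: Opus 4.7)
My plan is to prove Proposition~\ref{LemCalculus} directly by implicit differentiation of the defining relation $f_N(g_N(x)) = \cos(\pi - x)$ obtained from the Kr\"onig-Penney equation~(\ref{KronigPenney}), followed by algebraic rewriting using that same relation to put the results in the stated form. Since the formulas only involve smooth one-variable functions on $(0,\pi)$, the core content is bookkeeping rather than analysis, and the main work is organizing the calculations to land on the particular presentation the author uses.

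For Part (1), I would begin by computing $f_N'(y)$ explicitly from
\begin{equation*}
f_N(y) = \cos(\pi - y) + \upsilon \frac{\sin(\pi - y)}{\pi N - y},
\end{equation*}
which gives $f_N'(y) = \sin(\pi - y) - \frac{\upsilon \cos(\pi - y)}{\pi N - y} + \frac{\upsilon\, \sin(\pi - y)}{(\pi N - y)^2}$. Applying the chain rule to $f_N(g_N(x)) = \cos(\pi - x)$ yields $g_N'(x) = \sin(\pi - x)/f_N'(g_N(x))$. The discrepancy between this raw denominator and the one stated in Part (1) is exactly the extra $\upsilon^2$-term plus the replacement of $\cos(\pi - g_N(x))$ by $\cos(\pi - x)$; both are accounted for by a single substitution using the defining equation
\begin{equation*}
\cos(\pi - g_N(x)) = \cos(\pi - x) - \upsilon\, \frac{\sin(\pi - g_N(x))}{\pi N - g_N(x)}.
\end{equation*}

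For Part (2), I would differentiate the formula in Part (1) using the quotient rule, observing that the numerator $\sin(\pi - x)$ produces $-\cos(\pi - x)$ and the denominator produces several terms through the chain rule together with the already-computed $g_N'(x)$. After multiplying through by the cube of the denominator, one collects the terms in powers of $(\pi N - g_N(x))^{-1}$. The $\upsilon^2$-coefficient of $\cos(\pi - x)/(\pi N - g_N(x))^2$ comes out immediately, and the remaining terms package into the definition of $r_N(x)$; the substitution $\cos(\pi - g_N(x)) \leftrightarrow \cos(\pi - x)$ above will again be needed at one or two junctures to replace any stray $\cos(\pi - g_N(x))$ factors. The final identity~(\ref{SmallRemark}) would then be checked by multiplying through by $\sin(\pi - g_N(x))(\pi N - g_N(x))$, applying $\sin^2 = 1 - \cos^2$ to convert into a difference of squares in cosines, and using the defining equation one last time to eliminate $\cos(\pi - g_N(x))$ in favor of $\cos(\pi - x)$.

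The only real obstacle I foresee is clerical: keeping sign conventions straight and ensuring that each use of the Kr\"onig-Penney substitution is compatible with the form being targeted, since there are several cosmetically different but algebraically equivalent ways to present $g_N'$ and $g_N''$. In particular, since the author has evidently chosen a specific hybrid form (mixing $\cos(\pi - x)$ and $\sin(\pi - g_N(x))$ in the same denominator), I would work out both the "pure" form in terms of $g_N(x)$ alone and the stated form, and use the defining relation as a dictionary between them; this is also the natural way to verify that~(\ref{SmallRemark}) holds as an identity on the range of $g_N$.
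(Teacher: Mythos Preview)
Your approach is correct and is exactly what the paper has in mind: the author states just before the proposition that it ``is a consequence of basic calculus, and I do not include the proof,'' so implicit differentiation of $f_N(g_N(x))=\cos(\pi-x)$ together with the Kr\"onig--Penney substitution $\cos(\pi-g_N(x))=\cos(\pi-x)-\upsilon\,\frac{\sin(\pi-g_N(x))}{\pi N-g_N(x)}$ is the intended route. Your plan for~(\ref{SmallRemark})---clearing denominators, writing $\sin^2(\pi-g_N)-\sin^2(\pi-x)$ as a difference of squares in cosines, and then eliminating $\cos(\pi-g_N(x))$ via the defining relation---is likewise the natural verification.
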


For the statement of Lem.~\ref{CritEstimates}, recall that the map $\Theta:\R\rightarrow [-\frac{1}{4},\frac{1}{4})$ contracts values in $p\in \R$ modulo $\frac{1}{2}$.

\begin{lemma}\label{CritEstimates}
There is $C>0$ such that  for all  $p \in \R $, 
\begin{enumerate}

\item $ \big|\mathbf{q}(p )-p\big|\leq  \frac{C}{ 1+|p|     }$,

\item  $ \big|\mathbf{q}'(p )-1\big|\leq  C   \textup{min}\Big\{1,  \frac{ 1 }{ |\theta(p)| (1+ |p|) } \Big\}$,

\item     $ \big|\mathbf{q}''(p)\big|\leq  C\textup{min}\Big\{\frac{1}{1+ |p| },\,  \frac{ 1 }{ |\theta(p)|^{3}(1+ |p|)^{2}   }\Big\}. $

\end{enumerate}

\end{lemma}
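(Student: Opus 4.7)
The plan is to use the explicit $g_N$-representation~(\ref{ReKronigPenney}) to convert the three estimates into estimates on $g_N$ and its derivatives, and then to derive the latter using the defining equation $f_N(g_N(x)) = \cos(\pi - x)$ and the formulas in Prop.~\ref{LemCalculus}. By anti-symmetry of $\mathbf{q}$ I may assume $p > 0$, and I will set $N := \lceil 2p\rceil$ and $x := \pi N - 2\pi p \in [0, \pi]$, so that
\begin{align*}
\mathbf{q}(p) - p = \frac{x - g_N(x)}{2\pi}, \quad \mathbf{q}'(p) = g_N'(x), \quad \mathbf{q}''(p) = -2\pi g_N''(x).
\end{align*}
Under this substitution $|\theta(p)|$ is proportional to $\min(x, \pi - x)/(2\pi)$, so the conditions ``$|\theta(p)|$ small'' in the lemma become ``$x$ near $0$ or $\pi$.'' The case of $|p|$ bounded causes no difficulty since $N$ enters the estimates only through factors $1/N \lesssim 1/(1+|p|)$.

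For Part (1), I will rearrange the defining relation to
\begin{align*}
\cos(\pi - g_N(x)) - \cos(\pi - x) = -\frac{\upsilon \sin(\pi - g_N(x))}{\pi N - g_N(x)},
\end{align*}
whose right-hand side has absolute value at most $\upsilon/(\pi(N-1)) = O(1/N)$. Applying the sum-to-product identity on the left and inverting via the mean value theorem yields $|g_N(x) - x| = O(1/N)$ whenever $\min(x, \pi - x)$ is bounded below. Near the endpoints I will use that $f_N(0) = -1$ and $f_N(\pi) = 1$ force $g_N(0) = 0$ and $g_N(\pi) = \pi$, and combine a second-order Taylor expansion of cosine with the above relation to bootstrap $|g_N(x) - x| = O(1/N)$ in a neighborhood of each endpoint.

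For Parts (2) and (3), I plan to apply Prop.~\ref{LemCalculus} directly. For Part (2), with $D(x)$ the denominator in Prop.~\ref{LemCalculus}(1), I write $g_N'(x) - 1 = [\sin(\pi - x) - D(x)]/D(x)$; the numerator expands as $-[\sin(\pi - g_N(x)) - \sin(\pi - x)] + O(1/N)$, which is of order $1/(1+|p|)$ by Part (1) and the MVT, and dividing by a lower bound $D(x) \gtrsim |\theta(p)| + 1/(1+|p|)$ gives the sharp estimate; the uniform bound $|g_N'(x)| \lesssim 1$, immediate from the explicit formula, supplies the other term in the minimum. Part (3) likewise uses Prop.~\ref{LemCalculus}(2): the numerator $\upsilon^2 \cos(\pi - x)/(\pi N - g_N(x))^2 + r_N(x)$ is uniformly $O(1/N^2)$ by inspection of the formula for $r_N$, and dividing by $D(x)^3$ yields $|g_N''(x)| \lesssim (1+|p|)^{-2}(|\theta(p)| + 1/(1+|p|))^{-3}$; splitting cases according to whether $|\theta(p)|(1+|p|) \geq 1$ or not produces the stated minimum.

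The hard part is obtaining the sharp lower bound $D(x) \gtrsim |\theta(p)| + 1/(1+|p|)$ near the band edges $x \approx 0, \pi$, where the naive bound $D(x) \geq \sin(\pi - g_N(x))$ degenerates and the term $-\upsilon \cos(\pi-x)/(\pi N - g_N(x))$ can be of indefinite sign. The alternate form~(\ref{SmallRemark}) is the essential tool: it rewrites the leading two terms of $D(x)$ as $\upsilon \cos(\pi - x)/(\pi N - g_N(x)) + \sin^2(\pi - x)/\sin(\pi - g_N(x))$, so that for $x < \pi/2$ the non-negative $\sin^2(\pi - x)/\sin(\pi - g_N(x))$ term contributes a factor proportional to $x \sim |\theta(p)|$ (using Part (1) to approximate $\sin(\pi - g_N(x))$ by $\sin(\pi - x)$ up to error $O(1/N)$), while for $x > \pi/2$ the first term becomes positive and supplies the $1/N$ floor directly.
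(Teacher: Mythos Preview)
Your overall strategy matches the paper's: reduce to statements about $g_N$ via~(\ref{ReKronigPenney}), use the defining relation for Part~(1), and use Prop.~\ref{LemCalculus} with the identity~(\ref{SmallRemark}) for Parts~(2)--(3), splitting at $x=\pi/2$. For Part~(1) the paper uses a slightly cleaner uniform argument (bounding $\cos(\pi-x)-\cos(\pi-g_N(x))$ from below by the area of a triangle under the concave graph of $y\mapsto\sin(\pi-g_N(x)-y)$, yielding $x-g_N(x)\le 2\upsilon/(\pi N-g_N(x))$ directly), but your MVT-plus-endpoint scheme also works.

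There is, however, a real gap in your lower bound on $D(x)$ for $x<\pi/2$. You use the \emph{alternate} form~(\ref{SmallRemark}) there and rely on the term $\sin^2(\pi-x)/\sin(\pi-g_N(x))$ to give $D(x)\gtrsim x$; but the companion term $+\upsilon\cos(\pi-x)/(\pi N-g_N(x))$ is \emph{negative} on $[0,\pi/2)$ (of size $\sim -1/N$ near $x=0$), so the alternate form only gives $D(x)\gtrsim x-C/N$. This fails to supply the $1/N$ floor when $x\lesssim 1/N$, and your separate claim that $|g_N'|\lesssim 1$ is ``immediate'' runs into the same problem: $\sin(\pi-x)/D(x)$ is not obviously bounded there without that floor. (For Part~(3) the issue is worse, since you divide by $D(x)^3$.) The paper's fix is simply to swap the roles of the two forms: on $[0,\pi/2)$ use the \emph{original} denominator from Prop.~\ref{LemCalculus}(1), where $-\upsilon\cos(\pi-x)/(\pi N-g_N(x))>0$ directly provides the $1/N$ floor, and reserve the alternate form for $[\pi/2,\pi)$, where $+\upsilon\cos(\pi-x)/(\pi N-g_N(x))>0$ plays the same role. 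With that correction, everything in your plan goes through and yields $D(x)\gtrsim \min\{x,\pi-x\}+1/N$ as you intended.
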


\begin{proof}By the equality~(\ref{ReKronigPenney}), it is equivalent to show that there is a $C>0$ such that all $x\in [0,\pi)$ and $N\in \mathbb{N}$,   
\begin{enumerate}
\item $\big|g_{N}(x)-x\big|\leq C \frac{1 }{ 1+N     }$,

\item $   \big|g_{N}'(x)-1\big|\leq  C   \textup{min}\Big\{1, \frac{ 1  }{ \textup{min}\{x,\pi-x\}   (1+ N) } \Big\}$,

\item   $ \big|g_{N}''(x)\big|\leq  C   \textup{min}\Big\{\frac{1}{ 1+N}, \frac{ 1  }{ (\textup{min}\{x,\pi-x\})^{3}   (1+ N)^{2} } \Big\} $.

\end{enumerate}

\noindent Part (1):  Clearly $g_{N}(x)\leq x$, since $f_{N}(x)\geq \cos(\pi-x)$ over the interval $[0,\pi]$.  The definition of $g_{N}$ gives the first equality below:
\begin{align}\label{Hazzelrod}
\frac{\upsilon  \sin(\pi-g_{N}(x))     }{\pi N-g_{N}(x)    }=&\cos(\pi-x)-\cos(\pi-g_{N}(x))\nonumber \\ =&\int_{0}^{x-g_{N}(x)}dy\, \sin(\pi-g_{N}(x)-y)\nonumber \\ \geq & \frac{1}{2}\big(x-g_{N}(x)\big) \sin(\pi-g_{N}(x)) 
 . 
\end{align}
The inequality in~(\ref{Hazzelrod}) uses that the function $F(y)=\sin(\pi-g_{N}(x)-y)$ is positive and  concave down over the interval  $y\in [0,x-g_{N}(x)]$; the area between graph of  $F(y)$ and the $y$-axis from $y=0$ to $y=x-g_{N}(x)$ thus encloses a triangle with height $F(0)=\sin(\pi-g_{N}(x)) $ and width $x-g_{N}(x)$.  From~(\ref{Hazzelrod}), it follows that
$$x-g_{N}(x)\leq \frac{2\upsilon}{\pi N-g_{N}(x)}=\mathit{O}(N^{-1}).  $$

\vspace{.4cm}

\noindent Part (2): 
It is convenient to use the form $g_{N}'(x)$ from Part (1)  of Lem.~\ref{LemCalculus} for  the domain $x\in [0,\frac{\pi}{2})$, and the alternative form using the remark~(\ref{SmallRemark}) for the domain $x\in [\frac{\pi}{2},\pi)$.  The analysis for the domains are similar, so I will discuss only $[0,\frac{\pi}{2})$.  By Part (1) of Prop.~\ref{LemCalculus} and since the terms $\frac{\upsilon^{2} \sin(\pi-g_{N}(x)) }{(\pi N-g_{N}(x))^2  }$, $\frac{\upsilon\sin(\pi-g_{N}(x))}{(\pi N-g_{N}(x))^{2}  } $ in the denominator of the expression for $g_{N}'(x)$ are positive over the domain $x\in [0,\pi)$,  I have the first inequality below:
\begin{align}\label{Jengo}
 \Big|g_{N}'(x)-1\Big| & \leq  \frac{\big| \sin(\pi-g_{N}(x))    -\sin(\pi-x)\big|+\mathit{O}(N^{-2}) }{\big|-\frac{\upsilon\cos(\pi-x)}{\pi N-g_{N}(x)}+\sin(\pi-g_{N}(x)) \big|  } \nonumber   \\ & \leq \Big(\frac{C}{1+N}+\mathit{O}(N^{-2})\Big) \textup{min}\Big\{ \frac{N}{-\upsilon \textup{cos}(\pi-x)} , \frac{1}{ \sin(\pi-g_{N}(x))}       \Big\}  \nonumber 
\\ &  \leq \mathit{O}(N^{-1}) \textup{min}_{+}\Big\{ \frac{N}{\upsilon \textup{cos}(\pi-x)   } , \frac{1}{\sin(\pi-x)-\frac{C}{1+N } }    \Big\} ,
\end{align}
where  $\textup{min}_{+}$ refers to the minimum positive value. The second inequality uses that sine has  derivative bounded by one and  Part (1) to guarantee that there is a  $C>0$ such that $|x-g_{N}(x)|\leq \frac{C}{1+N}$.  The third inequality bounds the difference between $\sin(\pi-x)$ and  $\sin(\pi-g_{N}(x))$ by  $\frac{C}{1+N } $ again.  The result can be easily seen by using linear lower bounds for the trigonometric functions on the third line of~(\ref{Jengo}).

\vspace{.4cm}

\noindent Part (3): Similar to Part (2).

\end{proof}

\vspace{.4cm}

\begin{proof}[Proof of Part (1) from Lem.~\ref{HillDog}] By writing $E(p\pm  k)$ in terms of first-order Taylor's formulas and using that $E(p)=\mathbf{q}^{2}(p)$, I have the equality
\begin{align}\label{RunRun}
E(p+ k)-E(p- k)-4 kp= &2 k\big(\mathbf{q}'(p)\mathbf{q}(p)-p \big)\nonumber \\ &+2\int_{- k}^{k }dv\,\int_{0}^{v}dw\,\Big(\mathbf{q}(p+w)\mathbf{q}''(p+w)+   \big|\mathbf{q}'(p+w)\big|^{2} \Big).   
\end{align}
It is thus sufficient to  bound the terms on the right side of~(\ref{RunRun}).
  The first term on the right side of~(\ref{RunRun}) has the bound
\begin{align*}
 |k|\,\big|\mathbf{q}'(p)\mathbf{q}(p)-p \big|\leq &  |k|\, \big|\mathbf{q}(p)\big|\,\big|\mathbf{q}'(p)-1\big| +| k|\,\big|\mathbf{q}(p)  -p \big| \\ \leq &  |k|\big( C +   |p|  \big)\frac{ C|\Theta(p)|^{-1} }{ 1+|p|  }+ |k|\frac{ C}{1+|p| } , \\ \leq & C'\lambda^{\frac{\gamma}{2}+\frac{5}{4}+\varrho }\Big(|\Theta(p)|^{-1}+1\Big)  , 
\end{align*}
where  the second inequality follows   for some  constant $C>1$ by applications of Parts (1) and (2) of Lem.~\ref{CritEstimates}.  The third inequality holds for $C'=C^{2}$.

For the second term on the right side of~(\ref{RunRun}),  
\begin{align*}
\int_{-k}^{k }dv\int_{0}^{v}dw\,\Big(& \big|\mathbf{q}(p+w)\big|\,\big|\mathbf{q}''(p+w)\big|+   \big|\mathbf{q}'(p+w)\big|^{2} \Big) \\ &\leq k^{2}\sup_{|r|\leq |k| } \big( \big|\mathbf{q}(p+r)\big|\,\big|\mathbf{q}''(p+r)\big|+\big|\mathbf{q}'(p+r)\big|^{2}\big)  \\  & \leq  k^{2}\sup_{|r|\leq |k| }\Big(   \frac{ C\big(C+|p+r| \big) |\Theta(p+r)|^{-3}  }{ \big(1+|p+r| \big)^{2}  }+\frac{  2C |\Theta(p+r)|^{-2} }{ (1+| p+r| )^{2}  }\Big)\\
& = C'' \lambda^{\frac{1}{2}+2\varrho } \Big(|\Theta(p)|^{-1}+1\Big)   ,
\end{align*}
where the second inequality is by Parts 1-3 of Lem.~\ref{CritEstimates}.  The third inequality holds for some $C''>0$ since $ |\Theta(p)|\geq \lambda^{\frac{\gamma}{2}+1}$ and thus $ |\Theta(p)|\gg  |k|$.  

Since $\varrho>\gamma>1$ the above bounds imply that~(\ref{RunRun}) is $\mathit{O}(\lambda^{\frac{\gamma}{2}+1+\varrho+\iota})$ for small enough $\iota>0$.

\end{proof}

\subsection{Estimates for the coefficients $\kappa_{v}(p,n)$  }\label{SecCoe}

\begin{lemma}\label{LemKappaDer}

There is a $C>0$ such that for all $p,v\in \R$, and $n\in \Z$,     
\begin{enumerate}
\item  $ \big|\frac{\partial^{2}}{\partial^{2} p}\kappa_{v}(p,n)\big|\leq  C\big(\frac{1}{|\Theta(p)|^{2} }+ \frac{1}{|\Theta(p+v)|^{2} }\big)  
, $

\item  $ \big|\textup{Im}\big[\overline{\kappa}_{v}(p,n)\frac{\partial}{\partial p}\kappa_{v}(p,n)\big]\big|\leq  C$.
\end{enumerate}

\end{lemma}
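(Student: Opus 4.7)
The plan is to leverage the explicit formulas for $\eta(p,m)$ and the smoothness-with-singularities estimates for $\mathbf{q}$ provided by Lem.~\ref{CritEstimates}. Recall that $\kappa_v(p,n) = \sum_m \overline{\eta}(p+v+n,m-n)\eta(p,m)$, and $\eta(p,m)$ is built from $\mathbf{q}(p)$ and the normalization $N_p$. By Lem.~\ref{CritEstimates}, the derivatives $\mathbf{q}'(p)$ and $\mathbf{q}''(p)$ carry singularities of orders $|\Theta(p)|^{-1}(1+|p|)^{-1}$ and $|\Theta(p)|^{-3}(1+|p|)^{-2}$ respectively, while the prefactor $e^{\textup{i}2\pi(\mathbf{q}(p)-p)}-1$ vanishes at $p\in\tfrac{1}{2}\Z$. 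These cancellations should yield the bounds.

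For Part (1), I would apply the Leibniz rule to $\partial_p^2\kappa_v(p,n)$ so as to distribute derivatives among $\eta(p,m)$ and $\overline{\eta}(p+v+n,m-n)$. After a term-by-term inspection of the closed form of $\eta(p,m)$, using Parts (1)--(3) of Lem.~\ref{CritEstimates} and uniform lower bounds on $N_p$, one shows that each $m$-coefficient of $\partial_p\eta(p,m)$ is $\mathit{O}(|\Theta(p)|^{-1})$ and of $\partial_p^2\eta(p,m)$ is $\mathit{O}(|\Theta(p)|^{-2})$, both with $\ell^2(\Z)$ norms in $m$ that are uniformly bounded. Applying Cauchy-Schwarz in the summation over $m$ (mirroring the normalization $\sum_m|\eta(p,m)|^2=1$) then delivers the bound $C(|\Theta(p)|^{-2}+|\Theta(p+v)|^{-2})$ uniformly in $n$, since no $n$-dependent growth enters.

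For Part (2), the delicate point is that the naive bound from Part (1) would suggest $\partial_p\kappa_v(p,n) = \mathit{O}(|\Theta(p)|^{-1})$, hence $\overline{\kappa}_v\partial_p\kappa_v$ could also diverge near the lattice. The imaginary-part bound requires a genuine cancellation. The cleanest route is to exploit the identity $\kappa_v(p,n) = {}_{\scriptscriptstyle Q}\langle p+v+n|e^{\textup{i}vX}|p\rangle_{\scriptscriptstyle Q}$ derived from~(\ref{HairCut}) and~(\ref{TheKappas}). Writing $\partial_p|p\rangle_{\scriptscriptstyle Q} = \textup{i}\mathcal{A}(p)|p\rangle_{\scriptscriptstyle Q} + |\phi(p)\rangle$ with $\mathcal{A}(p):=\textup{Im}\,{}_{\scriptscriptstyle Q}\langle p|\partial_p p\rangle_{\scriptscriptstyle Q}$ the Berry connection and $|\phi(p)\rangle\perp |p\rangle_{\scriptscriptstyle Q}$, the Berry contributions are real multiples of $|\kappa_v(p,n)|^2$ in the product $\overline{\kappa}_v\partial_p\kappa_v$ and so enter only through $\mathcal{A}(p)$, which is bounded because it is an expectation of a position-type operator within one period cell. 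The orthogonal piece $|\phi(p)\rangle$ contributes, after taking the imaginary part, only via terms that do not blow up at the lattice, since the singular $\mathbf{q}'(p)$-factors appear there with real coefficients and drop out of $\textup{Im}[\,\cdot\,]$.

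The main obstacle is rigorously pinning down the cancellation in Part (2). Berry-phase reasoning explains \emph{why} $\textup{Im}[\overline{\kappa}_v\partial_p\kappa_v]$ is bounded, but to deliver a clean inequality I expect to verify directly that the explicit $\mathbf{q}'(p)$-singular pieces of $\partial_p\eta(p,m)$, once multiplied against $\overline{\eta}(p+v+n,m-n)$ and summed, yield a contribution whose imaginary part cancels between the $\partial_p$ on the ket and the complex conjugate factor. Part (1) is then a straightforward but tedious exercise in differentiating the closed-form expressions and applying Lem.~\ref{CritEstimates}.
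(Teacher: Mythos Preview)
Your plan for Part~(1) is correct and essentially matches the paper. The paper uses the inner-product representation $\kappa_v(p,n)=\langle\widetilde{\psi}_{p+v+n}|e^{\textup{i}vX_{\mathbb{T}}}\widetilde{\psi}_p\rangle$ rather than the $\eta$-sum, applies Leibniz, and bounds $\|\partial_p^j\widetilde{\psi}_p\|_2$ via the explicit Bloch formula~\eqref{BlochFunctions} together with the lower bound $|e^{\textup{i}2\pi(\mathbf{q}(p)+p)}-1|\geq |\Theta(p)|$. Working with Bloch functions instead of $\eta$-coefficients avoids the infinite $m$-sum, but the content is the same.

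For Part~(2) there is a genuine gap in the Berry-connection route. Your decomposition $\partial_p|p\rangle_{\scriptscriptstyle Q}=\textup{i}\mathcal{A}(p)|p\rangle_{\scriptscriptstyle Q}+|\phi(p)\rangle$ separates the \emph{parallel} from the \emph{orthogonal} component, not the \emph{singular} from the \emph{bounded} component. Since $\|\partial_p\widetilde{\psi}_p\|_2^2=\mathcal{A}(p)^2+\|\phi(p)\|_2^2$ and the left side diverges like $|\Theta(p)|^{-2}$, even if $\mathcal{A}(p)$ turns out to be bounded (which is true here, but itself requires the explicit computation you defer), the singularity necessarily sits in $|\phi(p)\rangle$. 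The contribution $\overline{\kappa}_v\langle\widetilde{\psi}_{p+v+n}|e^{\textup{i}vX_{\mathbb{T}}}|\phi(p)\rangle$ is then $\mathit{O}(|\Theta(p)|^{-1})$ in modulus, and nothing about orthogonality to $|p\rangle_{\scriptscriptstyle Q}$ forces its imaginary part to be small. Your claim that ``the singular $\mathbf{q}'(p)$-factors appear there with real coefficients'' is precisely what needs to be proved, and the Berry decomposition does not exhibit it.

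The paper's mechanism is more concrete. From the closed form~\eqref{BlochFunctions} one splits $\widetilde{\psi}_p=\psi_p^-+\psi_p^+$ into the backward- and forward-wave pieces and checks directly that
\[
\partial_p\psi_p^-=\frac{\psi_p^-}{\sin\!\big(\pi(\mathbf{q}(p)+p)\big)}+\psi_p^{-,\prime},
\]
with $\psi_p^{-,\prime}$ and $\partial_p\psi_p^+$ uniformly bounded in $L^2(\mathcal{I})$. The crucial point is that the diverging coefficient $1/\sin(\pi(\mathbf{q}(p)+p))$ is \emph{real}, so the singular contributions to $\overline{\kappa}_v\,\partial_p\kappa_v$ can be removed from under $\textup{Im}[\,\cdot\,]$, leaving only inner products of uniformly bounded vectors. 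This real-coefficient structure is a feature of the explicit Dirac-comb Bloch functions and does not follow from the abstract Berry viewpoint; if you carry out the direct verification you mention at the end, the cleanest place to see it is in the $\psi_p^\pm$ splitting rather than in the $\eta(p,m)$ coefficients.
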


\begin{proof}\text{  }\\
Part (1): The following formula for $\kappa_{v}(p,n)$ is equivalent to the definition~(\ref{TheKappas}):
\begin{align}\label{Shera}
 \kappa_{v}(p,n)=\Big\langle \widetilde{\psi}_{p+v+n }\,\Big|\, e^{\textup{i}vX_{\mathbb{T}}}  \widetilde{\psi}_{p}       \Big\rangle .  
\end{align} 
By the product rule, the second derivative of $ \kappa_{v}(p,n)$ can be written in the form 
\begin{align}\label{JackAndJill}
\frac{\partial^{2}}{\partial^{2} p}\kappa_{v}(p,n)= &\Big\langle \frac{\partial^{2}}{\partial^{2} p}\widetilde{\psi}_{p+v+n }\,\Big|\, e^{\textup{i}vX_{\mathbb{T}}}  \widetilde{\psi}_{p}       \Big\rangle +\Big\langle \widetilde{\psi}_{p+v+n }\,\Big|\, e^{\textup{i}vX_{\mathbb{T}}} \frac{\partial^{2}}{\partial^{2} p} \widetilde{\psi}_{p}       \Big\rangle \nonumber \\ & +2\Big\langle \frac{\partial}{\partial p}\widetilde{\psi}_{p+v+n }\,\Big|\, e^{\textup{i}vX_{\mathbb{T}}} \frac{\partial}{\partial p} \widetilde{\psi}_{p}       \Big\rangle .
\end{align}
Since the operator $e^{\textup{i}vX_{\mathbb{T}}}\in \mathcal{B}\big(L^{2}(\mathbb{T})\big)$ has norm bounded by one, 
\begin{align}\label{Blimey}
\Big|\frac{\partial^{2}}{\partial^{2} p}\kappa_{v}(p,n)\Big|\leq \Big\| \frac{\partial^{2}}{\partial^{2} p}\widetilde{\psi}_{p+v+n }\Big\|_{2}+\Big\| \frac{\partial^{2}}{\partial^{2} p} \widetilde{\psi}_{p}       \Big\|_{2}  +2\Big\| \frac{\partial}{\partial p}\widetilde{\psi}_{p+v+n }\Big\|_{2}\Big\| \frac{\partial}{\partial p} \widetilde{\psi}_{p}       \Big\|_{2} .
\end{align}
The Bloch functions $ \widetilde{\psi}_{p}:\mathbb{T}\rightarrow \C$ have the form~(\ref{BlochFunctions}), where the normalization constant $N_{p}$ is equal to
$$N_{p}=2\pi+2\pi\frac{1-\cos\big(2\pi(\mathbf{q}(p)-p) \big) }{ 1-\cos\big(2\pi(\mathbf{q}(p)+p) \big)   }+\frac{\big|\cos\big(2\pi\mathbf{q}(p)   \big)-\cos\big(2\pi p   \big)\big|^{2}}{  1-\cos\big(2\pi(\mathbf{q}(p)+p)\big)   }.$$
The first two derivatives of $\widetilde{\psi}_{p} $ have the forms
\begin{align}\label{June}  \frac{\partial}{\partial p} \widetilde{\psi}_{p}=  \frac{\psi^{(1)}_{p}}{e^{\textup{i}2\pi(\mathbf{q}(p)+p)}-1  }   \quad \text{and} \quad \frac{\partial^{2}}{\partial^{2} p} \widetilde{\psi}_{p}=  \frac{\psi^{(2)}_{p}  }{\big(e^{\textup{i}2\pi (\mathbf{q}(p)+p)}-1 \big)^{2}}+\frac{\mathbf{q}''(p) \psi^{(3)}_{p}   }{ e^{\textup{i}2\pi (\mathbf{q}(p)+p)}-1 }
\end{align}
   for some $\psi^{(1)}_{p},\psi^{(2)}_{p},\psi^{(3)}_{p}\in L^{2}(\mathbb{T})$ that are uniformly bounded in norm for all  $p\in \R$.  The forms~(\ref{June}) use 
that $N_{p}\geq 2\pi$ is bounded away from zero and  that $\mathbf{q}'(p)$ is bounded by Part (2) of Lem.~\ref{CritEstimates}. 
   
    The modulus of the expression $e^{\textup{i}2\pi (\mathbf{q}(p)+p)}-1$ has the lower bound 
\begin{align}\label{Delaware}
\big| e^{\textup{i}2\pi(\mathbf{q}(p)+p)}-1\big|\geq \Big| \Theta\Big( \frac{ \mathbf{q}(p)+p}{2} \Big)  \Big|\geq  \big|\Theta(p)   \big| ,
\end{align}
where the first inequality is a piecewise linear lower bound for $\big| e^{4\pi\textup{i}x}-1\big|$, and the second uses that $q(p)\geq p$.  
Applying~(\ref{June}) and~(\ref{Delaware}) in~(\ref{Blimey}), I have the bound
\begin{align*}
\Big|\frac{\partial^{2}}{\partial^{2} p}\kappa_{v}(p,n)\Big|\leq &  \frac{B+C B}{\big|\Theta(p+v)\big|^{2} }+ \frac{B+C B}{\big|\Theta(p)\big|^{2} }+\frac{C B}{\big|\Theta(p)\big|\,\big|\Theta(p+v)\big| }  \\ \leq & C'\Big(\frac{1}{\big|\Theta(p)\big|^{2} }+ \frac{1}{\big|\Theta(p+v)\big|^{2} }   \Big).  
\end{align*}
where the constant $C>0$ is from bounding $|\mathbf{q}''(p)|$ with Part (3) of Lem.~\ref{CritEstimates}, and  $B>0$ is the supremum over  $ \|\psi^{(j)}_{p}\|_{2} $ for $p\in \R$ and $j\in \{1,2,3\}$.  The second inequality is for some $C'$ after applying the relation $2xy\leq x^{2}+y^{2}$ to $x=\frac{1}{|\theta(p)|}$ and $y=\frac{1}{|\theta(p+v)|}$ to the last term in the first line.      

\vspace{.4cm}

\noindent Part (2):  Differentiating~(\ref{Shera}) gives
\begin{align*}
\frac{\partial}{\partial p}\kappa_{v}(p,n)= \Big\langle \frac{\partial}{\partial p}\widetilde{\psi}_{p+v+n }\,\Big|\, e^{\textup{i}v X_{\mathbb{T}}}  \widetilde{\psi}_{p}       \Big\rangle +\Big\langle \widetilde{\psi}_{p+v+n }\,\Big|\, e^{\textup{i}vX_{\mathbb{T}}} \frac{\partial }{\partial p} \widetilde{\psi}_{p}       \Big\rangle \nonumber  .
\end{align*}
By the formula~(\ref{BlochFunctions}), the Bloch function $ \widetilde{\psi}_{p}\in L^{2}(\mathbb{T})$ has the form  $ \widetilde{\psi}_{p}=\psi_{p}^{-}+\psi_{p}^{+}$ with
\begin{align}\label{Mongrel}
\frac{\partial}{\partial p}\psi_{p }^{-}= \frac{ \psi_{p }^{-}   }{ \textup{sin}\big(\pi(\mathbf{q}(p)+p)   \big)    }+\psi_{p }^{- ,\prime},
\end{align}
where $ \psi_{p}^{-}, \psi_{p}^{+},\psi_{p }^{-,\prime}, \frac{\partial}{\partial p}\psi_{p }^{+}\in L^{2}(\mathbb{T})$ are uniformly bounded in norm for all   $p\in \R$. Since the factor  $\textup{sin}\big(\pi (\mathbf{q}(p)+p)\big)$ is real, I have the following equality:   
\begin{align*}
 \big|\textup{Im}\big[\overline{\kappa}_{v}(p,n)\frac{\partial}{\partial p}\kappa_{v}(p,n)\big]\big|=&\Big| \textup{Im}\Big[ \overline{\kappa}_{v}(p,n)\Big\langle   \psi_{p+v+n }^{- ,\prime} +\frac{\partial}{\partial p}\psi_{p+v+n }^{+}\,\Big|\, e^{\textup{i}vX_{\mathbb{T}}} \widetilde{\psi}_{p}       \Big\rangle \\ &+  \overline{\kappa}_{v}(p,n)\Big\langle   \widetilde{\psi}_{p+v+n }\,\Big|\, e^{\textup{i}vX_{\mathbb{T}}} \big(   \psi_{p }^{- ,\prime} +\frac{\partial}{\partial p}\psi_{p }^{+}\big) \Big\rangle\Big] \Big| \\ 
\leq & 2\sup_{ p\in \R}\max\Big\{\| \psi_{p }^{-,\prime}\|_{2},  \|  \frac{\partial}{\partial p}\psi_{p }^{+}   \|_{2}\Big\}:=C.
\end{align*}
The inequality above uses that $\kappa_{v}(p,n)\in \C$, $ e^{\textup{i}vX_{\mathbb{T}}}\in \mathcal{B}\big(L^{2}(\mathbb{T})\big)$, and  $ \widetilde{\psi}_{p}\in L^{2}(\mathbb{T})$ have norms less than one.  Since $ \psi_{p }^{-,\prime}$, $\frac{\partial}{\partial p}\psi_{p }^{+}$ are uniformly bounded in norm, the constant $C$ is finite.

\end{proof}

\begin{lemma}\label{LemVariance}

The following variance formula holds for a.e.  $(p,v)\in \R^{2}$:
$$\sum_{n\in \Z}\Big(  E^{\frac{1}{2}}(p+v+n)\,   \big| \kappa_{v}(p,n)    \big|^{2} - \sum_{m\in \Z}E^{\frac{1}{2}}(p+v+m)\,   \big| \kappa_{v}(p,m)    \big|^{2}    \Big)^{2}\leq v^{2} .   $$

\end{lemma}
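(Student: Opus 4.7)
The plan is to interpret the stated sum, after repairing the evident typographical defect of a missing outer weight $|\kappa_{v}(p,n)|^{2}$, as the variance of $H_{\phi_{1}}^{1/2}$ in the normalised pure state $\psi:=e^{\textup{i}vX_{\mathbb{T}}}\widetilde{\psi}_{p}\in L^{2}(\mathcal{I})$, with $\phi_{1}:=(p+v)\,\textup{mod}\,1$. The identification rests on noting that $\kappa_{v}(p,n)=\langle\widetilde{\psi}_{p+v+n}|\psi\rangle$ and that $\{\widetilde{\psi}_{p+v+n}\}_{n\in\Z}$ is an orthonormal basis of $L^{2}(\mathcal{I})$ diagonalising $H_{\phi_{1}}$, with corresponding eigenvalues $E(p+v+n)$. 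The claim therefore reduces to the operator-theoretic estimate $\textup{Var}_{\psi}(H_{\phi_{1}}^{1/2})\le v^{2}$.

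The central step is the variational characterisation $\textup{Var}_{\psi}(A)\le \|(A-c)\psi\|^{2}$, valid for every $c\in\R$, with the distinguished choice $c^{2}=E(p)+v^{2}$. Using the fiberwise intertwining identity
\begin{equation*}
e^{-\textup{i}vX_{\mathbb{T}}}H_{\phi_{1}}e^{\textup{i}vX_{\mathbb{T}}}=(P_{\mathbb{T}}+v)^{2}+V=H_{\phi_{0}}+2vP_{\mathbb{T}}+v^{2}
\end{equation*}
(with $\phi_{0}:=p\,\textup{mod}\,1$), together with the eigenvalue equation $H_{\phi_{0}}\widetilde{\psi}_{p}=E(p)\widetilde{\psi}_{p}$, one finds $(H_{\phi_{1}}-c^{2})\psi=2v\,e^{\textup{i}vX_{\mathbb{T}}}P_{\mathbb{T}}\widetilde{\psi}_{p}$: the choice of $c^{2}$ is exactly what cancels the scalar contributions $E(p)$ and $v^{2}$, leaving only the linear $2vP_{\mathbb{T}}$. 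Combined with the operator factorisation $(H_{\phi_{1}}^{1/2}-c)=(H_{\phi_{1}}^{1/2}+c)^{-1}(H_{\phi_{1}}-c^{2})$ from functional calculus on $H_{\phi_{1}}$, this yields the clean identity
\begin{equation*}
(H_{\phi_{1}}^{1/2}-c)\psi=2v\,(H_{\phi_{1}}^{1/2}+c)^{-1}e^{\textup{i}vX_{\mathbb{T}}}P_{\mathbb{T}}\widetilde{\psi}_{p}.
\end{equation*}

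Taking norms and using the unitarity of $e^{\textup{i}vX_{\mathbb{T}}}$ along with the resolvent bound $(H_{\phi_{1}}^{1/2}+c)^{-1}\le c^{-1}$ (which follows from $H_{\phi_{1}}\ge 0$), I estimate
\begin{equation*}
\textup{Var}_{\psi}(H_{\phi_{1}}^{1/2})\le\frac{4v^{2}\langle P_{\mathbb{T}}^{2}\rangle_{\widetilde{\psi}_{p}}}{E(p)+v^{2}}.
\end{equation*}
The positivity of the Dirac-comb potential gives $\langle P_{\mathbb{T}}^{2}\rangle_{\widetilde{\psi}_{p}}=E(p)-\alpha|\widetilde{\psi}_{p}(0)|^{2}\le E(p)$, so the right-hand side is at most $4v^{2}$. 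This already establishes the desired $v^{2}$ scaling; the hard part is trimming the numerical constant from $4$ to the claimed $1$. I anticipate accomplishing this by upgrading the naive resolvent inequality to the AM--GM refinement $(H_{\phi_{1}}^{1/2}+c)^{-2}\le(4cH_{\phi_{1}}^{1/2})^{-1}$ and exploiting that $P_{\mathbb{T}}\widetilde{\psi}_{p}$ is, to leading order, an eigenvector of $H_{\phi_{1}}^{1/2}$ with eigenvalue approximately $E^{1/2}(p)\approx c$, which absorbs the extra factor of four.
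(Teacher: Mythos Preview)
Your reading of the statement is correct: the intended inequality is the variance bound
\[
\sum_{n\in\Z}\bigl(E^{1/2}(p+v+n)-\mathcal{E}_{p,v}\bigr)^{2}\,|\kappa_{v}(p,n)|^{2}\;\le\;Cv^{2},
\]
with $\mathcal{E}_{p,v}=\sum_{m}E^{1/2}(p+v+m)|\kappa_{v}(p,m)|^{2}$; this is precisely $\mathrm{Var}_{\psi}(H_{\phi_{1}}^{1/2})$ for $\psi=e^{\textup{i}vX_{\mathbb{T}}}\widetilde{\psi}_{p}$, as you say. The weight $|\kappa_{v}(p,n)|^{2}$ is dropped both in the lemma statement and inside the paper's displayed computation, but it is present in the only place the lemma is applied (the Chebyshev step in the proof of Part~(2) of Lem.~\ref{HillDog}). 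Moreover, the paper's own proof concludes with the bound $2v^{2}$, not $v^{2}$; the sharper constant in the statement is another slip, and any $O(v^{2})$ bound suffices for the application.

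Your approach is genuinely different from the paper's. The paper symmetrises over $\pm v$ using the operator identity $e^{-\textup{i}vX}He^{\textup{i}vX}+e^{\textup{i}vX}He^{-\textup{i}vX}-2H=2v^{2}I$, expands each second moment as variance plus mean-squared, and discards the mean-squared part by invoking operator concavity of the square root (quoted from an earlier paper) to show $\tfrac{1}{2}\mathcal{E}_{p,v}+\tfrac{1}{2}\mathcal{E}_{p,-v}\ge E^{1/2}(p)$. Your route avoids the symmetrisation and the concavity input entirely: the factorisation $H_{\phi_{1}}^{1/2}-c=(H_{\phi_{1}}^{1/2}+c)^{-1}(H_{\phi_{1}}-c^{2})$ together with the exact cancellation at $c^{2}=E(p)+v^{2}$ and the quadratic-form identity $\langle P_{\mathbb{T}}^{2}\rangle_{\widetilde{\psi}_{p}}=E(p)-\alpha|\widetilde{\psi}_{p}(0)|^{2}\le E(p)$ rigorously yields $\mathrm{Var}_{\psi}(H_{\phi_{1}}^{1/2})\le 4v^{2}$. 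That is a complete, self-contained proof of what is actually needed.

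Your final paragraph, however, is not a proof. The AM--GM refinement $(H_{\phi_{1}}^{1/2}+c)^{-2}\le(4cH_{\phi_{1}}^{1/2})^{-1}$ is fine, but the claim that $e^{\textup{i}vX_{\mathbb{T}}}P_{\mathbb{T}}\widetilde{\psi}_{p}$ is ``to leading order an eigenvector of $H_{\phi_{1}}^{1/2}$'' is heuristic and would require nontrivial control of the subleading terms to turn into an inequality. Since neither the paper nor the application requires the constant $1$, you should simply stop at $4v^{2}$ and drop the speculative refinement.
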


\begin{proof}
The formula below holds generically for Schr\"odinger Hamiltonians $H=P^{2}+V(X)$ and any $v\in \R$: 
$$2 v^{2}I=e^{-\textup{i}vX}H e^{\textup{i}v X}+ e^{\textup{i}vX}H e^{-\textup{i}vX}-2H . $$
In particular, this implies that the fiber Hamiltonians $H_{\phi}$, $\phi\in \mathbb{T}$ satisfy
\begin{align}\label{OnFiber}
2v^{2}I= e^{-\textup{i}vX_{\mathbb{T}}}H_{\phi_{+}} e^{\textup{i}vX_{\mathbb{T}}}+ e^{\textup{i}vX_{\mathbb{T}}}H_{\phi_{-}} e^{-\textup{i}vX_{\mathbb{T}}}-2H_{\phi}    
 \end{align}
for $\phi,\phi_{\pm}\in \mathbb{T}$ with $\phi=p \,\textup{mod}\,1$ and $\phi_{\pm}=p\pm v \,\textup{mod}\,1$.  Evaluating both sides of~(\ref{OnFiber}) by $|\widetilde{\psi}_{p}\rangle $ yields
\begin{align*}
2v^{2}= &\big\langle \widetilde{\psi}_{p}\big| e^{-\textup{i}vX_{\mathbb{T}}}H_{\phi_{+}} e^{\textup{i}vX_{\mathbb{T}}}\big| \widetilde{\psi}_{p} \big\rangle + \big\langle \widetilde{\psi}_{p}\big| e^{\textup{i}vX_{\mathbb{T}}}H_{\phi_{-}} e^{-\textup{i}vX_{\mathbb{T}}}\big| \widetilde{\psi}_{p} \big\rangle -2\big\langle \widetilde{\psi}_{p}\big|H_{\phi}\big| \widetilde{\psi}_{p} \big\rangle \\
= & \sum_{n\in \Z}  E(p+v+n)   \big| \kappa_{v}(p,n)    \big|^{2} +\sum_{n\in \Z}  E(p-v+n)   \big| \kappa_{-v}(p,n)    \big|^{2} - 2E(p) .
 \end{align*}
For $ \mathcal{E}_{p,v}:= \sum_{n\in \Z} E^{\frac{1}{2}}(p+v+n)\,   \big| \kappa_{v}(p,n)    \big|^{2}$,
\begin{align}\label{TheIsles}
2v^{2}= &\sum_{\pm}\sum_{n\in \Z}  E(p\pm v+n)   \big| \kappa_{\pm v}(p,n)    \big|^{2} - 2E(p)   \nonumber \\  =& \sum_{\pm } \sum_{n\in \Z}\Big(E^{\frac{1}{2}}(p\pm v+n)\,   \big| \kappa_{\pm v}(p,n)    \big|^{2}- \mathcal{E}_{p,\pm v}   \Big)^{2}\nonumber \\ &+\frac{1}{2}\Big(  \mathcal{E}_{p,v}- \mathcal{E}_{p,-v}   \Big)^{2}  +\frac{1}{2}\Big(  \mathcal{E}_{p,v}+ \mathcal{E}_{p,-v}   \Big)^{2}-2E(p)   . 
\end{align}
The sum of the last two terms on the right side of~(\ref{TheIsles}) is positive since 
\begin{align*}
 \frac{1}{2} \mathcal{E}_{p,v}+ \frac{1}{2}\mathcal{E}_{p,-v}-E^{\frac{1}{2}}(p)=&\frac{1}{2}\sum_{\pm}\big\langle \widetilde{\psi}_{p}\big| e^{\mp \textup{i}vX_{\mathbb{T}}}H_{\phi_{\pm}}^{\frac{1}{2}} e^{\pm \textup{i}vX_{\mathbb{T}}}\big| \widetilde{\psi}_{p} \big\rangle  -\big\langle \widetilde{\psi}_{p}\big|H_{\phi}^{\frac{1}{2}}\big| \widetilde{\psi}_{p} \big\rangle 
\\ =&\frac{1}{2}\sum_{\pm} {   }_{\scriptscriptstyle{Q}}\big\langle p\big| e^{\mp \textup{i}vX} H^{\frac{1}{2}} e^{\pm \textup{i}vX}   \big|p\big\rangle_{\scriptscriptstyle{Q}} -{   }_{\scriptscriptstyle{Q}}\big\langle p\big|  H^{\frac{1}{2}}    \big|p\big\rangle_{\scriptscriptstyle{Q}}\\  \geq & 0 .
\end{align*}
The last inequality holds because  $\frac{1}{2}\sum_{\pm}  e^{\mp \textup{i}vX} H^{\frac{1}{2}} e^{\pm \textup{i}vX} -H^{\frac{1}{2}}$ is a positive operator; see the proof of Part (2) from~\cite[Prop.4.1]{Dispersion}.  The above formal reasoning can be made rigorous by approximating the kets $\big|p\big\rangle_{\scriptscriptstyle{Q}} $ by elements in $L^{2}(\R)$.  The   operators  $\frac{1}{2}\sum_{\pm}  e^{\mp \textup{i}vX_{\mathbb{T}}}H_{\phi_{\pm}}^{\frac{1}{2}} e^{\pm \textup{i}vX_{\mathbb{T}}} -H_{\phi}^{\frac{1}{2}}$ correspond to the operation of $\frac{1}{2}\sum_{\pm}  e^{\mp \textup{i}vX} H^{\frac{1}{2}} e^{\pm \textup{i}vX} -H^{\frac{1}{2}}$ on the fiber spaces and thus must be positive. 

Since the bottom line of~(\ref{TheIsles}) is positive, it follows that $\sum_{n\in \Z}\big(E^{\frac{1}{2}}(p+ v+n)\,   \big| \kappa_{v}(p,n)    \big|^{2}- \mathcal{E}_{p, v}   \big)^{2}$ is bounded  by $2v^{2}$. 

\end{proof}

The proof Lem.~\ref{HillDog}  depends most essentially on bounding the difference between the terms $\kappa_{v}(p-k,n   ) \overline{\kappa}_{v}(p+k,n   )$ and  $|\kappa_{v}(p,n   )|^{2}$ through the derivative inequalities in Lem.~\ref{LemKappaDer}. Since there are an infinite number of terms in the sum over $n\in \Z$, I designate cut-offs   for the set of $n$ in which I apply the finer estimates  and control the remaining terms with the variance inequality of Lem.~\ref{LemVariance} and a Chebyshev bound.

\vspace{.4cm}

\begin{proof}[Proof of Part (2) from Lem.~\ref{HillDog}]
First, I will bound a single term from the sum.  By a first-order Taylor expansion around $p \in \R$, 
\begin{align}\label{Turek}
  \kappa_{v}\big(p-k&,n   \big) \overline{\kappa}_{v}\big(p+k,n   \big)-\big|\kappa_{v}\big(p,n   \big)\big|^{2}\nonumber \\ = &\textup{i}2 k \textup{Im}\Big[ \kappa_{v}\big(p,n   \big) \frac{\partial}{\partial p} \overline{\kappa}_{v}\big(p,n   \big) \Big] +    \int_{0}^{k}dw'\,\int_{0}^{w'}dw\frac{\partial^{2}}{\partial^{2} w} \Big(\kappa_{v}\big(p-w,n   \big)  \overline{\kappa}_{v}\big(p+w,n   \big)\Big).  
\end{align}
Applying Parts (1) and (2) of Lem.~\ref{LemKappaDer} to the absolute value of~(\ref{Turek}) yields constants $C_{1},C_{2}>0$ such that
\begin{align}\label{Kristiana}
 \Big| \kappa_{v}\big(p-k,n   \big) \overline{\kappa}_{v}\big(p+k,n   \big)&-\big|\kappa_{v}\big(p,n   \big)\big|^{2}\Big|\nonumber \\  \leq  & C_{1}| k| +\frac{C_{2} }{2}k^{2}\sup_{|w|\leq | k| } \Big(\frac{1}{|\theta(p+w)|^{2}}+\frac{1}{|\Theta(p+v+w )|^{2}}    \Big)\nonumber \\ & +C_{2}^{2} k^{4}\sup_{|w|\leq | k| } \Big(\frac{1}{|\theta(p+w)|^{2}}+\frac{1}{|\Theta(p+v+w )|^{2}}    \Big)^{2}\nonumber  \\ \leq &  C'\lambda^{ \frac{\gamma}{2}+1+\varrho},  
\end{align}
where the second inequality is for some $C'>0$ by the constraints $|k|\leq \lambda^{\varrho+\gamma+2}$ and $|\Theta(p)|, |\Theta(p+v)|\geq \lambda^{\frac{\gamma}{2}+1} $. 

For the full sum of terms, I have the bound 
\begin{align}\label{Jacobi}
\sum_{n\in \Z}\Big| \kappa_{v}\big(p- k,n   \big)  \overline{\kappa}_{v}\big(p &+k,n   \big)-\big|\kappa_{v}\big(p,n   \big)\big|^{2}\Big|\nonumber  \\  \leq  & \sum_{n\in A_{p,v}  } \Big| \kappa_{v}\big(p-k,n   \big) \overline{\kappa}_{v}\big(p+k,n   \big)-\big|\kappa_{v}\big(p,n   \big)\big|^{2}\Big|  \nonumber  \\ &+\Big(\sup_{\substack{p'\in \R , \\ n\notin A_{p',v}  }  }\frac{1}{\Big| E^{\frac{1}{2}}\big(p'+v+n \big)-\mathcal{E}_{p',v}    \Big|^{2}    }     \Big)\nonumber \\ & \times \sup_{p'\in \R} \sum_{n\in \Z  }\Big| E^{\frac{1}{2}}\big(p'+v+n \big)-\mathcal{E}_{p',v}    \Big|^{2}\big|\kappa_{v}\big(p',n   \big)\big|^{2},
\end{align}
where $\mathcal{E}_{p,v}\in \R^{+}$ and  $A_{p,v}\subset \Z$ are defined as
$$\mathcal{E}_{p,v}:= \sum_{n\in \Z} E^{\frac{1}{2}}(p+v+n)\,   \big| \kappa_{v}(p,n)    \big|^{2}\quad \text{and} \quad  A_{p,v}:=\Big\{ n\in \Z\,\Big|\,    \big| E^{\frac{1}{2}}(p+v+n)-\mathcal{E}_{p,v}\big|\leq  \lambda^{-\frac{\gamma}{6}-\frac{1}{3}-\frac{\varrho}{3}  }\Big\} .  $$   
In~(\ref{Jacobi}) I have applied Chebyshev's inequality and $2|xy|\leq |x|^{2}+|y|^{2}$ with $x=\kappa_{v}(p- k,n   )$ and $y=  \overline{\kappa}_{v}(p +k,n   )$ to bound the sum of terms with $n\in A_{p,v}^{c}$.

For the first term on the right side of~(\ref{Jacobi}), there are approximately $2 \lambda^{-\frac{\gamma}{6}-\frac{1}{3}-\frac{\varrho}{3}  }$ terms in the sum   since 
$E^{\frac{1}{2}}(p')\approx |p'|$ for $|p'|\gg 1$ by Part (1) of Lem.~\ref{CritEstimates}.   Moreover, I can apply~(\ref{Kristiana}) to bound each individual term in the sum.   For the second term on the right side of~(\ref{Jacobi}), I can apply Lem.~\ref{LemVariance} to bound the sum by $v^2$.   Putting these observations together, there is $C>0$ such that
$$ \sum_{n\in \Z}\Big| \kappa_{v}\big(p- k,n   \big)  \overline{\kappa}_{v}\big(p +k,n   \big)-\big|\kappa_{v}\big(p,n   \big)\big|^{2}\Big|\leq \big( C' +v^{2}\big) \lambda^{\frac{\gamma}{3}+\frac{2\varrho}{3}+\frac{2}{3}  }=\mathit{O}\big( \lambda^{\frac{\gamma}{2}+\frac{7}{6}}  \big) .    $$
 The order equality uses that $\varrho>\gamma>1$ .

\end{proof}

\subsection{Proof of Lemma~\ref{SemiToFull}}

The proof of Lem.~\ref{SemiToFull}  primarily involves bounding the difference between the semigroups $\Phi_{\lambda,t}^{(k)}:L^{1}(\R) $ and $\Upsilon_{\lambda,t}^{(k)}:L^{1}(\R)$.  For this task, it is convenient to introduce an intermediary semigroup $\Upsilon_{\lambda,t}^{(k),\prime}$ that has the same drift term as $\Phi_{\lambda,t}^{(k)}$ and the same jump term as $\Upsilon_{\lambda,t}^{(k)}$. Let $\Upsilon_{\lambda,t}^{(k),\prime}:L^{1}(\R)$ be the semigroup with generator $\mathcal{L}_{\lambda, k}^{\prime \prime}$ that acts on elements  $f\in \mathcal{T}:= \big\{ g\in L^{1}(\R)\,\big|\,\int_{\R}dp\, |p|\, |g(p)|<\infty \big\}    $ as
\begin{align}\label{FFFF}
\big(\mathcal{L}_{\lambda, k}^{\prime \prime} f\big)(p)= &-\frac{\textup{i}}{\lambda^{\varrho}}\Big(E\big(p-k\big)- E\big(p+ k\big)   \Big)f(p) -\mathcal{R}f(p)+\int_{\R}dp' J(p,p')f(p')   .
\end{align}
I will bound the difference $\Phi_{\lambda,t}^{(k)}-\Upsilon_{\lambda,t}^{(k),\prime}$ by means of a Duhamel equation and an application of Part (2) of Lem~\ref{HillDog}, and  I will bound the difference $\Upsilon_{\lambda,t}^{(k),\prime}-\Upsilon_{\lambda,t}^{(k)}$ though a pseudo-Poisson unraveling and an application of Part (1) of Lem.~\ref{HillDog}.   Since the inequalities in Lem.~\ref{HillDog} pertain to momenta bounded away from the lattice $\frac{1}{2}\Z$, I take  precautions though  Parts (3) and (4) of Prop.~\ref{PropMoreFiber} to ensure that momentum densities are not peaked in the region around the lattice.

\vspace{.4cm}

\begin{proof}[Proof of Lem.~\ref{SemiToFull}]
By the fiber decomposition for the classical dynamics, the density $\mathcal{P}^{(k)}_{\lambda, t}\in L^{1}(\R)$ is given by 
\begin{align}\label{Fasting}
\mathcal{P}^{(k)}_{\lambda, t}=\Upsilon_{\lambda,t}^{(k)}\mathcal{P}^{(k)}_{\lambda, 0} 
\end{align}
 for the semigroup $\Upsilon_{\lambda,t}^{(k)}$ with generator $\mathcal{L}_{\lambda,k}^{\prime}$ defined in~(\ref{FlowerBasket}).   Moreover, since the initial distribution $\mathcal{P}_{\lambda,0}(y,p)$ for the classical dynamics is $\delta_{0}(y)  |\frak{h}(p)|^{2}$,
$$\mathcal{P}^{(k)}_{\lambda, 0}(p) := \int_{\R^{2}}dxdp\,\mathcal{P}_{\lambda,0}(x,p)e^{i2xk}=|\frak{h}(p)|^{2}=[\check{\rho}_{\lambda}]^{(0)}(p),  $$
where the last equality uses that $[\check{\rho}_{\lambda}]^{(0)}(p):=\check{\rho}_{\lambda}(p,p)$ and $\check{\rho}_{\lambda}:=| \frak{h}\rangle\langle \frak{h}|$.   Since the maps $\Upsilon_{\lambda,t}^{(k)}:L^{1}(\R)$ are contractive, the difference in norm between $\mathcal{P}^{(k)}_{\lambda, t}=\Upsilon_{\lambda,t}^{(k)}[\check{\rho}_{\lambda}]^{(0)}$ and $\Upsilon_{\lambda,t}^{(k)}[\check{\rho}_{\lambda} ]^{(k)}_{\scriptscriptstyle{Q}}$ for all $|k|\leq \lambda^{\frac{\gamma}{2}+\frac{5}{4}+\varrho  }$, $\lambda<1$, and $t\in \R^{+}$ is smaller than
\begin{align}
 \left\| \mathcal{P}^{(k)}_{\lambda, t}- \Upsilon_{\lambda,t}^{(k)}[\check{\rho}_{\lambda} ]^{(k)}_{\scriptscriptstyle{Q}}  \right\|_{1}\leq  &\left\| [\check{\rho}_{\lambda}]^{(0)}- [\check{\rho}_{\lambda} ]^{(k)}_{\scriptscriptstyle{Q}} \right\|_{1}\nonumber  \\  \leq  &\left\| [\check{\rho}_{\lambda}]^{(0)}- [\check{\rho}_{\lambda} ]^{(k)} \right\|_{1}+ \left\| [\check{\rho}_{\lambda}]^{(k)}- [\check{\rho}_{\lambda} ]^{(k)}_{\scriptscriptstyle{Q}} \right\|_{1}\nonumber \\ = & \int_{\R}dp\,\Big| |\frak{h}(p)|^{2}-  \overline{\frak{h}(p-k)}  \frak{h}(p+k) \Big|       +\mathit{O}\big(\lambda^{\frac{1}{2}}\big)=\mathit{O}\big(\lambda^{\frac{1}{2}}\big). 
\end{align}
For the first equality above,  the approximation techniques from the proof of Lem.~\ref{StanToQuasi} can be used to show that the $L^{1}$ norm of $[\check{\rho}_{\lambda}]^{(k)}- [\check{\rho}_{\lambda} ]^{(k)}_{\scriptscriptstyle{Q}} $ is $\mathit{O}(\lambda^{\frac{1}{2}})$.   Moreover, the $ L^{1}$ difference between $|\frak{h}(p)|^{2}$ and  $ \overline{\frak{h}(p-k)}  \frak{h}(p+k) $ is $\mathit{O}(\lambda^{\frac{1}{2}})$ by the assumptions $\frak{h}:=e^{\textup{i}\mathbf{p}X}\frak{h}_{0}$  and $\| X \frak{h}_{0}\|_{2}=\int_{\R}dp\,\big| \frac{ d\frak{h}_{0}}{dp}(p)\big|^{2}<\infty $.

By the above remarks, it is  sufficient  to control the difference between the operation of the contractive semigroups $\Phi_{\lambda,t}^{(k)}$ and $\Upsilon_{\lambda,t}^{(k)}$ acting on $[\check{\rho}_{\lambda}]^{(k)}_{\scriptscriptstyle{Q}}$.  For the intermediary semigroup $\Upsilon_{\lambda,t}^{(k),\prime}:L^{1}(\R)$ with generator~(\ref{FFFF}),   the differences $\Phi_{\lambda,t}^{(k)}-\Upsilon_{\lambda,t}^{(k),\prime}$ and  $\Upsilon_{\lambda,t}^{(k),\prime}-\Upsilon_{\lambda,t}^{(k)}$ are bounded in parts (i) and (ii) below, respectively.  \vspace{.4cm}

\noindent (i).\hspace{.2cm} The difference between  $\Phi_{\lambda,t}^{(k)}$ and   $\Upsilon_{\lambda,t}^{(k),\prime}$  can be written in terms of the Duhamel equation
$$\Phi_{\lambda,t}^{(k)}-\Upsilon_{\lambda,t}^{(k),\prime}=\int_{0}^{t}dr \,  \Phi_{\lambda,t-r}^{(k)} (J_{k}-J)    \Upsilon_{\lambda,r}^{(k),\prime}.  $$
Let $A\subset \R^2$ be defined as in Lem.~\ref{HillDog} and $B\subset \R$ be defined as  $B=\{ p\in \R\,|\,|\theta(p)|> \lambda^{\frac{\gamma}{2}+1}    \} $.  Since $\Phi_{\lambda,t-r}^{(k)}$ is contractive in the $L^{1}$-norm, I have the first inequality below:
\begin{align}\label{Douche}
\Big\|  \Phi_{\lambda,t}^{(k)}&[\check{\rho}_{\lambda}]^{(k)}_{\scriptscriptstyle{Q}} -\Upsilon_{\lambda,t}^{(k),\prime}[\check{\rho}_{\lambda}]^{(k)}_{\scriptscriptstyle{Q}} \Big\|_{1}\nonumber \\  \leq & t\sup_{0\leq r\leq t}\int_{\R}dp   \sum_{n\in \Z}\int_{\R}dv\,j(v)\Big|\Big(\Upsilon_{\lambda,r}^{(k),\prime}[\check{\rho}_{\lambda}]^{(k)}_{\scriptscriptstyle{Q}}\Big)(p)\Big|\,  \Big| \kappa_{v }\big(p-k,\,n\big) \overline{\kappa}_{v }\big(p+k,\,n\big)- \big|\kappa_{v }\big(p,\,n\big)\big|^{2}    \Big|\nonumber  \\  \leq & 2t\sup_{0\leq r\leq t} \int_{\R}dp\int_{\R}dv\,1_{A^{c}}(p,v) j(v) \Big|\Big(\Upsilon_{\lambda,r}^{(k),\prime}[\check{\rho}_{\lambda}]^{(k)}_{\scriptscriptstyle{Q}}\Big)(p)\Big| \nonumber \\ & + \mathcal{R}t\big(1 +\frac{\sigma}{\mathcal{R}} \big)  \sup_{(p,v)\in A} \frac{1}{1  +v^2 }\sum_{n\in \Z}  \Big| \kappa_{v }\big(p-k,\,n\big) \overline{\kappa}_{v }\big(p+k,\,n\big)- \big|\kappa_{v }\big(p,\,n\big)\big|^{2}    \Big| .  
\end{align}
The second inequality above partitions the integration over $(p,v)\in \R^2$ into the domains $A$ and $A^c$, and for the domain $A^{c}$ applies the Cauchy-Schwarz inequality along with the fact that $\sum_{n\in \Z}\big|\kappa_{v }\big(p,\,n\big)\big|^{2}\leq 1$ for all $(p,v)\in \R^2$. For the domain $A\subset \R^{2}$, I have multiplied and divided by $1  +v^{2}$ and applied Holder's inequality in combination with $\big\|\Upsilon_{\lambda,r}^{(k),\prime}[\check{\rho}_{\lambda}]^{(k)}_{\scriptscriptstyle{Q}}\big\|_{1}\leq \big\|[\check{\rho}_{\lambda}]^{(k)}_{\scriptscriptstyle{Q}}\big\|_{1}\leq 1 $.  The last line of~(\ref{Douche}) is bounded by a constant multiple of $t\lambda^{\frac{\gamma}{2}+1}$ by Part (2) of Lem.~\ref{HillDog}.   
 I will bound the expression  on the third line of~(\ref{Douche}) in (I) below.

\vspace{.4cm}

\noindent (I). \hspace{.2cm} For the integrand on the third line of~(\ref{Douche}),
 \begin{align}\label{HayWire}
  \Big| \big(\Upsilon_{\lambda,r}^{(k),\prime}[\check{\rho}_{\lambda}]^{(k)}_{\scriptscriptstyle{Q}}\big)(p)\Big|\leq \frac{1}{2}\big(\Upsilon_{\lambda,r}^{(0),\prime}\mathcal{S}_{ k}[\check{\rho}_{\lambda}]^{(0)}_{\scriptscriptstyle{Q}}\big)(p)+\frac{1}{2}\big(\Upsilon_{\lambda,r}^{(0),\prime}\mathcal{S}_{-k}[\check{\rho}_{\lambda}]^{(0)}_{\scriptscriptstyle{Q}}\big)(p) ,  
  \end{align}
 where $\mathcal{S}_{q}:L^{1}(\R)$ is the shift operator by $q\in \R$: $\big(\mathcal{S}_{q}f\big)(p)=f(p-q)$.  The above inequality uses that $|(\Upsilon_{\lambda,r}^{(k),\prime}f)(p)|\leq  (\Upsilon_{\lambda,r}^{(0),\prime}|f|)(p)$ for every $f\in L^{1}(\R)$ and a.e. $p\in \R$ and Part (3) of Prop.~\ref{MiscFiber} to bound $[\check{\rho}_{\lambda}]^{(k)}_{\scriptscriptstyle{Q}}$ by the sum $\frac{1}{2}\mathcal{S}_{k}[\check{\rho}_{\lambda}]^{(0)}_{\scriptscriptstyle{Q}} +\frac{1}{2}\mathcal{S}_{-k}[\check{\rho}_{\lambda}]^{(0)}_{\scriptscriptstyle{Q}}$.
 With~(\ref{HayWire}) and the bound $1_{A^{c}}(p,v)\leq 1_{B^{c}}(p)+1_{B^{c}}(p+v)$, I have the first inequality below:
\begin{align*}
\int_{\R}dp\int_{\R}dv\,1_{A^{c}}(p,v) j(v)\Big| \big(\Upsilon_{\lambda,r}^{(k),\prime}[\check{\rho}_{\lambda}]^{(k)}_{\scriptscriptstyle{Q}}\big)(p)\Big|  \leq &  \frac{1}{2} \int_{\R}dp\int_{\R}dv\,\Big(1_{B^{c}}(p)+1_{B^{c}}(p+v)\Big) j(v)
\\  &  \times \Big(\big(\Upsilon_{\lambda,r}^{(0),\prime}\mathcal{S}_{k}[\check{\rho}_{\lambda}]^{(0)}_{\scriptscriptstyle{Q}}\big)(p)+\big(\Upsilon_{\lambda,r}^{(0),\prime}\mathcal{S}_{-k}[\check{\rho}_{\lambda}]^{(0)}_{\scriptscriptstyle{Q}}\big)(p)\Big)
\\  \leq &  \frac{1}{2} \int_{\R}dp\,\Big(\mathcal{R}1_{B^{c}}(p)+\varpi \lambda^{\frac{\gamma}{2}+1}\Big) 
\\  &  \times \Big(\big(\Upsilon_{\lambda,r}^{(0),\prime}\mathcal{S}_{k}[\check{\rho}_{\lambda}]^{(0)}_{\scriptscriptstyle{Q}}\big)\big(p\big)+\big(\Upsilon_{\lambda,r}^{(0),\prime}\mathcal{S}_{-k}[\check{\rho}_{\lambda}]^{(0)}_{\scriptscriptstyle{Q}}\big)\big(p\big)\Big)
\\ \leq &
 \lambda^{\frac{\gamma}{2}+1}\Big(\mathcal{R}e^{-\mathcal{R}r}\| \langle \check{\rho}_{\lambda}\rangle^{(0)}\|_{\infty}+2\varpi \Big).
\end{align*}
The second inequality follows by assumption (2) of List~\ref{Assumptions}.  To see the third inequality above, notice that $ \Upsilon_{\lambda,r}^{(0),\prime}= \Upsilon_{\lambda,r}^{(0)}$ is the Markovian semigroup with jump rate kernel $J$. 
 When contracted to the torus $\mathbb{T}=[-\frac{1}{2},\frac{1}{2})$, the process is still Markovian and has kernel $J_{\mathbb{T}}$. Thus, by Part (2) of Prop.~\ref{PropMoreFiber}, the density  $\Upsilon_{\lambda,r}^{(0),\prime}\mathcal{S}_{\pm k}[\check{\rho}_{\lambda}]^{(0)}_{\scriptscriptstyle{Q}} $ is equal to $ \big\langle  \Phi_{\lambda,t}(\mathcal{S}_{\pm k}\check{\rho}_{\lambda}\mathcal{S}_{\mp k})  \big\rangle^{(0)}  $ when contracted to the torus, and  I can then apply Part (4) of Prop.~\ref{PropMoreFiber} to obtain the bound.

\vspace{.4cm}

\noindent (ii).\hspace{.2cm}
Recall that the linear map $\widetilde{U}_{\lambda , t}^{(k)}:L^{1}(\R)$ is defined as  multiplication by the function $\widetilde{U}_{\lambda , t}^{(k)}(p):=e^{\frac{\textup{i}t}{\lambda^{\varrho}} 4 p k   } $ and define $T:=\mathcal{R}^{-1}J$.
 Also, let $\Upsilon_{\lambda,\xi,r, t}^{(k)}$ and $\Upsilon_{\lambda,\xi,r, t}^{(k),\prime}$ be defined analogously to $\Phi_{\lambda,\xi,r, t}^{(k)}$ in the proof of Thm.~\ref{ThmSemiClassical}   as the products
\begin{align*}
\Upsilon_{\lambda,\xi,r, t}^{(k)}(\rho):= &\widetilde{U}_{\lambda, t-t_{\mathcal{N}}}^{(k)} T \cdots  \widetilde{U}_{\lambda, t_{n+1}-t_{n}}^{(k)}T \widetilde{U}_{\lambda, t_{n}-r}^{(k)} ,  \\   \Upsilon_{\lambda,\xi,r, t}^{(k),\prime}(\rho):= & U_{\lambda, t-t_{\mathcal{N}}}^{(k)} T \cdots  U_{\lambda , t_{n+1}-t_{n}}^{(k)}T U_{\lambda, t_{n}-r}^{(k)},
\end{align*} 
where $\xi=(t_{1},t_{2},\cdots)\in (\R^{+})^{\infty} $  and $t_{n}\leq \cdots \leq t_{\mathcal{N}}$ are the values in the interval $(r,t]$.   The difference between  the maps  $\Upsilon_{\lambda, t}^{(k),\prime}$ and $\Upsilon_{\lambda, t}^{(k)}$ can be written in terms of  telescoping sums as
$$
\Upsilon_{\lambda, t}^{(k),\prime}-\Upsilon_{\lambda, t}^{(k)} =e^{-\mathcal{R}t}\sum_{\mathcal{N}=0}^{\infty}\mathcal{R}^{\mathcal{N}}\sum_{n=0}^{\mathcal{N}} \int_{0\leq t_{1} \dots  \leq t_{N}\leq t} \Big( \Upsilon_{\lambda, \xi, t_{n+1}, t}^{(k),\prime}\Upsilon_{\lambda,\xi, t_{n+1} }^{(k)}-\Upsilon_{\lambda, \xi, t_{n}, t}^{(k),\prime}\Upsilon_{\lambda,\xi, t_{n} }^{(k)} \Big),
$$
where I use the identifications $t_{0}:=0$ and $t_{\mathcal{N}+1}:=t$.  By the triangle inequality, I have the first inequality below:   
\begin{align}\label{ZigZag}
\Big\|\Upsilon_{\lambda,\xi, t}^{(k),\prime}[\check{\rho}_{\lambda}]^{(k)}_{\scriptscriptstyle{Q}}-\Upsilon_{\lambda,\xi, t}^{(k)}[\check{\rho}_{\lambda}]^{(k)}_{\scriptscriptstyle{Q}} \Big\|_{1} \leq & e^{-\mathcal{R}t} \sum_{\mathcal{N}=0}^{\infty}\mathcal{R}^{\mathcal{N}}\sum_{n=0}^{\mathcal{N}}\int_{0\leq t_{1} \dots \leq t_{N}\leq t}\nonumber\\ & \text{ }\quad \sup_{r\in[0, t_{n+1}-t_{n}]} \Big\| \Big( \widetilde{U}_{\lambda, r}^{(k)}- U_{\lambda, r}^{(k)} \Big)\Upsilon_{\lambda,\xi, t_{n} }^{(k)}[\check{\rho}_{\lambda}]^{(k)}_{\scriptscriptstyle{Q}} \Big\|_{1}\nonumber
\\  \leq & e^{-\mathcal{R}t}\sum_{\mathcal{N}=0}^{\infty}\mathcal{R}^{\mathcal{N}}\sum_{n=0}^{\mathcal{N}}\int_{0\leq t_{1} \dots  \leq t_{N}\leq t}  \lambda^{\frac{\gamma}{2}+1}\big(c_{1}+c_{2}\mathcal{R}(t_{n+1}-t_{n})  \big)\nonumber  \\  = &   \lambda^{\frac{\gamma}{2}+1}\mathcal{R}t \big(c_{1}+c_{2} \big),
\end{align}
for some $c_{1}, c_{2}>0$ determined implicitly below.  
The expression in the second line of~(\ref{ZigZag}) is bounded through the following inequalities:
\begin{align*}
 &\sup_{r\in[0, t_{n+1}-t_{n}]} \Big\| \Big( \widetilde{U}_{\lambda, r}^{(k)}- U_{\lambda, r}^{(k)} \Big)\Upsilon_{\lambda,\xi, t_{n} }^{(k)}[\check{\rho}_{\lambda}]^{(k)}_{\scriptscriptstyle{Q}} \Big\|_{1} \\ &\leq \int_{\R}dp\,\Big|\Upsilon_{\lambda,\xi, t_{n} }^{(k)}[\check{\rho}_{\lambda}]^{(k)}_{\scriptscriptstyle{Q}}(p)\Big|\sup_{r\in[0, t_{n+1}-t_{n}]}\Big|  e^{\frac{\textup{i}r}{\lambda^{\varrho}}4pk}-e^{-\frac{\textup{i}r}{\lambda^{\varrho}}(E(p-k)-E(p+k)    )}    \Big| 
\\   &\leq  2 \int_{|\theta|\leq \lambda^{\frac{\gamma}{2}+1} } \sum_{p\in \frac{1}{2}\Z+\theta }  \Big|\Upsilon_{\lambda,\xi, t_{n} }^{(k)}[\check{\rho}_{\lambda}]^{(k)}_{\scriptscriptstyle{Q}}(p)\Big|\\  &\hspace{.3cm}+\frac{t_{n+1}-t_{n}}{\lambda^{\varrho}}\int_{ \lambda^{\frac{\gamma}{2}+1}< |\theta|\leq  \frac{1}{4} }\Big(\sum_{p\in \frac{1}{2}\Z+\theta}  \Big|\Upsilon_{\lambda,\xi, t_{n} }^{(k)}[\check{\rho}_{\lambda}]^{(k)}_{\scriptscriptstyle{Q}}(p ) \Big|\Big)\Big(\sup_{p\in \frac{1}{2}\Z+\theta } \Big| E\big(p-k\big)-E\big(p+k\big) +4pk\Big| \Big)
\\  &  \leq  \big(\varpi+\|\langle \check{\rho}_{\lambda}  \rangle^{(0)}\|_{\infty}\big) \Big( \lambda^{\frac{\gamma}{2}+1+\iota}+ c\mathcal{R}(t_{n+1}-t_{n}) \lambda^{\frac{\gamma}{2}+1+\iota}\int_{ \lambda^{\frac{\gamma}{2}+1}< |\theta|\leq  \frac{1}{4} }\frac{1}{|\theta|}  \Big)\\  &  \leq  \lambda^{\frac{\gamma}{2}+1} \big(c_{1}+c_{2}\mathcal{R}(t_{n+1}-t_{n}) \big).
\end{align*}
For the second inequality above, I bounded the expression  $|  e^{\frac{\textup{i}r}{\lambda^{\varrho}}4pk}-e^{-\frac{\textup{i}r}{\lambda^{\varrho}}(E(p-k)-E(p+k)    )}   | $ by $2$ for  $p\in B^{c}$ and by $\frac{t_{n+1}-t_{n}}{\lambda^{\varrho}}| E(p-k)-E(p+k) +4pk| $  for  $p\in B$.  The third inequality uses Part (1) of Lem.~\ref{HillDog} to bound the supremum in the second term for some $c>0$ and the bound
\begin{align}\label{ShyLo}
 \sup_{\theta\in \mathbb{T}}\sum_{p\in \frac{1}{2}\Z+\theta }  \Big|\Upsilon_{\lambda,\xi, t_{n} }^{(k)}[\check{\rho}_{\lambda}]^{(k)}_{\scriptscriptstyle{Q}}(p)\Big| & \leq  \frac{1}{2}\sup_{\theta\in \mathbb{T}}\sum_{N\in \Z}  \sum_{\pm}\Big|\Upsilon_{\lambda,\xi, t_{n} }^{(0)} \mathcal{S}_{\pm k}[\check{\rho}_{\lambda}]^{(0)}_{\scriptscriptstyle{Q}}\big(\theta+\frac{1}{2}N \big)\Big|\nonumber \\  & \leq \frac{1}{2} \sum_{\pm} \big\|\big\langle  \Phi_{\lambda,\xi,t_{n}}\big(\mathcal{S}_{\pm k}\check{\rho}_{\lambda}\mathcal{S}_{\mp k} \big) \big\rangle \big\|_{\infty}\nonumber \\ &\leq \sum_{\pm}\|\langle \mathcal{S}_{\pm k}\check{\rho}_{\lambda}\mathcal{S}_{\mp k} \rangle \|_{\infty}+\frac{\varpi}{\mathcal{R}}  =\|\check{\rho}_{\lambda}\|_{\infty}+\frac{\varpi}{\mathcal{R}}.  
 \end{align}
The first two inequalities in~(\ref{ShyLo}) follow by the reasoning in (i), and the third follows by Lem.~\ref{HorseMeister}.

\end{proof}

\section{Central limit theorem for the classical process}\label{SecClassical}

This section concerns only the classical stochastic process $(Y_{t},K_{t})$ discussed in Sect.~\ref{SecPreClassical}.  It will be convenient to change the definition of the spatial component from  $Y_{t}= \frac{2}{\lambda^{\varrho}} \int_{0}^{t}dr\, K_{r}  $ to  $Y_{t}= 2 \int_{0}^{t}dr\, K_{r}  $.  For simplicity, I will assume that the initial state of the momentum process is $K_{0}=\mathbf{p}$ rather than the distribution with density  $|\frak{h}(p)|^{2}$ for $\frak{h}\in L^{2}(\R)$ defined above~(\ref{Gauss}).

  The analysis appearing here is a simplification of that in~\cite[Sect.5]{Dispersion}, which focused on the classical dynamics over arbitrarily long time scales.     The component $K_{t}$ is an autonomous Markov process with jump rates $J(p,p')$ from $p'$ to $p$, and the component $Y_{t}$ is an integral functional $Y_{t}=2\int_{0}^{t}dr\, K_{r}$.  The jump rates $J(p,p')$ have constant escape rates $\mathcal{R}:=\int_{\R}dp\, J(p,p')$, and I refer to the jump times as the \textit{Poisson times}.  The Poisson times are denoted by $t_{n}$ with the convention $t_{0}=0$, and $\mathcal{N}_{t}$ denotes the number of non-zero Poisson times up to time $t$.

 \subsection{Definitions and general discussion of the classical Markov process}

 I must introduce a number of technical definitions, which I will summarize in the list below. Let $S:\R\rightarrow \{\pm 1\}$ be the sign function.  A \textit{sign-flip} is  a Poisson time $t_{n}$  such that  $S(K_{t_{n}})=S(K_{t_{n+1}})$ and there are an odd number $m$ of sign changes leading up to $t_{n}$: $S(K_{t_{n-r}})=-S(K_{t_{n-r+1}})$ for $r\in [1,m]$ and $S(K_{t_{n-m-1}})=S(K_{t_{n-m}})$.  Note that, under this definition, a sign-flip time is not a hitting time with respect to the filtration generated by the process $K_{t}$, since the identification of a sign-flip time depends on a verification that the sign does not change again at the next Poisson time.  This awkward definition is formed to avoid counting occurrences in which the  momentum changes sign at successive pairs of Poisson times, which a detailed examination of the jump rates $J(p,p')$ shows is likely.  The double-flipping is a minor impediment to finding a more stable characterization for the sign behavior of the momentum process, and  I have discussed this issue in detail at the beginning of~\cite[Sect.5]{Dispersion}.  Define the $\tau_{m}$, $m\geq 0$ inductively to be the sequence of times such that $\tau_{0}=0$ and $\tau_{m+1}$ is the first time $t\in \R^{+}$ following $\tau_{m}$ for which  $t$ is a sign-flip or $|K_{t}| \notin \big[\frac{1}{2}|K_{ \tau_{m}}|,\frac{3}{2}|K_{ \tau_{m}}|  \big]$.  Introducing the cutoff for the deviation of the absolute value of the momentum over the interval $[\tau_{m},\tau_{m+1})$ is a technical precaution, which I use because the $\tau_{m}$'s are less frequent over time intervals in which the momentum is high $|K_{t}|\gg 1$.  I denote the number of non-zero $\tau_{m}$'s to have occurred up to time $t\in \R^{+}$ by $\mathbf{N}_{t}$.  Pick $\epsilon\in (0, \frac{2-\gamma}{2} )   $, and define $\varsigma$ to be the hitting time that $|K_{t}|$ jumps out of the interval $[\mathbf{p}-\lambda^{\epsilon}\mathbf{p} ,\mathbf{p}+\lambda^{\epsilon}\mathbf{p}] $.   The standard filtration generated by the process $K_{t}$  is denoted by $\mathcal{F}_{t}:=\sigma\big(K_{r}:\, 0\leq r\leq t\big)$.   Let  $\widetilde{\mathcal{F}}_{t}$ be the filtration given by
$$\widetilde{\mathcal{F}}_{t}= \sigma\Big (\tau_{m+1},\, K_{r}\,:\, 0\leq r\leq \tau_{m+1} \text{ for the $m\in \mathbb{N}$ with } t\in [\tau_{m},\tau_{m+1} ) \Big)  .       $$
When $t \in [\tau_{m}, \tau_{m+1})$ for some $m$,  the $\sigma$-algebra $\widetilde{\mathcal{F}}_{t}$ contains knowledge of the time $\tau_{m+1}$ and all information about the process $K_{t}$ up to time $\tau_{m+1}$.   For $\widetilde{\mathcal{F}}_{\lambda, s}:=\widetilde{\mathcal{F}}_{\frac{s}{\lambda^\gamma} } $ and $\Delta \tau_{m}:=\tau_{m+1}-\tau_{m}$, define the $\widetilde{\mathcal{F}}_{\lambda, s}$-adapted martingale

$$
\mathbf{m}_{\lambda, s}:= \lambda^{\frac{\gamma+3}{2}}\sum_{m=1}^{\mathbf{N}_{\frac{s}{\lambda^\gamma}}}\chi\big(\tau_{m}< \varsigma    \big)\,K_{\tau_{m}}\Big(\Delta \tau_{m}- \mathbb{E}\big[\Delta \tau_{m}\,\big|\, \widetilde{\mathcal{F}}_{\tau_{m}^{-}}\big]   \Big).
$$
At a glance, the above definitions are given by the following: 
\begin{eqnarray*}
& Y_{\lambda, s}  & \text{The normalized integral functional: $ Y_{\lambda, s}:=2 \lambda^{\frac{\gamma+3}{2}}\int_{0}^{\frac{s}{\lambda^\gamma} }dt\, K_{t}$}\\
&  t_{n}   & \text{$n$th Poisson time}\\
&   \mathcal{N}_{t}       & \text{Number of non-zero Poisson times up to time $t$       }\\ & \varsigma & \text{First time that $|K_{t}|$ jumps out of the interval $[\mathbf{p}-\lambda^{\epsilon}\mathbf{p} ,\mathbf{p}+\lambda^{\epsilon}\mathbf{p}] $ }\\
&  \tau_{m}     &   \text{Time of the $m$th sign-flip} \\
&  \Delta\tau_{m}    &   \text{Time elapsed between the $m$th and $m+1$th sign-flip: $\Delta\tau_{m}=\tau_{m+1}-\tau_{m}$  } \\
 &\mathbf{N}_{t}     & \text{Number of $\tau_{m}$'s up to time $t\in \R^{+}$ }\\
&\mathcal{F}_{t}    &    \text{Information up to time $t$ } \\ 
&\widetilde{\mathcal{F}}_{t}    &    \text{Information up to the time of the sign-flip following $t$  }\\
& \mathbf{m}_{\lambda,s}& \text{Martingale with respect to $\widetilde{\mathcal{F}}_{\lambda, s}=\widetilde{\mathcal{F}}_{\frac{s}{\lambda^\gamma} }$ that approximates $Y_{\lambda, s}$ for $\lambda\ll 1$  }
\end{eqnarray*}

\subsection{ Preliminary Analysis }

Define $\nu:=\frac{\alpha \mathcal{R}}{4}$.   The following proposition is from~\cite[Prop.6.3]{Dispersion} and provides a few inequalities related to the time periods $[\tau_{m},\tau_{m+1})$ between successive times $\tau_{m}$.  The assumption in the statement of Prop.~\ref{TimeFlip} that $K_{t}$ does not change signs at the first Poisson time corresponds to the information known if $\tau_{m}$ is a sign-flip.  Part (1) of Prop.~\ref{TimeFlip} states that the moments for $\frac{\nu(\tau_{m+1}-\tau_{m})  }{ |K_{\tau_{m}}|  }$ are approximately equal to those of a mean $1$ exponential, and Part (2) bounds the amount of time that the momentum process spends with the opposite sign before making a sign-flip.

\begin{proposition} \label{TimeFlip}
Let $\zeta>0$, $K_{0}=p$ for $|p|\gg 1$, and $K_{t}$ be conditioned not to make a sign change at the first Poisson time (i.e. $ S(K_{0})=S(K_{t_{1}})$).  Define $\tau$ to be  the first time that either $K_{t}$ has a sign-flip or  $|K_{t}|$ jumps out of the set $[\frac{1}{2}|p|,\,\frac{3}{2}|p|]$ depending on what occurs first.  For fixed $\zeta,\, n>0$, there exists a $C>0$  such that the following inequalities hold for all $p$:

\begin{enumerate}
\item $\Big|\mathbb{E}\big[ \big(\frac{\tau}{|p|}\big)^{n}  \big]-n!\nu^{-n}   \Big|\leq C |p|^{\zeta-1},  $

\item 
$ \mathbb{E}\Big[\int_{0}^{\tau}dr\,\chi\big(S(K_{r})\neq S(p)   \big)   \Big] \leq C. $

\end{enumerate}

\end{proposition}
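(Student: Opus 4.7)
The plan is to reduce both estimates to a sharp analysis of the sign-change indicator sequence $\xi_{k}:=\chi\bigl(S(K_{t_{k}})\neq S(K_{t_{k-1}})\bigr)$ at successive Poisson times, which for $|p|\gg 1$ should behave like an i.i.d.\ Bernoulli sequence with small success parameter $q\approx \alpha/(4|p|)$. Under this approximation, the paper's notion of a sign-flip at a Poisson time $t_{n}$ corresponds to the first $n$ such that the maximal run of $1$'s in $(\xi_{k})$ ending at $n$ has \emph{odd} length and is followed by $\xi_{n+1}=0$, and $\tau$ is the calendar time of this event (truncated by the exit of $|K_{t}|$ from $[|p|/2,3|p|/2]$).

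The first step would be to establish a uniform one-step estimate. Writing out the kernel $J(p,p')$ and applying Lem.~\ref{BadTerms} to the coefficients $\kappa_{v}(p',n)$, I would show that, conditional on $K_{t_{k-1}}=p'$ with $|p'|\asymp|p|$, one has $\mathbb{P}\bigl(S(K_{t_{k}})\neq S(p')\mid \mathcal{F}_{t_{k-1}}\bigr)=\alpha/(4|p'|)+\mathit{O}\bigl(|p'|^{-1-\delta}\bigr)$ for some $\delta>0$, with the leading contribution coming from kicks $v$ such that $p'+v$ lands in one of the two reflection bands of width $\sim \alpha/(8|p'|)$ around the half-integers nearest $p'$. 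Using the exponential-moment assumption on $j$ in List~\ref{Assumptions}, I would also verify that on $\{t<\tau\}$ the absolute value $|K_{t}|$ drifts from $|p|$ by at most $\mathit{O}(1)$ between reflection events, so the truncation at $|K_{t}|\in[|p|/2,3|p|/2]$ is essentially inactive on the relevant time scale $\tau\sim|p|/\nu$.

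Step two is to compute the moments of $\tau$ for the i.i.d.\ Bernoulli reference chain with parameter $q$ and then transfer via coupling. The event ``first odd-length run of $1$'s followed by a $0$'' decomposes into a geometric number of failed even-length attempts followed by a successful odd-length attempt, and an explicit generating-function computation yields $\mathbb{E}[\tau^{n}]=n!(|p|/\nu)^{n}+\mathit{O}(|p|^{n-1})$, i.e., $\mathbb{E}[(\tau/|p|)^{n}]=n!\nu^{-n}+\mathit{O}(|p|^{-1})$ for the reference chain.  The per-step coupling error is $\mathit{O}(|p|^{-1-\delta})$, and summed over the $\mathit{O}(|p|)$ Poisson times occurring before $\tau$ this contributes an additional error of order $\mathit{O}(|p|^{-\delta})$, which is absorbed into the bound $C|p|^{\zeta-1}$ upon choosing $\delta\geq 1-\zeta$. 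For Part~(2), under the same representation the opposite-sign time within $[0,\tau]$ is the sum of those inter-Poisson gaps (each of mean $1/\mathcal{R}$) that lie within opposite-sign intervals of any of the runs; since the expected number of runs is $1+\mathit{O}(q)$ and each run contributes $\mathit{O}(1)$ opposite-sign intervals in expectation, the total expectation is uniformly bounded in $|p|$.

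The main obstacle is the uniformity of the coupling in step two: the true reflection probability at $t_{k}$ depends on the current $|K_{t_{k-1}}|$ rather than on $|p|$, so I must track the drift of $|K_{t}|$ across the sparse reflection events and show the accumulated deviation remains within the tolerance $|p|^{\zeta-1}$. This should follow by combining the one-step drift bound with a submartingale estimate on $|K_{t}|$ restricted to $[0,\tau]$, in the spirit of Prop.~\ref{LemSubMartBasic}, together with a crude tail bound ruling out exit via the $[|p|/2,3|p|/2]$-truncation on the relevant time scale.
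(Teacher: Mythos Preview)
The paper does not contain a proof of this proposition: it is simply imported from the author's earlier paper~\cite[Prop.~6.3]{Dispersion}, so there is no in-paper argument to compare against.  Your sketch is a plausible reconstruction of what that proof presumably does---reducing to a Bernoulli model for the sign-change indicators with parameter $q\approx \alpha/(4|p|)$, computing the moments of the first sign-flip time in that reference model, and controlling the coupling error---and it is consistent with the heuristics the paper gives in the introduction around~(\ref{Projectile}).

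One point deserves care.  Your statement that ``$|K_{t}|$ drifts from $|p|$ by at most $\mathit{O}(1)$ between reflection events'' is true for a single reflection event, but the relevant drift is the cumulative one over the $\mathit{O}(|p|)$ Poisson steps preceding the \emph{first} sign-flip, which is of order $|p|^{1/2}$ in magnitude.  If you simply bound the per-step coupling error by the worst-case deviation of $q_{k}=\alpha/(4|K_{t_{k-1}}|)$ from $q=\alpha/(4|p|)$ and sum, you obtain only an $\mathit{O}(|p|^{-1/2})$ error in $\mathbb{E}[(\tau/|p|)^{n}]$, which establishes Part~(1) only for $\zeta\geq \tfrac{1}{2}$.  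To reach arbitrary $\zeta>0$ you must exploit cancellation: the energy submartingale structure (Prop.~\ref{SubMart}) gives $\mathbb{E}\bigl[|K_{t_{j}}|-|p|\bigr]=\mathit{O}(1)$ uniformly in $j\lesssim |p|$, so the signed errors $q_{k}-q$ average out to order $|p|^{-2}$ rather than $|p|^{-3/2}$.  You flag this obstacle at the end of your proposal, but the resolution you suggest (a submartingale bound on $\sup|K_{t}|$) only controls the magnitude of the drift, not its mean; the argument needs the latter.
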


The following lemma bounds the moments for the longest time interval $\Delta\tau_{m}$ to occur up to time $\min\{\frac{T}{\lambda^{\gamma}},\varsigma\}$, and the proof follows easily from Part (1) of Prop.~\ref{TimeFlip}.

\begin{lemma}\label{LemIntLen}
For any $n\geq 1$ and $\iota>0$, there is a $C$ such that for all $\lambda<1$, 
$$ \mathbb{E}\Big[   \Big( \sup_{1\leq m\leq \mathbf{N}_{\frac{T}{\lambda^\gamma}}}\chi(\tau_{m}<\varsigma) \Delta\tau_{m} \Big)^{n}      \Big]\leq C\lambda^{-n-\iota} .$$
\end{lemma}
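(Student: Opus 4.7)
The plan is a standard high-moment Markov plus union-bound argument followed by integration of the resulting tail. Heuristically, on $\{\tau_m<\varsigma\}$ the momentum $|K_{\tau_m}|$ is of order $\mathbf p\sim\lambda^{-1}$, so by Part~(1) of Prop.~\ref{TimeFlip} each $\Delta\tau_m$ behaves like $|K_{\tau_m}|/\nu$ times a unit exponential, while there are at most $\mathcal N_{T/\lambda^\gamma}$ such intervals up to time $T/\lambda^\gamma$, with $\mathbb E[\mathcal N_{T/\lambda^\gamma}]=T\mathcal R\lambda^{-\gamma}$. The maximum of $O(\lambda^{-\gamma})$ such variables is therefore of order $\lambda^{-1}\log(\lambda^{-1})$, which is absorbed by the factor $\lambda^{-\iota}$ for any $\iota>0$.

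To make this rigorous, fix an integer $N$ to be chosen later. Since $\varsigma$ is the exit time of $|K_t|$ from the band $[(1-\lambda^\epsilon)\mathbf p,(1+\lambda^\epsilon)\mathbf p]$, on $\{\tau_m<\varsigma\}$ the inequality $|K_{\tau_m}|\leq 2\mathbf p$ holds for $\lambda$ small. Applying Part~(1) of Prop.~\ref{TimeFlip} via the strong Markov property at $\tau_m$ then yields a constant $C_N>0$, independent of $m$ and $\lambda$, with
\[
\chi(\tau_m<\varsigma)\,\mathbb E\bigl[\Delta\tau_m^N\bigm|\mathcal F_{\tau_m}\bigr]\leq C_N\mathbf p^N\chi(\tau_m<\varsigma).
\]
Using that each $\tau_m$ is a Poisson time, so $\mathbf N_{T/\lambda^\gamma}\leq\mathcal N_{T/\lambda^\gamma}$, Markov's inequality applied termwise gives
\[
\mathbb P\Bigl(\sup_{1\leq m\leq\mathbf N_{T/\lambda^\gamma}}\chi(\tau_m<\varsigma)\Delta\tau_m>t\Bigr)\leq t^{-N}\sum_{m\geq1}\mathbb E\bigl[\chi(\tau_m<T\lambda^{-\gamma}\wedge\varsigma)\Delta\tau_m^N\bigr]\leq C_N'\lambda^{-N-\gamma}t^{-N}.
\]

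Writing $S$ for the supremum and integrating the tail at the cut-off $t_0:=\lambda^{-1-\iota/n}$,
\[
\mathbb E[S^n]=n\int_0^\infty t^{n-1}\mathbb P(S>t)\,dt\leq t_0^n+nC_N'\lambda^{-N-\gamma}\int_{t_0}^\infty t^{n-1-N}\,dt.
\]
The first term is exactly $\lambda^{-n-\iota}$. Choosing $N>n\gamma/\iota$ (and in particular $N>n$) makes the second integral a constant multiple of $\lambda^{-N-\gamma}t_0^{n-N}=\lambda^{-n-\iota+(N\iota/n-\gamma)}\leq\lambda^{-n-\iota}$. Summing gives $\mathbb E[S^n]\leq C\lambda^{-n-\iota}$.

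The only genuine subtlety is verifying that Part~(1) of Prop.~\ref{TimeFlip} delivers the conditional moment estimate uniformly in $m$: the proposition is stated under the conditioning that the first Poisson time after the initial instant does not cause a sign change, whereas $\tau_m$ may terminate by a large momentum jump rather than a sign-flip. When $\tau_m$ is itself a sign-flip the conditioning is automatic; when $\tau_m$ is a large momentum jump, the same proof of Prop.~\ref{TimeFlip} yields the analogous bound with $|p|$ replaced by $|K_{\tau_m}|$ and the band $[\tfrac12|p|,\tfrac32|p|]$ replaced by $[\tfrac12|K_{\tau_m}|,\tfrac32|K_{\tau_m}|]$.
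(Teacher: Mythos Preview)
Your proof is correct and follows essentially the same approach as the paper: both use the high-moment bound from Part~(1) of Prop.~\ref{TimeFlip}, the inequality $|K_{\tau_m}|\leq 2\mathbf p$ on $\{\tau_m<\varsigma\}$, and the Poisson count $\mathbb E[\mathcal N_{T/\lambda^\gamma}]=\mathcal R T\lambda^{-\gamma}$, then choose the moment order large enough to absorb $\lambda^{-\gamma}$ into $\lambda^{-\iota}$. The only cosmetic difference is that the paper bounds $(\sup_m)^n$ directly by $\bigl(\sum_m(\cdot)^{nu}\bigr)^{1/u}$ and takes expectations, whereas you pass through a tail bound and integrate; the two are equivalent here.
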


\begin{proof}
Pick $\iota\in (0, \gamma )$.  Since $|K_{t}|\leq 2\mathbf{p}=\frac{2  \mathbf{p}_{0}}{\lambda  } $  for $t\leq \varsigma$, I have the following inequality:   
\begin{align} \label{Libs}
  \mathbb{E}\Big[   \Big( \sup_{1\leq m\leq \mathbf{N}_{\frac{T}{\lambda^\gamma}}}\chi(\tau_{m}<\varsigma) \Delta\tau_{m}\Big)^{n}      \Big]\leq \big(\frac{2  \mathbf{p}_{0}}{\lambda  } \big)^{n}  \mathbb{E}\Big[   \Big( \sup_{1\leq m\leq \mathbf{N}_{\frac{T}{\lambda^\gamma}}}\chi(\tau_{m}<\varsigma)\frac{\Delta\tau_{m}}{|K_{\tau_{m}}|} \Big)^{n}      \Big].
  \end{align}
   For $u > \frac{\gamma}{\iota}$, the expectation on the right side of~(\ref{Libs}) can be bounded by standard techniques: 
  \begin{align}\label{Jingle}
 \mathbb{E}\Big[   \Big( \sup_{1\leq m\leq \mathbf{N}_{\frac{T}{\lambda^\gamma}}}\chi(\tau_{m}<\varsigma)\frac{\Delta\tau_{m}}{|K_{\tau_{m}}|} \Big)^{n}      \Big]   \leq & \mathbb{E}\Big[   \sum_{m=1}^{ \mathbf{N}_{\frac{T}{\lambda^\gamma}}}  \chi(\tau_{m}<\varsigma) \big( \frac{\Delta\tau_{m}}{|K_{\tau_{m}}|} \big)^{n u}    \Big]^{\frac{1}{u}}\nonumber \\ & = \mathbb{E}\Big[   \sum_{m=1}^{ \mathbf{N}_{\frac{T}{\lambda^\gamma}}} \chi(\tau_{m}<\varsigma)\mathbb{E}\Big[\big(\frac{\Delta\tau_{m}}{|K_{\tau_{m}}|}  \big)^{nu}\,\Big|\,\widetilde{\mathcal{F}}_{\tau_{m}}   \Big]      \Big]^{\frac{1}{u}} \nonumber \\ & \leq  C_{nu}^{\frac{1}{u}}\mathbb{E}\big[   \mathbf{N}_{\frac{T}{\lambda^\gamma}}     \big]^{\frac{1}{u}}\nonumber \\ &\leq   C_{nu}^{\frac{1}{u}}\mathbb{E}\big[  \mathcal{N}_{\frac{T}{\lambda^\gamma}}     \big]^{\frac{1}{u}}= C_{nu}^{\frac{1}{u}}\big(\frac{T\mathcal{R}}{\lambda^{\gamma}})^{\frac{1}{u}}=\mathit{O}(\lambda^{-\iota}), 
 \end{align}   
 where  the second inequality holds for some $C_{nu}>0$ by Part (1) of Prop.~\ref{TimeFlip}. The first inequality in~(\ref{Jingle}) bounds the supremum by a sum and applies Jensen's inequality, and the third inequality  uses that the sign-flip times occur at Poisson times.   The second equality holds since the Poisson times occur with rate $\mathcal{R}$.

\end{proof}

The following proposition is from~\cite[Prop.4.2]{Dispersion}.  Notice that Parts (1) and (2) are analogous to (1) and (2) of Prop.~\ref{LemSubMartBasic}.   For Part (2), recall that the square of a positive submartingale is also a submartingale.  

\begin{proposition}\label{SubMart} Define $\varsigma:=\int_{\R}dv\,\frac{j(v)}{\mathcal{R}}v^{4}$.
\begin{enumerate}
\item  The square root   energy process,  $E^{\frac{1}{2}}(K_{t})$, is  a submartingale.

\item The increasing part of the  Doob-Meyer decomposition for the submartingale $E(K_{t})$ has the form   $E(K_{0})+\sigma t $.  

\item The second moment of  $E(K_{t})$ has the bound 
$$  \mathbb{E}\big[ E^2(K_{t})   \big]\leq   \mathbb{E}\big[ E^2(K_{0})   \big]+3\sigma t \mathbb{E}\big[ E(K_{0})   \big]+\varsigma\mathcal{R}t +\frac{3}{2}\sigma^2 t^2  .   $$

\end{enumerate}

\end{proposition}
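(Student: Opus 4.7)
The plan is to derive all three parts from a single master formula for the generator $\mathcal{L}$ of the Markov process $K_{t}$ evaluated on functions of the form $g\circ E$. The generator is
$$ (\mathcal{L}f)(p')=\int_{\R}dv\,j(v)\sum_{n\in\Z}|\kappa_{v}(p',n)|^{2}\bigl(f(p'+v+n)-f(p')\bigr), $$
and using the same observation as in the proof of Lem.~\ref{LemVariance}---namely that $\{|\tilde\psi_{p'+v+n}\rangle\}_{n\in\Z}$ is an orthonormal basis of the fiber at $(p'+v)\,\textup{mod}\,1$---we have, for any Borel $g$,
$$ \sum_{n\in\Z}|\kappa_{v}(p',n)|^{2}\,g\bigl(E(p'+v+n)\bigr)=\bigl\langle\tilde\psi_{p'}\bigm|e^{-\textup{i}vX}g(H)e^{\textup{i}vX}\bigm|\tilde\psi_{p'}\bigr\rangle. $$
Symmetrizing via $j(v)=j(-v)$ then yields the master formula
$$ (\mathcal{L}(g\!\circ\! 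E))(p')=\tfrac{1}{2}\int_{\R}dv\,j(v)\bigl\langle\tilde\psi_{p'}\bigm|e^{-\textup{i}vX}g(H)e^{\textup{i}vX}+e^{\textup{i}vX}g(H)e^{-\textup{i}vX}-2g(H)\bigm|\tilde\psi_{p'}\bigr\rangle. $$

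I then extract the three parts by choosing $g(E)=E^{1/2},E,E^{2}$ in turn. Part (1) is immediate from operator convexity of $x\mapsto x^{1/2}$ (as already exploited at the end of the proof of Lem.~\ref{LemVariance} and in Part (1) of Prop.~\ref{LemSubMartBasic}), which yields non-negativity of $e^{-\textup{i}vX}H^{1/2}e^{\textup{i}vX}+e^{\textup{i}vX}H^{1/2}e^{-\textup{i}vX}-2H^{1/2}$ and hence $(\mathcal{L}E^{1/2})(p')\geq 0$. Part (2) uses the elementary operator identity $A_{\pm}:=e^{\mp \textup{i}vX}He^{\pm \textup{i}vX}=H\pm 2vP+v^{2}$ (obtained from $e^{-\textup{i}vX}Pe^{\textup{i}vX}=P+v$), which gives $A_{+}+A_{-}-2H=2v^{2}I$ and $(\mathcal{L}E)(p')=\int j(v)v^{2}dv=\sigma$. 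Hence $E(K_{t})-\sigma t$ is a martingale, so the submartingale $E(K_{t})$ has Doob--Meyer increasing predictable part $E(K_{0})+\sigma t$. For Part (3) I push the same computation to $g(E)=E^{2}$. Direct expansion gives
$$ A_{+}^{2}+A_{-}^{2}-2H^{2}=4v^{2}H+8v^{2}P^{2}+2v^{4}, $$
which, using $P^{2}\leq H$ (since $V\geq 0$), is bounded by $12v^{2}H+2v^{4}$. Substituting into the master formula produces $(\mathcal{L}E^{2})(p')\leq C_{1}\sigma E(p')+\mathcal{R}\varsigma$ for an explicit constant $C_{1}$. Since $E^{2}(K_{t})-\int_{0}^{t}\mathcal{L}(E^{2})(K_{r})dr$ is a local martingale, I then take expectations, plug in $\mathbb{E}[E(K_{r})]=\mathbb{E}[E(K_{0})]+\sigma r$ from Part (2), and integrate in $r$ to obtain a moment bound of the stated form.

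The only bookkeeping hurdle is that the operator identities $e^{\mp \textup{i}vX}He^{\pm \textup{i}vX}=H\pm 2vP+v^{2}$ and its square live naturally on $L^{2}(\R)$, so their matrix elements in the generalized Bloch eigenkets $|\tilde\psi_{p'}\rangle$ must be interpreted through the fiber decomposition of Sect.~\ref{SecBlochFiber}. This is routine given the machinery already used in the proof of Lem.~\ref{LemVariance} and requires no new spectral or operator-theoretic input; once it is in place the entire argument reduces to the algebraic manipulations of $A_{\pm}$ above.
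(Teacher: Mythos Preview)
Your approach is correct and coincides with what the paper indicates: it cites \cite[Prop.~4.2]{Dispersion} without proof and notes the analogy to Prop.~\ref{LemSubMartBasic}, whose argument rests on exactly the operator identities $e^{\mp\textup{i}vX}He^{\pm\textup{i}vX}=H\pm 2vP+v^{2}$ and the positivity of $\tfrac{1}{2}\sum_{\pm}e^{\mp\textup{i}vX}H^{1/2}e^{\pm\textup{i}vX}-H^{1/2}$ that you invoke.

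Two small remarks. First, the phrase ``operator convexity of $x\mapsto x^{1/2}$'' is a misnomer: the square root is operator \emph{concave}, and neither convexity nor concavity alone yields the positivity you need. That positivity is a special fact for Schr\"odinger operators with $V\ge 0$, established in \cite[Prop.~4.1]{Dispersion} via the resolvent integral representation of the square root (cf.\ the end of the proof of Lem.~\ref{LemVariance}); you correctly point to the right place. Second, your expansion gives $A_{+}^{2}+A_{-}^{2}-2H^{2}=4v^{2}H+8v^{2}P^{2}+2v^{4}\le 12v^{2}H+2v^{4}$, hence $(\mathcal{L}E^{2})(p')\le 6\sigma E(p')+\mathcal{R}\varsigma$, which after integration produces $6\sigma t\,\mathbb{E}[E(K_{0})]+3\sigma^{2}t^{2}$ rather than the stated $3\sigma t$ and $\tfrac{3}{2}\sigma^{2}t^{2}$. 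This factor-of-two looseness is harmless for the paper's only application of Part~(3) (in Lem.~\ref{LemLindberg}, where one merely needs $\mathbb{E}[\sup_{t}E^{2}(K_{t})]$ to be polynomially bounded).
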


The lemma below states that the probably of the process $|K_{t}|$ leaving the interval $[\mathbf{p}-\lambda^{\epsilon} \mathbf{p},\mathbf{p}+\lambda^{\epsilon} \mathbf{p}]$  before time $\frac{T}{\lambda^{\gamma}}$ is small for $\epsilon\in [0,\frac{2-\gamma}{2})$ and $\lambda\ll 1$.  In particular, this implies that the $\tau_{m}$'s  before time $\frac{T}{\lambda^{\gamma}}$ are likely to be all sign-flips.

\begin{lemma}\label{LemTerm}
For small enough $\lambda>0$,  
$$ \mathbb{P}\Big[\sup_{0\leq t\leq \frac{T}{\lambda^{\gamma} } }  \big| |K_{t}|-|\mathbf{p}|\big|>\lambda^{\epsilon} \mathbf{p}  \Big] \leq  \frac{64\sigma T}{\mathbf{p}_{0}^2}\lambda^{2-\gamma-2\epsilon} . $$

\end{lemma}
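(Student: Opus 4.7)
The plan is to apply Doob's $L^{2}$ maximal inequality to the mean-zero martingale
$$ M_{t}:=E(K_{t})-E(\mathbf{p})-\sigma t,$$
which arises as the martingale part of the Doob--Meyer decomposition of the submartingale $E(K_{t})$ guaranteed by Part~(2) of Proposition~\ref{SubMart}. The strategy has three steps: translate the event in question into a lower bound on $\sup_{t}|M_{t}|$, control $\mathbb{E}[M_{T/\lambda^{\gamma}}^{2}]$ using the second-moment estimate from Part~(3) of Proposition~\ref{SubMart}, and combine the two through Doob's inequality.

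For the first step, I will use that $\mathbf{q}$ is anti-symmetric and monotonically increasing so that $E=\mathbf{q}^{2}$ is an even function increasing in $|p|$, together with the inequality $|E^{\frac{1}{2}}(p)-|p|\,|\leq 1/2$ from Part~(1) of Lemma~\ref{LemLittleEnergy}. At any time $t^{*}$ for which $||K_{t^{*}}|-\mathbf{p}|>\lambda^{\epsilon}\mathbf{p}$, factoring a difference of squares and applying these two estimates gives
\begin{align*}
 |E(K_{t^{*}})-E(\mathbf{p})|
 &=\bigl(E^{\frac{1}{2}}(K_{t^{*}})+E^{\frac{1}{2}}(\mathbf{p})\bigr)\,\bigl|E^{\frac{1}{2}}(K_{t^{*}})-E^{\frac{1}{2}}(\mathbf{p})\bigr| \\
 &\geq (\mathbf{p}-\tfrac{1}{2})(\lambda^{\epsilon}\mathbf{p}-1)=\lambda^{\epsilon}\mathbf{p}^{2}\bigl(1-o_{\lambda}(1)\bigr).
\end{align*}
The drift contribution $\sigma t\leq \sigma T\lambda^{-\gamma}$ is negligible relative to $\lambda^{\epsilon}\mathbf{p}^{2}=\lambda^{\epsilon-2}\mathbf{p}_{0}^{2}$, since the constraint $\epsilon<(2-\gamma)/2$ forces $\epsilon-2<-\gamma$. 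Consequently $|M_{t^{*}}|\geq \lambda^{\epsilon}\mathbf{p}^{2}(1-o_{\lambda}(1))$, and therefore the event $\{\sup_{t\leq T/\lambda^{\gamma}}||K_{t}|-\mathbf{p}|>\lambda^{\epsilon}\mathbf{p}\}$ is contained in $\{\sup_{t\leq T/\lambda^{\gamma}}|M_{t}|\geq \lambda^{\epsilon}\mathbf{p}^{2}(1-o_{\lambda}(1))\}$.

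For the second step, since $\mathbb{E}[M_{t}]=0$ gives $\mathbb{E}[E(K_{t})]=E(\mathbf{p})+\sigma t$, I can expand the square and cancel explicitly:
$$ \mathbb{E}[M_{t}^{2}]=\mathbb{E}[E(K_{t})^{2}]-(E(\mathbf{p})+\sigma t)^{2}\leq \sigma t\,E(\mathbf{p})+\varsigma\mathcal{R}t+\tfrac{1}{2}\sigma^{2}t^{2},$$
where the inequality substitutes the bound from Part~(3) of Proposition~\ref{SubMart}. At $t=T/\lambda^{\gamma}$, the estimate $E(\mathbf{p})\leq (\mathbf{p}+\tfrac{1}{2})^{2}=\mathbf{p}^{2}(1+o_{\lambda}(1))$ from Part~(1) of Lemma~\ref{LemLittleEnergy} gives a dominant contribution $\sigma T\mathbf{p}^{2}\lambda^{-\gamma}(1+o_{\lambda}(1))$; the remaining terms, of orders $\lambda^{-\gamma}$ and $\lambda^{-2\gamma}$, are strictly subdominant since $\gamma<2$ implies $\lambda^{-2\gamma}\ll \lambda^{-\gamma-2}$.

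Combining via Doob's $L^{2}$ maximal inequality $\mathbb{P}[\sup_{t\leq T/\lambda^{\gamma}}|M_{t}|\geq a]\leq \mathbb{E}[M_{T/\lambda^{\gamma}}^{2}]/a^{2}$ with $a=\lambda^{\epsilon}\mathbf{p}^{2}(1-o_{\lambda}(1))$ and substituting $\mathbf{p}=\mathbf{p}_{0}/\lambda$ yields
$$ \mathbb{P}\Big[\sup_{0\leq t\leq T/\lambda^{\gamma}}||K_{t}|-\mathbf{p}|>\lambda^{\epsilon}\mathbf{p}\Big]\leq\frac{\sigma T}{\mathbf{p}_{0}^{2}}\,\lambda^{2-\gamma-2\epsilon}\bigl(1+o_{\lambda}(1)\bigr),$$
and the $(1+o_{\lambda}(1))$ factor is absorbed into the constant $64$ for $\lambda$ sufficiently small. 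There is no substantive obstacle; the estimate is a routine second-moment calculation. The only mildly delicate point is the passage from $||K_{t}|-\mathbf{p}|$ to $|E(K_{t})-E(\mathbf{p})|$, where the near-linearity $|E^{\frac{1}{2}}(p)-|p|\,|\leq 1/2$ must be used to avoid losing a factor of $\mathbf{p}$; and one must verify that the constraint $\epsilon<(2-\gamma)/2$ (rather than the weaker $\epsilon<2-\gamma$) is strong enough to ensure that the deterministic drift $\sigma t$ does not overwhelm the gap $\lambda^{\epsilon}\mathbf{p}^{2}$, which it does with room to spare.
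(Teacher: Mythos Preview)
Your proof is correct. It differs from the paper's in a clean way worth noting. The paper works with the submartingale $E^{1/2}(K_t)$, whose Doob--Meyer increasing part is not explicit; to get around this it splits $|E^{1/2}(K_t)-E^{1/2}(K_0)|^2$ into the square of the martingale part $g_t$ and the square of the positive part, applying Doob's maximal inequality to each and then collapsing to $\mathbb{E}[E(K_{T/\lambda^\gamma})-E(K_0)]=\sigma T\lambda^{-\gamma}$ via Part~(2) of Proposition~\ref{SubMart}. You instead work one level up with the genuine martingale $M_t=E(K_t)-E(\mathbf{p})-\sigma t$, whose drift is explicit, and control $\mathbb{E}[M_t^2]$ directly through the second-moment bound of Part~(3). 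Your route is more transparent and actually yields a sharper leading constant, at the price of invoking Part~(3) rather than just Parts~(1)--(2). The only point that deserves care---and you handled it---is that the deterministic drift $\sigma T\lambda^{-\gamma}$ must be dominated by the threshold $\lambda^{\epsilon}\mathbf{p}^2=\mathbf{p}_0^2\lambda^{\epsilon-2}$; this uses $\epsilon<2-\gamma$, which is implied (with margin) by the standing assumption $\epsilon<(2-\gamma)/2$.
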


\begin{proof}

Since $K_{0}=\mathbf{p}$ the difference $|K_{t}|-|\mathbf{p}|$ can be written as
\begin{align*} \Big(|K_{t} |- E^{\frac{1}{2}}(K_{t})\Big)   + \Big( E^{\frac{1}{2}}(K_{t})-  E^{\frac{1}{2}}(K_{0}) \Big) &+\Big(  E^{\frac{1}{2}}(K_{0})-|K_{0}|  \Big)   , 
\end{align*}
I have the inequality
$$ \Big| |K_{t}|-|\mathbf{p}|   \Big| \leq 2 B+\Big| E^{\frac{1}{2}}(K_{t})-  E^{\frac{1}{2}}(K_{0}) \Big|    , $$
where $B:=\sup_{p\in \R}\big| E^{\frac{1}{2}}(p) -|p|   \big|$.
The above inequality implies that
\begin{align}\label{Tracy}
\mathbb{P}\Big[\sup_{0\leq t\leq \frac{T}{\lambda^{\gamma} } }  \big| |K_{t}|-|\mathbf{p}|\big|>\lambda^{\epsilon} \mathbf{p}  \Big]\leq  \mathbb{P}\Big[\sup_{0\leq t\leq \frac{T}{\lambda^{\gamma} } }  \big| E^{\frac{1}{2}}(K_{t})-E^{\frac{1}{2}}(K_{0})\big|>\lambda^{\epsilon} \frac{ \mathbf{p} }{2 }\Big]  
\end{align}
since $B$ is smaller than  $\lambda^{\epsilon} \frac{\mathbf{p} }{2}=\lambda^{\epsilon-1} \frac{\mathbf{p}_{0} }{2}  $ for small enough $\lambda$.  By Chebyshev inequality, I have the first  inequality below:
\begin{align}\label{Splash}
\lambda^{2\epsilon} \frac{ \mathbf{p}^{2} }{4}\mathbb{P}\Big[\sup_{0\leq t\leq \frac{T}{\lambda^{\gamma} } }  \big| E^{\frac{1}{2}}(K_{t})-E^{\frac{1}{2}}(K_{0})\big|>\lambda^{\epsilon} \frac{ \mathbf{p} }{2  }\Big]\leq & \mathbb{E}\Big[\sup_{0\leq t\leq \frac{T}{\lambda^{\gamma} } }  \big| E^{\frac{1}{2}}(K_{t})-E^{\frac{1}{2}}(K_{0})\big|^{2}\Big]\nonumber  \\ \leq & \mathbb{E}\Big[\sup_{0\leq t\leq \frac{T}{\lambda^{\gamma} } } g_{t}^{2}+\sup_{0\leq t\leq \frac{T}{\lambda^{\gamma} } }  \big| E^{\frac{1}{2}}(K_{t})-E^{\frac{1}{2}}(K_{0})\big|_{+}^{2}\Big]\nonumber \\ \leq & 4 \mathbb{E}\Big[g_{\frac{T}{\lambda^{\gamma}}}^{2}+ \big| E^{\frac{1}{2}}(K_{\frac{T}{\lambda^{\gamma}}})-E^{\frac{1}{2}}(K_{0})\big|_{+}^{2}\Big]\nonumber  \\ \leq & 8 \mathbb{E}\left[  E(K_{\frac{T}{\lambda^{\gamma} }})-E(K_{0})\right]\nonumber \\ =&\frac{8\sigma T}{\lambda^{\gamma} },
\end{align}
where $g_{t}$ is the martingale part in the Doob-Meyer decomposition for $E^{\frac{1}{2}}(K_{t})$, and $|x|_{+}:=x1_{x\geq 0}$ for $x\in \R$.  The second inequality follows by neglecting the increasing part in the Doob-Meyer decomposition when $ E^{\frac{1}{2}}(K_{t})-E^{\frac{1}{2}}(K_{0}) $ is negative.  Two applications of Doob's maximal inequality gives  the third inequality in~(\ref{Splash}), and   the equality is by Part (2) of Prop.~\ref{SubMart}.   The fourth inequality in~(\ref{Splash}) uses the definition for $g_{t}$ and that $ E^{\frac{1}{2}}(K_{t})-E^{\frac{1}{2}}(K_{0})$ is a submartingale.  Thus, I have that
$$ \mathbb{P}\Big[\sup_{0\leq t\leq \frac{T}{\lambda^{\gamma} } }  \big| E^{\frac{1}{2}}(K_{t})-E^{\frac{1}{2}}(K_{0})\big|>\lambda^{\epsilon} \frac{ \mathbf{p} }{2  }\Big] \leq  \frac{32\sigma T}{\mathbf{p}_{0}^2}\lambda^{2-\gamma-2\epsilon},  $$
and for small enough $\lambda$, the first term on the right side of~(\ref{Tracy}) is smaller than  $\frac{32\sigma T}{\mathbf{p}_{0}^2}\lambda^{2-\gamma-2\epsilon}$ also.  

\end{proof}

\begin{lemma} \label{Crave}
  As $\lambda\searrow 0 $ there is convergence in probability 
  $$\sup_{0\leq s\leq T}\Big| Y_{\lambda, s}- 2 \lambda^{\frac{\gamma+3}{2}} \sum_{m=1 }^{\mathbf{N}_{\frac{T}{\lambda^\gamma}}}\chi(\tau_{m}<\varsigma) S(K_{\tau_{m}})\int_{\tau_{m}}^{\tau_{m+1}  }dr\, |K_{r}|\Big| \Longrightarrow 0 .    $$

 \end{lemma}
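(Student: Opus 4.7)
\textbf{Proof plan for Lem.~\ref{Crave}.} The idea is to decompose $Y_{\lambda,s}$ by partitioning the interval $[0,s/\lambda^{\gamma}]$ according to the sign-flip times $\tau_{m}$ and then split off four error terms whose individual smallness follows from the preliminary results in this section. On each sub-interval $[\tau_{m},\tau_{m+1})$ I would  use the identity $K_{r}=S(K_{r})|K_{r}|$ together with $S(K_{r})-S(K_{\tau_{m}})=-2S(K_{\tau_{m}})\chi(S(K_{r})\neq S(K_{\tau_{m}}))$ to write
\[
\int_{0}^{s/\lambda^{\gamma}}K_{r}\,dr=\sum_{m\geq 0,\,\tau_{m}<s/\lambda^{\gamma}}S(K_{\tau_{m}})\int_{\tau_{m}}^{\tau_{m+1}\wedge s/\lambda^{\gamma}}|K_{r}|\,dr-2\sum_{m\geq 0,\,\tau_{m}<s/\lambda^{\gamma}}S(K_{\tau_{m}})\int_{\tau_{m}}^{\tau_{m+1}\wedge s/\lambda^{\gamma}}\chi(S(K_{r})\neq S(K_{\tau_{m}}))|K_{r}|\,dr.
\]
Matching this against the target expression (interpreted with $\mathbf{N}_{s/\lambda^{\gamma}}$ in place of $\mathbf{N}_{T/\lambda^{\gamma}}$, so that the approximation is by a genuine progressively-measurable process) leaves four contributions to bound uniformly in $s\in[0,T]$: (a) the head $[0,\tau_{1})$; (b) the tail $[s/\lambda^{\gamma},\tau_{\mathbf{N}_{s/\lambda^{\gamma}}+1})$; (c) the sign-mismatch integrals $\int_{\tau_{m}}^{\tau_{m+1}}\chi(S(K_{r})\neq S(K_{\tau_{m}}))|K_{r}|\,dr$; and (d) the indicator $\chi(\tau_{m}<\varsigma)$.

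For (a) and (b), on the event $\{\tau_{m}<\varsigma\}$ one has $|K_{r}|\leq\tfrac{3}{2}|K_{\tau_{m}}|\leq\tfrac{3}{2}(1+\lambda^{\epsilon})\mathbf{p}$ over $[\tau_{m},\tau_{m+1})$ by the very definition of $\tau_{m+1}$, so each of the two boundary contributions is bounded by $3\mathbf{p}\cdot\sup_{1\leq m\leq \mathbf{N}_{T/\lambda^{\gamma}}}\chi(\tau_{m}<\varsigma)\Delta\tau_{m}$ (and analogously $3\mathbf{p}\cdot\tau_{1}$ for the head). The head moments come from Prop.~\ref{TimeFlip}(1) giving $\mathbb{E}[\tau_{1}]=O(\mathbf{p}/\nu)=O(\lambda^{-1})$, and Lem.~\ref{LemIntLen} controls the tail through $\mathbb{E}[(\sup_{m}\chi(\tau_{m}<\varsigma)\Delta\tau_{m})^{2}]=O(\lambda^{-2-\iota})$. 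Multiplying by the prefactor $\lambda^{(\gamma+3)/2}\cdot\mathbf{p}=\mathbf{p}_{0}\lambda^{(\gamma+1)/2}$ gives Markov-bounds of order $\lambda^{(\gamma-1)/2-\iota/2}$, which vanishes for small $\iota$ since $\gamma>1$.

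For (c), I would use Prop.~\ref{TimeFlip}(2): the conditional expectation $\mathbb{E}[\int_{\tau_{m}}^{\tau_{m+1}}\chi(S(K_{r})\neq S(K_{\tau_{m}}))\,dr\,|\,\widetilde{\mathcal{F}}_{\tau_{m}^{-}}]\leq C$ uniformly. Pulling the sup of $|K_{r}|\leq\tfrac{3}{2}(1+\lambda^{\epsilon})\mathbf{p}$ out of the integrand and summing gives an expected total sign-mismatch mass controlled by $\tfrac{3}{2}(1+\lambda^{\epsilon})\mathbf{p}\cdot C\cdot\mathbb{E}[\mathbf{N}_{T/\lambda^{\gamma}}]$. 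Since the waiting times between consecutive $\tau_{m}$ have mean $\approx|K_{\tau_{m}}|/\nu\approx\mathbf{p}/\nu$ on $\{\tau_{m}<\varsigma\}$ by Prop.~\ref{TimeFlip}(1), an elementary Wald/renewal estimate gives $\mathbb{E}[\mathbf{N}_{T/\lambda^{\gamma}}\wedge\varsigma]=O(T\lambda^{1-\gamma})$, and by Lem.~\ref{LemTerm} the event $\{\varsigma<T/\lambda^{\gamma}\}$ has probability $O(\lambda^{2-\gamma-2\epsilon})$, which absorbs the ambiguity on the exceptional event. The resulting bound of order $\mathbf{p}\cdot\lambda^{1-\gamma}\cdot\lambda^{(\gamma+3)/2}=O(\lambda^{(3-\gamma)/2})$ tends to zero since $\gamma<2$.

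Finally, (d) is handled by restricting to the high-probability event $\{\varsigma\geq T/\lambda^{\gamma}\}$, on which $\chi(\tau_{m}<\varsigma)=1$ for every $\tau_{m}\leq T/\lambda^{\gamma}$ so the indicator contributes nothing; on its complement, whose probability is $O(\lambda^{2-\gamma-2\epsilon})$ by Lem.~\ref{LemTerm}, one simply absorbs everything into a vanishing-probability event. Assembling the four estimates and taking a union bound yields convergence in probability of the supremum to $0$. The most delicate step is (c), because it is the place where both constraints $\gamma>1$ (for (a),(b)) and $\gamma<2$ (for (c)) are simultaneously tight; getting the renewal-style count $\mathbb{E}[\mathbf{N}_{T/\lambda^{\gamma}}]=O(T\lambda^{1-\gamma})$ cleanly enough to combine with the uniform-$s$ supremum is the main technical hurdle, and will require using the stopping time $\varsigma$ to keep $|K_{\tau_{m}}|$ comparable to $\mathbf{p}$ throughout.
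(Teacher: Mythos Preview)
Your proposal is correct and follows essentially the same route as the paper: restrict to the high-probability event $\{\varsigma\geq T/\lambda^{\gamma}\}$ via Lem.~\ref{LemTerm}, bound the head/tail boundary pieces by $\lambda^{(\gamma+3)/2}\sup_{m}\chi(\tau_{m}<\varsigma)\int_{\tau_{m}}^{\tau_{m+1}}|K_{r}|\,dr$ and invoke Lem.~\ref{LemIntLen}, and control the sign-mismatch integrals through Prop.~\ref{TimeFlip}(2). The only cosmetic difference is in how you count terms in step~(c): you estimate $\mathbb{E}[\mathbf{N}_{T/\lambda^{\gamma}}]$ directly by a Wald/renewal bound, whereas the paper converts each $|K_{\tau_{m}}|$ back into $\nu\,\mathbb{E}[\Delta\tau_{m}\mid\widetilde{\mathcal{F}}_{\tau_{m}^{-}}]$ via Prop.~\ref{TimeFlip}(1) and then telescopes $\sum\Delta\tau_{m}\leq T/\lambda^{\gamma}$ --- the two are equivalent and produce the same $O(\lambda^{(3-\gamma)/2})$ rate. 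Your reading of the upper summation index as $\mathbf{N}_{s/\lambda^{\gamma}}$ is also what the paper's proof actually establishes.
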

 
\begin{proof} 
By Lem.~\ref{LemTerm} the probability of the event $\varsigma\geq \frac{T}{\lambda^{\gamma} }$ is close to one for $\lambda \ll 1$, and thus with probability close to one, the following inequality holds: 
 $$\sup_{0\leq s\leq T}\Big| Y_{\lambda, s}- 2 \lambda^{\frac{\gamma+3}{2}} \sum_{m=1 }^{\mathbf{N}_{\frac{s}{\lambda^\gamma}}} \chi(\tau_{m}<\varsigma)\int_{\tau_{m}}^{\tau_{m+1}  }dt\, K_{t}\Big|\leq   4 \lambda^{\frac{\gamma+3}{2}}\sup_{0\leq m\leq \mathbf{N}_{\frac{T}{\lambda^\gamma}}}\chi(\tau_{m}<\varsigma) \int_{\tau_{m}}^{\tau_{m+1}  }dt\, |K_{t}|. $$
The right side goes to zero by the argument in the proof of Lem.~\ref{LemLindberg}.  
Over an interval $t\in [\tau_{m},\tau_{m+1})$, the process $K_{t}$ maintains the same sign except for isolated Poisson times at which it jumps to the opposite sign and back again at the next Poisson time, and thus 
 \begin{align*}
 \sup_{0\leq s\leq T}\Big|   2 \lambda^{\frac{\gamma+3}{2}} \sum_{m=1 }^{\mathbf{N}_{\frac{s}{\lambda^\gamma}}} \chi(\tau_{m}<\varsigma)\int_{\tau_{m}}^{\tau_{m+1}  }dt\, K_{t}&-2 \lambda^{\frac{\gamma+3}{2}}\sum_{m=1 }^{\mathbf{N}_{\frac{s}{\lambda^\gamma}}}\chi(\tau_{m}<\varsigma) S(K_{\tau_{m}}) \int_{\tau_{m}}^{\tau_{m+1}  }dt\, |K_{t}| \Big| \\  & \leq  4 \lambda^{\frac{\gamma+3}{2}}\sum_{m=1}^{\mathbf{N}_{\frac{T}{\lambda^\gamma}}} \chi(\tau_{m}<\varsigma) \int_{\tau_{m}}^{\tau_{m+1}}dt\,\chi\big(K_{t}\neq K_{\tau_{m}}\big)|K_{t}|  .  
 \end{align*}

 The inequalities below bound the total duration of the premature sign changes.  
 The first inequality below holds since $|K_{t}|\in \big[\frac{1}{2}|K_{\tau_{m}}|,\,\frac{3}{2}|K_{\tau_{m}}|\big]$ for $t\in [\tau_{m},\tau_{m+1})$: 
\begin{align}\label{KindaBlue}
2 \lambda^{\frac{\gamma+3}{2}}\mathbb{E}\Big[\sum_{m=1}^{\mathbf{N}_{\frac{T}{\lambda^\gamma}}} \chi(\tau_{m}<\varsigma) &\int_{\tau_{m}}^{\tau_{m+1}}dt\, \chi\big(K_{t}\neq K_{\tau_{m}}\big)|K_{t}|\,\Big]\nonumber  \\ &\leq  3 \lambda^{\frac{\gamma+3}{2}} \mathbb{E}\Big[\sum_{m=1}^{\mathbf{N}_{\frac{T}{\lambda^\gamma}}}\chi(\tau_{m}<\varsigma)|K_{\tau_{m}}|\,\mathbb{E}\Big[\int_{\tau_{m}}^{\tau_{m+1}}dt\,\chi\big(K_{t}\neq K_{\tau_{m}}\big)\,\Big|\,\widetilde{\mathcal{F}}_{\tau_{m}^{-}}\Big]\Big]\nonumber   \\ &\leq 3C \lambda^{\frac{\gamma+3}{2}} \mathbb{E}\Big[ \sum_{m=1}^{\mathbf{N}_{\frac{T}{\lambda^\gamma}}}\chi(\tau_{m}<\varsigma)|K_{\tau_{m}}| \Big]\nonumber \\ &\leq  6C\nu \lambda^{\frac{\gamma+3}{2}}\mathbb{E}\Big[\sum_{m=1}^{\mathbf{N}_{\frac{T}{\lambda^\gamma}}}\chi(\tau_{m}<\varsigma)\mathbb{E}\big[\Delta\tau_{m}   \,\big|\,\widetilde{\mathcal{F}}_{\tau_{m}^{-}}\big]  \Big].
\end{align}
The second inequality in~(\ref{KindaBlue}) holds for some $C>0$ by Part (2) of Prop.~\ref{TimeFlip}.  For the third inequality, I have used that $\mathbb{E}\big[\Delta\tau_{m}   \,\big|\,\widetilde{\mathcal{F}}_{\tau_{m}^{-}}\big]\approx \nu^{-1}|K_{\tau_{m}} |$ for large  $| K_{\tau_{m}}  |$ by Part (1) of Prop.~\ref{TimeFlip}, and I have  doubled the bound to cover the error. 
By removing the nested conditional expectations, the bottom line of~(\ref{KindaBlue}) is equal to the following:
\begin{align}
6C\nu  \lambda^{\frac{\gamma+3}{2}}\mathbb{E}\Big[\sum_{m=1}^{\mathbf{N}_{\frac{T}{\lambda^\gamma}}}\chi(\tau_{m}<\varsigma)\Delta\tau_{m}    \Big]  \leq & 6C\nu \lambda^{\frac{\gamma+3}{2}}\mathbb{E}\Big[\sum_{m=1}^{\mathbf{N}_{\frac{T}{\lambda^\gamma}}-1}\Delta\tau_{m}    +\sup_{1\leq m\leq \mathbf{N}_{\frac{T}{\lambda^\gamma}} }\chi(\tau_{m}<\varsigma)\Delta\tau_{m}    \Big]\nonumber  \\  \leq & 6C\nu T\Big(\lambda^{\frac{3}{2}-\frac{\gamma}{2} }+\mathit{O}(\lambda^{\frac{\gamma}{2}+\frac{1}{2}-\iota } ) \Big) \longrightarrow 0,
\end{align}
where the convergence to zero as $\lambda\searrow 0$ holds for any $\iota \in (0,1)$. The last inequality uses that the $\Delta\tau_{m}$'s sum up to less than $\frac{T}{\lambda^{\gamma}}$ for the first term and Lem.~\ref{LemIntLen} for the second term.

 \end{proof}

Recall that $\nu:=\frac{\alpha \mathcal{R}}{4}$ and  $\vartheta:=\frac{16\mathbf{p}^{3}_{0}}{\alpha \mathcal{R} }=\frac{4\mathbf{p}^{3}_{0}}{\nu}$.  The process  $\big[\mathbf{m}_{\lambda},\mathbf{m}_{\lambda}\big]_{s}$ refers to the quadratic variation of the martingale $\mathbf{m}_{\lambda}$.   
  
\begin{lemma}\label{LemMartApprox}
In the limit $\lambda\searrow 0$ there are the following convergences in probability:
\begin{enumerate}
 \item $\sup_{0\leq s\leq T}\big| \mathbf{m}_{\lambda, s}-Y_{\lambda,s}   \big|\Longrightarrow 0 $,

\item $\sup_{0\leq s\leq T}\Big|   \big[\mathbf{m}_{\lambda},\mathbf{m}_{\lambda}\big]_{s}-s\vartheta    \Big| \Longrightarrow 0$.
\end{enumerate}
 
 \end{lemma}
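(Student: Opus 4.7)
The plan is to reduce both statements to summation estimates over the sign-flip intervals $[\tau_m,\tau_{m+1})$, exploiting that the signs $S(K_{\tau_m})$ alternate in $m$ and that under the $\varsigma$-cutoff the magnitudes $|K_{\tau_m}|$ remain within $\lambda^\epsilon\mathbf{p}$ of $\mathbf{p}$. For part (1), I would first use Lemma~\ref{Crave} to replace $Y_{\lambda,s}$ by $\widetilde Y_{\lambda,s}:=2\lambda^{(\gamma+3)/2}\sum_{m=1}^{\mathbf{N}_{s/\lambda^\gamma}}\chi(\tau_m<\varsigma)S(K_{\tau_m})\int_{\tau_m}^{\tau_{m+1}}|K_r|\,dr$ uniformly on $s\in[0,T]$. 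Next, I would replace the inner integral by $|K_{\tau_m}|\Delta\tau_m$; the error $\int_{\tau_m}^{\tau_{m+1}}(|K_r|-|K_{\tau_m}|)\,dr$ is analyzed via the Doob--Meyer decomposition $E(K_t)=E(K_0)+\sigma t+M_t$ from Prop.~\ref{SubMart}(2), together with Prop.~\ref{SubMart}(3) and Doob's maximal inequality, so that summing $\mathbf{N}_{T/\lambda^\gamma}=\mathit{O}(\lambda^{1-\gamma})$ terms and multiplying by $\lambda^{(\gamma+3)/2}$ gives a total bound $\mathit{O}(\lambda^{1-\gamma/2})$. Since $S(K_{\tau_m})|K_{\tau_m}|=K_{\tau_m}$, this yields $\widetilde Y_{\lambda,s}\approx 2\lambda^{(\gamma+3)/2}\sum\chi(\tau_m<\varsigma)K_{\tau_m}\Delta\tau_m$, which (up to the overall normalization convention in the definition of $\mathbf{m}_{\lambda,s}$) differs from $\mathbf{m}_{\lambda,s}$ by the predictable drift $\lambda^{(\gamma+3)/2}\sum\chi(\tau_m<\varsigma)K_{\tau_m}\mathbb{E}[\Delta\tau_m\mid\widetilde{\mathcal{F}}_{\tau_m^-}]$.

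By Prop.~\ref{TimeFlip}(1) this drift reduces to $\nu^{-1}\lambda^{(\gamma+3)/2}\sum\chi(\tau_m<\varsigma)S(K_{\tau_m})K_{\tau_m}^2$ modulo negligible remainders. Since the signs $S(K_{\tau_m})$ alternate in $m$ by the definition of sign-flip, Abel summation expresses this as a boundary term plus pairwise differences $K_{\tau_{m+1}}^2-K_{\tau_m}^2$ grouped with the alternating sign. Writing $K_t^2=E(K_t)+\mathit{O}(1)$ and applying the Doob--Meyer splitting of Prop.~\ref{SubMart}(2) once more, each pairwise difference decomposes into a drift piece $\sigma\Delta\tau_m$ and a martingale increment whose $L^2$ norm is $\mathit{O}((\sigma\mathbf{p}^3/\nu)^{1/2})$ via Prop.~\ref{SubMart}(3). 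After rescaling by $\nu^{-1}\lambda^{(\gamma+3)/2}$ and using $\gamma\in(1,2)$, both contributions vanish as $\lambda\to 0$, and Doob's $L^2$ inequality applied to the $s$-indexed martingale promotes the pointwise bound to uniform-in-$s$ convergence.

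For part (2), since $\mathbf{m}_\lambda$ is a pure-jump martingale, $[\mathbf{m}_\lambda,\mathbf{m}_\lambda]_s=\lambda^{\gamma+3}\sum_{m\leq\mathbf{N}_{s/\lambda^\gamma}}\chi(\tau_m<\varsigma)K_{\tau_m}^2(\Delta\tau_m-\mathbb{E}[\Delta\tau_m\mid\widetilde{\mathcal{F}}_{\tau_m^-}])^2$. I would first replace this by its predictable counterpart $\langle\mathbf{m}_\lambda,\mathbf{m}_\lambda\rangle_s=\lambda^{\gamma+3}\sum\chi(\tau_m<\varsigma)K_{\tau_m}^2\,\mathrm{Var}(\Delta\tau_m\mid\widetilde{\mathcal{F}}_{\tau_m^-})$; the difference is itself a martingale whose $L^2$ norm is controlled by the $n=4$ moment bound from Prop.~\ref{TimeFlip}(1), and Doob's inequality gives uniform convergence to zero. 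The $n=1,2$ cases of Prop.~\ref{TimeFlip}(1) yield $\mathrm{Var}(\Delta\tau_m\mid\widetilde{\mathcal{F}}_{\tau_m^-})=\nu^{-2}|K_{\tau_m}|^2+\mathit{O}(|K_{\tau_m}|^{1+\zeta})$, so $\langle\mathbf{m}_\lambda,\mathbf{m}_\lambda\rangle_s$ reduces to $\lambda^{\gamma+3}\nu^{-2}\sum\chi(\tau_m<\varsigma)K_{\tau_m}^4$. By Lemma~\ref{LemTerm}, $K_{\tau_m}^4=\mathbf{p}^4(1+\mathit{O}(\lambda^\epsilon))$ on $\{\tau_m<\varsigma\}$ with high probability; combining this with a renewal/law-of-large-numbers argument based on Prop.~\ref{TimeFlip}(1), which yields $\mathbf{N}_{s/\lambda^\gamma}\sim s\nu\lambda^{1-\gamma}/\mathbf{p}_0$ uniformly on $[0,T]$, identifies the limit as $s\vartheta$. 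Uniform convergence in $s$ is automatic because $[\mathbf{m}_\lambda,\mathbf{m}_\lambda]_s$ and $s\vartheta$ are both nondecreasing with continuous limit.

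The main technical obstacle is the alternating-sum estimate in part (1). Using only the $\varsigma$-cutoff bound $|K_{\tau_m}|\in[\mathbf{p}(1-\lambda^\epsilon),\mathbf{p}(1+\lambda^\epsilon)]$ produces pairwise differences $|K_{\tau_{m+1}}^2-K_{\tau_m}^2|=\mathit{O}(\lambda^\epsilon\mathbf{p}^2)$, which is insufficient once $\gamma\geq 3/2$; the finer $L^2$ structure of the energy submartingale must be exploited so that the typical pair difference shrinks to $\mathit{O}((\sigma\mathbf{p}^3/\nu)^{1/2})$. Coordinating this martingale-fluctuation bound with the renewal count for $\mathbf{N}$ and the $\varsigma$-cutoff is the technical heart of the argument.
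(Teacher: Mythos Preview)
Your strategy matches the paper's: invoke Lemma~\ref{Crave}, replace $\int_{\tau_m}^{\tau_{m+1}}|K_r|\,dr$ by $|K_{\tau_m}|\Delta\tau_m$, split off the predictable drift to isolate $\mathbf{m}_\lambda$, and kill the drift via the alternating signs together with the $L^2$ structure of the energy submartingale. For Part~(2) your passage from $[\mathbf{m}_\lambda,\mathbf{m}_\lambda]$ to the predictable bracket plus the variance formula from Prop.~\ref{TimeFlip}(1) is exactly the paper's $\mathbf{W}^{(1)},\mathbf{W}^{(2)},\mathbf{W}^{(3)}$ decomposition; your renewal count for $\mathbf{N}_{s/\lambda^\gamma}$ is equivalent to the paper's trick of converting one factor $|K_{\tau_m}|$ back to $\nu\,\mathbb{E}[\Delta\tau_m\mid\widetilde{\mathcal{F}}_{\tau_m^-}]$ and summing the $\Delta\tau_m$'s directly to $T/\lambda^\gamma$. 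One methodological difference: for the step $\int|K_r|\,dr\to|K_{\tau_m}|\Delta\tau_m$ the paper goes through a two-layer coarse-graining over sub-Poisson blocks of length $|K_{\tau_m}|^\beta$ (its steps (i)--(ii), deferred to the companion paper), whereas your direct bound via the $L^2$ fluctuation of $E^{1/2}(K_t)$ over each sign-flip interval is more elementary and does give the claimed $\mathit{O}(\lambda^{1-\gamma/2})$.

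There is, however, one genuine technical gap. As the paper emphasizes in its preamble to this section, the $\tau_m$ are \emph{not} $\mathcal{F}_t$-stopping times: certifying that a Poisson time $t_n$ is a sign-flip requires knowing that the sign does not change at the \emph{next} Poisson time $t_{n+1}$, which is future information for $\mathcal{F}_t$. Consequently, optional stopping of the $\mathcal{F}_t$-martingale $E(K_t)-\sigma t$ at the $\tau_m$, on which your Abel-summation step relies (``each pairwise difference decomposes into a drift piece $\sigma\Delta\tau_m$ and a martingale increment''), is not directly justified. The paper's remedy is to shift to $\overline{\tau}_m$, the Poisson time immediately following $\tau_m$, which \emph{is} an $\mathcal{F}_t$-stopping time; the sum $\sum_m\chi(\tau_m<\varsigma)S(K_{\tau_m})\bigl(E(K_{\overline{\tau}_{m+1}})-E(K_{\overline{\tau}_m})-\sigma(\overline{\tau}_{m+1}-\overline{\tau}_m)\bigr)$ is then a genuine martingale, and the residual terms $E(K_{\overline{\tau}_m})-E(K_{\tau_m})$ and $\overline{\tau}_m-\tau_m$ are controlled separately by the counting estimate~(\ref{Jobba}). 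Your argument needs this correction (or an equivalent device) to go through.
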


\begin{proof}\text{  }\\  Part (1):  I can approximate $Y_{\lambda, s}$ by the expression $$ \lambda^{\frac{\gamma+3}{2}}\sum_{m=1 }^{\mathbf{N}_{\frac{s}{\lambda^\gamma}}} \chi(\tau_{m}<\varsigma    )S(K_{\tau_{m}})\int_{\tau_{m}}^{\tau_{m+1} }dt\, |K_{t}|$$ for $\lambda\ll 1$ by Lem.~\ref{Crave}.  The result follows by showing the convergences in probability (i)-(iii) below.   The differences in (i) and (ii) involve coarse-graining approximations in which the random time intervals $\Delta\tau_{m}$ are parsed into shorter intervals $\Delta_{m,n}$ with duration on the order $\mathit{O}\big(|K_{\tau_{m}}|^{\beta}\big)$ for some $\beta$ chosen from  the interval $(0,\frac{1}{3})$.   The proofs of the convergences (i) and (ii)   do not simplify much from the proof of~\cite[Lem.3.2]{Dispersion}, so I do not include them.   I introduce the following notations:
\begin{eqnarray*}
\omega(m)&:=& \left\lfloor \big| K_{\tau_{m}}\big|^{\beta }\right\rfloor, \\
L_{m}&:=&   \left\lfloor \frac{\mathcal{N}_{\tau_{m+1} } -\mathcal{N}_{\tau_{m}} }{\omega(m) }\right\rfloor , \\
\Gamma_{m,n}&:=&t_{\mathcal{N}_{\tau_{m}}+nL_{m}},\\
\Delta_{m,n}&:=&\Gamma_{m,n+1} -\Gamma_{m,n}.
\end{eqnarray*}

There are the following convergences to zero in probability as $\lambda\searrow 0$:
\begin{enumerate}[(i).]
\item $ \sup_{0\leq s\leq T}\Big|  2 \lambda^{\frac{\gamma+3}{2}}\sum_{m=1}^{\mathbf{N}_{\frac{s}{\lambda^\gamma}}} \chi(\tau_{m}<\varsigma    )S(K_{\tau_{m}})\Big(\int_{\tau_{m}}^{\tau_{m+1} }dt\,|K_{t}|- \sum_{n=0}^{L_{m}-1}\Delta_{m,n}|K_{\Gamma_{m,n} } |\Big)\Big|\Longrightarrow 0, $

\item $ \sup_{0\leq s\leq T}\Big|2 \lambda^{\frac{\gamma+3}{2}}\sum_{m=1   }^{\mathbf{N}_{\frac{s}{\lambda^\gamma}}}\chi(\tau_{m}<\varsigma    )\Big( S(K_{\tau_{m}})\sum_{n=0}^{L_{m}-1}\Delta_{m,n}|K_{\Gamma_{m,n} }|-  K_{\tau_{m}}\Delta\tau_{m}  \Big)   \Big| \Longrightarrow 0, $

\item $  \sup_{0\leq s\leq T}\Big| 2 \lambda^{\frac{\gamma+3}{2}}\sum_{m=1 }^{\mathbf{N}_{\frac{s}{\lambda^\gamma}}}\chi(\tau_{m}<\varsigma    ) K_{\tau_{m}}\mathbb{E}\big[ \Delta\tau_{m} \,|\,\widetilde{\mathcal{F}}_{\tau_{m}^{-}}  \big] \Big| \Longrightarrow 0.  $
\end{enumerate}
The expression $ \mathbf{m}_{\lambda,s}= 2 \lambda^{\frac{\gamma+3}{2}}\sum_{m=1   }^{\mathbf{N}_{\frac{s}{\lambda^\gamma}}}\chi(\tau_{m}<\varsigma    )K_{\tau_{m}}(\Delta\tau_{m} -\mathbb{E}\big[ \Delta\tau_{m} \,|\,\widetilde{\mathcal{F}}_{\tau_{m}^{-}}  \big] ) $ is obtained by the  right term in  (ii) minus the expression in (iii).\vspace{.3cm}

\noindent (iii).\hspace{.15cm}  I will first show that there is a vanishing error in replacing the terms $\mathbb{E}\big[ \Delta\tau_{m}\,\big|\,\widetilde{\mathcal{F}}_{\tau_{m}^{-}}\big]$ in the expression by $\nu^{-1} |K_{\tau_{m}}|$.     To bound the difference, I apply  Part (1) of Prop.~\ref{TimeFlip} to get the first and second inequalities below for $C>0$ depending on my choice of $0<\zeta<\frac{1}{2}$:  
\begin{align}\label{Course}
\mathbb{E}\Big[ \sup_{0\leq s\leq T} 2 \lambda^{\frac{\gamma+3}{2}} \sum_{m=1}^{\mathbf{N}_{\frac{s}{\lambda^{\gamma}}}} \chi(& \tau_{m}<\varsigma    ) |K_{\tau_{m}}|\, \big|\mathbb{E}\big[ \Delta\tau_{m}\,\big|\,\widetilde{\mathcal{F}}_{\tau_{m}^{-}}\big]-\nu^{-1} |K_{\tau_{m}}   |\big| \Big]\nonumber \\ \leq & 2C \lambda^{\frac{\gamma+3}{2}}\mathbb{E}\Big[\sum_{m=1}^{\mathbf{N}_{\frac{T}{\lambda^{\gamma}}}}\chi(\tau_{m}<\varsigma    )\big|K_{\tau_{m}}\big|\, \big|K_{\tau_{m}}\big|^{\zeta}  \Big]\nonumber  \nonumber \\ \leq &  4C\nu  \lambda^{\frac{\gamma+3}{2}}\mathbb{E}\Big[\sum_{m=1}^{\mathbf{N}_{\frac{T}{\lambda^{\gamma}}}}\chi(\tau_{m}<\varsigma    ) \mathbb{E}\big[ \Delta\tau_{m}\,\big|\,\widetilde{\mathcal{F}}_{\tau_{m}^{-}}\big]\big|K_{\tau_{m}}\big|^{\zeta}    \Big]\nonumber  \\ \leq & 4C\nu \big| \mathbf{p} \big|^{\zeta} \lambda^{\frac{\gamma+3}{2}}\Big(\mathbb{E}\Big[\sum_{m=1}^{\mathbf{N}_{\frac{T}{\lambda^{\gamma}}}-1 }\Delta\tau_{m}   \Big]+\mathbb{E}\Big[ \sup_{0\leq m\leq \mathbf{N}_{\frac{T}{\lambda^{\gamma}}}}\chi(\tau_{m}<\varsigma    )  \Delta \tau_{m}   \Big]\Big)\nonumber  \\ \leq & 4C\nu \big| \mathbf{p}\big|^{\zeta} \lambda^{\frac{\gamma+3}{2}}\Big(\frac{T}{\lambda^{\gamma}} +C'\lambda^{-1-\iota} \Big).
\end{align}
The third inequality follows by removing the nested conditional expectations, using that  $|K_{t}|\leq 2\mathbf{p}$ for $t< \varsigma$, and bounding the last term in the sum by the largest.  The first and second expressions in the fourth line of~(\ref{Course}) are bounded respectively using that $ \sum_{m=1}^{\mathbf{N}_{t}-1} \Delta\tau_{m}\leq t $ and for some $C'$ given $\iota \in (0, \frac{\gamma-1}{2})$ by Lem.~\ref{LemIntLen}.  The bottom line of~(\ref{Course}) decays with order $\lambda^{\frac{\gamma-1}{ 2}  } $ for small $\lambda$.

I am left to bound the expression
$$ \sup_{0\leq s\leq T}2 \lambda^{\frac{\gamma+3}{2}}\Big|\sum_{m=1}^{\mathbf{N}_{\frac{s}{\lambda^{\gamma}}}} \chi(\tau_{m}<\varsigma    ) K_{\tau_{m}} |K_{\tau_{m}}| \Big|.$$ 
By Lem.~\ref{LemTerm}, the probability of the event $\varsigma\geq \frac{T}{\lambda^{\gamma}}$  converges to one for small $\lambda$.    For $\varsigma> t$, the values $K_{\tau_{m}}$ change sign for each $\tau_{m}\leq t$, and  the sum of terms $\chi(\tau_{m}<\varsigma    )  K_{\tau_{m}} |K_{\tau_{m}}|$ can be written as
\begin{align}\label{HakiSak}
\sum_{m=1}^{\mathbf{N}_{t}} \chi(\tau_{m}<\varsigma    ) K_{\tau_{m}} |K_{\tau_{m}}| = & \sum_{m=1}^{\lfloor \frac{1}{2}\mathbf{N}_{t} \rfloor } -\chi(\tau_{2m-1}<\varsigma    ) S(K_{\tau_{2m-1}  })\Big( |K_{\tau_{2m}}|^{2}-|K_{\tau_{2m-1}}|^{2}     \Big)\nonumber \\ &+\chi\big(\mathbf{N}_{t} \text{ odd}\big)\chi(\tau_{\mathbf{N}_{t}}<\varsigma    )  K_{\mathbf{N}_{t}} |K_{\mathbf{N}_{t}}| .
\end{align}
The remainder term on the second line is bounded by $\chi(\tau_{\mathbf{N}_{t}}<\varsigma    )   |K_{\mathbf{N}_{t}}|^2< 4\mathbf{p}^{2}$ and tends to zero as $\lambda\searrow 0$ when multiplied by  $2\lambda^{\frac{\gamma+3}{2} }$.   Moreover,  by writing 
$$|K_{\tau_{m+1}}|^{2}-|K_{\tau_{m}}|^{2}=\Big( E(K_{\tau_{m+1}})-E(K_{\tau_{m}}) \Big)-\Big( E(K_{\tau_{m+1}})-|K_{\tau_{m+1}}|^{2}\Big)+\Big( E(K_{\tau_{m}})-|K_{\tau_{m}}|^{2}\Big)$$
and using the triangle inequality, 
\begin{align}\label{HighFlyin}
 \sup_{0\leq s\leq T} 2 \lambda^{\frac{\gamma+3}{2}}\Big|\sum_{m=1}^{\mathbf{N}_{\frac{s}{\lambda^{\gamma}}}}& -\chi(\tau_{m}<\varsigma    ) S(K_{\tau_{m}  })\Big( |K_{\tau_{m+1}}|^{2}-|K_{\tau_{m}}|^{2}     \Big) \Big|
\nonumber \\ \leq  &  \sup_{0\leq s\leq T} 2  \lambda^{\frac{\gamma+3}{2}}\Big|\sum_{m=1}^{\mathbf{N}_{\frac{T}{\lambda^{\gamma}}}} -\chi(\tau_{m}<\varsigma    ) S(K_{\tau_{m}  })\Big( E(K_{\tau_{m+1}})-E(K_{\tau_{m}})     \Big) \Big|\nonumber \\ &+2 \lambda^{\frac{\gamma+3}{2}}\sup_{p\in \R   }\big|  E(p)-p^{2}\big| \sum_{m=1}^{\mathbf{N}_{\frac{T}{\lambda^{\gamma}}} } \chi(\tau_{m}<\varsigma    ) .
 \end{align}
The supremum of $| E(p)-p^{2}|$ for  $p\in \R$ is bounded, and the expectation for the sum of terms $ \chi(\tau_{m}<\varsigma    )  $ is $\mathit{O}\big(\lambda^{-1-\gamma} \big)$, since 
\begin{align}\label{Jobba}
\mathbb{E}\Big[\sum_{m=1}^{\mathbf{N}_{\frac{T}{\lambda^{\gamma}}}} \chi(\tau_{m}<\varsigma    ) \Big] & \leq \frac{2}{\mathbf{p}}\mathbb{E}\Big[ \sum_{m=1}^{\mathbf{N}_{\frac{T}{\lambda^{\gamma}}}} \chi\big(  \tau_{m} <\varsigma \big)\big| K_{\tau_{m}}  \big|  \Big]\leq  \frac{4\nu}{\mathbf{p}}\mathbb{E}\Big[ \sum_{m=1}^{\mathbf{N}_{\frac{T}{\lambda^{\gamma}}}} \mathbb{E}\big[ \Delta\tau_{m}\,\big|\, \widetilde{\mathcal{F}}_{\tau_{m}^{-}}  \big]  \Big]\nonumber \\  & \leq \frac{4\nu}{\mathbf{p}} \mathbb{E}\Big[ \sum_{m=1}^{\mathbf{N}_{\frac{T}{\lambda^{\gamma}}}-1}  \Delta\tau_{m}+\sup_{1\leq m\leq \mathbf{N}_{\frac{T}{\lambda^{\gamma}} } }\chi\big(  \tau_{m} <\varsigma \big)\Delta\tau_{m} \Big]\nonumber \\ & \leq  \frac{4\nu }{\mathbf{p}}\Big(\frac{T}{\lambda^{\gamma}} +\mathit{O}(\lambda^{-1-\iota} ) \Big)=\mathit{O}\big(\lambda^{1-\gamma}\big),
\end{align}
where   $\iota \in (0,\frac{\gamma-1}{2})$ and the fourth inequality holds since $\sum_{m=1}^{\mathbf{N}_{t}-1}\leq t  $ and by Lem.~\ref{LemIntLen}.  The first inequality above uses that $|K_{t}|\geq \frac{1}{2}\mathbf{p}$ for $t<\varsigma$, and the second inequality applies Part (1) of Prop.~\ref{TimeFlip}.  Thus, the third line of~(\ref{HighFlyin}) decays as $\mathit{O}(\lambda^{\frac{5}{2}- \frac{\gamma}{2}} )$.

To bound the first term on the right side of~(\ref{HighFlyin}), it is convenient to write the summand as
 \begin{align}\label{Marianna}
 E(K_{\tau_{m+1}})-E(K_{\tau_{m}}) = & \Big(E(K_{\overline{\tau}_{m+1} })-E(K_{\overline{\tau}_{m}})-(\overline{\tau}_{m+1}-\overline{\tau}_{m}  ) \Big)-\Big(  E(K_{\overline{\tau}_{m+1} })-E(K_{\tau_{m+1} })\Big)\nonumber \\ &+\Big(  E(K_{\overline{\tau}_{m} })-E(K_{\tau_{m} })\Big)+ (\overline{\tau}_{m+1}-\overline{\tau}_{m} ) ,
\end{align}
where $\overline{\tau}_{m}$ is the Poisson time following $\tau_{m}$.  The times  $\overline{\tau}_{m}$ are hitting times, and the sum
\begin{align}\label{Thorns}
\sum_{m=1}^{\mathbf{N}_{\frac{s}{\lambda^{\gamma}}}}& -\chi(\tau_{m}<\varsigma    ) S(K_{\tau_{m}  }) \Big(E(K_{\overline{\tau}_{m+1} })-E(K_{\overline{\tau}_{m}})-\big( \overline{\tau}_{m+1}-\overline{\tau}_{m}  \big) \Big) 
\end{align}
is a martingale, since  the terms $E(K_{\overline{\tau}_{m+1} })-E(K_{\overline{\tau}_{m}})$ have mean $\overline{\tau}_{m+1}-\overline{\tau}_{m}    $  when conditioned on the information known at time $\overline{\tau}_{m}$ by Part (2) of Prop.~\ref{SubMart}.  The supremum of the absolute value for~(\ref{Thorns}) over $s\in [0,T]$ can be bounded through Doob's maximal inequality and techniques used previously.  The sums associated the other three terms on the right side of~(\ref{Marianna}) are treated using the counting techniques in~(\ref{Jobba}).

\vspace{.5cm} 
\noindent Part (2):  The quadratic variation process for  $\mathbf{m}_{\lambda}$ has the following form:
 $$
  [\mathbf{m}_{\lambda},\mathbf{m}_{\lambda}]_{s}= 4\lambda^{\gamma+3}\sum_{m=1}^{\mathbf{N}_{\frac{s}{\lambda^\gamma }}}\chi(\tau_{m}<\varsigma)  K_{\tau_{m}}^{2}\big(\Delta\tau_{m}- \mathbb{E}\big[\Delta\tau_{m}\,\big|\, \widetilde{\mathcal{F}}_{\tau_{m}^{-}}\big]   \big)^{2} .$$  
  I will show the following convergences in probability:
\begin{enumerate}[(i).]  
\item  $\sup_{0\leq s\leq T}\Big|  4 \lambda^{\gamma+3 }\sum_{m=1}^{\mathbf{N}_{\frac{s}{\lambda^\gamma }}}  \chi(\tau_{m}<\varsigma)\Big(K_{\tau_{m}}^{2}\big(\Delta\tau_{m}- \mathbb{E}\big[\Delta\tau_{m}\,\big|\, \widetilde{\mathcal{F}}_{\tau_{m}^{-}}\big]   \big)^{2}- \nu^{-1} |K_{\tau_{m} }|^{3}\Delta\tau_{m}  \Big)\Big| \Longrightarrow 0,  $
\item $ \sup_{0\leq s\leq T}\Big|  \frac{4\lambda^{\gamma+3 } }{\nu}\sum_{m=1}^{\mathbf{N}_{\frac{s}{\lambda^\gamma }}} \chi(\tau_{m}<\varsigma) |K_{\tau_{m} }|^{3}\Delta\tau_{m}  - s\vartheta  \Big|\Longrightarrow 0 .$
\end{enumerate}

  \vspace{.4cm}

\noindent (i).  \hspace{.15cm}  The difference in the supremum can be written as $\mathbf{W}_{\lambda, s}^{(1)} +\mathbf{W}_{\lambda, s}^{(2)} +\mathbf{W}_{\lambda, s}^{(3)}$ for 
 \begin{align*}
 \mathbf{W}_{\lambda, s}^{(1)} :=&  4\lambda^{\gamma+3}\sum_{m=1}^{\mathbf{N}_{\frac{s}{\lambda^\gamma }}} \chi(\tau_{m}<\varsigma) K_{\tau_{m}}^{2}\Big(\big(\Delta\tau_{m}- \mathbb{E}\big[\Delta\tau_{m}\,\big|\, \widetilde{\mathcal{F}}_{\tau_{m}^{-}}\big]   \big)^{2}- \mathbb{E}\Big[\big(\Delta\tau_{m}- \mathbb{E}\big[\Delta\tau_{m}\,\big|\, \widetilde{\mathcal{F}}_{\tau_{m}^{-}}\big]   \big)^{2}\,\Big|\, \widetilde{\mathcal{F}}_{\tau_{m}^{-}}\Big] \Big), \\
\mathbf{W}_{\lambda, s}^{(2)}: =&  4\lambda^{\gamma+3}\sum_{m=1 }^{\mathbf{N}_{\frac{s}{\lambda^\gamma }}}  \chi(  \tau_{m}<\varsigma ) K_{\tau_{m}}^{2}\Big( \mathbb{E}\Big[\big(\Delta\tau_{m}- \mathbb{E}\big[\Delta\tau_{m}\,\big|\, \widetilde{\mathcal{F}}_{\tau_{m}^{-}}\big]   \big)^{2}\,\Big|\, \widetilde{\mathcal{F}}_{\tau_{m}^{-}}\Big]-\nu^{-1}|K_{\tau_{m}}|  \mathbb{E}\big[\Delta\tau_{m}\,\big|\, \widetilde{\mathcal{F}}_{\tau_{m}^{-}}\big]   \Big),\\
\mathbf{W}_{\lambda, s}^{(3)}: =& \frac{ 4\lambda^{\gamma+3}}{\nu}\sum_{m=1}^{\mathbf{N}_{\frac{s}{\lambda^\gamma }}} \chi(  \tau_{m}<\varsigma )  |K_{\tau_{m}}|^{3}\Big(\mathbb{E}\big[\Delta\tau_{m}\,\big|\, \widetilde{\mathcal{F}}_{\tau_{m}^{-}}\big] - \Delta\tau_{m}\Big).
\end{align*}
The processes $\mathbf{W}_{\lambda, s}^{(1)} $ and $\mathbf{W}_{\lambda, s}^{(3)}$ are martingales with respect to the filtration $\widetilde{\mathcal{F}}_{\lambda, s}$, 
The expectation $\mathbb{E}[(\mathbf{W}_{\lambda, \frac{T}{\lambda^{\gamma}} }^{(1)})^{2}]$ is equal to the expression in the top line of~(\ref{Lambert}).  Using Part (1) of Prop.~\ref{TimeFlip} for the first two inequalities below,  the second moment of $\mathbf{W}_{\lambda, \frac{T}{\lambda^{\gamma}} }^{(1)}$ is bounded through the following inequalities: 
\begin{align}\label{Lambert}
\mathbb{E}\Big[ 16\lambda^{2\gamma+6}\sum_{\substack{m=1, \\ \tau_{m}<\varsigma }  }^{\mathbf{N}_{\frac{T}{\lambda^\gamma }}}  K_{\tau_{m}}^{4}\mathbb{E}\Big[\Big(&\big(\Delta\tau_{m}- \mathbb{E}\big[\Delta\tau_{m}\,\big|\, \widetilde{\mathcal{F}}_{\tau_{m}^{-}}\big]   \big)^{2}- \mathbb{E}\big[\big(\Delta\tau_{m}- \mathbb{E}\big[\Delta\tau_{m}\,\big|\, \widetilde{\mathcal{F}}_{\tau_{m}^{-}}\big]   \big)^{2}\,\big|\, \widetilde{\mathcal{F}}_{\tau_{m}^{-}}\big] \Big)^{2}\,\Big|\,\widetilde{\mathcal{F}}_{\tau_{m}^{-}}\Big]   \Big]\nonumber  \\ & \leq \frac{768}{ \nu^{4}} \mathbb{E}\Big[ \lambda^{2\gamma+6} \sum_{m=1}^{\mathbf{N}_{\frac{T}{\lambda^\gamma }}}\chi(\tau_{m}<\varsigma) | K_{\tau_{m}}|^{8}   \Big]\nonumber \\ &\leq  \frac{1536}{ \nu^{3}}\mathbb{E}\Big[\lambda^{2\gamma+6} \sum_{m=1}^{\mathbf{N}_{\frac{T}{\lambda^\gamma }}}\chi(\tau_{m}<\varsigma)  |K_{\tau_{m}}|^{7}\mathbb{E}\big[\Delta\tau_{m}  \,\big|\,\widetilde{\mathcal{F}}_{\tau_{m}^{-}} \big]   \Big]
\nonumber \\ &\leq \frac{1536(2\mathbf{p}_{0})^{7} }{\nu^{3} } \lambda^{2\gamma -1 } \Big(\mathbb{E}\Big[ \sum_{m=1}^{\mathbf{N}_{\frac{T}{\lambda^\gamma }}-1}\Delta\tau_{m}     \Big]  +\mathbb{E}\Big[ \sup_{1\leq m\leq \mathbf{N}_{\frac{T}{\lambda^\gamma }} }\chi(\tau_{m}<\varsigma) \Delta\tau_{m}     \Big]  \Big)\nonumber  \\ &\leq \frac{1536(2\mathbf{p}_{0})^{7} }{\nu^{3}  }\lambda^{2\gamma-1}\Big(\frac{T}{ \lambda^{\gamma  } }+\mathit{O}(\lambda^{-1-\iota} )  \Big)=\mathit{O}(\lambda^{\gamma-1}), 
\end{align}
where the last inequality uses that $\sum_{m=1}^{\mathbf{N}_{t}-1}\Delta\tau_{m}<t$  and Lem.~\ref{LemIntLen} for $\iota \in (0, \gamma-1)$.   The first inequality in~(\ref{Lambert}) uses that
\begin{align}\label{Timbo}
\mathbb{E}\Big[\Big(\big(\Delta\tau_{m}- \mathbb{E}\big[\Delta\tau_{m}\,\big|\, \widetilde{\mathcal{F}}_{\tau_{m}^{-}}\big]   \big)^{2}- \mathbb{E}\big[\big(\Delta\tau_{m}- \mathbb{E}\big[\Delta\tau_{m}\,\big|\, \widetilde{\mathcal{F}}_{\tau_{m}^{-}}\big]   \big)^{2}\,\big|\, \widetilde{\mathcal{F}}_{\tau_{m}^{-}}\big] \Big)^{2}\,\Big|\,\widetilde{\mathcal{F}}_{\tau_{m}^{-}}\Big] &\leq  \mathbb{E}\big[|\Delta\tau_{m}|^{4}\big|\,\big|\, \widetilde{\mathcal{F}}_{\tau_{m}^{-}}\big]\nonumber \\ &\leq 48   |K_{\tau_{m}}|^{4},
\end{align}
 where the second inequality in~(\ref{Timbo}) holds for small enough $\lambda$ by approximating  $\mathbb{E}\big[|\Delta\tau_{m}|^{4}\,\big|\, \widetilde{\mathcal{F}}_{\tau_{m}^{-}}\big]$ with $\frac{4!}{\nu^4} |K_{\tau_{m}}|^{4}   $ using Part (1) of Prop.~\ref{TimeFlip} and multiplying $\frac{4!}{\nu^4}|K_{\tau_{m}}|^{4}   $  by $2$ to cover the error.   I use a similar trick in the third inequality of~(\ref{Lambert}) by applying Part (1) of Prop.~\ref{TimeFlip} to approximate $|K_{\tau_{m}}|$  by $\nu\mathbb{E}\big[|\Delta\tau_{m}|\,\big|\, \widetilde{\mathcal{F}}_{\tau_{m}^{-}}\big]$  and multiplying $\nu\mathbb{E}\big[|\Delta\tau_{m}|\,\big|\, \widetilde{\mathcal{F}}_{\tau_{m}^{-}}\big]$   by $2$ to cover the error again.  For the third inequality, I removed the nested conditional expectations and used that $|K_{\tau_{m}}|\leq 2\mathbf{p}=\frac{2\mathbf{p}_{0}}{\lambda}$ for $\tau_m\leq \varsigma$.  
 Finally, by Doob's maximal inequality, I have that   
$ \mathbb{E}\big[\sup_{0\leq s\leq T} \big|\mathbf{W}_{\lambda, s}^{(1)}\big|^{2} \big]$  converges to zero. 

 The expression $\mathbb{E}\big[\sup_{0\leq s\leq T} \big|\mathbf{W}_{\lambda, s}^{(2)}\big| \big]$ is bounded by similar applications of Part (1) of Prop.~\ref{TimeFlip} as above, and $ \mathbb{E}\big[\sup_{0\leq s\leq T} \big|\mathbf{W}_{\lambda, s}^{(3)}\big|^{2} \big]$ is bounded using Doob's maximal inequality followed by the standard techniques involving  Part (1) of Prop.~\ref{TimeFlip}.

 \vspace{.4cm} 

\noindent (ii). By the triangle inequality, I have the first inequality below:
\begin{align}\label{GasProm}
 \sup_{0\leq s\leq T}\Big| &  \frac{4 \lambda^{\gamma+3} }{\nu}\sum_{m=1}^{\mathbf{N}_{\frac{s}{\lambda^\gamma }}} \chi(\tau_{m}<\varsigma) |K_{\tau_{m} }|^{3}\Delta\tau_{m}  - s\vartheta  \Big|\nonumber \\ &\leq   \frac{4\lambda^{\gamma+3} }{\nu}\sum_{m=1}^{\mathbf{N}_{\frac{T}{\lambda^\gamma }} } \chi(\tau_{m}<\varsigma)\big| |K_{\tau_{m} }|^{3}-\mathbf{p}^{3}\big|\Delta\tau_{m} + \vartheta \Big| T- \lambda^{\gamma}\sum_{m=1}^{\mathbf{N}_{\frac{T}{\lambda^\gamma }}} \chi(\tau_{m}<\varsigma)\Delta\tau_{m}\Big|\nonumber  \\ &\leq 6 T\vartheta \lambda^{\epsilon}+ 2\vartheta \lambda^{\gamma}\sup_{1\leq m\leq  \mathbf{N}_{\frac{T}{\lambda^\gamma }}  }  \chi(\tau_{m}<\varsigma)\Delta\tau_{m} .
 \end{align}
For the first term in the second inequality,  I have used that $ | |K_{\tau_{m} }|-\mathbf{p}|\leq \lambda^{\epsilon}\mathbf{p}$ ~ for $\tau_{m}<\varsigma$.  The expectation for the supremum of $\chi(\tau_{m}<\varsigma)\Delta\tau_{m}$ for  $m\leq  \mathbf{N}_{\frac{T}{\lambda^\gamma }}$ increases with order $\mathit{O}(\lambda^{-1-\iota  })$ as $\lambda\searrow 0$ for any  $\iota \in (0,\gamma-1 )$ by  Lem.~\ref{LemIntLen}.   Thus, the rightmost expression in~(\ref{GasProm}) decays with order $\mathit{O}(\lambda^{\gamma-1-\iota}  )$.

\end{proof}

 \begin{lemma}[Lindberg condition]\label{LemLindberg}
As $\lambda\searrow 0$ there is  convergence
$\mathbb{E}\big[\sup_{0\leq s\leq T}\big| \mathbf{m}_{\lambda,s}-\mathbf{m}_{\lambda, s^{-} }      \big|\big]\rightarrow 0$.    
 \end{lemma}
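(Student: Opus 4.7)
The plan is to directly bound the largest single jump of the martingale and show that its expectation vanishes. The process $\mathbf{m}_{\lambda, s}$ is piecewise constant in $s$ and acquires its $m$-th summand exactly at $s = \lambda^{\gamma}\tau_m$, since the filtration $\widetilde{\mathcal{F}}_{\tau_m}$ first contains the random variable $\Delta\tau_m$ at that instant. Consequently,
\[
\sup_{0\leq s\leq T}\big|\mathbf{m}_{\lambda,s}-\mathbf{m}_{\lambda,s^-}\big| = \lambda^{\frac{\gamma+3}{2}} \sup_{1\leq m\leq \mathbf{N}_{T/\lambda^\gamma}}\chi(\tau_m<\varsigma)\,|K_{\tau_m}|\,\big|\Delta\tau_m - \mathbb{E}[\Delta\tau_m\mid\widetilde{\mathcal{F}}_{\tau_m^-}]\big|,
\]
and on the event $\{\tau_m<\varsigma\}$ the bound $|K_{\tau_m}|\leq (1+\lambda^\epsilon)\mathbf{p}\leq 2\mathbf{p}_0/\lambda$ already collapses the prefactor $\lambda^{(\gamma+3)/2}|K_{\tau_m}|$ to a deterministic $\mathit{O}(\lambda^{(\gamma+1)/2})$.

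After the triangle inequality I plan to treat the two pieces $\Delta\tau_m$ and $\mathbb{E}[\Delta\tau_m\mid\widetilde{\mathcal{F}}_{\tau_m^-}]$ separately. For the conditional expectation, Part (1) of Prop.~\ref{TimeFlip} applied via the strong Markov property at $\tau_m$ (whose hypothesis is met because $\tau_m$ is a sign-flip, so that conditionally on $\widetilde{\mathcal{F}}_{\tau_m^-}$ the restarted process does not change sign at its first Poisson time) gives $\mathbb{E}[\Delta\tau_m\mid\widetilde{\mathcal{F}}_{\tau_m^-}]\leq C|K_{\tau_m}|\leq 2C\mathbf{p}_0/\lambda$, contributing a deterministic $\mathit{O}(\lambda^{(\gamma-1)/2})$ that vanishes because $\gamma>1$. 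For the $\Delta\tau_m$ piece, Lem.~\ref{LemIntLen} with $n=1$ and any $\iota\in(0,(\gamma-1)/2)$ yields
\[
\mathbb{E}\Big[\sup_{1\leq m\leq \mathbf{N}_{T/\lambda^\gamma}}\chi(\tau_m<\varsigma)\,\Delta\tau_m\Big]\leq C'\lambda^{-1-\iota},
\]
so its expected contribution is bounded by $\mathit{O}(\lambda^{(\gamma+1)/2})\cdot\mathit{O}(\lambda^{-1-\iota})=\mathit{O}(\lambda^{(\gamma-1)/2-\iota})$. Combining the two estimates and sending $\lambda\searrow 0$ proves the lemma.

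There is no deep obstacle here beyond bookkeeping; all the necessary moment and tail control has already been set up in Prop.~\ref{TimeFlip} and Lem.~\ref{LemIntLen}, and the two factors of $\lambda^{-1}$ arising from $|K_{\tau_m}|$ and from $\sup_m\Delta\tau_m$ are beaten by the normalization exponent $\frac{\gamma+3}{2}$ precisely because of the constraint $\gamma>1$. The only mildly subtle point will be justifying the invocation of Prop.~\ref{TimeFlip} inside the conditional expectation at $\tau_m^-$, which follows from the strong Markov property together with the defining property of sign-flips—the same device already used throughout Sect.~\ref{SecClassical}.
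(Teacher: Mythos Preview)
Your argument is correct and follows essentially the same skeleton as the paper's proof: identify the jump of $\mathbf{m}_{\lambda}$ at $s=\lambda^\gamma\tau_m$, bound it by something of the form $\lambda^{(\gamma+3)/2}|K_{\tau_m}|\,\Delta\tau_m$, and feed in Lem.~\ref{LemIntLen} (or its proof) to control the maximal $\Delta\tau_m$. Where you diverge is in how you handle the momentum factor. The paper does not use the cutoff $\tau_m<\varsigma$ to get a deterministic bound on $|K_{\tau_m}|$; instead it majorizes by $\sup_{t\leq T/\lambda^\gamma}|K_t|$ over the whole trajectory, splits the expectation by Cauchy--Schwarz, and controls $\mathbb{E}[(\lambda\sup_t|K_t|)^4]$ via the energy submartingale $E(K_t)$ together with Doob's maximal inequality and Prop.~\ref{SubMart}(3). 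Your route is shorter: since the indicator $\chi(\tau_m<\varsigma)$ is already present in every summand, the bound $|K_{\tau_m}|\leq 2\mathbf{p}_0/\lambda$ is immediate and no Cauchy--Schwarz or energy-moment machinery is needed. The paper's detour through $\sup_t|K_t|$ is a vestige of the more general setting in \cite{Dispersion}, where the momentum is not pinned near $\mathbf{p}$; in the present regime your simplification is available and legitimate.
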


\begin{proof}[Proof of \ref{LemLindberg}]
The largest jump for the martingale $\mathbf{m}_{\lambda,s}$ over the interval $s\in [0,T]$ is bounded by
\begin{align*}
\sup_{0\leq s\leq T}\big| \mathbf{m}_{\lambda,s}-\mathbf{m}_{\lambda, s^{-} }      \big| & \leq      2 \lambda^{\frac{\gamma+3}{2}}\sup_{0\leq m\leq \mathbf{N}_{\frac{T}{\lambda^\gamma}}}  \chi(\tau_{m}<\varsigma)\int_{\tau_{m}}^{\tau_{m+1}  }dt\, |K_{t}|\\ &\leq   \frac{2\lambda^{\frac{\gamma-1}{2}} }{\nu}\Big(\sup_{0\leq t\leq \frac{T}{\lambda^\gamma}}\lambda |K_{t}|\Big)^{2}\Big( \sup_{1\leq m\leq \mathbf{N}_{\frac{T}{\lambda^\gamma}}} \chi(\tau_{m}<\varsigma)\frac{\nu \Delta\tau_{m}}{|K_{\tau_{m}}| } \Big). 
\end{align*}
 By the Cauchy-Schwarz inequality, 
\begin{align}\label{JunkInTrunk}
\mathbb{E}\Big[\sup_{0\leq s\leq T}\big| \mathbf{m}_{\lambda,s}-\mathbf{m}_{\lambda, s^{-} }      \big|     \Big]  \leq  & \frac{2\lambda^{\frac{\gamma-1}{2}} }{\nu} \mathbb{E}\Big[ \Big(\sup_{0\leq t\leq \frac{T}{\lambda^\gamma}}\lambda|K_{t}|\Big)^{4}    \Big]^{\frac{1}{2}}  \mathbb{E}\Big[   \Big( \sup_{1\leq m\leq \mathbf{N}_{\frac{T}{\lambda^\gamma}}}\frac{\nu \Delta\tau_{m}}{|K_{\tau_{m}}|} \Big)^2      \Big]^{\frac{1}{2}}\\  =& \mathcal{O}\big(\lambda^{\frac{\gamma-1-\iota}{2}}\big).
 \end{align}
 The first expectation on the right side is uniformly finite for small $\lambda$ by  the inequality $|K_{t}|^{4}\leq E_{t}^2$, Doob's maximal inequality, and the bound on the second moments of $E_{t}$ from Part (3) of Prop.~\ref{SubMart}.  The second expectation on the right side has order $\mathit{O}(\lambda^{-\iota})$ for arbitrary  $\iota>0$ by the proof of Lem.~\ref{LemIntLen}.  Thus, I can choose $\iota \in (0,\gamma-1)$ so that~(\ref{JunkInTrunk}) tends to zero for small $\lambda$.

\end{proof}
 
\subsection{Proof of Theorem~\ref{ThmClassical}}

\begin{proof}[Proof of Thm.~\ref{ThmClassical}]

By Part (1) of Lem.~\ref{LemMartApprox}, I can approximate the process $\big(Y_{\lambda,s},\,s \in[0,T]\big)$ by the martingale $\big(\mathbf{m}_{\lambda, s},\,s\in[0,T]\big)$ in the limit $\lambda\searrow 0$.  By~\cite[Thm.VIII.2.13]{Pollard}, the martingale  $\big(\mathbf{m}_{\lambda, s},\,s\in[0,T]\big)$  converges in law to a Brownian motion with diffusion rate $\vartheta$ over the interval $s\in [0,T]$, if the following hold:
\begin{itemize}
\item  The random variables  $ \big|   \big[\mathbf{m}_{\lambda},\mathbf{m}_{\lambda}\big]_{s}-s\vartheta    \big|  $ converge in probability to zero as $\lambda\searrow 0$ for $s\in [0,T]$.

\item The random variables $\sup_{0\leq s\leq T}\big| \mathbf{m}_{\lambda,s}-\mathbf{m}_{\lambda, s^{-} }\big| $ converge in probability to zero as $\lambda\searrow 0$. 

\end{itemize}
The above statements are implied by Part (2) of Lem.~\ref{LemMartApprox} and Lem.~\ref{LemLindberg}, respectively.  The convergence  is with respect to the uniform metric on paths.

\end{proof}

\section*{Acknowledgments}
  This work is supported by the European Research Council grant No. 227772  and  NSF grant DMS-08446325.

\end{document}